\documentclass[11pt]{article}
\usepackage{style}
\usepackage{shortcuts}
	
\title{New Applications of 3SUM-Counting\\in Fine-Grained Complexity and Pattern Matching}
\author{
	Nick Fischer\thanks{INSAIT, Sofia University ``St. Kliment Ohridski''. \texttt{nick.fischer@insait.ai}. Partially funded by the Ministry of Education and Science of Bulgaria's support for INSAIT, Sofia University ``St. Kliment Ohridski'' as part of the Bulgarian National Roadmap for Research Infrastructure. Parts of this work were done while the author was at Weizmann Institute of Science.}\and 
	Ce Jin\thanks{Massachusetts Institute of Technology. \texttt{cejin@mit.edu}. Supported by NSF grants CCF-2129139 and CCF-2127597, and a Simons Investigator Award.}\and
	Yinzhan Xu\thanks{Massachusetts Institute of Technology. \texttt{xyzhan@mit.edu}. Supported by NSF Grant CCF-2330048, a Simons Investigator Award and HDR TRIPODS Phase II grant 2217058.}}
\date{}

\begin{document}

\maketitle
\begin{abstract}
\noindent
The 3SUM problem is one of the cornerstones of fine-grained complexity. Its study has led to countless lower bounds, but as has been sporadically observed before---and as we will demonstrate again---insights on 3SUM can also lead to \emph{algorithmic} applications.

The starting point of our work is that we spend a lot of technical effort to develop new algorithms for 3SUM-type problems such as approximate 3SUM-counting, small-doubling 3SUM-counting, and a deterministic subquadratic-time algorithm for the celebrated Balog-Szemer\'edi-Gowers theorem from additive combinatorics. All of these are relevant in their own right and may prove useful for future research on 3SUM.

But perhaps even more excitingly, as consequences of these tools, we derive diverse new results in fine-grained complexity and pattern matching algorithms, answering open questions from many unrelated research areas. Specifically:
\begin{itemize}[itemsep=\smallskipamount]
	\item A recent line of research on the ``short cycle removal'' technique culminated in tight 3SUM-based lower bounds for various graph problems via randomized fine-grained reductions [Abboud, Bringmann, Fischer; STOC '23] [Jin, Xu; STOC '23]. In this paper we derandomize the reduction to the important \emph{4-Cycle Listing} problem, answering a main open question of [Fischer, Kaliciak, Polak; ITCS '24].
	\item We establish that \#3SUM and 3SUM are fine-grained equivalent under deterministic reductions, derandomizing a main result of [Chan, Vassilevska Williams, Xu; STOC '23].
	\item We give a deterministic algorithm for the \emph{$(1+\epsilon)$-approximate Text-to-Pattern Hamming Distances} problem in time $n^{1+o(1)} \cdot \epsilon^{-1}$. While there is a large body of work addressing the randomized complexity, this is the first deterministic improvement over Karloff's $\tilde O(n \epsilon^{-2})$-time algorithm in over 30 years.
	\item In the \emph{$k$-Mismatch Constellation} problem the input consists of two integer sets $A, B \subseteq [N]$, and the goal is to test whether there is a shift $c$ such that $|(c + B) \setminus A| \leq k$ (i.e., whether~$B$ shifted by $c$ matches $A$ up to $k$ mismatches). For moderately small $k$ the previously best running time was $\tilde O(|A| \cdot k)$ [Cardoze, Schulman; FOCS '98] [Fischer; SODA '24]. We give a faster $|A| \cdot k^{2/3} \cdot N^{o(1)}$-time algorithm in the regime where $|B| = \Theta(|A|)$. This result also has implications for the \emph{$k$-Mismatch String Matching with Wildcards} problem.
\end{itemize}
\end{abstract}

\setcounter{page}{0}
\thispagestyle{empty}
\clearpage

\setcounter{page}{0}
\thispagestyle{empty}
\tableofcontents
\clearpage

\section{Introduction}
The famous 3SUM problem is to test whether a given set $A$ of $n$ integers  contains a solution to the equation $a + b = c$. It was hypothesized long ago that the naive $\Order(n^2)$-time algorithm for 3SUM is essentially best-possible~\cite{GajentaanO95}, and this hypothesis has since evolved into a corner stone of fine-grained complexity theory, forming the basis of countless conditional \emph{lower bounds}. In a recent result, Chan, Jin, Vassilevska Williams, and Xu~\cite{focs23} showcased that, quite surprisingly, some of the carefully crafted lower bound techniques can be turned into \emph{algorithms} for seemingly unrelated problems, such as pattern matching on strings.

In this work we explore this paradigm further. We develop a colorful palette of new tools tailored towards 3SUM-type questions, often with a particular focus on \emph{deterministic} algorithms, and often aided by tools from \emph{additive combinatorics} such as the celebrated Balog-Szemer\'edi-Gowers (BSG) theorem. Then we apply these tools to derive various new fine-grained lower bounds---but also to design new and faster algorithms for pattern matching problems. We start with an exposition of our applications.

\subsection{Application 1: Deterministic 3SUM-hardness}
Recently, there has been an emerging interest in derandomizing fine-grained reductions from the 3SUM problem~\cite{ChanHe,ChanX24,fischer3sum}. This line of research follows two conceptual goals: (1) understanding the power of randomization in fine-grained complexity, and (2) establishing lower bounds that endure even if there turns out to be a truly subquadratic \emph{randomized} algorithm for 3SUM. Formally, the goal is to establish conditional lower bound under the following \emph{deterministic} hypothesis:

\begin{hypo}[Deterministic 3SUM Hypothesis]
\label{hypo:deter-3sum}
For any $\eps > 0$, there is no deterministic algorithm that solves 3SUM on $n$ numbers in $O(n^{2-\eps})$ time. 
\end{hypo}

This is a weaker hypothesis than the usual 3SUM hypothesis, which commonly also allows randomized algorithms. See~\cite{fischer3sum} for a detailed discussion on the motivations of studying deterministic reductions from 3SUM. 

Historically, the first result for derandomizing reductions from 3SUM is due to Chan and He~\cite{ChanHe}, who derandomized the reduction from 3SUM to 3SUM Convolution. Said reduction, originally due to P{\u{a}}tra{\c{s}}cu~\cite{Patrascu10}, plays a key role in complexity as it first related 3SUM to a non-geometric problem. Recently, Chan and Xu~\cite{ChanX24} showed a deterministic reduction from 3SUM to the All-Edges Sparse Triangle problem\footnote{Given an $m$-edge graph, determine whether each edge in the graph is in a triangle.}, whose randomized version was again shown in~\cite{Patrascu10}.\footnote{Technically, the reduction in~\cite{ChanX24} is from \emph{Exact Triangle} to All-Edges Sparse Triangle (derandomizing an earlier reduction from~\cite{DBLP:conf/focs/WilliamsX20}). Combined with prior work~\cite{ChanHe, DBLP:journals/siamcomp/WilliamsW13} this entails the claimed deterministic reduction from 3SUM to All-Edges Sparse Triangle.} Independent to~\cite{ChanX24}, Fischer, Kaliciak, and Polak~\cite{fischer3sum} studied deterministic reductions from 3SUM in a more systematic way. They derandomized the majority of formerly randomized reductions from 3SUM, including the reduction from 3SUM to All-Edges Sparse Triangle, $3$-Linear Degeneracy Testing, Local Alignment and many more. This leads to an almost-complete derandomization, with the notable exception of the hardness results from three recent papers~\cite{DBLP:conf/stoc/AbboudBKZ22, AbboudBF23,JinX23}.

These three works~\cite{DBLP:conf/stoc/AbboudBKZ22, AbboudBF23,JinX23} establish strong lower bounds for important graphs problems such as 4-Cycle Listing and Distance Oracles. They commonly rely on the ``short cycle removal'' technique, a technique that was first proposed by Abboud, Bringmann, Khoury and Zamir~\cite{DBLP:conf/stoc/AbboudBKZ22} and independently optimized in~\cite{AbboudBF23,JinX23}. It is based on a structure-versus-randomness paradigm, on graphs or on numbers, which seems inherently randomized. Specifically, the key step in the refined proof~\cite{AbboudBF23,JinX23} is to reduce 3SUM to 3SUM instances with small \emph{additive energy}\footnote{The additive energy $\sfE(A)$ of an integer set $A$ is the number of tuples $(a, b, c, d) \in A^4$ with $a + b = c + d$. }. This key step utilizes the popular BSG theorem from additive combinatorics, and obtaining a deterministic algorithm for the BSG theorem is the main technical challenge for desired derandomization~\cite{fischer3sum}. Jin and Xu~\cite{JinX23} additionally showed 3SUM-hardness for 4-Linear Degeneracy Testing, which also remained underandomized. 

In this paper we derandomize the key step in~\cite{AbboudBF23,JinX23} and successfully design a deterministic reduction from general 3SUM instances to 3SUM instances with small additive energy:

\begin{theorem}
\label{thm:smalldoub3sumhardnessmain}
For any constant $\delta>0$, there is a deterministic fine-grained reduction from 3SUM to 3SUM with additive energy at most $n^{2+\delta}$.
\end{theorem}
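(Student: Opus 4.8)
The plan is to follow the ``short cycle removal'' blueprint of Abboud, Bringmann and Fischer~\cite{AbboudBF23} and Jin and Xu~\cite{JinX23}, but to replace every randomized ingredient by a deterministic one assembled from the tools developed earlier in this paper---most importantly the deterministic subquadratic-time algorithm for the Balog--Szemer\'edi--Gowers (BSG) theorem, together with the deterministic approximate and small-doubling 3SUM-counting routines. After the usual preliminary reductions it suffices, for a single input set $A$ of $n$ integers, to give a deterministic procedure that outputs $O(1)$ many 3SUM instances of size $O(n)$, each of additive energy at most $n^{2+\delta}$, spends only $n^{2-\Omega(\delta)}$ additional time, and has the property that $A$ contains a 3SUM solution iff one of the emitted instances does. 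We obtain this by an iterative \emph{structure-extraction} loop: while the current set $A$ still satisfies $\sfE(A) > n^{2+\delta}$ we peel off a large ``additively structured'' subset, deterministically resolve all 3SUM solutions of $A$ that touch it, and continue on the remainder; as soon as $\sfE(A) \le n^{2+\delta}$ we emit the residual (tripartite) instance, whose three parts are all equal to the current low-energy set, for the oracle.

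For a single extraction step, write $\sfE(A) = n^3/K$ with $K < n^{1-\delta}$. Running the deterministic BSG algorithm on $A$ produces a subset $A' \subseteq A$ with $|A'| = \Omega(n/K)$ and small doubling $|A' + A'| \le K^{O(1)} \cdot |A'|$. By the Pl\"unnecke--Ruzsa inequality every bounded iterated sum/difference set built from $A'$ then also has size $K^{O(1)} \cdot |A'| \le K^{O(1)} \cdot n$, so after a deterministic hash of the universe down to size $\mathrm{poly}(n)$ (found by scanning candidate moduli) we can build the relevant sumset indicators by FFT and, invoking the small-doubling 3SUM-counting routine --- and the approximate-counting routine for the ``mixed'' terms, exploiting that a small-doubling part contributes either no or many solutions --- detect in $K^{O(1)} \cdot n^{1+o(1)}$ time every 3SUM solution $a+b=c$ of $A$ having at least one coordinate in $A'$. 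Splitting $A$ into $A'$ and $A \setminus A'$, the solutions not yet accounted for form a constant number of tripartite sub-instances all of whose parts lie in $A \setminus A'$, and we recurse on them.

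For the accounting, observe that whenever $|A| \ge n^{(2+\delta)/3}$ --- otherwise $\sfE(A) \le |A|^3 \le n^{2+\delta}$ and the loop has already terminated --- the extracted subset has size $|A'| = \Omega(n/K) = \Omega(n^{\delta})$. Hence the loop runs for at most $n^{1-\delta}$ rounds, each element is peeled at most once, and a careful charging of the per-round cost $K^{O(1)} n^{1+o(1)}$ (bucketing by the scale of $K$, and handling the borderline ``moderate-energy'' regime, where $K$ is too close to $n^{1-\delta}$ for the polynomial BSG losses to be affordable, by a cruder direct argument or by a slight enlargement of $\delta$) keeps the total additional deterministic work at $n^{2-\Omega(\delta)}$; only $O(1)$ genuine low-energy instances, of size at most $n$, ever reach the oracle. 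The crux of the whole argument is making BSG simultaneously \emph{deterministic}, \emph{constructive}, and \emph{subquadratic}: the textbook proof isolates a good ``pivot'' by random sampling, and reproducing a comparably dense structured subset without randomness, fast enough, is exactly the obstacle flagged as open in~\cite{fischer3sum}. A secondary difficulty is performing all the hashing and sumset computations deterministically while still controlling --- or sufficiently closely estimating --- the relevant 3SUM counts, which is precisely what forces the detour through the approximate- and small-doubling-counting machinery rather than plain FFT.
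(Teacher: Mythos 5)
Your proposal reconstructs only half of the argument, and the hand‑waving in the accounting paragraph hides a gap that cannot be patched by the methods you invoke. The iterative BSG‑extraction loop you describe is essentially \cref{lem:hightomod} in the paper (reduce general 3SUM to 3SUM with energy $\le n^{2.95}$), and that part is correct: pick a fixed $K=n^{0.05}$, test the energy with \cref{cor:approxenergy}, if high then peel off a small‑doubling $A'$ via \cref{thm:derandbsg}, resolve all solutions touching $A'$ with \cref{thm:detsmalldoublethreesum}, and repeat; the total cost is $n^{o(1)}\cdot O(K^{14}n + K^9 n^{1.5} + n^2/K)$, which is subquadratic for small $K$. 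But this loop \emph{cannot} be pushed to the target energy $n^{2+\delta}$ for arbitrary constant $\delta>0$. The loop terminates only once $\sfE(A)\le |A|^3/K$, so to guarantee output energy $\le n^{2+\delta}$ you are forced to take $K\ge n^{1-\delta}$; with that choice the $\mathrm{poly}(K)$ overheads in both the BSG routine (the $K^{13}|A|$ term of \cref{thm:derandbsg}) and the small‑doubling \#3SUM step, multiplied by the $\Theta(K)$ iterations, blow up to $n^{\Theta(1)}\cdot n^{1+o(1)}\gg n^2$. Your acknowledgement that the ``borderline moderate‑energy regime'' needs ``a cruder direct argument or a slight enlargement of $\delta$'' is precisely where the argument breaks: enlarging $\delta$ cannot help because the theorem must hold for every constant $\delta>0$, and no such ``crude'' argument is supplied.

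What is missing is an entirely separate second reduction, not a refinement of the first. The paper's \cref{lem:modtolow} reduces 3SUM with energy $\le n^{2+\alpha}$ to 3SUM with strictly smaller energy exponent $2+\beta$ (with $\beta=\alpha-\alpha(1-\alpha)/30$) by a \emph{hash‑based self‑reduction}: choose a hash function from the derandomized almost‑additive family $\caH$ (\cref{def:hashfamilyh}), split $A$ into buckets, form $p^2 n^{o(1)}$ subinstances of size roughly $n/p$ each, and show (via \cref{lem:almostkwiseindep,lem:deltahalmostkwiseindep} plus \cref{cor:approxenergy} to certify a good hash) that after excising a few heavy/light/high‑energy subinstances the survivors each have energy $\le s^{2+\beta}$ relative to their own size $s$. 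Iterating this moderate‑to‑low step drives the exponent $2+\alpha$ down to $2+\delta$ while the instance size also shrinks, so the polynomial blow‑ups stay under control. This is a genuinely different mechanism from BSG extraction --- it does not peel off a structured subset of $A$ but rather shards the instance --- and it is essential: without it, no amount of charging makes the BSG loop subquadratic when the energy target is close to $n^2$. To complete your proof you would need to supply something serving the role of \cref{lem:modtolow}, deterministically certified, and explain how to compose it with your extraction loop as in the paper's proof of \cref{thm:smalldoub3sumhardnessmain}.
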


Naturally one of the main ingredients of the above theorem is a \emph{deterministic subquadratic-time} algorithm for the constructive BSG theorem, which may be of independent interest (see \cref{sec:tools}).

With some more work, \cref{thm:smalldoub3sumhardnessmain} implies the following lower bound for the \emph{4-Cycle Listing} problem (where the goal is to list all $4$-cycles in a given undirected graph), resolving a main open question left by~\cite{fischer3sum}.

\begin{restatable}{theorem}{fourcycle}
\label{thm:4cyclehardnessmain}
Under the deterministic 3SUM hypothesis, there is no deterministic algorithm for 4-Cycle Listing on $n$-node $m$-edge graphs in $O(n^{2-\delta} + t)$ time or $O(m^{4/3 - \delta} + t)$ time for any $\delta > 0$, where $t$ is the number of $4$-cycles in the graph. 
\end{restatable}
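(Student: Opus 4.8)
The plan is to reassemble, in deterministic form, the randomized ``short cycle removal'' reduction of~\cite{AbboudBF23,JinX23}: that pipeline already yields both bounds of \cref{thm:4cyclehardnessmain} under randomness, and~\cite{fischer3sum} showed that all of its ingredients can be derandomized \emph{except} the structure-versus-randomness step that reduces an arbitrary 3SUM instance to a 3SUM instance of small additive energy. Since \cref{thm:smalldoub3sumhardnessmain} now supplies a deterministic subquadratic realization of exactly that step, I would plug it in for the randomized energy reduction and use the deterministic toolbox of~\cite{fischer3sum} (deterministic hashing, deterministic 3SUM self-reductions, and the derandomized reduction from 3SUM to All-Edges Sparse Triangle) for the remaining randomized components.

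Concretely, starting from a 3SUM instance on $N$ numbers, I would first apply \cref{thm:smalldoub3sumhardnessmain} with a small parameter $\delta' = \delta'(\delta)$ to obtain deterministically an equivalent family of 3SUM instances on $N^{1+o(1)}$ numbers, each of additive energy at most $N^{2+\delta'}$ (so in particular, by Cauchy--Schwarz, each instance has only $N^{3/2+O(\delta')}$ solutions). Next I would feed each low-energy instance into the deterministic P{\u{a}}tra{\c{s}}cu-style reduction from 3SUM to All-Edges Sparse Triangle of~\cite{ChanX24,fischer3sum}, tuning its grouping parameter to produce a graph $G$ (on $n = N^{1+o(1)}$ nodes in one regime, with $m = N^{3/2+o(1)}$ edges in the other) whose triangles encode the 3SUM solutions and whose number of $4$-cycles is controlled by the additive energy of the underlying set and by the grouping parameter, exactly as in~\cite{AbboudBF23,JinX23}. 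Finally I would apply the elementary deterministic gadget that subdivides one side of the tripartite graph $G$, turning each triangle $xyz$ into a $4$-cycle through a fresh vertex $w_{yz}$; then $G'$ still has few $4$-cycles, so a single 4-Cycle Listing call on $G'$ returns a list of some length $t$ from which all triangles of $G$, and hence the 3SUM answer, are read off in time linear in the output.

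Putting the chain together, a hypothetical deterministic 4-Cycle Listing algorithm running in time $O(n^{2-\delta}+t)$ or $O(m^{4/3-\delta}+t)$ would decide the original 3SUM instance deterministically in time $N^{2-\Omega(\delta)+O(\delta')} = O(N^{2-\delta/2})$ once $\delta'$ is chosen small enough, contradicting \cref{hypo:deter-3sum}.

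The genuinely hard part --- derandomizing the BSG / small-energy step --- is already done inside \cref{thm:smalldoub3sumhardnessmain}, so what remains is bookkeeping, and the delicate point there is quantitative. Because \cref{thm:smalldoub3sumhardnessmain} only guarantees additive energy $N^{2+\delta'}$ rather than the ideal $\Theta(N^2)$, I must check that this extra $N^{\delta'}$ slack, together with the $N^{o(1)}$ overheads incurred when derandomizing the remaining hashing and self-reduction steps via~\cite{fischer3sum}, still keeps the number $t$ of listed $4$-cycles strictly below the $n^{2-\delta}$ (resp.\ $m^{4/3-\delta}$) budget in \emph{both} density regimes at once --- which is precisely why the statement is for every $\delta>0$ and why $\delta'$ must be taken sufficiently small as a function of $\delta$. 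A secondary point is to ensure the derandomized reduction of~\cite{ChanX24,fischer3sum} is available in a form where the $4$-cycle count of $G$ is provably tied to the additive energy and the grouping parameter; if the off-the-shelf statement is not in that form, I would re-derive the relevant P{\u{a}}tra{\c{s}}cu-style reduction directly, which is deterministic and routine.
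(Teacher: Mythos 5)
Your high-level skeleton (reduce to low-energy 3SUM via \cref{thm:smalldoub3sumhardnessmain}, then to Triangle Listing on graphs with few $4$-cycles, then to 4-Cycle Listing by the subdivision gadget) matches the paper's strategy. But there is a genuine gap, and the paper warns about it explicitly: the step where you ``feed each low-energy instance into the deterministic P{\u{a}}tra{\c{s}}cu-style reduction from 3SUM to All-Edges Sparse Triangle of~\cite{ChanX24,fischer3sum}'' does not work. Those deterministic self-reductions use the almost-linear hash family $h(x)=x\bmod p$, which is \emph{not} (approximately) 3-wise independent, and 3-wise independence is precisely what controls the number of spurious $4$-cycles in the constructed graph. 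The paper states this directly: ``Even though deterministic reductions from 3SUM to All-Edges Sparse Triangle are known [ChanX24, fischer3sum], they do not seem to preserve low energies.'' Your hedge --- that, if the off-the-shelf statement is not in the right form, you would ``re-derive the relevant P{\u{a}}tra{\c{s}}cu-style reduction directly, which is deterministic and routine'' --- is exactly where the real difficulty lives, and it is not routine.

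What the paper actually does instead is a direct derandomization of the randomized reduction of~\cite{AbboudBF23}, and the key enabler is Tool~III (\cref{sec:dethash}): a hash family $\caH$ that is simultaneously almost-linear (via $x\bmod q_i$ digit vectors) and ``almost'' almost $k$-wise independent (via the $\eps$-biased sets of~\cite{AlonGHP92}), with $|\caH|=p^{O(1)}$ so that the full family can be enumerated deterministically. The reduction (\cref{thm:to-trianglelisting} in \cref{sec:trianglist}) then picks good $f,g,h$ by a product construction over $O(1/\eps)$ rounds (\cref{lem:tri-listing-small-prime,lem:tri-listing-small-prime-preprocessing}), deterministically verifying all of the conditional $4$-cycle counting estimates rather than relying on expectation arguments. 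Bounding the $4$-cycles requires a multi-way case analysis on $\ell$-relations among the labels $(a_1,a_2,a_3,a_4)$ using \cref{lem:deltahalmostkwiseindep}, which has no analogue in the $x\bmod p$ world. So while your endgame (subdivision gadget, parameter balancing in the two density regimes) is fine, the central middle step that you dismissed as bookkeeping is the paper's main technical contribution towards \cref{thm:4cyclehardnessmain}.
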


These lower bounds tightly match the upper bounds for 4-Cycle Listing~\cite{JinX23,DBLP:conf/fsttcs/AbboudKLS23}. 

We remark that the additional work needed to derive \cref{thm:4cyclehardnessmain} from \cref{thm:smalldoub3sumhardnessmain} does not easily follow from previous work, and is in fact somewhat technically involved. In~\cite{AbboudBF23,JinX23}, the way to derive a randomized version of \cref{thm:4cyclehardnessmain} from a randomized version of \cref{thm:smalldoub3sumhardnessmain} is to first reduce low-energy 3SUM to All-Edges Sparse Triangle with few $4$-cycles (morally, this is a ``low-energy'' variant of triangle detection in graphs). These reductions in~\cite{AbboudBF23, JinX23} are both randomized. Even though deterministic reductions from 3SUM to All-Edges Sparse Triangle are known~\cite{ChanX24, fischer3sum}, they do not seem to preserve low energies. Our alternative reduction looks more like a direct derandomization of~\cite{AbboudBF23}, which makes the reduction more technically heavy. For instance, as a key technical tool it relies on derandomized version of the \emph{almost-additive hash functions} used in~\cite{AbboudBF23, JinX23}. 

Utilizing our deterministic subquadratic-time algorithm for the constructive BSG theorem (which was a main ingredient in \cref{thm:smalldoub3sumhardnessmain}) we also show a deterministic fine-grained equivalence between 3SUM and \#3SUM---the problem of \emph{counting} the number of 3SUM solutions. This equivalence was first shown by Chan, Vassilevska Williams and Xu~\cite{ChanWX23} (by means of randomized reductions). As a technical bonus, and in contrast to~\cite{ChanWX23}, our new reduction avoids the use of fast matrix multiplication (it relies on FFT instead). 

\begin{theorem}
\label{thm:count3sumequivalencenessmain}
\#3SUM and 3SUM are subquadratically equivalent, under deterministic reductions. 
\end{theorem}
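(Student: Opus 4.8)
The plan is to show both directions of the equivalence, the nontrivial one being that a deterministic subquadratic algorithm for 3SUM yields a deterministic subquadratic algorithm for \#3SUM. (The reverse direction is immediate, since solving \#3SUM and checking whether the count is nonzero solves 3SUM.) For the hard direction, I would first recall the standard hashing-based reduction idea: use an almost-linear hash function $h$ mapping the universe to $[R]$ for some $R$ roughly $n/\mathrm{polylog}$, partition $A$ into buckets $A_i = \{a \in A : h(a) = i\}$, and observe that each 3SUM solution $a+b=c$ gets mapped to a triple of buckets $(i,j,k)$ with $i + j \equiv k$ (up to the small additive error of $h$). If all buckets had size $\tilde O(n/R)$, then one could afford to enumerate the $\tilde O(R)$ valid bucket-triples and solve each subproblem by brute force in time $\tilde O((n/R)^2)$ each, for a total of $\tilde O(R \cdot (n/R)^2) = \tilde O(n^2/R)$, which would already be subquadratic — but of course an adversarial input can create a few heavy buckets, and this is exactly where the structure-versus-randomness dichotomy and the BSG theorem enter.

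Concretely, I would call an element $a \in A$ \emph{heavy} if its bucket $A_{h(a)}$ is large, say of size at least $n^{1+\gamma}/R$ for a suitable parameter $\gamma$. The light part contributes few collisions and can be handled by the brute-force-over-bucket-triples argument above. For the heavy part, the key point is that a large preimage under an almost-linear hash function has small doubling / small additive energy — this is precisely the combinatorial content that makes the ``short cycle removal'' machinery work, and it lets us invoke the deterministic constructive BSG algorithm promised in \cref{sec:tools}. Using BSG, I would extract from the heavy set a large subset $A'$ with small doubling, so that $A' + A'$ (or the relevant sumset) has size only $\tilde O(|A'|)$; the number of 3SUM solutions involving elements of $A'$ can then be counted by explicitly computing the sumset via FFT and reading off multiplicities — this is where we replace the fast-matrix-multiplication step of~\cite{ChanWX23} with plain FFT. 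One then recurses on $A \setminus A'$ (the remaining heavy elements), which has shrunk by a constant factor of the heavy mass, so after $O(\log n)$ rounds all heavy elements are accounted for. Summing the costs over all rounds and over the light/heavy split gives a total running time of the form $n^2 / n^{\Omega(1)}$ plus the (subquadratic, by hypothesis) cost of the 3SUM calls used inside BSG and inside the bucket subproblems.

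The main obstacle I anticipate is bookkeeping the additive error of the almost-linear hash function throughout — in particular, ensuring that when we count solutions bucket-triple by bucket-triple we neither double-count nor miss solutions whose hash values land just across a bucket boundary, and that the deterministic construction of the hash family (a derandomized almost-additive hash, as alluded to in the discussion after \cref{thm:4cyclehardnessmain}) is compatible with exact counting rather than mere detection. A secondary technical point is making the BSG extraction truly iterative and deterministic: each application removes only a constant fraction of one heavy ``cluster'', so one must argue that $O(\log n)$ (or $n^{o(1)}$) rounds suffice and that the parameters of BSG — the doubling bound versus the size of the extracted subset — compose without degrading past a subquadratic bound. Once these are in place, the equivalence follows by chaining the reductions; since every step is deterministic and each 3SUM subroutine call is on an instance of size $n^{1-\Omega(1)}$ or is invoked $n^{1-\Omega(1)}$ times, a deterministic $O(n^{2-\eps})$ 3SUM algorithm propagates to a deterministic $O(n^{2-\eps'})$ algorithm for \#3SUM.
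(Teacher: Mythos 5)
Your proposal has the right themes in its toolbox (BSG, almost-additive hashing, FFT over small sumsets) but is not a correct proof, and it diverges from the paper's route in an important way. The paper does not rederive the \#3SUM\,$\to$\,3SUM reduction from scratch: it proves \cref{thm:introdetpopsum} (deterministic 3SUM-counting for popular sums, via the deterministic BSG theorem and small-doubling \#3SUM), observes that this derandomizes Lemma~8.4 of Chan--Vassilevska Williams--Xu, and notes that the rest of the equivalence proof in that paper is already deterministic. Your proposal tries instead to build the reduction directly from hash bucketing plus BSG extraction, and this sketch has concrete gaps.

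First, the baseline ``light'' case does not work as stated. With an almost-linear $h\colon[N]\to[R]$ and a 3SUM relation $a+b=c$, the constraint on buckets is $h(a)+h(b)-h(c)\in\Delta$ for a small error set $\Delta$; for each pair of bucket indices $(i,j)$ there are $|\Delta|$ admissible $k$, so the number of admissible bucket triples is $\Theta(R^2|\Delta|)$, not $\tilde O(R)$. Brute-forcing all of them costs $R^2|\Delta|\cdot (n/R)^2=|\Delta|\,n^2$, which is not subquadratic. This is where one actually needs to \emph{call the 3SUM decision oracle} to make progress, but your write-up never pins down where or how the oracle is invoked to obtain an exact count, as opposed to a detection bit. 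If the bucketed brute force were legitimate, your argument would give an unconditional subquadratic \#3SUM algorithm, which is not what an equivalence should do.

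Second, the claim that ``a large preimage under an almost-linear hash function has small doubling / small additive energy'' is false. With $h(x)=x\bmod p$, a heavy bucket is just a large subset of one arithmetic progression, and an arbitrary subset of an AP can have essentially maximal doubling and minimal energy. In the short-cycle-removal machinery the flow is the reverse: one first certifies that the \emph{whole} set has high additive energy (via an energy estimator like \cref{cor:approxenergy}), and only then invokes BSG to extract a dense small-doubling subset; the almost-additive hash family enters elsewhere (e.g., universe reduction, and in the paper's deterministic BSG proof via \cref{lemma:find-subset-highenergy-lemma}). Heaviness of a single hash bucket by itself gives you no energy lower bound to feed into BSG.

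Third, even granting the heavy/light split, the bookkeeping for counting is nontrivial: after extracting $A'$ and removing it, the cross-contributions (3SUM solutions with one, two, or three of their coordinates in $A'$) need to be accounted for by inclusion--exclusion or by carefully counting $\sum_{a\in A'}(1_A\conv 1_A)[-a]$ against the \emph{current} set $A$, and you do not spell this out. This is precisely the kind of accounting that Chan--Vassilevska Williams--Xu's reduction handles cleanly via the popular-sum abstraction, which is why the paper isolates \cref{thm:introdetpopsum} as the one lemma to derandomize rather than redoing the whole reduction. A correct fix for your sketch would be to identify that abstraction: show that exact multiplicities for popular sums suffice (this is \cref{thm:introdetpopsum}), argue that the number of sums in each ``unpopular'' dyadic level is small, and use the 3SUM oracle there following the structure of Chan--Vassilevska Williams--Xu; everything else in their argument is already deterministic.
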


\subsection{Application 2: Pattern matching algorithms}
Inspired by the previously observed connection~\cite{focs23} between tools in the fine-grained complexity of 3SUM on the one hand, and the design of pattern matching algorithms on the other hand, we have investigated how our toolset can further be algorithmically exploited. We offer two algorithms for natural pattern matching tasks.

\paragraph{Problem 1: Text-to-Pattern Hamming Distances.}
In the \emph{Text-to-Pattern Hamming Distances} problem (also known as \emph{Sliding Window Hamming Distances}), we are given a pattern string~$P$ of length $m$ and a text $T$ of length $n$ over some common alphabet $\Sigma$, and the goal is to compute the Hamming distance (i.e., the number of mismatches) between $P$ and $T[i \mathinner{.\,.} i + m-1]$ for every shift $i \in [n - m]$. This fundamental text-processing problem has been extensively studied in various settings~\cite{FischerP74,Abrahamson87,Karloff93,FredrikssonG13,AtallahGW13,KopelowitzP15,KopelowitzP18,ChanGKKP20}.

Of particular interest is \emph{approximate} setting, where the typical goal is to compute $(1+\eps)$-approximations of the Hamming distances. The history of this problem is rich, starting with a classical algorithm due to Karloff~\cite{Karloff93} running in time $\widetilde\Order(n \epsilon^{-2})$. At the time this was believed to be best-possible (in light of certain $\Omega(\epsilon^{-2})$ communication lower bounds~\cite{Woodruff04,JayramKS08}), and it was surprising when Kopelowitz and Porat~\cite{KopelowitzP15} came up with an algorithm in time $\widetilde\Order(n \epsilon^{-1})$; see also~\cite{KopelowitzP18} for a more streamlined algorithm. Then, in a very recent breakthrough, Chan, Jin, Vassilevska Williams, and Xu~\cite{focs23} gave an algorithm in $\tilde O(n/\eps^{0.93})$ time, improving the dependence on $\epsilon^{-1}$ to \emph{sublinear}. However, all of these algorithms are randomized.

Karloff~\cite{Karloff93} also designed a deterministic algorithm running in time $O(n\eps^{-2}\log^3 m)$. In stark contrast to the progress by randomized algorithms, however, Karloff's deterministic algorithm from more than 30 years ago has remained the state of the art. In our work we finally improve this $\eps^{-2}$ factor to $\eps^{-1}$ (albeit at the cost of incurring an extra $m^{o(1)}$ factor):

\begin{theorem}
    \label{thm:dethdmain}
  There is a deterministic algorithm solving $(1+\eps)$-approximate Text-to-Pattern Hamming Distances in $nm^{o(1)}/\eps$ time.
\end{theorem}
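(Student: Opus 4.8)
**Proof proposal for Theorem (deterministic $(1+\eps)$-approximate Text-to-Pattern Hamming Distances in $nm^{o(1)}/\eps$ time).**

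The plan is to follow the standard decomposition of Hamming distance into symbol-weighted correlations, but to replace the randomized projection/sampling steps (which are the source of the $\eps^{-2}$ bottleneck in naive schemes and of randomization in Kopelowitz--Porat) by a deterministic routine built on our 3SUM-counting tools. Recall that the Hamming distance between $P$ and $T[i\mathinner{.\,.}i+m-1]$ equals $m$ minus the number of matches, and the number of matches can be written as $\sum_{\sigma\in\Sigma}(\mathbf 1_{P=\sigma} \star \mathbf 1_{T=\sigma})[i]$, where $\star$ denotes (reverse) correlation. First I would split the alphabet symbols into \emph{frequent} and \emph{rare} symbols relative to a threshold. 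The contribution of rare symbols (each occurring few times in $P$, or in $T$) can be computed exactly and cheaply by brute-force convolution over sparse vectors, using FFT on the sparse supports; this is already deterministic. The difficulty is the frequent symbols.

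For the frequent part, the idea is that we do not need the exact correlation, only a $(1+\eps)$-approximation, and the total match count at any shift is at most $m$, so we are trying to approximate a sum of nonnegative integers bounded by $m$ to within a $(1+\eps)$ factor. This is exactly the regime where \emph{approximate counting} of solutions helps: I would reduce, shift by shift but in a batched fashion, the task of approximating $\sum_\sigma (\mathbf 1_{P=\sigma}\star\mathbf 1_{T=\sigma})[i]$ to an instance of approximate 3SUM-counting (or its convolution variant), by encoding symbol identity into the values so that a ``3SUM solution'' corresponds to a matched position with the same symbol. Concretely, build value sets where an element of $P$ at position $j$ with symbol $\sigma$ is mapped to a number encoding $(j,\sigma)$ and an element of $T$ at position $k$ with symbol $\sigma$ to a number encoding $(k,\sigma)$, arranged so that $a+b=c$ forces $\sigma$-equality and $k-j=i$; then the number of 3SUM solutions with ``target'' $i$ is precisely the match count at shift $i$. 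Feeding this into the deterministic approximate \#3SUM / approximate-counting machinery developed earlier in the paper (\cref{sec:tools}) yields $(1+\eps)$-approximate match counts for all shifts simultaneously in $nm^{o(1)}/\eps$ time, since the approximate-counting tool's dependence on the accuracy parameter is $\eps^{-1}$ up to $n^{o(1)}$ factors.

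A cleaner route, which I would actually pursue, is to avoid encoding the full shift and instead observe that Text-to-Pattern Hamming Distances is \emph{self-reducibly} a batch of $O(n/m)$ convolution instances of size $O(m)$, so it suffices to solve one size-$m$ instance in $m^{1+o(1)}/\eps$ time. Within such an instance I would group positions and symbols, handle the rare case exactly as above, and for the frequent case apply a deterministic ``$\eps$-approximate sparse convolution'' primitive: we only care about approximating each output entry up to a multiplicative $(1+\eps)$, and each output entry is a sum of at most $m$ ones, so it suffices to count each output bucket up to additive error $\eps\cdot(\text{entry})$. This is achieved deterministically by running the approximate-counting subroutine from \cref{sec:tools} on $O(\log_{1+\eps} m) = O(\eps^{-1}\log m)$ geometric ``weight levels'', where at each level we threshold to the relevant scale and use the deterministic counting tool to estimate how many contributions land in that level; summing the level estimates with geometric weights gives a $(1+O(\eps))$-approximation per shift. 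The running time is (number of levels) $\times$ (cost of one deterministic approximate-count on an $O(m)$-size instance) $= \eps^{-1}\cdot m^{1+o(1)}$, and rescaling $\eps$ by a constant absorbs the $O(\eps)$ into $\eps$.

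The main obstacle, and where the bulk of the technical effort will go, is making the counting step \emph{deterministic} with only an $m^{o(1)}$ overhead: the randomized algorithms all rely on random hashing / random shifts to spread values into few buckets with bounded collisions, and here we must substitute the deterministic almost-additive / almost-linear hashing and the deterministic BSG-based structure-vs-randomness dichotomy that the paper develops for 3SUM. I expect the argument to split on the additive structure of the position sets: if the relevant index sets have small doubling (small additive energy), the small-doubling 3SUM-counting algorithm applies directly; otherwise, the deterministic BSG theorem lets us peel off a dense structured piece and recurse, and we must check that the recursion depth and the $m^{o(1)}$ factors do not blow up beyond subpolynomial. Controlling error propagation across the $O(\eps^{-1}\log m)$ weight levels (ensuring additive errors at each level sum to a genuine multiplicative guarantee, with no cancellation issues since everything is nonnegative) is routine but must be done carefully; I would also double-check that the reduction to $O(n/m)$ size-$m$ convolutions correctly stitches boundary-crossing windows, which is standard.
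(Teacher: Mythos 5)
Your starting point coincides with the paper's: express the match count at shift $i$ as $\sum_{s\in\Sigma}(1_{A_s}\conv 1_{-B_s})[i]$ with $A_s,B_s$ the position sets of symbol $s$ in text and pattern, and approximate each convolution deterministically with the approximate-3SUM-counting tool (\cref{thm:introdeterministicapprox3sum}), which already has the $\eps^{-1}$ dependence baked in, so there is no need for $\eps^{-1}\log m$ separate ``geometric weight levels''. Summing the additive errors $\eps|B_s|$ over $s$ gives a per-shift additive error of $\eps m$ in $\eps^{-1}m^{1+o(1)}$ time --- that much is right. But additive error $\eps m$ is a $(1+\eps)$-multiplicative guarantee only when the Hamming distance is $\Theta(m)$; when the distance is $\Theta(k)\ll m$ it is useless, and this is the real technical content of the theorem. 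Your geometric-levels scheme does not fix this: you are targeting a multiplicative approximation of the \emph{match count}, and a $(1+\eps)$-approximation of the match count does not yield a $(1+\eps)$-approximation of the Hamming distance (which is $m$ minus the match count) precisely in the hard regime where the match count is close to $m$.

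The paper closes this gap via the Gawrychowski--Uzna\'nski reduction from~\cite{GawrychowskiU18}: one reduces, in near-linear time and preserving additive approximation, the $k$-bounded instance to the structured case where both strings have run-length encodings of size $O(k)$, so every $A_s,B_s$ is a union of $O(k)$ intervals. Within this structured case each interval is split dyadically, the correlation is rewritten as $\sum_{s,u,v}1_{A_s^u}\conv 1_{-B_s^v}\conv L_{u,v}$ over the $O(k\log m)$ block-endpoint sets $A_s^u,B_s^v$ with small trapezoid kernels $L_{u,v}$, and \cref{thm:introdeterministicapprox3sum} is applied with finer precision $\eps k/m$; one must then track carefully how $L_{u,v}$ amplifies the per-block errors so that the per-shift error comes out as $O(\eps k)$. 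This RLE reduction plus the dyadic bookkeeping is the essential missing step in your proposal and is not routine. Separately, the deterministic BSG theorem and small-doubling 3SUM counting you expect to invoke play no role in this proof --- the only 3SUM tool needed is \cref{thm:introdeterministicapprox3sum}, which rests on FFT and Chan--Lewenstein-style deterministic modular hashing, not on BSG.
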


This answers an open question explicitly asked by Kociumaka in a video talk on~\cite{ChanGKKP20}\footnote{\url{https://youtu.be/WEiQjjTBX-4?t=2820}} and by~\cite{focs23}. 

\paragraph{Problem 2: \boldmath$k$-Mismatch Constellation.}
The \emph{Constellation} problem is a natural geometric pattern matching task: Given two sets $A, B \subseteq \Int^d$ of at most $n$ points, the goal is to find all shifts~\makebox{$c \in \Int^d$} satisfying that $B + c \subseteq A$. Intuitively, this models the task of identifying a certain constellation of stars~$B$ in the night sky~$A$. But what if the constellation contains some mistakes? In the generalized \emph{$k$-Mismatch Constellation} problem the goal is to report all shifts $c$ with~\makebox{$|(c + B) \setminus A| \leq k$} (i.e., all occurrences that match the constellation in the night sky up to $k$ mismatches). It is known that we can assume all input points are in $\Int$ instead of $\Int^d$ without loss of generality \cite{CardozeS98, nickconstellation}. 

The study of this problem was initiated by Cardoze and Schulman~\cite{CardozeS98}. Assuming the natural technical condition $k\le (1-\eta)|B|$ for any fixed constant $\eta>0$ (which implies the output size is at most $n/\eta$), they designed a randomized algorithm in time $\tilde O(nk)$. Recently, Fischer~\cite{nickconstellation} revisited this problem and gave a deterministic $\tilde O(nk) \cdot \polylog(N)$ time algorithm. Both of these results left open whether it is possible to achieve better algorithms, in time $\Order(n k^{1-\epsilon})$, say, or perhaps even in near-linear time~$\widetilde\Order(n)$.

In fact, a side result by Chan, Vassilevska Williams, and Xu~\cite[Lemma 8.4]{ChanWX23} implies an $n^{2-\Omega(1)}$-time algorithm for this problem, which beats time $\tilde O(nk)$ whenever $k$ is very large in terms of~$n$. As our second algorithmic contribution we significantly widen the range of $k$ for which we can improve upon the time $\widetilde\Order(n k)$:

\begin{theorem}\label{thm:constelargebmain}
Let $A, B \subseteq [N]$ and let $1\le k\le 0.3|B|$. There is a deterministic algorithm that solves $k$-Mismatch Constellation in $ |A|\cdot k^{2/3} \cdot (|A|/|B|)^{2/3} \cdot N^{o(1)}$ time. If randomization is allowed, then the $N^{o(1)}$ factor can be replaced by $\polylog(N)$.
\end{theorem}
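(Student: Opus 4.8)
The plan is to reduce the problem to detecting and \emph{exactly counting} the ``heavy'' shifts, and then to perform that exact count using the additive-combinatorial $3$SUM-counting machinery of this paper (approximate $3$SUM-counting, small-doubling $3$SUM-counting, and the deterministic constructive BSG theorem). Write $n=|A|$, $m=|B|$, and $\mathrm{match}(c):=|\{(a,b)\in A\times B : a-b=c\}|$, so a shift $c$ is good iff $\mathrm{match}(c)\ge m-k$. Since $k\le 0.3m$, every good shift is \emph{heavy} in the sense that $\mathrm{match}(c)\ge 0.7m$, and since $\sum_c \mathrm{match}(c)=nm$ there are only $O(n)$ heavy shifts. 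Thus it suffices to (a) produce a candidate set $C$ of size $n\cdot N^{o(1)}$ containing every heavy shift, together with crude (say, within a constant factor) estimates of $\mathrm{match}(c)$, and then (b) compute $\mathrm{match}(c)$ \emph{exactly} for all $c\in C$. For step (a) I would invoke the approximate $3$SUM-counting tool; to cope with a large universe $[N]$ I would first hash $[N]$ down to an $n^{O(1)}$-size universe using the paper's \emph{deterministic} almost-linear (almost-additive) hash functions, which distort each count only mildly --- this is exactly where the $N^{o(1)}$ factor enters (and collapses to $\polylog(N)$ if randomization is allowed).

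The crux is step (b): approximate counting is insufficient here because resolving the threshold $\mathrm{match}(c)\ge m-k$ requires \emph{additive} accuracy below $1$ in a count of size $\approx m$. The key structural fact is that a heavy-shift population forces additive structure in the inputs: if $c,c'$ are both heavy, then at least $0.4m$ elements $b\in B$ satisfy $b+c,b+c'\in A$, so $c-c'$ is a $\Omega(m)$-popular difference of $A$; hence the set $H$ of heavy shifts obeys $H-H\subseteq\{d : d \text{ has} \ge 0.4m \text{ representations as a difference in } A\}$, a set of size $O(n^2/m)$, and consequently $\sfE(A,-B)=\sum_c\mathrm{match}(c)^2=\Omega(|H|\cdot m^2)$ is large. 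I would feed this into the deterministic constructive BSG theorem to extract large $A'\subseteq A$, $B'\subseteq B$ with small difference set $|A'-B'|$, so that the restricted counts $|\{(a,b)\in A'\times B' : a-b=c\}|$ can be obtained for all $c$ simultaneously by an FFT over a universe of size $\tilde O(|A'-B'|)$; iterating BSG on the leftover parts (as in iterated BSG / small-doubling $3$SUM-counting) then recovers $\mathrm{match}(c)$ on all of $C$. This is the mechanism that lets the $O(n)$ exact counts \emph{share} work rather than being computed one by one.

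To extract the exact running time $|A|\cdot k^{2/3}\cdot(|A|/|B|)^{2/3}\cdot N^{o(1)}$, I would combine this with a decomposition of $B$ into blocks of a carefully tuned size $s$ and separate two contributions: a ``structured'' computation (the BSG+FFT / small-doubling $3$SUM-counting routine above, run on the block structure), whose cost grows with the number of blocks, versus a residual brute-force cleanup that, for each candidate $c$, scans only its $O(\min(\cdot,k))$ ``dirty'' blocks at cost $\tilde O(s)$ each --- valid because a shift that already exhibits more than $k$ mismatches can be discarded immediately. Balancing these two costs against each other (and folding in the $|A|/|B|$ asymmetry and the distortion from the universe-reduction hashing) is what produces the $k^{2/3}$ exponent; for parameter ranges where they are already better I would simply fall back to the $\tilde O(|A|k)$ algorithm of Cardoze--Schulman / Fischer or to the $|A|^{2-\Omega(1)}$-time algorithm implicit in Chan--Vassilevska Williams--Xu, and output the minimum.

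\textbf{Main obstacle.} The hard part is making step (b) simultaneously fast and deterministic: one cannot afford $\tilde\Omega(m)$ work per candidate (that would be $\Omega(nm)$ total), so the $O(n)$ exact counts \emph{must} be computed in a shared, batched way, and the only leverage for doing so is the additive structure of the heavy-shift set --- which is precisely why the constructive BSG theorem and small-doubling $3$SUM-counting are needed rather than a direct convolution. Controlling the collision noise introduced by the universe-reduction hashing \emph{without randomness} (using the paper's derandomized almost-additive hash families), and orchestrating the iterated-BSG recursion and the block decomposition so that the cumulative overhead stays $N^{o(1)}$, is where essentially all of the technical effort will go.
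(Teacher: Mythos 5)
Your structural observation --- that any two heavy shifts $c,c'$ share $\Omega(|B|)$ common matching positions, so $c-c'$ is an $\Omega(|B|)$-popular difference of $A$, and hence the heavy-shift set has a difference set of size $O(|A|^2/|B|)$ --- is exactly the paper's Lemma~10.4, and it is the right lever. From there, though, you take a route that would not achieve the claimed running time. The paper never invokes the BSG theorem here. It does not extract structured subsets of $A$ or $B$: the candidate set $C$ \emph{already} has small doubling, and the small-doubling \#3SUM tool (\cref{thm:detsmalldoublethreesum}) is applied \emph{directly} with $-C$ as its small-doubling input set $S$; that, together with one partial-convolution call (\cref{lem:nickcount}), is the entire additive-combinatorial machinery used. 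Routing through iterated BSG on $A$ and $B$, as you sketch, would at best reproduce the popular-\#3SUM machinery of \cref{thm:introdetpopsum}, whose running time $k^2|A|^{2-3/128}N^{o(1)}$ is nowhere near $|A|\,k^{2/3}(|A|/|B|)^{2/3}\,N^{o(1)}$ for small $k$ --- that route is precisely the previously-known $n^{2-\Omega(1)}$ bound of Chan--Vassilevska Williams--Xu which \cref{thm:constelargebmain} is improving upon.

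The idea you are missing is the win-win on $|C|$ that actually produces the $k^{2/3}$ exponent. If $|C|<|A|/R$, Fischer's partial-convolution algorithm already finishes in $\tilde O(|A|k/R)$ because $|B+C|\le|A|+k|C|$ by the promise. Otherwise $|C|\ge|A|/R$ and the small-doubling bound $|C-C|\le 5|A|\cdot(|A|/|B|)$ applies. One then splits $B$ \emph{globally by popularity}, not into positional blocks: defining $B_{\mathrm{bad}}$ as those $b\in B$ with few representations $b=a-c$ ($a\in A$, $c\in C$), the $\le k$-mismatch promise forces $|B_{\mathrm{bad}}|\le 4k$. The complement $B'=B\setminus B_{\mathrm{bad}}$ satisfies $|B'+C|\le O(|A|R\cdot|A|/|B|)$ and is handled by a single partial convolution, while $B_{\mathrm{bad}}$ is handled by small-doubling \#3SUM against the small-doubling set $C$; balancing $R=(k(|B|/|A|)^2)^{1/3}$ yields the bound. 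Your proposed decomposition of $B$ into fixed-size blocks with per-candidate ``dirty-block'' cleanup is structurally different and would not give you the crucial \emph{global} bound $|B_{\mathrm{bad}}|=O(k)$, nor the $|B'+C|$ estimate. Two secondary mismatches: the candidate set $C$ is produced by random subsampling of $B$ (in the randomized version) or a scaling trick over moduli $M\cdot 2^i$ (in the deterministic version), not by approximate 3SUM counting or almost-additive hashing (the hash family of Section~4 is never used in the Constellation section); and approximate 3SUM counting (\cref{thm:introdeterministicapprox3sum}) enters only to compute $B_{\mathrm{bad}}$, not to form $C$ or to approximate the final counts.
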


Notably, this algorithm runs in time $n k^{2/3} N^{o(1)}$ in the regime where $|B| = \Theta(|A|)$ (and $k$ is arbitrary). We improve upon the $\widetilde\Order(n k)$ baseline in the broader range $|B| \gg |A| / \sqrt{k}$. (Counterintuitively, this problem indeed becomes harder the smaller~$B$ is as then there could be completely separate structured regions in $A$ in which~$B$ could be approximately matched.)

But can we do even better? In particular, there is no evidence that the problem cannot even be solved in near-linear time $\widetilde\Order(n)$, irrespective of $k$. Relatedly, in the \emph{$k$-Bounded} Text-to-Pattern Hamming Distances problem (where all distances exceeding $k$ can be reported as $\infty$ in Text-to-Pattern Hamming Distances), an algorithm in time $\tO(n + k \sqrt{n})$ is known~\cite{GawrychowskiU18}, in the setting where the text and pattern both have length $\Theta(n)$. This running time becomes near-linear when $k$ is small ($k \leq n^\eps$, say). These two problems appear similar in nature in that for both problems the task is about shifts that result in at most $k$ mismatches. So can we achieve a similar result for $k$-Mismatch Constellation?

Perhaps surprisingly, we prove that this is not possible---at least not by \emph{combinatorial}\footnote{In the context of fine-grained complexity and algorithm design, we informally say that an algorithm is \emph{combinatorial} if it does not rely on algebraic fast matrix multiplication. It has emerged as a powerful paradigm to prove lower bounds for combinatorial based on the \emph{BMM hypothesis} stating that there is no combinatorial algorithm for Boolean matrix multiplication in time $\Order(n^{3-\epsilon})$ (for all $\epsilon > 0$). See e.g.~\cite{AbboudW14,williams2018some,GawrychowskiU18} and the references therein.} algorithms:

\begin{restatable}{theorem}{BMMLowerBound}
    \label{thm:BMMLowerBound}
    Under the BMM hypothesis, combinatorial algorithms solving $k$-Mismatch Constellation requires $\sqrt{k} n^{1-o(1)}$ time for sets of size $\Theta(n)$. 
\end{restatable}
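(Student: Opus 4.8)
The plan is to give a deterministic, combinatorial fine-grained reduction from Boolean Matrix Multiplication to $k$-Mismatch Constellation. Under the BMM hypothesis, computing the Boolean product of an $n_1 \times n_2$ and an $n_2 \times n_3$ matrix requires $(n_1 n_2 n_3)^{1-o(1)}$ combinatorial time whenever $n_1,n_2,n_3$ are polynomially related (this follows from the balanced case by a standard splitting argument), and one may take the hard instances to have $M_2$ dense. I would instantiate this with $n_1 = n/\sqrt k$ and $n_2 = n_3 = \sqrt k$, so $\|M_2\|_1 = \Theta(n_2 n_3) = \Theta(k)$ and the product costs $(n\sqrt k)^{1-o(1)}$; then a $\sqrt k\cdot n^{1-\eps}$-time algorithm (for any fixed $\eps>0$) for Constellation on sets of size $\Theta(n)$ with parameter $\Theta(k)$, composed with an $\tilde O(n)$-time reduction, would contradict the lower bound above. (For $k$ sub-polynomial in $n$ the claimed bound is subsumed by the $\Omega(n)$ input size, so only $n^{\Omega(1)}\le k\le n$ needs attention.)

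For the encoding, view $[n_2]$ as a \emph{comparison coordinate}. For each row $p\in[n_1]$ put into $A$ a block encoding $R_p := \{q : M_1[p][q]=1\}\subseteq[n_2]$, and for each column $r\in[n_3]$ a block into $B$ encoding $C_r := \{q : M_2[q][r]=1\}\subseteq[n_2]$. The blocks live in two coordinates, a block-index coordinate and the comparison coordinate, and the block-index positions are chosen so that the $A$-block/$B$-block position differences are pairwise distinct over all pairs $(p,r)$ --- e.g.\ $A$'s $p$-th block at index $p$ and $B$'s $r$-th block at index $-r\cdot n_1$, giving difference $p + r\cdot n_1$. (This ``isolation'' layout is essential: a contiguous layout would let a single shift read off only aggregate ``diagonal'' sums of the product, which are FFT-computable and hence useless.) The layout guarantees every shift aligns at most one $A$-block with one $B$-block; when it cleanly aligns block $p$ with block $r$, the mismatches from the active part are exactly $\|M_2\|_1 - |R_p\cap C_r| = \|M_2\|_1 - (M_1 M_2)[p][r]$, since the other $n_3-1$ blocks of $B$ land where no $A$-block sits and mismatch entirely. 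To satisfy the constraint $k \le 0.3|B|$, I append to both sets a large \emph{padding region} in its own coordinate range (disjoint from the active region): long arithmetic progressions with a common difference, with $B$'s a sub-progression of $A$'s, so that $B$'s padding matches $A$'s padding for every shift that is zero on the ``mode'' coordinate and fits in the relevant window. Sizing the padding to $\Theta(n)$ elements gives $|A|=|B|=\Theta(n)$ while the active part of $B$ has only $\Theta(k)$ elements. With the Constellation threshold set to $\|M_2\|_1 - 1$, the shifts with at most that many mismatches are exactly the $x_{p,r}$ with $(M_1 M_2)[p][r]\ge 1$; the universe is merely polynomially large, we only ever write $O(n)$ integers, so one Constellation call plus $\tilde O(n)$ postprocessing recovers the Boolean product.

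The hard part is the synchronization bookkeeping that certifies the only below-threshold shifts are the intended $x_{p,r}$ (or that the remaining ones can be filtered cheaply, e.g.\ by discarding returned shifts that are nonzero on the comparison coordinate). Any shift nonzero on the mode coordinate drags the $\Theta(n)$-element padding of $B$ off the padding of $A$ and thus blows past the threshold; among mode-$0$ shifts, one must still rule out shifts displaced on the comparison coordinate and shifts that are not block-aligned. I would handle these with the usual gadgets: inflating the comparison coordinate by a multiplicative resolution factor so any non-multiple displacement kills all active-part matches; keeping the active and padding regions in disjoint coordinate ranges so they cannot cross-match; and using the arithmetic-progression structure of the padding together with a parity marker to force non-block-aligned shifts to drop a constant fraction of $B$. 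None of this uses randomness, so the reduction is deterministic and combinatorial; combined with the standard equivalence of combinatorial BMM with triangle detection it also gives the triangle-detection phrasing. The construction already yields the bound at $k = \Theta(n)$ from $\sqrt n\times\sqrt n\times\sqrt n$ BMM, and the rectangular instantiation above --- with the padding always scaled to $\Theta(n)$ so that it dwarfs the $\Theta(k)$-size active part --- extends it across the full polynomial range of $k$.
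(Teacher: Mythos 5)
Your proposal is correct in spirit but takes a genuinely different route from the paper. The paper's proof is a two-step modular argument: (1) cite the known result of Gawrychowski and Uzna\'nski that, under BMM, combinatorial algorithms for $k$-bounded Text-to-Pattern Hamming Distances (with $|T|=n$, $|P|=\Theta(n^\alpha)$, $k=\Theta(n^\alpha)$) require $n^{1+\alpha/2-o(1)}$ time; (2) give a short, clean reduction from that string problem to $k$-Mismatch Constellation, namely $A=\{1,\dots,2n\}\cup\{i+M\cdot T[i]\}$ and $B=\{1,\dots,n\}\cup\{i+M\cdot P[i]\}$ for $M=10n$, which makes $|(c+B)\setminus A|$ literally equal to the Hamming distance at shift~$c$. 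The $\{1,\dots,n\}$-vs-$\{1,\dots,2n\}$ interval plays exactly the role of your $\Theta(n)$-sized padding, and the $i+M\cdot(\text{char})$ encoding is your ``comparison coordinate inflated by a large resolution factor.'' By contrast, you re-derive the BMM encoding from scratch with the isolation layout $p+r\cdot n_1$ and direct block analysis, effectively reproducing the combinatorial core of the Gawrychowski--Uzna\'nski argument inline and then adding the padding gadget. Both give $\sqrt k\,n^{1-o(1)}$, and your rectangular-BMM parameterization correctly covers the full range $n^{\Omega(1)}\le k\le n$. What the paper's route buys is brevity and the ability to defer the delicate part (ruling out all spurious below-threshold shifts) to a citation; what yours buys is self-containment and a reduction that does not detour through a string problem. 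The one thing I would flag is that your synchronization paragraph is sketchier than the rest: the claim that non-block-aligned mode-$0$ shifts fail the threshold needs the explicit observation that, because no $B$-block aligns with any $A$-block, all $\|M_2\|_1$ active elements of $B$ miss, and this already exceeds the threshold $\|M_2\|_1-1$ by one without any parity-marker gadget; and you need the padding's arithmetic-progression window to be long enough (difference $d$ at least the comparison-coordinate width, length at least $n_1n_3+1$) that every valid aligning shift $(p+r\,n_1)d$ is a tolerated padding shift. These can be made to work, but the proposal as written waves at ``usual gadgets'' where a complete argument would spell them out; the paper avoids this entirely by routing through the existing lower bound.
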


This lower bound leaves a gap of $k^{1/6}$ to our algorithm. We leave it as an open question to close this gap, and to make further progress on the setting where $|B| \ll |A|$.

\paragraph*{Problem 3: $k$-Mismatch String Matching with Wildcards.}
Interestingly, our algorithm for $k$-Mismatch Constellation also implies a new algorithm for the following well-studied problem \cite{CliffordEPR09,CliffordEPR10,CliffordP10,NicolaeR17,BathieCS24}: given a text string $T \in \Sigma^{n}$, a pattern string $P \in (\Sigma \cup \{\diamondsuit\})^{m}$, and integer $1\le k \le m$, find all occurrences of $P$ inside $T$ allowing up to $k$ Hamming mismatches, where the wildcard $\diamondsuit \notin \Sigma$ can match any single character. 
Note that the wildcards appear only in the pattern and not in the text.

By a simple reduction from this problem to the $k$-Mismatch Constellation problem, we obtain the following result as an immediate corollary of \cref{thm:constelargebmain}:
\begin{corollary}\label{cor:wildcardbasic}
    There is a deterministic algorithm that solves $k$-Mismatch String Matching with Wildcards (in the pattern only) in time $k^{2/3} n^{1+o(1)}$. 
    If randomization is allowed, then the $n^{o(1)}$ factor can be replaced by $\polylog(n)$.
\end{corollary}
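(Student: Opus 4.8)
The plan is to reduce $k$-Mismatch String Matching with Wildcards (in the pattern) directly to $k$-Mismatch Constellation and invoke \cref{thm:constelargebmain}, after a small amount of padding that simultaneously enables the precondition of that theorem and balances the two set sizes.

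First I would fix the encoding. We may assume $m\le n$ (otherwise the answer is empty) and, renaming symbols, that $\Sigma\subseteq\{1,\dots,2n\}$, since only the at most $n+m\le 2n$ symbols actually occurring in $T$ and $P$ matter; put $\sigma:=2n+1$, leaving $0$ as a spare symbol. Encode the text as $A_{\mathrm{txt}}:=\{\sigma i + T[i] : i\in[n]\}$ and the non-wildcard positions of the pattern as $B_{\mathrm{pat}}:=\{\sigma j + P[j] : j\in[m],\ P[j]\neq\diamondsuit\}$. For a shift of the form $c=\sigma s$ with $0\le s\le n-m$, the element $\sigma j + P[j]$ of $B_{\mathrm{pat}}$ moves to $\sigma(s+j)+P[j]$, which lies in $A_{\mathrm{txt}}$ exactly when $T[s+j]=P[j]$ (the $\sigma$-residue carries the character). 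Hence $|(c+B_{\mathrm{pat}})\setminus A_{\mathrm{txt}}|$ is precisely the number of mismatches of $P$ against $T[s\mathinner{.\,.} s+m-1]$ (wildcards never cause a mismatch and contribute nothing). So the $k$-mismatch occurrences are exactly the shifts $c=\sigma s$, $s\in[0,n-m]$, with $|(c+B_{\mathrm{pat}})\setminus A_{\mathrm{txt}}|\le k$.

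Next I would add padding. Two things can go wrong with a direct application: the precondition $k\le 0.3|B|$ fails when the pattern has few non-wildcard symbols, and the factor $(|A|/|B|)^{2/3}$ can be as large as $(n/|B|)^{2/3}$, which is too expensive when there are many wildcards. Both are cured by dummies that are automatically matched at every ``real'' shift: using the spare symbol $0$, set $B_{\mathrm{dum}}:=\{\sigma(m+t):t\in[10n]\}$ and $A_{\mathrm{dum}}:=\{\sigma r : r\in[m,12n]\}$. These are all divisible by $\sigma$, hence disjoint from the character-carrying elements of $A_{\mathrm{txt}}$ and $B_{\mathrm{pat}}$, and for every real shift $c=\sigma s$ one checks $c+B_{\mathrm{dum}}\subseteq A_{\mathrm{dum}}$, so the dummies contribute $0$ to $|(c+B)\setminus A|$ and the correspondence above is preserved. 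With $A:=A_{\mathrm{txt}}\cup A_{\mathrm{dum}}$ and $B:=B_{\mathrm{pat}}\cup B_{\mathrm{dum}}$ we now have $|A|=\Theta(n)$, $|B|=\Theta(n)$, universe $N=O(n^2)$, and $k\le m\le n\le 0.1|B|$.

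Finally I would run the algorithm of \cref{thm:constelargebmain} on $(A,B,k)$; since $k\le 0.1|B|$ its output has size $O(n)$, and I scan it in $O(n)$ time keeping the shifts $c=\sigma s$ with $0\le s\le n-m$, which by the correspondence are exactly the $k$-mismatch occurrences. The running time is $|A|\cdot k^{2/3}\cdot(|A|/|B|)^{2/3}\cdot N^{o(1)}=n\cdot k^{2/3}\cdot O(1)\cdot n^{o(1)}=k^{2/3}n^{1+o(1)}$; the randomized version of \cref{thm:constelargebmain} replaces $N^{o(1)}$ by $\polylog(N)=\polylog(n)$, giving $k^{2/3}n\polylog(n)$. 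The only step that requires real care is the padding: it must sit at the coarse ``resolution'' of $\sigma$-multiples so that it never interferes with the genuine matching and keeps $N$ polynomial, yet be plentiful enough ($\Theta(n)$ elements) to both validate $k\le 0.3|B|$ and force $|A|=\Theta(|B|)$ — the latter being the crucial point, since without it the reduction would lose a $(n/|B|)^{2/3}$ factor on patterns with many wildcards.
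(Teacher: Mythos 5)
Your proposal is correct and follows essentially the same route as the paper's: encode each (position, character) pair as a single integer, pad so that $|A|=\Theta(|B|)$ and $k\le 0.3|B|$, apply \cref{thm:constelargebmain}, and filter the output for shifts of the form $c=\sigma s$ with $0\le s\le n-m$. The only substantive difference is that the paper first chunks the text to enforce $n\le 2m$ and then pads by inserting $\spadesuit$ characters into both strings, whereas you skip the chunking and instead add $\Theta(n)$ dummy integers at $\sigma$-multiples directly into $A$ and $B$; both variants are valid and yield the same $k^{2/3}n^{1+o(1)}$ bound.
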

The proof of this corollary from \cref{thm:constelargebmain} is short and straightforward (see \cref{subsec:wildcard}). Moreover, we can get a faster algorithm by adjusting parameters and simplifying a few steps in the proof of \cref{thm:constelargebmain}:
\begin{theorem}\label{thm:wildcardbetter}
    There is a deterministic algorithm that solves $k$-Mismatch String Matching with Wildcards (in the pattern only) in time $k^{1/2} n^{1+o(1)}$. 
    If randomization is allowed, then the $n^{o(1)}$ factor can be replaced by $\polylog(n)$.
\end{theorem}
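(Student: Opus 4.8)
The plan is to not invoke \cref{thm:constelargebmain} as a black box but to reopen its proof for the special Constellation instances produced by the reduction behind \cref{cor:wildcardbasic}. Recall that reduction: after a deterministic alphabet-reduction step we may assume $\Sigma\subseteq[O(n)]$, and encoding text position $i$ as the integer $i\cdot|\Sigma|+T[i]$ and non-wildcard pattern position $j$ as $j\cdot|\Sigma|+P[j]$ turns the problem into $k$-Mismatch Constellation on $A=\{\,i\cdot|\Sigma|+T[i]:i\in[n]\,\}$ and $B=\{\,j\cdot|\Sigma|+P[j]:P[j]\neq\diamondsuit\,\}$, where $N=O(n|\Sigma|)=n^{O(1)}$ so $N^{o(1)}=n^{o(1)}$, where the relevant shifts are exactly the multiples of $|\Sigma|$, and where (crucially) $A$ and $B$ are the graphs of the (partial) functions $i\mapsto T[i]$ and $j\mapsto P[j]$ — i.e.\ each block $[t|\Sigma|,(t+1)|\Sigma|)$ contains at most one element of $A$ and at most one of $B$. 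This ``column-sparse'' structure is exactly what \cref{thm:constelargebmain} does not get to assume. I would first dispatch the degenerate regimes: if $g:=|B|\le k$ then every alignment is an occurrence, and a short padding trick lets us further assume $g\ge 4k$, so that there are only $O(n)$ occurrence shifts (the narrow remaining window $k<g<4k$ being absorbed by the running-time bound derived below).

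The core is then to re-run the proof of \cref{thm:constelargebmain}, whose skeleton is: subsample $B$ down to a much smaller sub-pattern, find via exact / low-mismatch Constellation a short candidate list of shifts that ``survive'' it, and verify the candidates; the balance between the generation cost and the verification cost is what produces the $k^{2/3}$ exponent in general. On column-sparse instances both parts become cheaper. Verification of a candidate shift is now a $k$-mismatch-with-wildcards comparison of $P$ against a window of $T$, which after $\tO(n)$ preprocessing of a generalized suffix structure is carried out by ``kangaroo'' jumps over the mismatches, skipping maximal wildcard blocks via precomputed longest-common-extension queries. Candidate generation collapses to essentially convolutions on functional data: splitting the alphabet into \emph{frequent} characters (at least $\tau$ occurrences in $T$) and \emph{rare} ones lets us compute the match counts of a subsampled pattern at all shifts using only $O(n/\tau)$ length-$n$ FFTs for the frequent characters plus direct enumeration of the $O(g\tau)$ rare-character incidences, which suffices to separate ``$\le k$ mismatches'' from ``$>2k$ mismatches''. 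Plugging these cheaper subroutines into the proof of \cref{thm:constelargebmain} and re-optimizing the subsampling rate together with the threshold $\tau$, the exponent on $k$ drops from $2/3$ to $1/2$; the few candidates whose mismatch count is left ambiguous in $(k,2k]$ are pinned down exactly by the same kangaroo oracle, and the total is $k^{1/2}n^{1+o(1)}$.

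The step I expect to be the real obstacle is derandomizing the subsampling that lives inside this argument: the random sub-pattern and the random pattern-position sample must be replaced by small \emph{deterministic} families that ``cover'' every possible set of at most $k$ mismatching positions, without paying more than an $n^{o(1)}$ overhead. Here I would import the paper's deterministic toolkit — the deterministic constructive BSG theorem, the derandomized almost-additive hash functions, and the deterministic small-doubling / approximate 3SUM-counting subroutines — in place of the randomized primitives used in \cref{thm:constelargebmain}. A secondary, bookkeeping-flavored difficulty is to make the interplay of the parameters $g$, $k$, and $n$ seamless, so that the final bound is $k^{1/2}n^{1+o(1)}$ \emph{uniformly} in all parameters and not only in the parameter-balanced regime.
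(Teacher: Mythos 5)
Your setup is correct and mirrors the paper's: the 2D point-set encoding $A=\{(i,T[i])\}$, $B=\{(j,P[j]):P[j]\neq\diamondsuit\}$ (or equivalently the flattening $i\cdot|\Sigma|+T[i]$), and the observations that only axis-aligned shifts $(c,0)$ matter are exactly the right ingredients. But you then do not use them the way the paper does, and the machinery you substitute does not obviously close the argument.

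The paper's improvement over the generic $k^{2/3}$ bound is a one-line observation that propagates mechanically through the proof of \cref{thm:constelargebmain}: because the candidate set $C$ produced inside the Constellation algorithm lies in $C'=\{(c,0):0\le c\le n-m\}$, we get $|C-C|\le |C'-C'|=O(n)$ \emph{for free}, replacing the counting-argument bound $|C-C|\le 5|A|L$ of \cref{lem:cminusc}, which was the generic bottleneck. Feeding this sharper bound into the algorithm of \cref{sec:largeb} makes the partial-convolution step cost $\tilde O(nR)$ and the small-doubling \#3SUM step (\cref{thm:detsmalldoublethreesum} applied to $(C',B_{\mathrm{bad}},A,-C')$) cost $\sqrt{k}\,n^{1+o(1)}$; re-balancing $R=\Theta(\sqrt k)$ finishes, and derandomization reuses the scaling argument of \cref{sec:constellationderand} unchanged. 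No new algorithmic ideas are needed, and no classical string data structures enter. Note also that the ``column-sparse'' structure you highlight is not actually what the paper exploits; the only thing that matters is that the shifts live in a one-dimensional set of size $O(n)$.

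Your proposal instead pivots to a classical string-matching toolbox: frequent/rare-character convolutions for candidate generation and kangaroo-style LCE-jump verification. This is a genuinely different route, but as sketched it does not reach $\sqrt{k}\,n^{1+o(1)}$ and has gaps. Kangaroo verification of one candidate costs $\Theta(k+b)$ where $b$ is the number of maximal wildcard blocks, and $b$ can be $\Theta(m)$, so per-candidate verification is not automatically cheap. The frequent/rare split, balanced naively as you describe, gives an $n\sqrt{|B|}$-type term rather than $n\sqrt k$, and you do not explain what additional structure recovers the $\sqrt k$ savings once wildcards are present. And once you have pivoted to this elementary approach, it no longer makes sense to ``import the deterministic BSG theorem and almost-additive hashing'' for derandomization: those tools serve the sumset/additive-energy machinery of \cref{thm:constelargebmain}, not LCE queries or character-frequency thresholding, so the derandomization plan is incoherent with the algorithm it is supposed to derandomize. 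The route you sketch is essentially that of \cite{CliffordP10}, whose bound is $\tilde O(n(mk)^{1/3})$; getting from there to $\sqrt{k}\,n^{1+o(1)}$ is a separate nontrivial problem that your proposal does not solve.
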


We now briefly compare \cref{thm:wildcardbetter} with previous algorithms for $k$-Mismatch String Matching with Wildcards (see \cite{BathieCS24} for a more comprehensive overview). When wildcards are allowed in  both the text and the pattern, \cite{CliffordEPR09,CliffordEPR10} gave an algorithm in $\tilde O(\min\{nk, n\sqrt{m}\})$ time. When wildcards are allowed either in the text only or in the pattern only, \cite{CliffordP10} gave an algorithm in $\tilde O(n(mk)^{1/3})$ time. Our \cref{thm:wildcardbetter} is faster than these algorithms but only applies when all wildcards are in the pattern. More recent works \cite{NicolaeR17,BathieCS24} gave fast algorithms parameterized by the number of wildcards and/or the number of contiguous segments of wildcards in the input.

\subsection{Technical tools and overview}
\label{sec:tools}

In this section, we list some of our new technical tools for proving the aforementioned results and also discuss some intuition for proving some of the results.  

\paragraph{Tool I: Approximate 3SUM counting.}
The first tool is an efficient deterministic algorithm for approximate 3SUM counting: Given integer (multi-)sets $A$ and $B$, it approximately counts the number of times each number $c$ can be represented as a sum $a + b$ for $a \in A$ and $b \in B$.

Formally, we write $1_A$ to denote the indicator vector of a (multi-)set $A$, and use $\conv$ between two vectors to denote their \emph{convolution} (see formal definitions in \cref{sec:prelim}). In particular, $1_A \conv 1_B$ denotes a vector whose $c$-th coordinate is the number of choices for picking $(a, b) \in A \times B$ so that $a + b = c$. 

\begin{restatable}[Deterministic Popular Sums Approximation]{theorem}{thmdeterministicapproxpop} 
\label{thm:introdeterministicapprox3sum}
Let $A, B \subseteq [N]$ be multisets of size at most $N$ and let $\epsilon > 0$. There is a deterministic algorithm that runs in time $(\epsilon^{-1} |A| + |B|) \cdot N^{\order(1)}$ and computes a vector $f$ of sparsity $\norm{f}_0 \leq \Order(\epsilon^{-1} |A| \log^2 N)$, such that $\norm{f - (1_A \conv 1_B)}_\infty \leq \epsilon |B|$.
\end{restatable}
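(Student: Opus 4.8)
My plan is to treat this as \emph{deterministic sparse recovery} for the nonnegative vector $v:=1_A\conv 1_B$, which we never materialize but can only probe through cleverly chosen linear measurements.

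\textbf{Sparsity, and reduction to the balanced case.} The sparsity bound is essentially free: since $\sum_c v[c]=|A|\cdot|B|$ and all entries are nonnegative, at most $|A|/\epsilon$ coordinates exceed $\tau:=\tfrac{\epsilon}{8}|B|$, and rounding every smaller coordinate to $0$ costs at most $\tau$ additive error. So it suffices to estimate the $O(\epsilon^{-1}|A|)$ heavy coordinates up to $\pm O(\tau)$ and threshold. I would also first reduce to the balanced regime $|B|\le k:=\epsilon^{-1}|A|$: if $|B|>k$, split $B$ into $\lceil|B|/k\rceil$ blocks of size at most $k$, run the algorithm on $(A,B_i)$ for each block with the same $\epsilon$, and sum the outputs; the errors telescope to $\sum_i\epsilon|B_i|=\epsilon|B|$, the running time is $\lceil|B|/k\rceil\cdot k\cdot N^{o(1)}=(\epsilon^{-1}|A|+|B|)N^{o(1)}$, and although the summed vector can have $\lceil|B|/k\rceil$ times the target sparsity, one checks $\sum_c f[c]=\tilde O(|A|\cdot|B|)$, so re-thresholding at $\Theta(\tau)$ restores sparsity $O(\epsilon^{-1}|A|\,\polylog N)$ at the price of $O(\tau)$ more error. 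In the balanced regime $\|v\|_1=|A||B|=O(\tau k)$, so $v$ has at most $O(k)$ coordinates above $\tau$, and we must recover them in $k\cdot N^{o(1)}$ time.

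\textbf{Folding modulo a prime.} The one family of linear measurements of $v$ that I can evaluate fast \emph{from $A$ and $B$ alone} is folding modulo a prime: for a prime $p$, the folded vector $v^{(p)}[r]:=\sum_{c\equiv r\,(p)}v[c]$ equals the cyclic convolution over $\Int_p$ of the mod-$p$ reductions of $1_A$ and $1_B$, and hence can be computed in $\tilde O(p+|A|+|B|)$ time by FFT. Taking primes $p\in[m,2m]$ with $m=\Theta(k\log N)$ and forming the CountMin-style estimator $\hat v[c]:=\min_j v^{(p_j)}[c\bmod p_j]$ (valid since all entries are nonnegative), the ``overshoot'' of this estimate at $c$ caused by a single prime $p$ is $\sum_{c'\ne c:\,p\mid c-c'}v[c']$. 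Since any nonzero difference $c-c'$ with $|c-c'|\le 2N$ has only $O(\log N/\log m)$ prime divisors in $[m,2m]$, the expected overshoot over a uniformly random such prime is $O(\|v\|_1\log N/m)=O(\tau)$; consequently $t=O(\log N)$ random primes make $\hat v$ accurate to $\pm O(\tau)$ at every coordinate, by a union bound over the $\le 2N$ of them. Each $v^{(p_j)}$ costs $k\cdot N^{o(1)}$, so this meets the budget --- \emph{provided the primes can be fixed deterministically}.

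\textbf{The main obstacle: derandomizing the primes.} What remains, and what I expect to be the crux, is producing an explicit list of $t=O(\log N)$ primes of size $\Theta(k\log N)$ that works for \emph{every} input $v$ (a ``for-all'' guarantee); using all primes in a dyadic range is deterministic but costs $\tilde\Theta(k^2)$. I would (i) run a cheap preprocessing pass with $O(1)$ primes, intersecting their heavy buckets CRT-style, to confine the heavy coordinates (and hence the output support) to an explicit candidate set of size $k^{O(1)}$ --- tolerating polynomially many false positives, which are cleaned up by a final threshold --- and then (ii) choose the $t$ primes by the method of conditional expectations against the pessimistic estimator ``number of candidate coordinates whose overshoot exceeds $O(\tau)$ under \emph{all} primes chosen so far'', whose expectation is below $1$ already for $t=O(\log N)$. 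The genuinely delicate point is executing each greedy step in $k\cdot N^{o(1)}$ time rather than the $\tilde\Theta(k^2)$ it would take to test every candidate prime by FFT; this is presumably where the $N^{o(1)}$ slack comes from, and I would handle it either by recursing the derandomization on a polynomially smaller instance, or by invoking an explicit deterministic sparse-recovery construction and adapting it to use only the FFT-friendly folding measurements. The remaining ingredients --- the FFTs, the telescoping error bounds, and the final thresholding --- are routine.
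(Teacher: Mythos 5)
You have correctly located the crux---the deterministic choice of moduli---but the two remedies you sketch do not close it, and the paper takes a structurally different route that you should compare against.

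\textbf{Where the proposal breaks.} Your estimator $\hat v[c]=\min_j v^{(p_j)}[c\bmod p_j]$ needs a simultaneous per-coordinate guarantee, so the natural pessimistic estimator for the method of conditional expectations is exactly the one you name: the number of candidate coordinates whose overshoot exceeds $O(\tau)$ under every prime fixed so far. Evaluating this for a single candidate prime already costs one FFT in time $\tilde O(k)$ \emph{plus} a scan over the candidate set $C$; with $\Theta(k\log N)$ primes in the range $[m,2m]$ to greedily test, the derandomization step alone costs $\tilde\Omega(k\cdot|C|)$. Your ``preprocess with $O(1)$ primes and intersect CRT-style'' step only confines the candidates to $|C|=k^{O(1)}$ (the Cartesian product of heavy residue classes over $O(1)$ moduli of size $\Theta(k)$, which is $k^{O(1)}$ before filtering), so this is at least $\tilde\Omega(k^2)$ and for $|A|=N$ is essentially quadratic. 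The vague fallback ``recurse the derandomization on a polynomially smaller instance'' is pointing in the right direction, but as stated it does not say what is being shrunk or why the recursion terminates with the right parameters, and the CountMin structure you've committed to does not make such a recursion obvious.

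\textbf{What the paper does instead.} The paper sidesteps the per-coordinate guarantee entirely. The inner primitive (Lemma~3.2) produces estimates $g(c)\ge (1_A\conv 1_B)[c]$ with a guarantee on the \emph{total} overshoot $\sum_{c\in C}\bigl(g(c)-(1_A\conv 1_B)[c]\bigr)$. Because the objective is a single scalar, greedy prime selection is cheap: for each candidate prime $p$ one FFT modulo the running modulus $mp$ computes the number of pseudo-solutions, and by Markov a constant fraction of primes cut both the $C$-collision count and the pseudo-solution count by a factor $P$, so a product of $R=O(\log N/\log P)$ primes drives the total overshoot below $\epsilon|B||C|/\log N$. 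Since at most half the candidates can be non-isolated under that modulus, and since the total overshoot controls how many candidates can be badly estimated, one assigns values to isolated candidates and repeats $\log N$ times; the errors sum, not union-bound. The second idea is the recursion that you are missing: it is on the \emph{universe}, not the instance. One recursively solves the problem for $(A\bmod N', B\bmod N')$ with $N'=N/r$ and $r=2^{\sqrt{\log N}}$, which has the same $|A|,|B|$ but a much smaller universe. By the inductive sparsity bound, the recursive output $f'$ has $\norm{f'}_0=O(\epsilon^{-1}|A|\log^2 N)$, so the seed set $X_0=\{x\in[N']:f'(x)+f'(x+N')>0\}$ has size $O(\epsilon^{-1}|A|\log^2 N)$, and lifting each seed by $\{0,N',\dots,(2r-1)N'\}$ gives a candidate set of size $O(r\cdot\epsilon^{-1}|A|\log^2 N)=\epsilon^{-1}|A|\cdot N^{o(1)}$---which is exactly tight for the budget, unlike $k^{O(1)}$. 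This, together with thresholding at $\Theta(\epsilon|B|)$ for the sparsity bound (which you did identify), closes the argument. So the gap is concrete: you need to replace the per-coordinate CountMin guarantee with a total-error guarantee to make the greedy derandomization run in one FFT per candidate prime, and you need to replace the $O(1)$-prime CRT preprocessing with a recursion on the universe size so the candidate set stays at $k\cdot N^{o(1)}$.
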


To the best of our knowledge, this lemma has not explicitly appeared in the literature (even with randomization allowed). A randomized version of \cref{thm:introdeterministicapprox3sum} is simple,\footnote{We can randomly pick a prime $p = \Theta(\eps^{-1} |A| \log N)$, and then compute $\sum_{i \equiv s \pmod{p}}(1_A\conv 1_B)[i]$ for every $s \in \F_p$ using FFT in $O(p \log p) = \tO(\eps^{-1} |A|)$ time. For every index $i$, the expected value of $\sum_{i' \equiv i \pmod{p}, i \ne i'}(1_A\conv 1_B)[i']$ can be bounded by $O(\frac{|A||B|}{p / \log N}) = O(\eps|B|)$. Thus, by repeating $O(\log N)$ times, there must be one iteration where the mod $p$ bucket of $i$ is an additive $O(\eps |B|)$ approximation of $i$, with high probability (we also have to recover the index of $i$ given its bucket, which can be achieved, say, by random sampling)} but the deterministic version needs more care.

\cref{thm:introdeterministicapprox3sum} is used as a subroutine in almost every result in this paper. For example, one useful application is to estimate the number of 4SUM solutions (which can be used for estimating the additive energy of a set) on inputs $A,B,C,D$.  To do this, we simply approximate $1_A\conv 1_B$ by $f$ and approximate $1_C\conv 1_D$ by $g$ using \cref{thm:introdeterministicapprox3sum}, and sum up $\sum_x f[x] g[-x]$.   \cref{thm:introdeterministicapprox3sum} is also used in the proof of our deterministic BSG theorem (to be described later).

As another important application, combining \cref{thm:introdeterministicapprox3sum} with the previous techniques due to Gawrychowski and Uzna\'{n}ski~\cite{GawrychowskiU18} also implies our deterministic Text-to-Pattern Hamming Distances algorithm (\cref{thm:dethdmain}). For the details, we refer to \cref{sec:ham}. But to illustrate the intuition behind the connection to \cref{thm:introdeterministicapprox3sum}, we will describe a very simple $\eps^{-1}n^{1+o(1)}$ time algorithm computing an \emph{additive} $\eps m$-approximation here.

For each symbol $s\in \Sigma$, let $A_s:= \{j: T[j]=s\}$ and $B_s:=\{k: P[k]=s\}$. Then the Hamming distance between $T[i+1\dd i+m]$ and $P[1\dd m]$ can be expressed as 
\begin{equation*}
    |P| - \sum_{s\in \Sigma} (1_{A_s} \conv 1_{-B_s})[i].
\end{equation*}
For each $s\in \Sigma$, we can use \cref{thm:introdeterministicapprox3sum} to compute in $(\eps^{-1}|A_s|+|B_s|)\cdot n^{o(1)}$ time a sparse vector~$f_s$ satisfying that $\|f_s - 1_{A_s} \conv 1_{-B_s}\|_\infty\le \eps |B_s|$. Thus, for each index $i$ we can approximate the Hamming distance by $|P| - \sum_{s\in \Sigma}f_s[i]$ with an additive error of at most 
\begin{equation*}
    \sum_{s\in \Sigma} \big \lvert \big (f_s - 1_{A_s} \conv 1_{-B_s}\big )[i]\big \rvert\le \sum_{s\in \Sigma}\eps |B_s| = \eps m.
\end{equation*}
The total time complexity is $\sum_{s\in \Sigma} (\eps^{-1}|A_s| + |B_s|) \cdot n^{o(1)} = \eps^{-1}n^{1+o(1)}$.

\paragraph{Tool II: Small-doubling 3SUM counting.}
The next technical tool is also for 3SUM counting, but in a different setting. This time we insist on \emph{exact} counts, but we assume that the input is \emph{additively structured.} Specifically, recall the definition of the sumset $A + A = \set{a + b : a, b \in A}$. The \emph{doubling constant} of a set is defined as the ratio~\smash{$K = \frac{|A + A|}{|A|}$}. Intuitively, highly additively-structured sets (such as intervals or arithmetic progressions) have small doubling constant, whereas unstructured sets (such as random sets) have large doubling constant. The following theorem states that we can solve 3SUM counting exactly, provided that at least one input set has sufficiently small doubling constant:

\begin{theorem}[Deterministic small-doubling 3SUM-counting; simplified from \cref{thm:detsmalldoublethreesum}]
   Given sets $A,B,C\subseteq [N]$, where $|A+A|=K|A|$, we can deterministically compute $(1_C\conv 1_{-B})[a]$  for all $a\in A$ in time complexity $O\big(K\sqrt{|A||B||C|}\cdot N^{o(1)} \big )$.
   \label{thm:introdetsmalldouble}
\end{theorem}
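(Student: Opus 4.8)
The plan is to exploit the small-doubling hypothesis on $A$ via a Freiman-type covering argument, reducing the problem to a collection of structured subinstances on which convolution can be computed quickly. First I would invoke a (deterministic, efficiently constructive) version of the covering lemma following from Plünnecke--Ruzsa: since $|A+A| = K|A|$, the set $A$ can be covered by roughly $K^{O(1)}$ translates of a ``box'' set $Q$ (a generalized arithmetic progression, or a difference set $X-X$ with $|X|$ small), so that each translate lies in an interval-like region of length $\mathrm{poly}(K)\cdot |A|$ after an appropriate affine normalization. Concretely, write $A = \bigcup_{i} (t_i + A_i)$ where each $A_i$ is contained in a window of size $O(\mathrm{poly}(K)\cdot|A|)$; the number of pieces and the total size $\sum_i |A_i|$ should both be $K^{O(1)}|A| \cdot N^{o(1)}$ after a careful accounting. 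This is the step I expect to be the main obstacle: standard Ruzsa covering is stated existentially, and I would need the $N^{o(1)}$-time deterministic construction, which is presumably where the approximate-counting tool (Theorem on Deterministic Popular Sums Approximation) enters — one uses it to certify, deterministically, that candidate translates cover enough of $A$, replacing the random sampling in the usual proof.

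Next, having localized $A$ into few short windows, I would compute, for each window, the restricted convolution $(1_C \conv 1_{-B})$ evaluated only on the $O(\mathrm{poly}(K)\cdot |A|)$-length interval containing $t_i + A_i$. The key observation is that evaluating $1_C \conv 1_{-B}$ on an interval $I$ of length $L$ does not require full FFT on the whole universe $[N]$: by a standard chopping trick one partitions $B$ and $C$ into blocks and only the block-pairs whose sums can land in $I$ contribute, so the work per window is $O((|B| + |C| + L)\log N)$ or better. Summing over the $K^{O(1)}$ windows and optimizing the window length $L$ against the number of windows — balancing $K \cdot \sqrt{|A||B||C|}$ — should yield the claimed bound $O(K\sqrt{|A||B||C|}\cdot N^{o(1)})$. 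Here I would be careful that the windows obtained from the covering lemma can be made to have a uniform target length that we get to choose (this is typically arranged by further subdividing large windows), so that the balancing is legitimate.

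Finally, I would assemble the answers: each $a \in A$ lies in some $t_i + A_i$, hence in some window $I_i$, and $(1_C \conv 1_{-B})[a]$ was computed exactly during the processing of $I_i$; a dictionary lookup returns the value. Correctness is immediate since every partial convolution is computed exactly (no approximation in this step — the approximate-counting tool is used only inside the covering construction, to locate windows, not to produce the final counts). The overall determinism follows from the deterministic covering construction plus the deterministic FFT-based partial convolutions. The two places demanding genuine care are (i) the constructive covering lemma with a good deterministic running time, and (ii) the bookkeeping that keeps both the number of windows and $\sum_i(|B|+|C|+L)$ under control so that the final time bound comes out as stated rather than with extra polynomial-in-$K$ losses.
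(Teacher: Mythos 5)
Your proposal is a genuinely different route from the paper's, and unfortunately I believe it has a gap that is fatal as stated.

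The paper does \emph{not} cover $A$ by translates of a box or normalize $A$ into short windows. Instead, it picks a modulus $M\approx|A|$ so that $(A\bmod M)$ has many distinct residues (\cref{lem:detfindmodulus}), covers $\Z_M$ by $N^{o(1)}$ translates of $(A\bmod M)$ (\cref{lem:detmodmcover}), and uses this to assign to each $b\in B$ a canonical $s_b\in A$ with $d_b:=b-s_b$ lying in a controlled residue class. The 3SUM solutions are then bucketed by $d_b$: the pieces $B_d$ satisfy $B_d\subseteq d+A$, hence $A+B_d\subseteq d+(A+A)$ has size only $K|A|$, so sparse convolution on $(A,B_d)$ is cheap; a balance between brute-force pair enumeration on small $B_d$ and sparse convolution on large $B_d$ gives the bound. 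The crucial point is that the small-doubling hypothesis is used to \emph{localize $B$ relative to $A$}, not to localize $A$ itself.

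The gap in your version is the claim that ``evaluating $1_C\conv 1_{-B}$ on an interval $I$ of length $L$ \dots\ by a standard chopping trick'' costs $O((|B|+|C|+L)\log N)$. This is not achievable. Restricting the output index $a$ to a short window does not localize the inputs: for every $c\in C$ there can be a $b\in B$ with $c-b\in I$, so every element of $B$ and $C$ remains relevant. The standard block decomposition (partition the universe into length-$L$ blocks and FFT on the $O(1)$ aligned block-pairs per block of $C$) costs on the order of $\min(|B|,|C|)\cdot L$ in the worst case, since each nonempty block of one side can interact with a nonempty block of the other; and the paper's partial-convolution tool (\cref{lem:nickcount}) would cost $\widetilde{O}(|C|+|I+B|)$, where $|I+B|$ can be as large as $\min(L|B|,N)$. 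Either way, summing over $K^{O(1)}$ windows of length $L=\operatorname{poly}(K)|A|$ does not come close to $K\sqrt{|A||B||C|}$. Covering $A$ by short windows gives you no handle on the pairs $(b,c)$ that the convolution must examine; the paper's covering of $\Z_M$ by copies of $A\bmod M$ is doing something structurally different, namely turning small doubling of $A$ into small sumsets $A+B_d$ on the $B$ side, which is what makes sparse convolution pay off. (Your other flagged concern, making a Freiman/Ruzsa covering deterministic and fast, is also real, but it is secondary to this obstruction.)
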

In particular, when $|A|,|B|,|C|\le n$, the time complexity becomes $O(Kn^{1.5}N^{o(1)})$, which for small $K$ improves upon the standard quadratic-time algorithm for 3SUM. Previously, Jin and Xu~\cite{JinX23} showed a randomized 3SUM algorithm in this regime in $O(Kn^{1.5}\polylog N)$ time (and Abboud, Bringmann, and Fischer~\cite{AbboudBF23} also independently showed a randomized algorithm in $\poly(K)n^{2-\Omega(1)}$ time).
Our \cref{thm:detsmalldoublethreesum} simultaneously (i) derandomizes these results, (ii) extends them to the counting version of 3SUM, and (iii) also allow $A,B,C$ to differ in size. The latter two new features turn out to be crucial for our $k$-Mismatch Constellation algorithm.

The randomized version of this small-doubling 3SUM algorithm is a key step in the randomized fine-grained reductions in~\cite{AbboudBF23,JinX23} from 3SUM to small-energy 3SUM (and thus a stepping stone in the chain of reductions to 4-Cycle Listing and Approximate Distance Oracles). Small doubling instances for $k$-SUM and integer programming have also been studied by Randolph in his thesis~\cite{randolph2024exact}.

\paragraph{Tool III: Almost-additive hashing with small seed length.}
A family of hash functions $h$ is called \emph{additive} if it satisfies $h(x + y) = h(x) + h(y)$, and \emph{almost-additive} if~\makebox{$h(x + y) - h(x) - h(y)$} takes only constantly many different values. (Almost-)additive hashing has been playing an important role for 3SUM for a long time~\cite{BaranDP08,Patrascu10}, and also the modern fine-grained reductions oftentimes rely on such hash functions. The most common use cases is to reduce the universe size in the 3SUM problem (typically at the cost of introducing some few false solutions). In~\cite{fischer3sum} it was shown that this application can be derandomized building on the family of hash functions $h(x) = x \mod p$ (where $p$ is a random prime).

Unfortunately, for our purposes this derandomization result is often not sufficient. The reason is technical: For some applications, specifically the reductions based on arithmetic short cycle removal~\cite{AbboudBF23,JinX23}, we additionally need the hash function to be 3-wise independent. While it is provably impossible for a hash family to be (almost-)linear and 3-wise independent at the same time, this can at least approximately be achieved (see \cref{sec:dethash}). To derandomize this more constrained hash family, we come up with a creative construction that besides the hash function $h(x) = x \mod p$ from before, also throws derandomization by \emph{$\epsilon$-biased sets}~\cite{AlonGHP92} into the mix.

\paragraph{Tool IV: Deterministic BSG theorem.}
Putting together the previous three tools, we manage show a deterministic subquadratic-time algorithm to compute the Balog-Szemer\'{e}di-Gowers (BSG) theorem:

\begin{theorem}[Deterministic subquadratic-time BSG theorem; simplified from \cref{thm:derandbsg}]
\label{thm:bsgintro}
 Given set $A\subseteq [N]$ and $K\ge 1$ such that the additive energy is $\sfE(A)\ge |A|^3/K$, we can deterministically find a subset $A' \subseteq A$ in $(|A|^2/ K^{2} + K^{13}|A|)N^{o(1)}$ time, such that 
         $|A'| \ge |A|/(64K)$ and
             $|A'+A'|\le K^{16}|A|\cdot N^{o(1)}$.
\end{theorem}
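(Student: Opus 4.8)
The plan is to execute the classical graph-theoretic proof of the Balog--Szemer\'edi--Gowers theorem, but with every counting step replaced by the deterministic approximate-convolution subroutine of \cref{thm:introdeterministicapprox3sum}, every random choice replaced by an explicit bounded search, the universe $[N]$ shrunk via the almost-additive hashing of Tool~III (\cref{sec:dethash}) so that individual queries cost only $N^{o(1)}$ time, and the exact small-doubling 3SUM-counting of \cref{thm:introdetsmalldouble} invoked on the (provably structured) candidate set for the final cleaning. \textbf{Step 1 (popular sums).} Writing $r(s) := (1_A \conv 1_A)[s]$, we have $\sum_s r(s) = |A|^2$ and $\sum_s r(s)^2 = \sfE(A) \ge |A|^3/K$, so a one-line averaging argument shows the \emph{popular sums} $P := \{s : r(s) \ge |A|/(2K)\}$ carry at least half the energy, whence $\sum_{s \in P} r(s) \ge |A|^2/(2K)$. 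Since we cannot afford $r$ exactly, we apply \cref{thm:introdeterministicapprox3sum} with $B := A$ and $\eps = \Theta(1/K)$ to get, in $K|A|\cdot N^{o(1)}$ time, a vector $f$ with $\norm{f - r}_\infty \le |A|/(8K)$ and $\norm{f}_0 \le \Order(K|A|\log^2 N)$, and set $P := \{s : f[s] \ge 3|A|/(4K)\}$. Then $|P| \le \Order(K|A|\log^2 N)$, every $s \in P$ has $r(s) \ge |A|/(2K)$, and $\sum_{s\in P} r(s) \ge |A|^2/(4K)$, so the bipartite ``popular-sum'' graph $G$ with both sides a copy of $A$ and $(a,b)\in E(G) \iff a+b\in P$ has $|E(G)|\ge |A|^2/(4K)$.

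\textbf{Step 2 (the constructive BSG graph lemma).} We need subsets $A', B' \subseteq A$, both of size $\ge |A|/\poly(K)$, such that all but an $o(1)$-fraction of pairs $(a_1,a_2)\in A'\times B'$ are joined by at least $|A|^2/\poly(K)$ paths of length $3$ in $G$; a Markov-type cleaning then removes the few vertices incident to too many deficient pairs, passing to $A''\subseteq A'$, $B''\subseteq B'$ of half the size in which \emph{every} cross-pair is well-connected. The textbook proof first discards low-degree vertices and then picks a \emph{random pivot}, taking $A'$ and $B'$ to be the vertices that are sufficiently connected into the pivot's neighborhood, with a second-moment argument giving constant success probability. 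To derandomize this: all the relevant quantities---vertex degrees, the number of deficient pairs, and the length-$3$ path counts---are sums of convolutions of $1_A$ and $1_P$, so by \cref{thm:introdeterministicapprox3sum} a candidate $(A',B')$ can be \emph{certified} deterministically; and since a random pivot succeeds with constant probability, a good pivot is found by scanning a $\poly(K)$-size family of candidates, each test costing $K|A|\cdot N^{o(1)}$ time---this is the origin of the $K^{13}|A|$ term. Enumerating the $\Order((|A|/K)^2)$ cross-pairs during the cleaning, each handled in $N^{o(1)}$ time using the small-doubling structure of the candidate set (\cref{thm:introdetsmalldouble}) together with Tool~III, accounts for the $|A|^2/K^2$ term. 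I expect this step---turning the inherently randomized pivot choice and ``most pairs are good'' guarantee into an explicit subquadratic-time algorithm that only ever sees approximate counts---to be the main obstacle, and it is where the large exponent of $K$ and the bulk of the work in \cref{thm:derandbsg} reside.

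\textbf{Step 3 (connectivity $\Rightarrow$ small doubling, and bookkeeping).} For a well-connected cross-pair $(a_1,a_2)\in A''\times B''$, each of its $\ge \gamma := |A|^2/\poly(K)$ length-$3$ paths $a_1 - b - a_3 - a_2$ satisfies $a_1 + b = s_1\in P$, $a_3 + b = s_2\in P$, $a_3 + a_2 = s_3\in P$, hence $a_1 + a_2 = s_1 - s_2 + s_3 \in P - P + P$; thus $r_{P-P+P}(a_1+a_2)\ge \gamma$, and since $\sum_x r_{P-P+P}(x)=|P|^3$ the number of distinct values is at most $|P|^3/\gamma$, giving
\[
  |A'' + B''| \;\le\; \frac{|P|^3}{\gamma} \;\le\; \frac{\bigl(\Order(K|A|\log^2 N)\bigr)^3}{|A|^2/\poly(K)} \;=\; \poly(K)\cdot|A|\cdot N^{o(1)}.
\]
Since $A'',B''$ are both subsets of $A$ of size $\ge |A|/\poly(K)$, the Pl\"unnecke--Ruzsa inequality applied to the $\poly(K)$-bounded sumset $A''+B''$ yields $|A''+A''|\le \poly(K)\cdot|A|\cdot N^{o(1)}$, and we relabel $A' := A''$. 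Tracking the $\poly(K)$ losses---a factor $K$ from the popular-sum threshold, a factor $K^3$ from $|P|^3$, and the remaining $\poly(K)$ from the pivot argument, the cleaning, and the Pl\"unnecke--Ruzsa step---gives $|A'| \ge |A|/(64K)$ and $|A'+A'| \le K^{16}|A|\cdot N^{o(1)}$. The running time is dominated by the $\poly(K)$ approximate-convolution calls of \cref{thm:introdeterministicapprox3sum} in the pivot scan and by the cross-pair enumeration of Step~2, for a total of $(|A|^2/K^2 + K^{13}|A|)\cdot N^{o(1)}$, as claimed.
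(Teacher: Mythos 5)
Your Step 1 matches the paper, and your Step 3 (connectivity implies small doubling via $P - P + P$, plus Ruzsa/Pl\"unnecke bookkeeping) is the standard BSG wrap-up that the paper also performs. The genuine gap is in Step 2, and it is not a small one---it is exactly where the paper spends all its effort. You assert that ``since a random pivot succeeds with constant probability, a good pivot is found by scanning a $\poly(K)$-size family of candidates.'' This does not follow: in the classical graph-theoretic proof the pivot is a uniformly random element of $A$ (or of the popular-sum set), a universe of size $\Theta(|A|)$, and constant success probability for a uniform pivot gives you no $\poly(K)$-size hitting set. Testing every candidate pivot costs $\tilde{\Theta}(K^{O(1)}|A|)$ per test, so exhaustive search over $|A|$ candidates is $\Omega(|A|^2)$ in the best case and typically worse, which is exactly the ``quadratic barrier'' the paper explicitly says it cannot overcome directly. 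Your $K^{13}|A|$ term therefore has no derivation; you also leave the $|A|^2/K^2$ term unjustified, since ``each cross-pair handled in $N^{o(1)}$ time using the small-doubling structure of the candidate set'' presupposes you already know the candidate set has small doubling---but establishing that is the whole point of the theorem.

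The paper's resolution is different and you have its ingredient in hand but use it for the wrong purpose. You invoke Tool~III only to ``shrink the universe so individual queries cost $N^{o(1)}$''; the paper instead uses the almost-additive hash family to shrink the \emph{set} $A$ itself while (approximately) preserving its energy ratio. Concretely, \cref{thm:find-subset-highenergy} enumerates the $\poly(K)\cdot N^{o(1)}$ hash functions of Tool~III and, for each, inspects the buckets $A\cap G_i$; by the almost-$k$-wise-independence bounds of \cref{lem:dethash,lem:deltahalmostkwiseindep} it proves (\cref{cor:existsgoodset}) that some bucket has size roughly $|A|/p$ and energy ratio still $\Theta(1/K)$, and this bucket is found by certifying energies with \cref{cor:approxenergy}. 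Iterating this a few times yields a set $B_0$ of size roughly $|A|/K^{r}$ with $\sfE(B_0)\ge |B_0|^3/(K\cdot N^{o(1)})$. \emph{Only then} does the paper run the quadratic-time pivot enumeration---in the form of a deterministic algorithmic version of Schoen's lemma (\cref{lem:schoenoriginal,cor:schoenalgo}), not the textbook dependent-random-choice argument---on the small set $B_0$: the $O(|B_0|^2\cdot K^{O(1)}\cdot N^{o(1)}) = O(|A|^2/K^{2r-4}\cdot N^{o(1)})$ cost is now affordable, and the hash-family enumeration contributes the $K^{13}|A|\cdot N^{o(1)}$ term. The output on $B_0$ is then lifted back to subsets of $A$ via one more round of approximate 3SUM counting. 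In short, your proposal correctly names all the tools but misattributes their roles and asserts, without argument, that a $\poly(K)$-size pivot family exists; supplying that argument is the theorem's core technical content, and the paper's way of supplying it (hash-based energy-preserving subsampling followed by Schoen's lemma on the small subset) is a genuinely different mechanism from what you describe.
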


In words, the BSG theorem~\cite{balog1994statistical,gowers2001new} states that a set with high additive energy must contain a large subset of small doubling.
The constructive BSG theorem was first used by Chan and Lewenstein~\cite{ChanL15}, and later played a crucial role in the randomized fine-grained reductions of~\cite{AbboudBF23, JinX23}, and different versions of BSG theorem were proved in~\cite{ChanWX23}. 
All previous proofs for (versions of) the constructive BSG theorem were randomized, except the one in~\cite{ChanL15}: they showed a deterministic algorithm that runs in $O(n^\omega)$ time for constant values of $K$, where~\makebox{$\omega < 2.372$}~\cite{DBLP:conf/focs/DuanWZ23, WXXZ24} is the matrix multiplication exponent. But this running time is not efficient enough for our purposes; we really need subquadratic time.
All other previous proofs for constructive BSG theorem use randomness crucially. One (less crucial) use of randomness was for approximating the degree of nodes in certain bipartite graphs defined by the additive relation of the input sets; for this part, the degrees actually correspond to 3SUM solution counts, and hence can be derandomized using our 
\cref{thm:introdeterministicapprox3sum}. However, a more crucial use of randomness in the previous proof is due to the probabilistic method, specifically the method of \emph{dependent random choice}. For this part, a naive derandomization would require enumerating all possibilities, taking (at least) quadratic time.

Although our deterministic constructive BSG is slower than known randomized versions (which run in $\tilde O(|A|\poly(K))$ time~\cite{ChanL15,AbboudBF23,JinX23}), and provide worse quantitative bounds, it still runs in subquadratic time when $K$ is not too small.

Roughly speaking, the proof of our deterministic BSG theorem uses the following strategy: We do not know how to overcome the aforementioned quadratic barrier for derandomizing the probabilistic method, but we can afford to run this slow derandomization on some much smaller subset $A_0\subseteq A$. Then, we will use the results computed on this small subset $A_0$ to extract the desired small-doubling large subset $A'\subseteq A$. Naturally, if $A_0$ is chosen smaller, then this derandomization is faster, at the cost of worsening the doubling constant of the extracted subset $A'$. In order to make such extraction possible, we need to make subset $A_0\subseteq A$ inherit the high additive energy of $A$ (roughly speaking, if $\sfE(A)\ge |A|^3/K$, then we would like $\sfE(A_0)\ge |A_0|^{3-o(1)}/K$). 
Simply uniformly subsampling $A_0\subseteq A$ would cause too much loss, and we need to do something more technical: we (deterministically) pick a certain almost-additive hash function (see \cref{sec:dethash}), mapping $A$ to roughly $|A|/|A_0|$ buckets, and then pick one of the buckets as our $A_0$.
This is exactly where our tool III, the deterministic almost-additive hash functions, come into play. We find it interesting that these hash families find an application here that is completely different from their original intent~\cite{AbboudBF23,JinX23}.

\paragraph{Tool V: 3SUM counting for popular sums.}
The fifth and final tool we introduce is yet another variant of 3SUM counting: This time the goal is to \emph{exactly} count 3SUM solutions, but only those involving the most \emph{popular} elements (i.e., those elements $s$ that are contained in many 3-sums):

\begin{restatable}[Deterministic 3SUM-counting for popular sums]{theorem}{thmpopular}
  \label{thm:introdetpopsum}
   Given sets $A,B\subseteq [N]$, $|A|,|B|\le n$, and a parameter $k\ge 1$, there is a deterministic algorithm that returns $S = \{s: (1_{A}\conv 1_B)[s] \ge |A|/k\}$ and exactly computes $(1_{A}\conv 1_B)[s]$ for all $s\in S$ in $O(k^2 |A|^{2-3/128}\cdot N^{o(1)})$ time.
\end{restatable}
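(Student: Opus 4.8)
The plan is to first \emph{localize} the set $S$ of popular sums using deterministic approximate 3SUM-counting, and then pin down the exact counts by a structure-versus-randomness dichotomy on the additive energy of $A$, combining the deterministic BSG theorem with deterministic small-doubling 3SUM-counting.

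\emph{Localization.} First I would run \cref{thm:introdeterministicapprox3sum} on $A$ and $B$ with accuracy $\eps := |A|/(4k|B|)$; in $(\eps^{-1}|A|+|B|)\cdot N^{o(1)} = O(kn)\cdot N^{o(1)}$ time this returns a vector $f$ with $\norm{f-(1_A\conv 1_B)}_\infty\le |A|/(4k)$ and $\norm{f}_0 = O(\eps^{-1}|A|\log^2 N) = O(kn\log^2 N)$. Setting $T:=\{s : f[s]\ge 3|A|/(4k)\}$, we get $S\subseteq T$, and every $t\in T$ satisfies $(1_A\conv 1_B)[t]\ge |A|/(2k)$, so that $|T|\cdot(|A|/(2k))^2\le \sum_{t\in T}(1_A\conv 1_B)[t]^2\le \sfE(A,B)$, i.e.\ $|T|\le 4k^2\,\sfE(A,B)/|A|^2$, in addition to $|T| = O(kn\log^2 N)$. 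It then remains to determine $(1_A\conv 1_B)[t]$ exactly for every candidate $t\in T$ that could lie in $S$, within the stated time budget.

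\emph{Energy dichotomy.} Fix a parameter $K = |A|^{\Theta(1)}$ to be optimized and estimate $\sfE(A)$ up to a constant factor with the $4$SUM-counting idea described after \cref{thm:introdeterministicapprox3sum}. If $\sfE(A) < |A|^3/K$, then by the Fourier Cauchy--Schwarz inequality $\sfE(A,B)\le\sqrt{\sfE(A)\sfE(B)}\le\sqrt{\sfE(A)}\,|B|^{3/2} < |A|^{3/2}|B|^{3/2}/K^{1/2}$, so $T$ is already small ($|T|\le 4k^2|B|^{3/2}/(|A|^{1/2}K^{1/2})$); I would build a static dictionary on $B$ and, for each $t\in T$, compute $(1_A\conv 1_B)[t] = |\{a\in A : t-a\in B\}|$ by scanning $A$, for a total of $O(|A|\cdot|T|)\cdot N^{o(1)} = O(k^2|A|^{1/2}|B|^{3/2}/K^{1/2})\cdot N^{o(1)}$. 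Otherwise $\sfE(A)\ge |A|^3/K$, and I would invoke the deterministic BSG theorem (\cref{thm:bsgintro}) to peel off $A'\subseteq A$ with $|A'|\ge |A|/(64K)$ and $|A'+A'|\le K^{16}|A|\cdot N^{o(1)}$ --- hence doubling constant $K' := |A'+A'|/|A'|\le 64K^{17}\cdot N^{o(1)}$ --- at cost $(|A|^2/K^2 + K^{13}|A|)\cdot N^{o(1)}$; repeating (re-estimating the energy after each peel) decomposes $A = A^*\sqcup A'_1\sqcup\dots\sqcup A'_p$ with $\sfE(A^*) < |A^*|^3/K$, each $A'_i$ of doubling at most $64K^{17}\cdot N^{o(1)}$, and $p = O(K\log |A|)$, the BSG calls telescoping (since $\sum_i|A'_i|\le|A|$ with geometrically decreasing scales) to $O(|A|^2/K + K^{14}|A|)\cdot N^{o(1)}$ overall. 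Now write $(1_A\conv 1_B) = (1_{A^*}\conv 1_B) + \sum_i (1_{A'_i}\conv 1_B)$: the small-energy term again has $\sfE(A^*,B) < |A^*|^{3/2}|B|^{3/2}/K^{1/2}$, so only few $t$ carry a large $A^*$-contribution, which restricts the direct scan above to a small candidate set (any $s\in S$ with small $A^*$-contribution must have large, already-computed, structured contribution); and for each structured term $(1_{A'_i}\conv 1_B)$ I would use deterministic small-doubling 3SUM-counting --- \cref{thm:introdetsmalldouble} in the more general form of \cref{thm:detsmalldoublethreesum}, which treats the small-doubling set as an \emph{input} set --- fed with $A'_i$, $B$ and the query set $T$, at cost $K'\sqrt{|A'_i|\,|B|\,|T|}\cdot N^{o(1)}$, summing over $i$ via Cauchy--Schwarz to $O(K^{17}\sqrt{K|A|}\cdot\sqrt{|B|\,|T|})\cdot N^{o(1)}$.

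\emph{Balancing and the main obstacle.} Plugging in $|A|,|B|\le n$ and $|T| = O(kn\log^2 N)$ and choosing $K = n^{\Theta(1)}$ so as to equalize the dominant contributions --- the small-energy direct evaluation, the BSG overhead ($\sim K^{13}$ time, $\sim K^{16}$ doubling) and the small-doubling evaluations --- pins the exponents down and yields the claimed running time $O(k^2|A|^{2-3/128}\cdot N^{o(1)})$. I expect the main obstacle to be exactly the step that evaluates $(1_{A'_i}\conv 1_B)$ on the prescribed query set $T$: the simplified \cref{thm:introdetsmalldouble} outputs the convolution only at indices \emph{inside} the small-doubling set, so one must either exploit the full flexibility of \cref{thm:detsmalldoublethreesum} or perform a reduction --- normalizing $A'_i$ by a translation and using $|A'_i - A'_i|\le K'^2|A'_i|$ --- to relocate the output to $T$. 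A second, bookkeeping-flavored difficulty is threading \emph{exact} counts through the decomposition: a sum popular in $A+B$ may inherit almost all of its representations from a single structured piece $A'_i$, so the procedure must carry partial counts for every candidate in $T$ and argue that every element of $S$ survives as a candidate even though the small-energy part is handled only approximately.
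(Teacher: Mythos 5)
Your toolkit matches the paper's (approximate 3SUM-counting to localize popular sums, iterated BSG peeling, small-doubling \#3SUM, and a Cauchy--Schwarz energy bound for the residual), but you hinge the argument on the wrong decomposition, and this leaves a genuine correctness gap. You write $1_A\conv 1_B = 1_{A^*}\conv 1_B + \sum_i 1_{A'_i}\conv 1_B$ and plan to evaluate each structured term at every $t\in T$, while computing the residual $(1_{A^*}\conv 1_B)[t]$ only for the few $t$ at which it is \emph{large}. The theorem, however, demands the \emph{exact} value of $(1_A\conv 1_B)[s]$ for every $s\in S$; for an $s$ whose popularity comes from a structured piece, $(1_{A^*}\conv 1_B)[s]$ is small but generically nonzero, and your scheme never produces it, so the returned count is wrong. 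This is not a bookkeeping issue. Computing the residual at \emph{every} $t\in T$ by brute force costs $\tilde\Order(|A^*|\cdot|T|) = \tilde\Order(k\,|A|^2/K^{1/3})$, which already exceeds the target $\Order(k^2|A|^{2-3/128}N^{\order(1)})$ for small $k$ with $K=|A|^{3/64}$; and your structured-part bound $K^{17.5}\sqrt{k}\,n^{1.5}N^{\order(1)}$ is \emph{super}quadratic at that $K$ (because passing through $|A'_i+A'_i|$ via Ruzsa squares the BSG exponent), so the parameters do not balance to the claimed exponent either.

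The paper sidesteps all of this by decomposing the \emph{query set}, not $A$. After the same iterated BSG peeling into $(A_i,B_i)$ with low-energy residual $\hat A$, it partitions the approximate popular set $\tilde S$ into $S_1\cup\dots\cup S_g\cup\hat S$, putting $s$ in $S_i$ if $(1_{A_i}\conv 1_B)[s]$ is (approximately) at least $\Omega(|A_i|/k)$, else in $\hat S$. The missing ingredient relative to your sketch is the paper's Claim~7.2: for $s\in S_i$ and $b'\in B_i$, the many representations $s=a+(s-a)$ with $a\in A_i,\ s-a\in B$ push $s+b'=(s-a)+(a+b')$ into $A_i+B_i$ with multiplicity $\ge |A_i|/(12k)$, forcing $|S_i+B_i|\le O(kK^6|B|N^{\order(1)})$. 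Feeding $(A,B,C,S):=(S_i,-B,A,B_i)$ into \cref{thm:detsmalldoublethreesum} then returns the \emph{full} $(1_A\conv 1_B)[s]$ for $s\in S_i$ in one shot, using $B_i$ directly (not $A_i+A_i$), so there are no per-piece contributions to re-assemble and no exactness leak. Residual queries $s\in\hat S$ automatically satisfy $(1_{\hat A}\conv 1_B)[s]\ge |A|/(4k)$, and since $\sfE(\hat A)$ is small, $|\hat S|=O(k^2\sqrt{|B|^3/(K|A|)})$, making direct brute force affordable. Balancing with $K=|A|^{3/64}$ yields the stated time.
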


Chan, Vassilevska Williams and Xu~\cite[Lemma 8.4]{ChanWX23} gave a randomized subquadratic-time algorithm for this task using fast matrix multiplication. Our \cref{thm:introdetpopsum} derandomizes~\cite[Lemma 8.4]{ChanWX23}, and does not require fast matrix multiplication, but the running time has a worse exponent (but is still subquadratic).

In fact, establishing \cref{thm:introdetpopsum} immediately completes the proof of \cref{thm:count3sumequivalencenessmain}. Recall that this theorem was originally proven in~\cite{ChanWX23} (by means of randomized reductions), and tracing through their proofs it is easy to check that replacing their Lemma~8.4 by our \cref{thm:count3sumequivalencenessmain} replaces the only randomized step by a deterministic alternative.

\paragraph{Techniques for the \boldmath$k$-Mismatch Constellation algorithm.}
Throughout the overview we have encountered hints for the proofs of all our algorithmic applications, except for the algorithm for $k$-Mismatch Constellation; let us finally provide some overview for this result. Similar to previous work of Fischer~\cite{nickconstellation}, we solve $k$-Mismatch Constellation by solving a \emph{Partial Convolution} instance. That is, given sets $A, B, C \subseteq [N]$, where $C$ is a set of candidate answers such that every $c\in C$ is promised to satisfy $|(c+B)\setminus A|\le O(k)$, and the goal is to exactly compute $|(c+B)\setminus A|$ for all~\makebox{$c\in C$}.
The algorithm of Fischer~\cite{nickconstellation} (which can be viewed as a certain transposed version of sparse convolution) solves this task in $\tilde O(|A|+|B+C|)\le \tilde O(|A|+k|C|)$ time, which is at most $\tilde O(k|A|)$ since $|C|= O(|A|)$.
Our algorithm uses a win-win argument: If $|C|$ is small, then Fischer's algorithm is already faster than $\tilde O(k|A|)$. Otherwise, if $|C|$ is large, then we show that the input instance must have very rich additive structure. In particular, $C$ must have low doubling constant, and there is very large subset $B'\subseteq B$ of size $|B'|\ge |B|-O(k)$ such that the sumset $B'+C$ is small. Then, we separately compute with the contribution of $B'$ and $B\setminus B'$ to the answers:
\begin{itemize}
    \item For $B'$, based on the small sumset size $|B'+C|$, we use the technique of Fischer~\cite{nickconstellation}  to compute their contribution.
    \item For $B\setminus B'$, we take advantage of the small doubling of $C$, and use our small-doubling \#3SUM algorithm (\cref{thm:introdetsmalldouble}) to compute their contribution.
\end{itemize}

\subsection{Discussions and open questions}
Our work further explores the role of additive structure in the context of fine-grained complexity and pattern matching problems. We believe the various technical tools developed in this paper can find future applications.

We conclude with some natural open questions arising from our work.
\begin{enumerate}[itemsep=\medskipamount]
    \setlength\parindent{1.6em}
    \item  Can we deterministically solve $(1+\eps)$ Text-to-Pattern Hamming Distances in time complexity $O((1/\eps)^{1-c} n^{1+o(1)})$ for some constant $c>0$? The recent breakthrough of~\cite{focs23} achieved such a running time using randomization. One of their key steps is to compute the multiplicities of popular sums up to very high precision, using the techniques from the randomized algorithm of~\cite{ChanWX23} that solves \#3SUM for popular sums.  This part can already be derandomized by our \cref{thm:introdetpopsum}. However, another step in the proof of~\cite{focs23} involves analyzing the error of random subsampling by bounding the variance, and we do not know how to derandomize such arguments.
    
    \item  Our derandomized reduction from low-energy 3SUM problem to the 4-Cycle Listing problem turned out to be much more technical than its randomized counterpart in~\cite{AbboudBF23,JinX23}.
Due to similar technical reasons, \cref{thm:smalldoub3sumhardnessmain} does not immediately show conditional lower bound for Distance Oracles under the deterministic 3SUM hypothesis, and it seems harder to derandomize the reduction in~\cite{AbboudBF23, JinX23} to Distance Oracles because the reduction needs to bound the number of longer cycles, which seems more difficult to control. 
We leave it as an interesting open problem  to derandomize the reduction from 3SUM to Distance Oracles.

\item Another underandomized result in~\cite{JinX23} is the 3SUM hardness for 4-Linear Degeneracy Testing (whereas 3-Linear Degeneracy Testing has deterministic 3SUM hardness, as shown earlier in~\cite{ldt}). We expect a larger barrier in derandomizing this reduction.  The reason is that, in order to show hardness for 4-Linear Degeneracy Testing,~\cite{JinX23} first showed hardness for 3SUM instances whose additive energy is $2n^2 - n$ (i.e., Sidon sets). This suggests that one might need to first improve the bound of \cref{thm:smalldoub3sumhardnessmain} from $O(n^{2+\delta})$ to $2n^2 - n$, which seems challenging using our current method. (In contrast, it is totally reasonable that \cref{thm:smalldoub3sumhardnessmain} could be used in a black-box way, combined with additional ideas, to show lower bound for Distance Oracles under the deterministic 3SUM hypothesis.)

    \item Our algorithm for $k$-Mismatch Constellation only achieves improvement in the $|B|\approx |A|$ regime. Can we show faster algorithms when $|B|$ can be much smaller than $|A|$?
    \item Can we relax \cref{thm:wildcardbetter} to allow wildcards in both the pattern and the text?
\end{enumerate}
\section{Preliminaries}
\label{sec:prelim}
Let $\Z$ and $\N = \Z_{\ge 0}$ denote the integers and the non-negative integers respectively. Let $[n] = \{0,1,\dots,n-1\}$. We use $\tilde O(f)$ to denote $O(f\cdot \polylog f)$.

For two sets $A,B$, define their sumset $A+B = \{a+b : a\in A, b\in B\}$ and difference set $A-B = \{a-b : a\in A, b\in B\}$. Let $-A= \{-a: a\in A\}$.
We define the iterative sumset $kA$ by the recursive definition $kA=(k-1)A + A$, and $1A=A$.

For a vector $v$, we use $v[i]$ to refer to its $i$-th coordinate, and denote $\supp(v)= \{i: v[i]\neq 0\}$, $\|v\|_0 = |\supp(v)|$, $\|v\|_\infty = \max_i |v[i]|$, and $\|v\|_1 = \sum_i |v[i]|$. We say $v$ is $s$-sparse if $\|v\|_0\le s$.

The \emph{convolution} of two length-$N$ vectors $u,v$ is defined as the vector $u\conv v$ of length $2N-1$ with
\[ (u\conv v)[k] = \sum_{\substack{i,j\in [N]\\i+j=k}}u[i]\cdot v[j].\]
The \emph{cyclic convolution} $u\conv_m v$ is the length-$m$ vector with 
\[ (u\conv_m v)[k] = \sum_{\substack{i,j\in [N]\\i+j\equiv k\,(\mathrm{mod}\, m)}}u[i]\cdot v[j].\]

In this paper we often work with \emph{multisets}, in which an element is allowed to occur multiple times. For a multiset $A\subseteq [N]$,\footnote{By ``multiset $A \subseteq [N]$'' we mean each $a\in A$ is in $[N]$ but may have multiplicity greater than $1$.} let $1_A \in \Z_{\geq 0}^N$ be its indicator vector where $1_A[i]$ denotes the multiplicity of $i$ in $A$. Following this notion, the size of the multiset $A$ is~$|A|=\norm{1_A}_1$, $(1_A \conv 1_B)[i]$ denotes the number of representations $i = a + b$ for~\makebox{$a \in A$} and~\makebox{$b \in B$}, and $(1_A \conv_m 1_B)[i]$ denotes the number of representations $i \equiv a + b \mod m$ for $a \in A$ and~\makebox{$b \in B$}.

The \emph{additive energy} between two sets $A,B\subset \Z$ is
\begin{align*}
\sfE(A,B) &= |\{(a,a',b,b')\in A\times A\times B\times B: a+b=a'+b'\}|\\ &= \sum_{s}(1_A\conv 1_{-A})[s]\cdot (1_B\conv 1_{-B})[s] =\sum_{s}((1_A\conv 1_{B})[s])^2.
\end{align*}
 The additive energy of a set $A\subset \Z$ is $\sfE(A)=\sfE(A,A)$.
We have $\sfE(A)\sfE(B) \ge \sfE(A,B)^2$ by Cauchy-Schwarz inequality.

We need the deterministic nonnegative sparse convolution algorithm by Bringmann, Fischer, and Nakos \cite{BringmannFN22}.
\begin{lemma}[Sparse convolution {\cite[Theorem 1]{BringmannFN22}}]
    \label{thm:detnonnegsparsecovo}
   Given two nonnegative vectors  $x,y\in \N^{N}$, there is a deterministic algorithm that computes $x\conv y$ in $O\big (\|x\conv y\|_0 \cdot\polylog(N\|x\|_\infty\|y\|_\infty)\big )$ time.
\end{lemma}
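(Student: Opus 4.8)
The plan is to recover the algorithm of \cite{BringmannFN22} through the classical ``linear hashing and peeling'' paradigm and to isolate where derandomization forces extra work. Write $t := \|x\conv y\|_0$ and $M := N\|x\|_\infty\|y\|_\infty$. The first point I would use is that nonnegativity rules out cancellation, so that $\supp(x\conv y) = \supp(x)+\supp(y)$, every nonzero entry of $x\conv y$ is a \emph{positive} integer at most $M$, and in particular $\|x\|_0,\|y\|_0\le t$ --- the inputs are effectively $t$-sparse, so reading them costs $O(t)$. I would also reduce to the case where a value $\tilde t$ with $t\le \tilde t\le 2t$ is known: run the core routine with guesses $\tilde t=1,2,4,\dots$ and keep the first output that passes a deterministic consistency check, so that the final (correct) guess dominates the running time.

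Next I would set up the peeling step. Fix a prime $p=\Theta(\tilde t\cdot\polylog M)$ and observe that $i\mapsto i\bmod p$ is additive modulo $p$; hence folding $x$ and $y$ into $p$ buckets by summation and taking a length-$p$ cyclic convolution via FFT computes $(x\conv y)\bmod p$ in $\tilde O(\tilde t)$ time. Doing the same with the first-moment vectors $x'[i]=i\cdot x[i]$ and $y'[j]=j\cdot y[j]$ computes, in the same budget, the bucketed version of $k\mapsto k\cdot(x\conv y)[k]$. For any bucket holding a single support coordinate, dividing the moment value by the plain value recovers that coordinate together with its positive value; I collect all such tentative recoveries, cross-check them against a few further primes to discard spurious ones, fold the survivors, subtract them from the running residual, and recurse on what remains. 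Because $x$, $y$, and all recovered values are nonnegative, the residual stays nonnegative in a good run and its support is exactly the set of not-yet-recovered coordinates; so if each round strips off a constant fraction of the current support, $O(\log\tilde t)$ rounds finish the job in $\tilde O(t)$ total time, and a final deterministic verification guards against any remaining spurious recovery (triggering a restart with a larger $\tilde t$).

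The hard part will be making each round succeed \emph{deterministically}: a round needs a prime $p$ under which a constant fraction of the current support is ``isolated'' (alone in its bucket). Existence is not the issue --- a counting argument does it: for distinct support coordinates $s\ne s'$ with $|s-s'|<2N$, at most $O(\log N/\log\tilde t)$ primes of a dyadic window $[C\tilde t,2C\tilde t]$ divide $s-s'$, so averaging over all pairs, for $C=\Theta(\log N)$ some prime in the window leaves at most half the support non-isolated. The obstacle is \emph{finding} such a prime within the $\tilde O(t)$ budget: the window contains roughly $\tilde t$ primes, and merely testing one prime (by running the recovery and counting how many coordinates it returns) already costs $\tilde O(\tilde t)$, so the naive search is $\tilde O(t^2)$ --- no faster than the trivial algorithm. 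The same quadratic trap lurks in the sparsity-doubling wrapper, where the consistency check must somehow run in $\tilde O(t)$ time. Overcoming both is the real content of \cite{BringmannFN22}. One natural route I would pursue is to abandon perfect prime-hashing and instead reduce the universe by composing the residue hash with a deterministic \emph{interval} hash $i\mapsto\lfloor i/w\rfloor$ --- this map is ``almost additive'' (its linearization error is only $\pm1$), requires no search at all, and one can recurse into the few intervals that stay crowded --- so that the cost of locating usable hashes is charged against the progress those hashes make. Given such a deterministic universe-reduction primitive, the moment-based recovery, the subtract-and-recurse loop, and the doubling over $\tilde t$ are all routine.
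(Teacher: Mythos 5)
This lemma is cited verbatim from \cite{BringmannFN22}; the paper offers no proof of its own, so there is nothing internal to compare your sketch against. Judged on its own terms, the outer scaffolding you describe is correct and standard: nonnegativity does give $\supp(x\conv y)=\supp(x)+\supp(y)$ and $\|x\|_0,\|y\|_0\le t$; doubling over a guess $\tilde t$, folding modulo a prime and running one cyclic FFT, the first-moment trick to decode isolated buckets, and the counting argument that some prime in a $\Theta(\tilde t\log N)$-window isolates half the support are all sound.

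The one load-bearing step --- producing such a prime (or any adequate almost-additive universe-reducing hash) \emph{deterministically} within an $\tilde O(t)$ budget, together with a cheap certified check that a round actually succeeded --- is exactly the step you concede is ``the real content of \cite{BringmannFN22},'' and the interval-hash substitute $i\mapsto\lfloor i/w\rfloor$ you float in its place is not a proof. In particular, when all of $\supp(x\conv y)$ sits inside a single length-$w$ block, that hash makes zero progress, and your sketch supplies no progress measure or charging scheme that rules out a $\tilde O(t^2)$ cascade. The standard device that makes the deterministic search affordable is not one large prime but a modulus assembled as a product of several small primes, each drawn from a tiny window by exhaustively counting collisions --- the Chan--Lewenstein technique that this very paper uses in \cref{lem:detfindmodulus} and in the proof of \cref{lem:3sum-approx-fft}. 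Your proposal misses this mechanism, and without it (or something functionally equivalent, including the verification layer) the stated $O(\|x\conv y\|_0\cdot\polylog)$ deterministic bound does not follow from what you have written.
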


We need the following partial convolution algorithm by Fischer \cite{nickconstellation}.
\begin{lemma}[Partial convolution {\cite[Theorem 4.1]{nickconstellation}}]
    \label{lem:nickcount}
   Given sets $A,B,C\subseteq [N]$, and $x,y\in \Z^N$ with $\supp(x)\subseteq A, \supp(y)\subseteq B$, there is a deterministic algorithm that computes $(x\conv y)[c]$ for all $c\in C$ in $O\big ((|A|+|C-B|)\cdot \polylog(N\|x\|_\infty \|y\|_\infty)\big )$ time.
\end{lemma}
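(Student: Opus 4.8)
The plan is to reduce \cref{lem:nickcount} to deterministic sparse nonnegative convolution (\cref{thm:detnonnegsparsecovo}) via a \emph{transposition} argument. Write $(x\conv y)[c]=\sum_a x[a]\,y[c-a]$ and note that the only indices $a$ that can contribute to an output at some $c\in C$ are those with $c-a\in B$ for some $c\in C$, i.e.\ $a\in C-B$; since also $\supp(x)\subseteq A$, only $a\in A\cap(C-B)$ matter. So first I would compute the set $D:=C-B$, which equals $\supp(1_C\conv 1_{-B})$: this nonnegative convolution has sparsity $|D|$ and entries at most $\min(|B|,|C|)\le N$, so (after a harmless shift making all indices nonnegative) \cref{thm:detnonnegsparsecovo} produces $D$ in $|D|\cdot\polylog N=|C-B|\cdot\polylog N$ time. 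Sorting and intersecting with $A$ in $\tilde O(|A|+|D|)$ time yields the active index set $A^\ast:=A\cap D$, on which I restrict $x$ to $x^\ast$ with $\|x^\ast\|_0\le\min(|A|,|D|)$; then $(x\conv y)|_C=(x^\ast\conv y)|_C$, so it suffices to evaluate the linear map $M\colon z\mapsto (z\conv y)|_C$ at $z=x^\ast$.

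The key point is that the \emph{transposed} map $M^{\mathsf T}$ is cheap. The matrix of $M$ is $(M)_{c,a}=y[c-a]$ for $c\in C$, $a\in A^\ast$, so $M^{\mathsf T}$ sends a vector $w$ supported on $C$ to the vector $a\mapsto\sum_{c\in C}w[c]\,y[c-a]=(w\conv\bar y)[a]$ restricted to $a\in A^\ast$, where $\bar y[j]:=y[-j]$ is supported on $-B$. Crucially $\supp(w\conv\bar y)\subseteq C+(-B)=D$, so $M^{\mathsf T}$ has output sparsity at most $|D|$, and $w\conv\bar y$ can be computed by \cref{thm:detnonnegsparsecovo} in $|D|\cdot\polylog(N\|y\|_\infty)$ time (splitting $w$ and $\bar y$ into their nonnegative parts and combining four nonnegative convolutions). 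Now I would invoke the transposition principle for linear straight-line programs (Tellegen's principle): a straight-line program of size $s$ for a linear map with $i$ inputs and $o$ outputs can be converted into one of size $s+O(i+o)$ for its transpose. Here $i,o\le|D|$ and $s=\tilde O(|D|)$, so this gives an $\tilde O(|D|)$-size program computing $M(x^\ast)=(x\conv y)|_C$; adding $O(|A|)$ for reading $x$ and forming $x^\ast$, and tracking entry bit-lengths, the total is $(|A|+|C-B|)\cdot\polylog(N\|x\|_\infty\|y\|_\infty)$.

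The main obstacle is that \cref{thm:detnonnegsparsecovo} is \emph{adaptive} --- its control flow branches on the input values --- hence is not literally a linear straight-line program and cannot be plugged into Tellegen's principle verbatim. I would resolve this by isolating the oblivious part: the only data-dependent step of a sparse-convolution routine is recovering the output support, but in our application that support is precisely $D=C-B$, which depends only on the sets $B,C$ and not on the values of $x$ or $y$. Once $D$ and the (deterministically chosen, value-oblivious) hash parameters are fixed, the remaining computation mapping the value vector to the output is an honest linear straight-line program of size $\tilde O(|D|)$ over $\Z$ --- a fixed sequence of dense/cyclic convolutions (via FFT over a suitable ring) and re-indexings --- and it is this core that one transposes. (Equivalently, one can forgo transposition and build the partial-convolution algorithm directly, step by step mirroring the transposed sparse convolution.) I expect the fiddly points to be exactly (a) cleanly separating this value-oblivious ``setup'' from the linear ``core'' inside the sparse-convolution algorithm, and (b) checking that transposition preserves the entry-magnitude bounds so the $\polylog(N\|x\|_\infty\|y\|_\infty)$ factor comes out as stated.
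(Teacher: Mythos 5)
Your transposition-based approach is essentially what Fischer's cited Theorem~4.1 does --- indeed, this paper itself describes that algorithm as ``a certain transposed version of sparse convolution.'' You also correctly identify the main obstacle to invoking Tellegen's principle (the sparse-convolution routine of \cref{thm:detnonnegsparsecovo} is adaptive) and its resolution: the output support $D=C-B$ depends only on $B$ and $C$, so it can be precomputed via \cref{thm:detnonnegsparsecovo} once, after which the remaining core is a value-oblivious linear straight-line program of size $\tilde O(|D|)$ over FFTs at fixed moduli, which transposes cleanly.
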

\section{Tool I: Deterministic Approximate 3SUM Counting}

The goal of this section is to prove \cref{thm:introdeterministicapprox3sum}, restated below.

\thmdeterministicapproxpop*

\newcommand\thresh{\operatorname{thresh}}
\newcommand\coll{\operatorname{coll}}

As an immediate corollary, this theorem implies the following result for 3SUM:

\begin{corollary}[Approximate 3SUM Counting] \label{lem:detapx3sumcount-lemma}
Let multisets $A, B \subseteq [N]$, set $C\subseteq [N]$, and let $\epsilon > 0$. There is a deterministic algorithm that runs in time $\widetilde\Order((\epsilon^{-1} |A| + |B| + |C|) \cdot N^{\order(1)})$ and computes approximations $f(c)$, for all $c \in C$, such that
\begin{equation*}
    |f(c) - (1_A \conv 1_B)[c]| \leq \epsilon |B|.
\end{equation*}
\end{corollary}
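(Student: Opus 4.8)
The plan is to derive \cref{lem:detapx3sumcount-lemma} directly from \cref{thm:introdeterministicapprox3sum} by first computing a single global sparse approximation of the full convolution $1_A \conv 1_B$ and then simply reading off the entries indexed by $C$.

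First I would invoke \cref{thm:introdeterministicapprox3sum} on the multisets $A, B$ and the parameter $\epsilon$. This produces, in deterministic time $(\epsilon^{-1}|A| + |B|) \cdot N^{o(1)}$, a vector $f$ with $\|f\|_0 \le \Order(\epsilon^{-1}|A|\log^2 N)$ and $\|f - (1_A \conv 1_B)\|_\infty \le \epsilon|B|$. Note that this already yields the required approximation guarantee at \emph{every} coordinate, in particular at each $c \in C$: reading off the (explicitly stored) value $f[c]$---which is simply $0$ whenever $c \notin \supp(f)$---satisfies $|f[c] - (1_A \conv 1_B)[c]| \le \|f - (1_A \conv 1_B)\|_\infty \le \epsilon|B|$. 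So the values $f(c) := f[c]$ are a valid output; nothing about nonnegativity or exactness needs to be argued.

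The only remaining issue is to extract the values $\{f[c] : c \in C\}$ within the stated time budget, i.e.\ deterministically and without a $\|f\|_0 \cdot |C|$ blow-up. Since $\supp(f) \subseteq [2N-1]$ has size $\Order(\epsilon^{-1}|A|\log^2 N)$ and $C \subseteq [N]$, I would sort $\supp(f)$ together with the associated values of $f$, and also sort $C$, using radix sort on a word RAM with word size $\Theta(\log N)$---taking $\tilde O(\epsilon^{-1}|A| + |C|)$ time---and then do one linear merge pass that assigns $f(c) := f[c]$ for every $c \in C$, filling in $0$ for those $c$ not appearing in $\supp(f)$. Alternatively one may build any deterministic static dictionary over $\supp(f)$ and answer $|C|$ lookups; either way the overhead is $\tilde O((\epsilon^{-1}|A| + |C|) \cdot \polylog N)$.

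Summing the two contributions gives total running time $(\epsilon^{-1}|A| + |B|)\cdot N^{o(1)} + \tilde O(\epsilon^{-1}|A| + |C|) = \widetilde\Order((\epsilon^{-1}|A| + |B| + |C|)\cdot N^{o(1)})$, as claimed. I do not expect a genuine obstacle here: the entire substance of the result is contained in \cref{thm:introdeterministicapprox3sum}, and the corollary is pure bookkeeping. If one wished to avoid radix sort, the only mild technical point would be exhibiting a deterministic dictionary with near-constant query time, but standard constructions suffice and in any case the $N^{o(1)}$ slack in the target bound absorbs any such inefficiency.
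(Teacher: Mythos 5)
Your proposal is correct and matches the paper's reasoning: the paper itself treats this as an immediate corollary of \cref{thm:introdeterministicapprox3sum} and gives no separate proof, and you have simply spelled out the only missing bookkeeping step (querying the sparse vector $f$ at the indices in $C$ via sorting or a static dictionary, absorbed by the $N^{o(1)}$ slack). Nothing further is needed.
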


For the proof of \cref{thm:introdeterministicapprox3sum}, we rely on the following weaker lemma to compute approximations with bounded total error but unbounded coordinate-wise error:

\begin{lemma} \label{lem:3sum-approx-fft}
Let $A, B \subseteq [N]$ be multisets, let $C \subseteq [N]$ be a set and let $\epsilon > 0$. There is a deterministic algorithm that runs in time $\widetilde\Order((\epsilon^{-1} |A| + |B| + |C|) \cdot N^{\order(1)})$ and computes an approximation $g(c)$ for each $c \in C$ such that:
\begin{itemize}
    \item $g(c) \geq (1_A \conv 1_B)[c]$, and
    \item $\sum_{c \in C} g(c) - (1_A \conv 1_B)[c] \leq \min(\epsilon |B| \, |C|, |A| \, |B| \log N)$.
\end{itemize}
\end{lemma}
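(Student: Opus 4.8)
The plan is to implement the hashing idea from the randomized footnote deterministically, using a small family of moduli instead of a random prime. Fix a target bucket count $p = \Theta(\epsilon^{-1}|A|)$ (rounded suitably). The key observation is that the ``collision mass'' we want to bound is $\sum_{a\ne a'\in A}\sum_{b,b'\in B}[a+b\equiv a'+b'\pmod p]$, and for a fixed pair $a\ne a'\in A$ the number of primes (or prime powers, or just integers) $p$ in a dyadic range dividing $a-a'$ is only $O(\log N)$. Hence, averaging over $\Theta(\log N)$ candidate moduli in the range $[p, 2p]$, some modulus $m$ achieves total collision mass $\sum_{c}\big((1_A\conv_m 1_B)[c] - \text{(true diagonal contribution)}\big)$ at most $O(\epsilon|B|\cdot p/p')$ for the appropriate normalization — concretely $O(\epsilon|B|\,|C|)$ once we only read out $|C|$ coordinates, and trivially at most $|A|\,|B|$ always. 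I would first enumerate $O(\log N)$ candidate moduli, for each compute $1_A\conv_m 1_B$ by FFT in $\tilde O(m) = \tilde O(\epsilon^{-1}|A|)$ time, and then certify which modulus is good by actually computing its total collision mass: this can be done deterministically because $\|1_A\conv_m 1_B\|_1 = |A|\,|B|$ is known, and the ``overcount'' each coordinate $c$ incurs is $(1_A\conv_m 1_B)[c] - (1_A\conv 1_B)[c]$, whose sum over all $c\in[m]$ equals $|A|\,|B| - \sum_{c}(1_A\conv 1_B)[c]$; but summing the true convolution over $[m]$ windows is exactly summing $1_A\conv 1_B$ grouped by residue, which we can get...

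Actually the cleaner certification: for each candidate $m$, the quantity $\sum_{a\ne a', b, b'}[a+b\equiv a'+b'\pmod m]$ equals $\sum_c (1_A\conv_m 1_B)[c]^2 - \sum_s (1_A\conv 1_B)[s]^2 \cdot(\text{something})$ — rather than chase the exact identity, I would use that $\sum_c (1_A\conv_m 1_B)[c]^2$ is computable in $\tilde O(m)$ time from the vector we already built, and it upper-bounds (up to the honest diagonal terms, which are a fixed known constant independent of $m$) the collision mass; we pick the $m$ minimizing $\sum_c (1_A\conv_m 1_B)[c]^2$. Then I would set $g(c)$, for each $c\in C$, to be $(1_A\conv_m 1_B)[c \bmod m]$ — which is $\ge (1_A\conv 1_B)[c]$ since the cyclic convolution aggregates all true representations plus wraparound junk — except wraparounds can make a single residue absorb mass from many indices and we need $\sum_{c\in C}(g(c) - \text{true})$ small, which follows because each wrongly-attributed pair $(a,b)$ with $a+b\equiv c\pmod m$ but $a+b\ne c$ is exactly a mod-$m$ collision between $c$ and $a+b$, and $c$ ranges over the set $C$ of size $|C|$, so the total is at most (collision mass)$\cdot$(a factor from $|C|\le$ number of residues) $\le O(\epsilon|B|\,|C|)$ for the good $m$; and it is always $\le |A|\,|B| \le |A|\,|B|\log N$ trivially.

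The main obstacle I anticipate is making the ``good modulus exists and can be identified deterministically'' step fully rigorous with the right constants — in particular, (i) ensuring the divisor-counting bound (each difference $a-a'$ has $O(\log N)$ divisors in the dyadic window, so an averaging argument over $O(\log N)$ moduli suffices) is applied to the correct family (prime powers $q^j \in [p,2p]$, or a product construction, rather than primes, since we need enough moduli in the window), and (ii) certifying the good modulus without paying quadratic time — here I would lean on the fact that $\sum_c(1_A\conv_m 1_B)[c]^2$ is an honest, efficiently computable proxy for collision mass (they differ by the fixed quantity $\sum_{a}\sum_{b,b'}[\,\cdot\,]$ plus true-representation squared terms that are $m$-independent up to wraparound, which one must argue away or absorb into the slack). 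A secondary subtlety is the size blow-up: $m$ could be as small as $\Theta(\epsilon^{-1}|A|)$ which is fine, but we must handle the case $\epsilon^{-1}|A| > N$ (then just take $m = 2N$, no hashing needed, $g = 1_A\conv 1_B$ exactly with zero error) and the case $|C|$ large separately so the running-time bound $\tilde O((\epsilon^{-1}|A|+|B|+|C|)N^{o(1)})$ holds. Once the good $m$ is fixed, reading off $g(c) = (1_A\conv_m 1_B)[c\bmod m]$ for $c\in C$ costs $O(|C|)$, and computing the cyclic convolution costs $\tilde O((|A|+|B|+m)) = \tilde O(\epsilon^{-1}|A|+|B|)$ by FFT (or by Lemma~\ref{thm:detnonnegsparsecovo} to keep everything deterministic and output-sensitive), giving the claimed bound; the $N^{o(1)}$ factor is slack I do not expect to need but which covers any bit-complexity or prime-finding overhead.
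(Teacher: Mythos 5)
Your modular-hashing strategy is the right idea and matches the paper's at a high level; you even anticipate both of the places where it will need reinforcement. But the proposal as written has two genuine gaps, and the paper's proof is built precisely to close them.

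\textbf{Gap 1: A single modulus in $[p,2p]$ cannot be found and certified in the allotted time.} Your averaging argument is quantitatively off. For the pseudo-solution count $\#\{(a,b,c)\in A\times B\times C:\,a+b\equiv c\pmod m,\,a+b\ne c\}$ to drop to $O(\epsilon|B||C|)$, the reduction factor relative to the trivial $|A||B||C|$ must be about $\epsilon/|A|$. Each nonzero difference $a+b-c$ (magnitude $\le 2N$) has only $O(\log_p N)=O(1)$ prime (or prime-power) divisors in $[p,2p]$, so averaging over $T$ candidate moduli gives reduction factor $O(1/T)$ — you need $T=\Omega(\epsilon^{-1}|A|)$ candidates, not $\Theta(\log N)$. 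Existence is fine, but certification is not: evaluating a cyclic convolution modulo each of $\Omega(\epsilon^{-1}|A|)$ candidates costs $\tilde\Omega((\epsilon^{-1}|A|)^2)$, which is quadratic. The paper avoids this by taking $m=p_1p_2\cdots p_R$ as a product of $R\approx\sqrt{\log N}$ primes, each chosen greedily from a pool of only $\tilde O(P)$ primes with $P=2^{\sqrt{\log N}}$; the reduction factor compounds multiplicatively to $\approx P^{-R}\approx(\epsilon^{-1}|A|+|C|)^{-1}$ while the certification cost per round stays $\tilde O(mP)=(\epsilon^{-1}|A|+|C|)\cdot N^{o(1)}$. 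You gesture at ``or a product construction'' as an alternative to prime powers, but the reason to prefer it is exactly this trade-off, and it needs to be carried out.

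\textbf{Gap 2: $\sum_c(1_A\conv_m 1_B)[c]^2$ is the wrong certificate.} That quantity counts mod-$m$ 4SUM collisions, which is a second-moment proxy and does not control the $C$-relative pseudo-solution count you actually want to bound. The correct, efficiently computable, and modulus-comparable certificate is $\#\{(a,b,c)\in A\times B\times C:\,a+b\equiv c\pmod m\}$, obtained by an FFT followed by a read-off along $C\bmod m$. Since the number of \emph{true} solutions $\#\{(a,b,c):a+b=c\}$ is a fixed constant independent of $m$, minimizing this certificate over candidates is equivalent to minimizing the pseudo-solution count — which is exactly the selection rule the paper uses at each round. The $\ell_2$ proxy neither equals this nor tightly dominates it, so picking $m$ to minimize it may fail to give $S(m)\le\epsilon|B||C|$.

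Finally, a smaller but real point: the paper does not assign $g(c):=(1_A\conv_m 1_B)[c\bmod m]$ to all of $C$ at once. It iterates $O(\log N)$ rounds, in each round assigning only to those $c\in C$ that are \emph{isolated} modulo the current $m$ (no other element of $C$ shares their residue), and recursing on the rest. Isolation guarantees $\sum_{c\text{ assigned}}g(c)\le\|1_A\conv_m 1_B\|_1=|A||B|$ per round, which is what delivers the second bound $|A||B|\log N$ in the claimed $\min(\epsilon|B||C|,\,|A||B|\log N)$. Your one-shot read-off only yields the first bound; with collisions in $C\bmod m$ you would double-count residue mass and cannot recover the $|A||B|\log N$ guarantee. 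This isolation condition is also what motivates the additional ``choose $p$ to cut collisions in $C^*$ by a $P$ factor'' constraint in the paper's prime selection.
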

\begin{proof}
We follow the deterministic hashing procedure as by Chan and Lewenstein~\cite{ChanL15} and later applied in~\cite{ChanHe,fischer3sum}. The idea is to select a modulus $m \approx \epsilon^{-1}|A| + |C|$ as the product of several small primes. Our goal is to achieve two objectives: (1) We want to isolate a constant fraction of elements in $C$ modulo $m$, and (2) we want to reduce the number of \emph{pseudo-solutions} modulo $m$ (that is, triples $(a, b, c) \in A \times B \times C$ with $a + b \neq c$ and $a + b \equiv c \mod m$) to $\approx \epsilon |B| \, |C|$. Then, for all isolated elements $c \in C$ we can approximate $(1_A \conv 1_B)[c]$ by the number of pseudo-solutions involving $c$ with total error $\approx \epsilon |B| \, |C|$.

We now describe this algorithm in detail. Let $P$ be a parameter to be fixed later. Throughout we maintain a set~\makebox{$C^* \subseteq C$} which is initially~\makebox{$C^* \gets C$}. We call a pair $(c^*, c) \in C^* \times C$ a \emph{collision under $m$} if $c^* \neq c$ and~\makebox{$c^* \equiv c \mod m$}, and denote the number of collisions under $m$ by $\coll_{C^*}(m)$. We say that $c^*$ is \emph{isolated under $m$} if there is no collision $(c^*, c)$ under $m$. Note that given any $m$ we can compute $\coll_{C^*}(m)$ in time $\widetilde\Order(|C| + m)$. While $C^* \neq \emptyset$, we repeat the following steps:
\begin{enumerate}
    \item Initialize $m := 1$
    \item  Repeat the following steps $R := \ceil{\log(\epsilon^{-1} |A| \log N + 2 |C|) / \log P}$ times:
    \begin{enumerate}[label=2.\arabic*]
        \item \label{item:stepselectp} Test all primes $p \in [10(P \log_P N) \log(P\log_P N), 20(P \log_P N) \log(P\log_P N)]$ exhaustively and select one
        \begin{itemize}[label=--]
            \item such that $\coll_{C^*}(m p) \leq \frac{\coll_{C^*}(m)}{P}$, and
            \item such that $\#\set{ (a, b, c) \in A \times B \times C : a + b \equiv c \mod{m p}}$ is minimized\\(note that this quantity can be computed by FFT in time $\widetilde\Order(m p)$)
        \end{itemize}
        \item Update $m := m \cdot p$
    \end{enumerate}
    \item Compute $g'(c) = \#\set{(a, b) \in A \times B : a + b \equiv c \mod m}$ for all $c \in C$ by FFT in time $\widetilde O(m)$
    \item Compute the subset of elements $c^* \in C^*$ that is isolated under $m$
    \item For each isolated element $c^* \in C^*$:
    \begin{enumerate}[label=5.\arabic*.]
        \item Set $g(c^*) := g'(c^*)$
        \item Remove $c^*$ from $C^*$
    \end{enumerate}
\end{enumerate}

\medskip\par\noindent
\emph{Correctness.}
As a first step we argue that (1) we can always select a prime in step~\ref{item:stepselectp} satisfying the two conditions, and (2) for the final modulus (obtained when step 2 ends) we have that
\begin{equation*}
    S(m) := \#\set{(a, b, c) \in A \times B \times C : a + b \neq c, a + b \equiv c \mod m} \leq \frac{\epsilon}{\log N} \cdot |B| \, |C|.
\end{equation*}

By a crude quantitative version of the Prime Number Theorem, there are at least $4P\log_P N$ primes in the interval $[10(P \log_P N) \log(P\log_P N), 20(P \log_P N) \log(P\log_P N)]$ (see e.g.~\cite[Corollary~3]{RosserS62}).
So focus on any inner iteration of step~\ref{item:stepselectp}, and suppose that we select prime $p$ from this interval uniformly at random. For (1), for any fixed pair $(c_0, c)$ ($c_0\neq c$), since $c-c_0$ has at most $\log_P N$ prime factors larger than $P$, we know $(c_0,c)$ becomes a collision modulo $p$ with probability at most~$\frac{\log_P N}{4P (\log_P N)} = \frac{1}{4P}$. Thus, the expected number of collisions modulo $p$ among the $\coll(m)$ collisions modulo $m$ is at most~\smash{$\frac{\coll(m)}{4P}$}, and by Markov's bound the prime satisfies~\smash{$\coll(mp) \leq \frac{\coll(m)}{P}$} with probability at least $\frac34$. For the final modulus $m$, in particular we have that
\begin{equation*}
    \coll(m) \leq \frac{|C^*| \, |C|}{P^R} \leq \frac{|C^*| \, |C|}{2|C|} \leq \frac{|C^*|}{2}.
\end{equation*}

For (2), one can similarly argue about the number of false positives. A random prime $p$ reduces the number of false positives modulo $m p$ in expectation by a factor $\frac{1}{4P}$ and thus by a factor $\frac{1}{P}$ with probability at least $\frac34$. By a union bound both events happen simultaneously with positive probability. In particular, we can pick the prime minimizing
\begin{equation*}
    \#\set{(a, b, c) \in A \times B \times C : a + b \equiv c \mod{mp}},
\end{equation*}
which equivalently minimizes
\begin{equation*}
    \#\set{(a, b, c) \in A \times B \times C : a + b \neq c, a + b \equiv c \mod{mp}},
\end{equation*}
to ultimately obtain that
\begin{equation*}
    S(m) \leq \frac{|A| \, |B| \, |C|}{P^R} \leq \frac{|A| \, |B| \, |C|}{\epsilon^{-1} |A| \log N} = \frac{\epsilon}{\log N} \cdot |B| \, |C|.
\end{equation*}

Next, we analyze the number of iterations of the algorithm. From the bound on the number of collisions above we infer that at least half of the elements in $C^*$ are isolated. As in each iteration we remove all isolated elements from $C^*$, it follows that the number of iterations is bounded by $\log |C| \leq \log N$.

Finally, we analyze the approximations $g(c^*)$. Focus on any iteration of the algorithm. On the one hand, since we set $g(c^*) := g'(c^*)$ which equals the pseudo-solutions equal to $c^*$, we must have that $g(c^*) \geq (1_A \conv 1_B)[c^*]$. On the other hand, note that each pseudo-solution contributes at most $1$ to the sum $\sum_{\text{$c^* \in C^*$ isolated}} g'(c^*)$ and hence
the total contribution of the pseudo-solutions is at most $\frac{\epsilon}{\log N} \cdot |B| \, |C|$. But as in each iteration we only assign the values $g(c^*)$ for the isolated elements~\makebox{$c^* \in C^*$} and as the algorithm terminates after at most $\log N$ iterations, it follows that indeed
\begin{equation*}
    \sum_{c \in C} g(c) - (1_A \conv 1_B)[c] \leq \epsilon |B| \, |C|.
\end{equation*}

\medskip\par\noindent
\emph{Running Time.}
It remains to analyze the running time. The running time of step 2 can be bounded by~\smash{$\widetilde\Order(|A| + |B| + |C| + m P)$}. The remaining steps~3,~4 and~5 can easily be implemented in time $\widetilde\Order(|A| + |B| + |C| + m)$. Taking the $\log N$ iterations into account, and bounding
\begin{equation*}
    m \leq \Order((P \log_P N) \log(P\log_P N))^R \leq (\epsilon^{-1} |A| \log N + |C|)^{1+\Order(\frac{\log\log N}{\log P})},
\end{equation*}
the total running time can be bounded by
\begin{equation*}
    \Order(P \cdot (\epsilon^{-1} |A| \log N + |B| + |C|)^{1+\Order(\frac{\log\log N}{\log P})}) = \Order((\epsilon^{-1} |A| + |B| + |C|) \cdot N^{\order(1)})
\end{equation*}
by choosing, say, $P = 2^{\sqrt{\log N}}$ (here we assumed that $|A|,|B|,|C|, \eps^{-1}\le N$; if this is not the case, then we could simply use FFT to exactly compute $1_A\conv 1_B$ in $O(N\log N)$ time, which is within the desired time bound).
\end{proof}

\begin{proof}[Proof of \cref{thm:introdeterministicapprox3sum}]
We design a recursive algorithm. In the base case, when $N$ is smaller than some constant threshold we can trivially solve the problem in constant time. For the recursive case, let $r$ be a parameter to be determined later and let $N' = N / r$ (assuming for simplicity that $r$ divides $N$). Consider the multisets $A' = (A \bmod N') = \set{a \bmod N' : a \in A}$ and $B' = (B \bmod N')$ (in particular, we have that $|A'| = |A|$ and $|B'| = |B|$). We compute recursively, with error parameter~\makebox{$\epsilon' = \frac{\epsilon}{16}$}, a length-$2N'$ vector $f'$ that approximates $1_{A'} \conv 1_{B'}$. Then in order to compute the desired vector $f$ we proceed as follows; here we write $\thresh_\alpha(x)$ for the function that is $x$ for all~\makebox{$x \geq \alpha$} and zero everywhere else.
\begin{enumerate}
    \item Initialize $f(c) := 0$ for all $c \in [2N]$
    \item Initialize $X := X_0 := \set{x \in [N'] : f'(x) + f'(x + N') > 0}$
    \item While $X \neq \emptyset$:
    \begin{enumerate}[label=3.\arabic*.]
        \item Apply \cref{lem:3sum-approx-fft} on $A, B, C := X + \set{0, N', \dots, (2r-1) N'}$ and with parameter $\frac{\epsilon}{32r}$ to compute a function~$g$
        \item For each $x \in X$ satisfying that $|f'(x) + f'(x + N') - \sum_{i \in [2r]} g(x + i N')| \leq \frac{\epsilon}{4} \cdot |B|$:
        \begin{enumerate}[label=3.2.\arabic*.]
            \item Set $f(x + i N') := \thresh_{\frac{\epsilon}{2} \cdot |B|} (g(x + i N'))$ for all $i \in [2r]$
            \item Remove $x$ from $X$
        \end{enumerate}
    \end{enumerate}
\end{enumerate}

\medskip\par\noindent
\emph{Correctness of $\norm{f - (1_A \conv 1_B)}_\infty \leq \epsilon |B|$.}
The proof is by induction. The base case is clearly correct, so focus on the inductive case. To prove that indeed $|f(c) - (1_A \conv 1_B)[c]| \leq \epsilon \cdot |B|$ we distinguish two cases. First, suppose that $c \not\in X_0 + [2r]$ in which case the algorithm assigns $f(c) = 0$. The induction hypothesis implies that $(1_{A'} \conv 1_{B'})[x] , (1_{A'} \conv 1_{B'})[x + N'] \leq \frac{\epsilon}{16} \cdot |B|$ for $x = c \bmod N'$. It immediately follows that $(1_A \conv 1_B)[c] \leq \frac{\epsilon}{8} \cdot |B|$, and thus the choice $f(c) = 0$ is valid.

Next, assume that $c = x + i N'$ for some $x \in X_0$ and $i \in [2r]$. We set~\smash{$f(c) := \thresh_{\frac{\epsilon}{2} \cdot |B|}(g(c))$} in an iteration when $|f'(x) + f'(x + N') - \sum_{i \in [2r]} g(x + i N')| \leq \frac{\epsilon}{2} \cdot |B|$. So suppose for contradiction that $|f(c) - (1_A \conv 1_B)[c]| > \epsilon \cdot |B|$ and thus $|g(c) - (1_A \conv 1_B)[c]| > \frac{\epsilon}{2} \cdot |B|$. As \cref{lem:3sum-approx-fft} guarantees that $g(c) \geq (1_A \conv 1_B)[c]$ for all $c$, we have
\begin{align*}
    \sum_{i \in [2r]} g(x + i N')
    &> \sum_{i \in [2r]} (1_A \conv 1_B)[x + i N'] + \tfrac{\epsilon}{2} \cdot |B| \\
    &= (1_{A'} \conv 1_{B'})[x] + (1_{A'} \conv 1_{B'})[x + N'] + \tfrac{\epsilon}{2} \cdot |B| \\
    &\geq \parens{f'(x) - \tfrac{\epsilon}{16} \cdot |B|} + \parens{f'(x + N') - \tfrac{\epsilon}{16} \cdot |B|} + \tfrac{\epsilon}{2} \cdot |B| \\
    &\geq f'(x) + f'(x + N') + \tfrac{\epsilon}{4} \cdot |B|,
\end{align*}
leading to a contradiction.

\medskip\par\noindent
\emph{Number of Iterations.}
Before we continue with the remaining correctness proof, let us first bound the number of (outer) iterations of the above algorithm by $\log N$. To this end, we show that in every iteration at least half of the elements $x \in X$ are removed. Focusing on any iteration, let us call an element $x \in X$ \emph{bad} if
\begin{equation*}
    \sum_{i \in [2r]} g(x + iN') > \sum_{i \in [2r]} (1_A \conv 1_B)[x + iN'] + \frac{\epsilon}{8} \cdot |B|,
\end{equation*}
and \emph{good} otherwise. Given that \cref{lem:3sum-approx-fft} (applied with parameter $\frac{\epsilon}{32r}$) guarantees
\begin{equation*}
    \sum_{c \in [2N]} |g(c) - (1_A \conv 1_B)[c]| \leq \frac{\epsilon}{32r} \cdot |B| \, |C| = \frac{\epsilon}{16} \cdot |B| \, |X|,
\end{equation*}
there can be at most $(\frac{\epsilon}{16} \cdot |B| \, |X|) / (\frac{\epsilon}{8} \cdot |B|) \leq \frac12 \cdot |X|$ bad elements. Moreover, for any good element~$x$ we have that
\begin{multline*}
    \abs*{f'(x) + f'(x + N') - \sum_{i \in [2r]} g(x + i N')} \\
    \leq \abs*{(1_{A'} \conv 1_{B'})[x] + (1_{A'} \conv 1_{B'})[x + N'] - \sum_{i \in [2r]} (1_A \conv 1_B)[x + i N']} + \frac{\epsilon}{16} + \frac{\epsilon}{16} + \frac{\epsilon}{8},
\end{multline*}
and thus $x$ is indeed removed in the current iteration.

\medskip\par\noindent
\emph{Correctness of $\norm{f}_0 \leq \Order(\epsilon^{-1} |A| \log^2 N)$.}
To prove the sparsity bound on $f$, observe that in any iteration we add total mass at most
\begin{equation*}
    \sum_{c \in [2N]} g(c) \leq \Order(|A| \, |B| \log N),
\end{equation*}
where the bound is due to \cref{lem:3sum-approx-fft}. Therefore, across all $\log N$ iterations we add mass at most $\sum_{c \in [2N]} f(c) \leq \Order(|A| \, |B| \log^2 N)$. However, recall that every nonzero entry in $f$ is at least $\frac{\epsilon}{2} \cdot |B|$ due to the thresholding; the bound on the sparsity of $f$ follows.

\medskip\par\noindent
\emph{Running Time.}
We finally analyze the running time of this algorithm. Ignore the cost of recursive calls for now and focus only on the above algorithm. As we showed before it runs in $\log N$ iterations, each of which is dominated by the call to \cref{lem:3sum-approx-fft} in time
\begin{align*}
    \widetilde\Order((r \epsilon^{-1} |A| + |B| + |C|) \cdot N^{\order(1)}) &= \widetilde\Order((r \epsilon^{-1} |A| + |B| + r |X|) \cdot N^{\order(1)}) \\
    &= \widetilde\Order((r \epsilon^{-1} |A| + |B|) \cdot N^{\order(1)}),
\end{align*}
using that $|X| \leq \Order(|A| / \epsilon \log^2 N)$ as established in the previous paragraph. The remaining steps also run in time $\Order(r |X|)$ and can thus be similarly bounded. To take the recursion into account, let $T(N, \epsilon)$ denote the running time of this algorithm (observing that the sizes of $A$ and $B$ do not chance in recursive calls). Then:
\begin{align*}
    T(N, \epsilon) &\leq T(\tfrac{N}{r}, \tfrac{\epsilon}{16}) + \widetilde\Order((r \epsilon^{-1} |A| + |B|) \cdot N^{\order(1)}),
\intertext{which solves to}
    T(N, \epsilon)
    &\leq \widetilde\Order(\log_r(N) \cdot r \cdot 16^{\log_r(N)} \cdot (\epsilon^{-1} |A| + |B|) \cdot N^{\order(1)}) \\
    &\leq \Order((\epsilon^{-1} |A| + |B|) \cdot N^{\order(1)})
\end{align*}
by setting $r = 2^{\sqrt{\log N}}$.
\end{proof}

An application of \cref{thm:introdeterministicapprox3sum} is approximately counting 4SUM solutions. In particular, we can efficiently approximate the additive energy of a set, which will be useful later for the derandomized algorithmic BSG theorem.

\begin{lemma}[Deterministic approximate 4SUM counting]
\label{lem:approx4sumcount}
   Given multisets $A,B,C,D\subseteq [N]$ of size at most $N$, we can deterministically approximate $|\{(a,b,c,d)\in A\times B\times C\times D: a+b=c+d\}|$ up to additive error $\eps (|A|+|C|) |B||D|$ in $O\big (\big (\eps^{-1}(|A|+|C|) + |B| + |D|\big )\cdot N^{o(1)}\big )$ time.
\end{lemma}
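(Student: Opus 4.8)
The plan is to reduce the task to two applications of \cref{thm:introdeterministicapprox3sum}, followed by a sparse inner product. First observe that the quantity to be estimated equals $\sum_s u[s]\,v[s]$, where $u := 1_A\conv 1_B$ and $v := 1_C\conv 1_D$: a quadruple $(a,b,c,d)$ with $a+b=c+d$ is counted once for each value $s=a+b=c+d$. So I would run \cref{thm:introdeterministicapprox3sum} on $(A,B)$ with a parameter $\eps' := \eps/2$ to get a sparse vector $f$ with $\norm{f-u}_\infty\le \eps'\,|B|$ and $\norm{f}_0 \le O(\eps'^{-1}|A|\log^2 N)$, and separately on $(C,D)$ with a \emph{smaller} parameter $\eps'' := c\eps/\log^2 N$ (for a small constant $c>0$) to get $g$ with $\norm{g-v}_\infty\le \eps''\,|D|$. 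The algorithm then outputs $\sum_s f[s]\,g[s]$, which is computed in time $\widetilde O(\norm{f}_0+\norm{g}_0)$ by sorting the two explicitly given supports and running one merge pass (or, equivalently, by noting it equals $(f\conv \tilde g)[0]$ but computing only the $0$-th coordinate rather than the full convolution).

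For the error analysis I would use the product identity $f[s]g[s]-u[s]v[s]=f[s](g[s]-v[s])+v[s](f[s]-u[s])$ and sum absolute values. The second contribution is clean: since $v\ge 0$, it is at most $\norm{f-u}_\infty\cdot\norm{v}_1=\eps'\,|B|\cdot|C|\,|D|\le \tfrac\eps2(|A|+|C|)|B|\,|D|$. For the first contribution, only $s\in\supp(f)$ matter, and there $|f[s]|\le u[s]+\eps'\,|B|$ because $u\ge 0$ and $\norm{f-u}_\infty\le \eps'\,|B|$. Hence
\[ \sum_{s}|f[s]|\,|g[s]-v[s]| \le \sum_s u[s]\,|g[s]-v[s]| + \eps'\,|B|\sum_{s\in\supp(f)}|g[s]-v[s]| \le \norm{g-v}_\infty\norm{u}_1 + \eps'\,|B|\cdot\norm{f}_0\cdot\norm{g-v}_\infty. \]
Using $\norm{u}_1=|A|\,|B|$ and the sparsity bound $\norm{f}_0\le O(\eps'^{-1}|A|\log^2 N)$, the $\eps'$ and $\eps'^{-1}$ cancel, so this is $O(\eps''\,|A|\,|B|\,|D|\log^2 N)$, which is at most $\tfrac\eps2|A|\,|B|\,|D|$ once $c$ is chosen small enough. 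Summing the two contributions gives total error at most $\eps(|A|+|C|)|B|\,|D|$.

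The running time is $(\eps'^{-1}|A|+|B|)N^{o(1)}+(\eps''^{-1}|C|+|D|)N^{o(1)}$; since $\eps''^{-1}=O(\eps^{-1}\log^2 N)$, the $\log^2 N$ factor is absorbed into $N^{o(1)}$ and this is $(\eps^{-1}(|A|+|C|)+|B|+|D|)\cdot N^{o(1)}$. The final merge pass costs $\widetilde O(\norm{f}_0+\norm{g}_0)=\eps^{-1}(|A|+|C|)\cdot N^{o(1)}$, also within budget. (If $\eps^{-1}>N$, one instead computes $u$ and $v$ exactly via FFT in $O(N\log N)$ time and takes their inner product; this is fine since then $\eps^{-1}(|A|+|C|)\ge N$.)

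I expect the only mildly delicate point to be exactly this asymmetric choice of accuracies: the $\log^2 N$ overhead in the sparsity of the first approximation $f$ forces the second approximation $g$ to be computed to accuracy $\eps/\log^2 N$ rather than $\eps$, and the key observation making this work is that splitting $|f[s]|\le u[s]+\eps'|B|$ isolates a ``main term'' $\sum_s u[s]|g[s]-v[s]|$ that carries no $\log$ loss, while in the remaining ``error term'' the factor $\eps'^{-1}$ from $\norm{f}_0$ cancels against the factor $\eps'$ from the threshold $\eps'|B|$. Everything else is routine.
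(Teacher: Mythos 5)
Your proposal is correct and follows essentially the same route as the paper's proof: both reduce to two invocations of Theorem~3.1, output the inner product $\sum_s f[s]g[s]$, and bound the error by the same telescoping decomposition together with the bound $\sum_s f[s]\le \norm{f}_0\norm{f-u}_\infty + \norm{u}_1$. The only cosmetic difference is that you charge the $\log^2 N$ sparsity overhead by computing $g$ to finer accuracy $\eps'' = c\eps/\log^2 N$, whereas the paper uses the same accuracy for both approximations and then decreases the common $\eps$ by an $O(\log^2 N)$ factor at the end; both are absorbed into $N^{o(1)}$.
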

\begin{corollary}\label{cor:approxenergy}
   Given a set $A\subseteq [N]$, we can deterministically approximate its additive energy $\sfE(A)$ up to additive error $\eps |A|^3$ in $O(\eps^{-1}|A| N^{o(1)})$ time.
\end{corollary}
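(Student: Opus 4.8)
The plan is to prove \cref{lem:approx4sumcount} first; \cref{cor:approxenergy} is then simply the case $A=B=C=D$. Write $F := 1_A\conv 1_B$ and $G := 1_C\conv 1_D$, and note that the target quantity is $E := \sum_s F[s]\,G[s]$, since index $s$ contributes the number of representations $s=a+b$ (with $(a,b)\in A\times B$) times the number of representations $s=c+d$ (with $(c,d)\in C\times D$). First I would run \cref{thm:introdeterministicapprox3sum} twice: on $(A,B)$ with error parameter $\eps_1 := \Theta(\eps/\log^2 N)$, obtaining a sparse vector $f$ with $\norm{f}_0 = O(\eps_1^{-1}|A|\log^2 N)$ and $\norm{f-F}_\infty \le \eps_1|B|$; and on $(C,D)$ with error parameter $\eps_2 := \eps/2$, obtaining a sparse vector $g$ with $\norm{g}_0 = O(\eps_2^{-1}|C|\log^2 N)$ and $\norm{g-G}_\infty \le \eps_2|D|$. (If $\eps^{-1}>N$, instead compute $F$ and $G$ exactly by FFT in $O(N\log N)$ time, which is already within budget; if $\eps>1$ the answer $0$ is trivially within the allowed error.) I would then output $\hat E := \sum_s f[s]\,g[s]$, which, since $f$ and $g$ are given as sparse lists, can be computed in $O\big((\norm{f}_0+\norm{g}_0)\polylog N\big)$ time by sorting and merging the two supports.

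For correctness I would use the coordinatewise splitting $f[s]g[s]-F[s]G[s] = (f[s]-F[s])\,g[s] + F[s]\,(g[s]-G[s])$, which yields $|\hat E - E| \le \norm{f-F}_\infty\,\norm{g}_1 + \norm{g-G}_\infty\,\norm{F}_1$. Here $\norm{F}_1 = \sum_s F[s] = |A|\,|B|$ exactly. The one point that takes a little care --- the part I expect to be the main obstacle --- is bounding $\norm{g}_1$ without ever materializing a dense convolution; but the guarantees of \cref{thm:introdeterministicapprox3sum} already suffice, because $|g[s]|\le G[s]+\eps_2|D|$ for every $s$ by the triangle inequality, and $g$ is supported on at most $O(\eps_2^{-1}|C|\log^2 N)$ indices, so $\norm{g}_1 \le \norm{G}_1 + \eps_2|D|\cdot\norm{g}_0 = |C|\,|D| + O(|C|\,|D|\log^2 N) = O(|C|\,|D|\log^2 N)$. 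Substituting, $|\hat E - E| \le O(\eps_1\,|B|\,|C|\,|D|\log^2 N) + \eps_2\,|A|\,|B|\,|D|$; taking $\eps_1$ to be a sufficiently small $1/\log^2 N$-fraction of $\eps$ makes the first summand at most $\tfrac{\eps}{2}|B|\,|C|\,|D|$, and the second is $\tfrac{\eps}{2}|A|\,|B|\,|D|$, for a total of at most $\eps(|A|+|C|)\,|B|\,|D|$, as required. For the running time, the two calls cost $(\eps_1^{-1}|A| + |B| + \eps_2^{-1}|C| + |D|)\,N^{o(1)} = (\eps^{-1}(|A|+|C|) + |B| + |D|)\,N^{o(1)}$ (the $\log^2 N$ inside $\eps_1^{-1}$ is swallowed by $N^{o(1)}$), and the final sparse inner product costs $(\eps^{-1}(|A|+|C|))\,N^{o(1)}$, matching the claim.

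Finally, \cref{cor:approxenergy} follows by applying \cref{lem:approx4sumcount} with all four input sets equal to the given set $A$: by definition $\sfE(A)=|\{(a,b,c,d)\in A^4 : a+b=c+d\}|$, so the lemma outputs an estimate with additive error $\eps\cdot(|A|+|A|)\cdot|A|\cdot|A| = 2\eps|A|^3$ in time $O\big((\eps^{-1}\cdot 2|A| + 2|A|)\,N^{o(1)}\big) = O(\eps^{-1}|A|\,N^{o(1)})$ (we may assume $\eps\le 1$, otherwise $\sfE(A)\le|A|^3$ is within error of $0$); running it with $\eps/2$ in place of $\eps$ gives additive error $\eps|A|^3$ within the same time bound.
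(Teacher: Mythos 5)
Your proposal is correct and follows essentially the same route as the paper's: both prove \cref{lem:approx4sumcount} by invoking \cref{thm:introdeterministicapprox3sum} twice, forming the sparse inner product $\sum_s f[s]g[s]$, and splitting the error into two terms, one controlled by the exact $\ell_1$ mass of one true convolution and the other by a bound on the $\ell_1$ mass of a sparse approximation (sparsity $\times$ coordinate-wise error plus true mass). The only cosmetic differences are that you split $fg-FG=(f-F)g+F(g-G)$ where the paper splits $(F-f)G+f(G-g)$ (mirror images), and that you set asymmetric error parameters $\eps_1=\Theta(\eps/\log^2 N)$, $\eps_2=\eps/2$ up front, whereas the paper runs with a single $\eps$ and absorbs the stray $\log^2 N$ factor at the end by rescaling. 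The derivation of \cref{cor:approxenergy} from the lemma is identical.
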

\begin{proof}[Proof of \cref{lem:approx4sumcount}]
   Using  \cref{thm:introdeterministicapprox3sum} we can compute an $O(\eps^{-1}|A|\log^2 N)$-sparse nonnegative vector $f$ such that $\|f- 1_A\conv 1_B\|_\infty\le \eps |B|$ in $O((\eps^{-1}|A|+|B|)N^{o(1)})$ time, and an $O(\eps^{-1}|C|\log^2 N)$-sparse nonnegative vector $g$ such that 
   $\|g- 1_C\conv 1_D\|_\infty\le \eps |D|$
in $O((\eps^{-1}|C|+|D|)N^{o(1)})$ time.
   Then we return $\sum_{i}f[i] g[i]$ as our approximation, which has additive error
   \begin{align*}
    & \Big \lvert    \sum_{i}(1_A\conv 1_B)[i](1_C\conv 1_D)[i] - \sum_{i}f[i] g[i] \Big \rvert \\
   &\le \underbrace{\Big \lvert      \sum_{i}(1_A\conv 1_B)[i](1_C\conv 1_D)[i] - \sum_{i}f[i] (1_C\conv 1_D)[i] \Big \rvert}_{=:e_1} + \underbrace{\Big \lvert \sum_{i}f[i] (1_C\conv 1_D)[i] - \sum_{i}f[i] g[i] \Big \rvert}_{=:e_2},
   \end{align*}
   where \begin{align*}
  e_1 &=   
  \Big \lvert      
   \sum_{i}\big ((1_A\conv 1_B)[i]-f[i]\big )\cdot (1_C\conv 1_D)[i]
   \Big \rvert\\
  & \le \|1_A\conv 1_B-f\|_\infty \cdot \sum_{i} (1_C\conv 1_D)[i] \\
  & \le  \eps |B|\cdot |C||D|,
   \end{align*}
   and
   \begin{align*}
  e_2 &=  \Big \lvert    \sum_{i}f[i]\cdot \big ( (1_C\conv 1_D)[i] -g[i]\big )\Big \rvert\\
  & \le \|1_C\conv 1_D-g\|_\infty\cdot \sum_{i} f[i] \\
  & \le \eps|D| \cdot \sum_{i: f[i]\neq 0} \Big( \big \lvert f[i] - (1_A\conv 1_B)[i]\big \rvert + (1_A\conv 1_B)[i]\Big)\\
  & \le \eps |D| \cdot \big ( \|f\|_0 \cdot \|f-1_A\conv 1_B\|_\infty + \sum_i (1_A\conv 1_B)[i] \big)\\
  & \le  \eps |D|\cdot \big ( \eps^{-1}|A|\log^2 N \cdot \eps |B| +  |A||B| \big )\\
  & \le  O(\eps |A||B||D|\log^2 N).
  \end{align*}
  So the total error is at most $e_1+e_2 = O(\eps |B||D|(|A|+|C|)\log^2 N)$. The error can be reduced to $\eps|B||D|(|A|+|C|)$ after decreasing $\eps$ by only an $O(\log^2 N)$ factor.  The total time complexity is still $O((\eps^{-1}|A|+|B|)N^{o(1)}) + O((\eps^{-1}|C|+|D|)N^{o(1)})$ as claimed.
\end{proof}

\section{Tool II: Deterministic 3SUM Counting for Small Doubling Sets}

The goal of this section is to prove the following theorem.

\begin{theorem}
    \label{thm:detsmalldoublethreesum}
   Given sets $A,B,C,S\subset [N]$, we can deterministically compute 
  \begin{enumerate}
    \item $(1_A\conv 1_B)[c]$ for all $c\in C$, and
        \item $(1_C\conv 1_{-B})[a]$ for all $a\in A$,
  \end{enumerate} 
   in time complexity \[O\Big(\frac{|A+S|\sqrt{|B||C|}}{\sqrt{|S|}}\cdot \polylog(N) 
 + \big (|A+S|+|C|\big)\cdot N^{o(1)} 
   \Big ).\]
   If randomization is allowed, then the $N^{o(1)}$ factor can be replaced by $\polylog(N)$.
\end{theorem}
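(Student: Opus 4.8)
The two requested quantities are close cousins. Item~(1) is the restriction of the convolution $1_A\conv 1_B$ to the (unstructured) set~$C$, with the structured set~$A$ appearing as a convolution operand; item~(2) is the restriction of $1_C\conv 1_{-B}$ to the structured set~$A$ itself. In both cases the only leverage is that $|A+S|$ is small, i.e.\ $A$ is additively clustered (every translate $A+s$, $s\in S$, sits inside the size-$|A+S|$ set $D:=A+S$, and by Plünnecke--Ruzsa the iterated sumsets of $D$ and of $A$ stay controlled). I would build one block-decomposition scheme exploiting this clustering that outputs both vectors; below I describe it mainly for~(1) and indicate the analogous changes for~(2). Note also that both parts give, up to transposition, the weighted degree sequences of the bipartite multigraph on $A\times C$ whose $(a,c)$-edge has multiplicity $1_B[c-a]$, so it is enough to compute that graph's edge multiplicities in aggregate form.

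\emph{Step 1 (universe reduction).} Since $N$ may be astronomically larger than $|A|,|B|,|C|$, I would first collapse the universe to size $\poly(|A+S|+|B|+|C|)$ using the deterministic ``iterated small primes'' technique from the proof of \cref{lem:3sum-approx-fft} (following Chan--Lewenstein): choose a modulus $m$, a product of $2^{O(\sqrt{\log N})}$-smooth primes, so that the sets $A$, $C$, and $A+S$ remain \emph{isolated} modulo $m$ (distinct elements stay distinct), making the computation faithful for the positions we care about. Selecting such an $m$ deterministically is exactly where the $N^{o(1)}$ factor enters; a single random prime of the right magnitude works with high probability and costs only $\polylog(N)$, which is precisely why randomization removes the $N^{o(1)}$. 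After this step we may assume the universe has size $\Theta(|A+S|+|B|+|C|)$, so plain FFT over it costs $(|A+S|+|B|+|C|)\cdot N^{o(1)}$.

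\emph{Step 2 (exploit the small sumset).} Decompose $B=B_1\sqcup\cdots\sqcup B_{t_B}$ and (symmetrically) $C=C_1\sqcup\cdots\sqcup C_{t_C}$ into blocks, and compute the partial convolutions block by block using the deterministic nonnegative sparse-convolution routine (\cref{thm:detnonnegsparsecovo}) and the deterministic partial-convolution routine (\cref{lem:nickcount}). The design constraint is that every convolution we actually perform should have one operand supported inside $D=A+S$ (or inside a short sumset built from $D$), so that its output support is bounded by $\approx|A+S|$ per unit size of the other operand, rather than forming the \emph{uncontrolled} sumset $A+B$ directly; for~(2) the clustering of $A$ is instead used to bound, for each block, how many output positions $a\in A$ receive a nonzero contribution and to read those off via \cref{lem:nickcount} with target set $A$. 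Balancing the two block granularities — so that the block counts scale like $\sqrt{|B|/|S|}$ on one side and $\sqrt{|C|/|S|}$ on the other, and the per-block cost $\approx|A+S|$ multiplies out to $|A+S|\sqrt{|B||C|}/\sqrt{|S|}$ — yields the claimed bound, the additive $(|A+S|+|C|)\cdot N^{o(1)}$ term absorbing the FFTs over the reduced universe and the cost of emitting the answers at $C$. Summing the block contributions gives exact values because we only ever combine sparse/partial convolutions of nonnegative vectors (no collision slack), so no thresholding or error analysis as in \cref{thm:introdeterministicapprox3sum} is needed here.

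\textbf{Main obstacle.} The heart of the argument — and where I expect essentially all the technical effort to go — is Step~2: $B$ and $C$ are \emph{completely arbitrary}, so any convolution that sees both of them at full size can cost $\Theta(|B|\,|C|)$, and the only object with controlled additive structure is $A+S$. Arranging the block decomposition so that every FFT / sparse-convolution call genuinely has support bounded by $|A+S|$ times a single block size, while still recovering the \emph{exact} counts required by the statement (for all of $C$ in part~(1), and for all of $A$ in part~(2)), is the crux; getting the exponents of $|B|$ and $|C|$ down to $\tfrac12$ each is what forces the two-sided balancing. The secondary, purely technical difficulty is the derandomization in Step~1: constructing the smooth modulus deterministically without pushing the running time past $N^{o(1)}$.
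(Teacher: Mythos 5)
There is a genuine gap, and you essentially acknowledge it yourself in your ``Main obstacle'' paragraph: the decomposition is the whole proof, and your sketch does not actually contain it. The symmetric ``block decomposition of $B$ and $C$'' you propose is not the right shape, and there is no obvious way to make it work while preserving exactness. Here is what the paper actually does, which is structurally different from what you describe. First, instead of shrinking the universe to size $\poly(|A+S|+|B|+|C|)$, the paper finds a modulus $M\approx|S|\cdot N^{o(1)}$ under which $S\bmod M$ stays large (your ``iterated small primes'' instinct is right, but applied to $S$ alone and with $M\approx|S|$, not to the full instance); it then constructs a set $\Delta\subseteq\Z_M$ of size $N^{o(1)}$ with $(S\bmod M)+\Delta=\Z_M$. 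This translate-covering is the missing structural ingredient: it lets you write every $b\in B$ as $b=s_b+d_b$ with $s_b\in S$ and $d_b\bmod M$ constrained to the tiny set $\Delta$ (plus a shift). Grouping $B$ by the value $d=d_b$ gives subsets $B_d\subseteq d+S$, so the sparse convolution $1_A\conv 1_{B_d}$ has support in $d+(A+S)$ of size $\le|A+S|$ — one-sided, not two-sided blocking. The crucial companion observation, which your plan has no analogue of, is that any 3SUM solution $a+b=c$ with $b\in B_d$ forces $d\in c-(A+S)$, so each $c\in C$ can only interact with a small number of $d$-groups; after also partitioning $C$ into $O(\log M)$ pieces using the covering structure, one gets $\sum_d|C_d|=O(|C|\cdot\tfrac{|A+S|}{|S|}\cdot\polylog)$. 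The balancing is then between brute force on small $B_d$ (cost $|B_d||C_d|$) and sparse/partial convolution on large $B_d$ (cost $|C_d|+|A+S|$), and the threshold $X$ on $|B_d|$ gives $O\big(\sqrt{(\sum_d|C_d|)\cdot|B|\cdot|A+S|}\big)=O\big(\tfrac{|A+S|\sqrt{|B||C|}}{\sqrt{|S|}}\polylog\big)$.

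Two smaller inaccuracies worth noting. Your invocation of Pl\"unnecke--Ruzsa is unnecessary: the proof never needs bounds on iterated sumsets of $A$ or $D$; the single set $A+S$ and its translates $d+(A+S)$ are all that appear. And your ``block counts scale like $\sqrt{|B|/|S|}$ on one side and $\sqrt{|C|/|S|}$ on the other'' misreads the balancing: there is a single grouping index $d$ that simultaneously slices both $B$ (into $B_d$) and $C$ (into $C_d$), and the $\sqrt{|B||C|}$ comes from trading off brute force against sparse convolution within one decomposition, not from an outer product of two independent block partitions.
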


Randomized versions of similar results were given by Abboud, Bringmann, and Fischer \cite{AbboudBF23}, and Jin and Xu \cite{JinX23}.
To deterministically implement the approaches of \cite{AbboudBF23,JinX23}, we follow the derandomization strategy of \cite{ChanL15,fischer3sum} and work modulo some $M$.
For our purposes, we pick $M \le |S|\cdot N^{o(1)}$ such that $S$ has many distinct remainders modulo $M$, as stated below.
\begin{lemma}
    \label{lem:detfindmodulus}
Given set $S\subseteq [N]$, we can deterministically find an integer $M\in \big [|S|/2, |S|\cdot 2^{O(\sqrt{\log |S|\log \log N})}\big ]$ in $|S|\cdot 2^{O(\sqrt{\log |S|\log \log N})}$ time such that the set $S\bmod M = \{\hat s\in \Z_M: \exists s\in S, s\bmod M = \hat s \}$ has size $|S\bmod M| \ge 0.9|S|$.
\end{lemma}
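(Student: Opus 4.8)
The plan is to construct $M$ as a product of $R$ small primes, chosen one at a time by a greedy rule, following the deterministic hashing technique of Chan and Lewenstein~\cite{ChanL15} already used for \cref{lem:3sum-approx-fft}. For a modulus $m$ let $\coll_S(m) = \#\{(s,s')\in S\times S : s\ne s',\ s\equiv s'\ (\mathrm{mod}\ m)\}$ be the number of ordered colliding pairs; a simple counting argument shows $|S\bmod m|\ge |S|-\coll_S(m)$ (each residue class with $c$ elements wastes $c-1$ residues but contributes $c(c-1)\ge c-1$ collisions), so it suffices to find $M$ with $\coll_S(M)\le 0.1\,|S|$. We start from $m:=1$, where $\coll_S(1)<|S|^2$, and perform $R:=\lceil \log(10|S|)/\log P\rceil$ rounds; in each round we sieve the interval $[L,2L]$ with $L=\Theta(P\log_P N\cdot \log(P\log_P N))$, test every prime $p$ in it by recomputing $\coll_S(mp)$ in $\widetilde O(|S|+mp)$ time via residue bucketing, pick one achieving $\coll_S(mp)\le \coll_S(m)/P$, and set $m:=mp$. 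After $R$ rounds, $\coll_S(M)\le |S|^2/P^R\le 0.1\,|S|$, hence $|S\bmod M|\ge 0.9\,|S|$; the lower bound $M\ge |S|/2$ is then automatic since $0.9\,|S|\le |S\bmod M|\le M$.

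Two points require verification. First, in each round a prime with the required collision-reduction exists: by the prime number theorem the interval $[L,2L]$ contains at least $4P\log_P N$ primes, all of size $\ge P$; since each nonzero value $(s-s')/m\in[1,N]$ has at most $\log_P N$ prime divisors $\ge P$, a uniformly random prime from the interval keeps a fixed pair (colliding under $m$) colliding under $mp$ with probability at most $\tfrac{1}{4P}$, so by linearity of expectation and Markov's inequality some prime reduces $\coll_S$ by a factor $P$, and exhaustive search over the sieved interval finds it. Second, the parameters balance: $M\le (2L)^R$, so $\log M\le R(\log P+O(\log\log N))\le \log|S|+O\!\big(\tfrac{\log|S|\log\log N}{\log P}+\log P+\log\log N\big)$, which is minimized at $\log P=\sqrt{\log|S|\log\log N}$, i.e.\ $P=2^{\sqrt{\log|S|\log\log N}}$, giving $M\le |S|\cdot 2^{O(\sqrt{\log|S|\log\log N})}$; the total time $\widetilde O(R\cdot L\cdot(|S|+M))$ fits within the same bound. (In the regime where $|S|$ is so small that the $O(\log\log N)$ term dominates, the stated bound continues to hold after adjusting constants, so this case causes no trouble.)

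I expect the balancing in the last step to be the only genuine obstacle: $L$ must be large enough for the prime-counting estimate to support the Markov argument, yet small enough that $R$ multiplications by primes of size $\approx L$ inflate $M$ by only a sub-polynomial factor over $|S|$. The divisor bound (at most $\log_P N$ prime factors $\ge P$) together with the choice of $R$ is exactly what makes the collision count telescope down to $0.1\,|S|$ while keeping $M$ small. The remaining ingredients — evaluating $\coll_S$ by bucketing residues, sieving the prime interval, and bookkeeping across the $R$ rounds — are entirely routine.
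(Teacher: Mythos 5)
Your proof is correct and follows the same greedy prime-selection strategy (Chan--Lewenstein) as the paper. The one deviation is the choice of prime interval: you sample primes from $[L,2L]$ with $L = \Theta\big(P\log_P N \cdot \log(P\log_P N)\big)$, mirroring the paper's own \cref{lem:3sum-approx-fft}, which guarantees at least $4P\log_P N$ primes and hence a clean per-round reduction factor of $1/P$ in the collision count; the paper's proof of \cref{lem:detfindmodulus} instead samples from $[P,2P]$, where the reduction factor is only $O(\log N/P)$ per round but the primes are smaller. After balancing with $\log P = \Theta(\sqrt{\log|S|\log\log N})$ the two parameterizations land on the same bounds for $M$ and for the running time, and both carry the same implicit restriction that $|S|$ be at least $\polylog(N)$ (for very small $|S|$, e.g.\ $|S|=2$, distinguishing two elements that differ by a primorial already forces $M = \Omega(\log N)$, exceeding the claimed range, so neither argument can fully cover that corner). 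One minor tightening: computing the collision count modulo $mp$ costs $\tilde O(|S|)$ via sorted or hashed residue buckets rather than $\tilde O(|S|+mp)$; your looser accounting still lands within the stated time bound because $\log N \le 2^{O(\sqrt{\log|S|\log\log N})}$ in the regime where the lemma applies, but the sharper per-prime cost is what the paper uses and makes the bookkeeping simpler.
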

\begin{proof}
    We use the same technique as \cite{ChanL15}.
    Let $P$ be a parameter to be determined. We will iteratively pick primes $p_1,\dots,p_k \in [P,2P]$ and let the final modulus be $M=p_1p_2\cdots p_k$.  At iteration $i$ $(i\ge 1)$, the primes $p_1,\dots,p_{i-1}$ are already fixed, and we pick prime $p_i$ to minimize \[h(S,p_1p_2\cdots p_i):= \sum_{x,y\in S,x\neq y} [x\equiv y\pmod{p_1p_2\cdots p_i}],\] i.e., the number of collision pairs in $S$ modulo $p_1p_2\cdots p_i$. This can be done by enumerating all primes in $[P,2P]$, and computing $\sum_{x,y\in S}[x\equiv y\pmod{p_1p_2\cdots p_i}]$ in $\tilde O(|S|)$ time by bucketing the elements of $S$ modulo $p_1p_2\cdots p_i$ and summing up the bucket sizes squared. We terminate and return $M=p_1\cdots p_i$ once the number of collision pairs $h(S,M)\le 0.1|S|$, in which case we must have $|S\bmod M| \ge |S|-h(S,M)\ge 0.9|S|$  and $M\ge \frac{|S|^2}{h(S,M)+|S|} > |S|/2$ as desired.  The total time complexity is $\tilde O(kP|S|)$ where $k$ is the number of iterations before termination, and $M\le (2P)^k$. We now bound $k$.

    Consider any pair of $x,y\in S,x\neq y$ where $x\equiv y \pmod{p_1p_2\cdots p_{i-1}}$. By the prime number theorem, over a random prime $p_i \in [P,2P]$, the probability that $x\equiv y \pmod{p_1p_2\cdots p_{i-1}p_i}$ holds is at most $O(\frac{\log_P N}{P/\log P}) = O(\frac{\log N}{P})$. Hence $\Ex_{p_i\in [P,2P]}[h(S,p_1\cdots p_i)] \le \frac{c\log N}{P} h(S,p_1,\cdots,p_{i-1})$ by linearity of expectation (for some absolute constant $c>0$). By averaging, the minimizing $p_i\in [P,2P]$ that we choose satisfies $h(S,p_1\cdots p_i) \le  \frac{c\log N}{P} h(S,p_1\cdots p_{i-1})$. Since $h(S,1) = |S|(|S|-1)$, and the second to last iteration has $h(S,p_1\cdots p_{k-1})>0.1|S|$, we have
    \[ k\le 1 + \frac{\log \frac{|S|(|S|-1)}{0.1|S|}}{\log \frac{P}{c\log N}} < 1 + \frac{\log (10|S|)}{\log \frac{P}{c\log N}},\]
    and hence
    \[ M \le (2P)^k < 2P\cdot  (10|S|)^{\log(2P)/\log (\frac{P}{c\log N})} = 2P\cdot (10|S|)^{1 + \frac{\log(2c\log N)}{\log (P/(c\log N))}}.\]
    By choosing $P$ such that $\log P =\Theta(\sqrt{\log |S|\log \log N})$, we have $M \le |S|\cdot 2^{O(\sqrt{\log |S|\log \log N})} \le |S|\cdot N^{o(1)}$, and the time complexity is $\tilde O(kP|S|) \le |S|\cdot 2^{O(\sqrt{\log |S|\log \log N})} \le |S|\cdot N^{o(1)}$.
\end{proof}
The following lemma finds an efficient covering of $\Z_M$ using shifted copies of a set.
\begin{lemma}
    \label{lem:detmodmcover}
Given a set $\hat S\subseteq \Z_M$, we can deterministically find a set $\Delta \subseteq \Z_M$ in $\tilde O(M)$ time such that $\hat S+\Delta = \Z_M$ and $|\Delta|\le 
(2M\ln M)/|\hat S| $.
\end{lemma}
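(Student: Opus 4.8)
The plan is to derandomize the textbook probabilistic covering argument by the method of conditional expectations, which for this instance coincides with the greedy set-cover heuristic.

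\emph{Construction and cover size.} I would view the problem as covering the universe $\Z_M$ by the ``balls'' $\hat S+\delta$ for $\delta\in\Z_M$: each ball has size exactly $|\hat S|$, and each element $z\in\Z_M$ lies in exactly $|\hat S|$ of them (those with $\delta\in z-\hat S$). Hence a uniformly random $\delta$ covers a fixed $z$ with probability $|\hat S|/M$, and taking $t:=\lceil (M\ln M)/|\hat S|\rceil$ independent uniform shifts leaves $z$ uncovered with probability at most $(1-|\hat S|/M)^t\le e^{-t|\hat S|/M}\le 1/M$; a union bound shows that some choice of $t$ shifts covers all of $\Z_M$. To derandomize, I would process $j=0,1,\dots,t-1$ while maintaining the uncovered set $U_j:=\Z_M\setminus(\hat S+\{\delta_1,\dots,\delta_j\})$ together with the pessimistic estimator $\Phi_j:=|U_j|\,(1-|\hat S|/M)^{t-j}$. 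Since $\Ex_{\delta_{j+1}}[|U_{j+1}|]=|U_j|(1-|\hat S|/M)$ we have $\Ex_{\delta_{j+1}}[\Phi_{j+1}]=\Phi_j$, so some $\delta_{j+1}$ achieves $\Phi_{j+1}\le\Phi_j$ --- concretely the shift \emph{maximizing} the fresh coverage $|(\hat S+\delta_{j+1})\cap U_j|$. As $\Phi_0=M(1-|\hat S|/M)^t<1$, we end with $|U_t|=\Phi_t<1$, i.e.\ $U_t=\emptyset$; and $t\le (2M\ln M)/|\hat S|$ since $M\ln M\ge|\hat S|$ once $M\ge e$ (the finitely many smaller moduli are trivially handled by $\Delta:=\Z_M$). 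Equivalently, this is the standard greedy analysis: by averaging, the best shift covers at least $|U_j|\cdot|\hat S|/M$ fresh elements, so $|U_{j+1}|\le|U_j|(1-|\hat S|/M)$.

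\emph{Algorithm and running time.} The one non-routine ingredient is to locate, at each step, a shift of maximum fresh coverage. For a fixed uncovered set $U$ the vector $\bigl(|(\hat S+\delta)\cap U|\bigr)_{\delta\in\Z_M}$ is exactly the cyclic convolution $1_{-\hat S}\conv_M 1_U$, which I can compute by FFT in $O(M\log M)$ time, or --- using that $1_U$ has only $|U|$ nonzeros --- by the deterministic sparse convolution of \cref{thm:detnonnegsparsecovo} in $\tilde O(\min\{M,\,|\hat S|\,|U|\})$ time. Recomputing this from scratch at each of the up to $(2M\ln M)/|\hat S|$ steps is wasteful when $|\hat S|$ is small; the fix I would pursue is to recompute the coverage vector only at the $O(\log M)$ breakpoints where $|U|$ halves and to maintain it incrementally in between (touching the affected counts whenever a new element becomes covered) while always popping the current maximum from a priority queue. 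Because $|U|$ shrinks geometrically, the sparse convolutions at successive breakpoints get progressively cheaper; making the incremental bookkeeping amortize to near-linear total time is the part of the argument that needs the most care. I note that this subtlety disappears in every application in the paper: there $M$ comes from \cref{lem:detfindmodulus} with $|\hat S|\ge 0.9|S|$ and $M\le|S|\cdot N^{o(1)}$, so $M/|\hat S|=N^{o(1)}$ and even the crude implementation --- recompute the full convolution at each of the $N^{o(1)}$ greedy steps --- already runs in $M\cdot N^{o(1)}$ time.

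\emph{Main obstacle.} Correctness and the $O((M\log M)/|\hat S|)$ bound on $|\Delta|$ drop out of the conditional-expectation / greedy analysis; the real difficulty is squeezing the greedy selection into near-linear time for \emph{arbitrary} $\hat S$, since naively maintaining exact coverage counts costs $\Theta(|\hat S|)$ per newly covered element. The phased recomputation via sparse convolution sketched above is the mechanism I would use to get around this.
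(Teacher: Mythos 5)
Your greedy/conditional-expectation argument is correct as far as it goes: it gives a cover $\Delta$ of size $t=\lceil (M\ln M)/|\hat S|\rceil\le (2M\ln M)/|\hat S|$, and you have the right pessimistic estimator. The gap is the running time. You are performing a \emph{linear} greedy that adds one shift per step, so the number of steps is $t=\tilde\Theta(M/|\hat S|)$. Recomputing the coverage vector $1_{-\hat S}\conv_M 1_U$ by FFT at each step costs $\tilde O(M^2/|\hat S|)$ total, and maintaining it incrementally costs $\Theta(|\hat S|)$ per newly covered element, hence $\Theta(M|\hat S|)$ over all $M$ elements. Neither is $\tilde O(M)$ uniformly in $|\hat S|$: the first blows up when $|\hat S|$ is small, the second when $|\hat S|$ is large. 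You flag this yourself as ``the part of the argument that needs the most care,'' and the phased-recomputation-plus-lazy-priority-queue fix you sketch is not worked out; in particular, between recomputations the stale maximum can differ from the true maximum by up to $|U_{\text{old}}|-|U|$, and when $|\hat S|/M$ is small this error swamps the guaranteed gain $|U_{\text{old}}|\,|\hat S|/M$, so the greedy progress bound no longer follows. Falling back to ``applications only need $M/|\hat S|=N^{o(1)}$'' proves a weaker statement than the lemma actually claims.

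The paper sidesteps the whole issue with a doubling construction rather than a linear greedy. It maintains $S_i=\hat S+\Delta_i$ with $\Delta_i=\{0,\delta_1\}+\cdots+\{0,\delta_i\}$, and at each step picks $\delta_{i+1}$ minimizing $|S_i\cap(\delta_{i+1}+S_i)|$ (equivalently, maximizing $|S_i\cup(\delta_{i+1}+S_i)|$), found by one FFT computing $1_{S_i}\conv_M 1_{-S_i}$. The uncovered fraction $\sigma_i=(M-|S_i|)/M$ then satisfies $\sigma_{i+1}\le\sigma_i^2$ by the same averaging you use, so only $d=O(\log((M\ln M)/|\hat S|))=O(\log M)$ iterations are needed, each costing $O(M\log M)$. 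The final $\Delta=\Delta_d$ has size at most $2^d\le(2M\ln M)/|\hat S|$. This achieves $\tilde O(M)$ unconditionally and avoids any incremental bookkeeping; it is the key idea missing from your proposal.
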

\begin{proof}
   Let $S_0=\hat S,\Delta_0 = \{0\}$. Repeat the following for $i=0,1,2,\dots$, until $S_i=\Z_M$: 
   \begin{itemize}
    \item Find $\delta_{i+1} \in \Z_M$ such that $|S_i \cup (\delta_{i+1}+S_i)|$ is maximized.
        \item Let $S_{i+1} := S_i\cup (\delta_{i+1}+S_i)$ and $\Delta_{i+1}:= \Delta_i + \{0,\delta_{i+1}\}$.
   \end{itemize}
   By induction, $S_i = \hat S + \Delta_i$ for all $i$. When the procedure terminates with $S_d=\Z_M$ at iteration $i=d$, we return $\Delta_d = \{0,\delta_1\}+\dots + \{0,\delta_d\}$.
   
   We first explain how to implement each iteration in $O(M\log M)$ time. Given $S_i\subseteq \Z_M$, we use FFT to compute the cyclic convolution $f = 1_{S_i} \conv_M 1_{-S_i}$ in $O(M\log M)$ time, where $f[\delta] =  |(S_i+\delta)\cap S_i|$ for all $\delta\in \Z_M$, and we pick $\delta_{i+1}=\delta$ that minimizes $f[\delta]$ and hence maximizes $|S_i\cup (\delta+S_i)| = 2|S_i| - f[\delta]$. Then, computing $S_{i+1},\Delta_{i+1}$ takes $O(M)$ time.
   
   Now we analyze the number of iterations $d$.
  Let $\sigma_i = \frac{M-|S_i|}{M} $ be the fraction of $\Z_M$ not covered by $S_i$.
   Observe that \[\Ex_{\delta\in \Z_M}\big \lvert\Z_M \setminus \big (S_i \cup (\delta+S_i)\big )\big \rvert = \sum_{x\in \Z_M \setminus S_i}\Pr_{\delta\in \Z_M}[x\notin (\delta+S_i)] = |\Z_M\setminus S_i|\cdot \sigma_i = \sigma_i^2 M,\] so  $\sigma_{i+1}\le \sigma_i^2$ by averaging. 
   Since $S_{d-1} \neq \Z_M, \sigma_{d-1}\ge \frac{1}{M}$.
   Hence, the number of iterations is \[d\le 1+\log_2\frac{\ln(1/\sigma_{d-1})}{\ln (1/\sigma_0)} 
 \le  \log_2 \frac{2\ln M}{-\ln  (1- |\hat S|/M)}  
 \le  \log_2 \frac{2M\ln M}{|\hat S|}, \] and $|\Delta|\le 2^d\le (2M\ln M)/|\hat S|$. The total time complexity is $O(dM\log M) = \tilde O(M)$.
\end{proof}

Now we prove \cref{thm:detsmalldoublethreesum}.

\begin{proof}[Proof of \cref{thm:detsmalldoublethreesum}]
    Apply \cref{lem:detfindmodulus} to set $S$ and in $|S|\cdot 2^{O(\sqrt{\log |S|\log \log N})} = |S|\cdot N^{o(1)}$ time obtain a modulus $M \in [|S|,|S|\cdot 2^{O(\sqrt{\log |S|\log \log N})}]$  such that $\hat S := S\bmod M \subseteq \Z_M$ has size $|\hat S|  \ge |S|/2$.
    Apply \cref{lem:detfindmodulus} to $\hat S$ and obtain set $\Delta \subseteq \Z_M$ in $\tilde O(M)$ time such that $\hat S+\Delta= \Z_M$ and $|\Delta| \le O(\frac{M\log M}{|\hat S|})\le O(\frac{M\log M}{|S|}) \le 2^{O(\sqrt{\log |S|\log \log N})}$.
Compute the sumset $A+S\subset \Z$ in deterministic $O(|A+S|\polylog N)$ time via \cref{thm:detnonnegsparsecovo}. Let $K = |A+S|/|S|$.
\begin{claim}
    \label{claim:partitionc}
   In $\tilde O(KM +|C|)$ time, we can deterministically compute a partition of $C$ into $p = O(\log M)$ parts, $C=\bigcup_{i\in [p]} C^{(i)}$, and shifts $\phi^{(0)},\dots,\phi^{(p-1)}\in \Z_M$, such that for every $i \in [p]$,
   \begin{equation}
       \label{eqn:forallcmultbound}
 ( 1_{A+S}\conv_M 1_{\Delta+\phi^{(i)}})[c\bmod M] \le O(K\log M) \text{ for all } c\in C^{(i)}. 
   \end{equation}
\end{claim}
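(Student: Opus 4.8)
The plan is to reduce the claim to a greedy covering argument over $\Z_M$, in the spirit of the derandomizations in \cref{lem:3sum-approx-fft} and \cref{lem:detmodmcover}. Write $G := 1_{A+S} \conv_M 1_{\Delta} \in \N^M$ (reducing $A+S$ modulo $M$ first); this can be computed once by FFT in $\tilde O(M + |A+S|) = \tilde O(KM)$ time. The crucial point is that $G$ has small total mass: since $\|1_{A+S}\|_1 = |A+S| = K|S|$ and, by \cref{lem:detmodmcover} applied to $\hat S$ (with $|\hat S| \ge |S|/2$), $\|1_\Delta\|_1 = |\Delta| \le O(M\log M/|S|)$, we get $\|G\|_1 = \|1_{A+S}\|_1 \cdot \|1_\Delta\|_1 \le O(KM\log M)$, so the average value of $G$ over $\Z_M$ is only $O(K\log M)$. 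Set $T$ to be a suitable constant times $K\log M$, and let $H := \{r \in \Z_M : G[r] > T\}$; by Markov's inequality $|H| \le \|G\|_1/T \le M/2$ provided the constant is large enough.

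Next, I would observe that shifting $\Delta$ merely shifts $G$: $(1_{A+S}\conv_M 1_{\Delta+\phi})[r] = G[(r-\phi) \bmod M]$ for every $\phi \in \Z_M$. Hence, calling a residue $r$ \emph{bad for $\phi$} if $(r-\phi)\bmod M \in H$ and \emph{good} otherwise, it suffices to partition $\bar C := \{c \bmod M : c \in C\}$ (a set of size at most $\min(|C|,M)$) into $O(\log M)$ parts $R^{(0)},\dots,R^{(p-1)}$ with shifts $\phi^{(0)},\dots,\phi^{(p-1)}$ such that every $r \in R^{(i)}$ is good for $\phi^{(i)}$, and then lift this to the partition $C^{(i)} := \{c \in C : (c\bmod M) \in R^{(i)}\}$. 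This lifting is legitimate precisely because the condition in \cref{eqn:forallcmultbound} depends only on $c \bmod M$.

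To build the partition of $\bar C$ I would run the obvious greedy: while some residues $R'$ remain, compute, for all shifts $\phi \in \Z_M$ simultaneously, the number of residues of $R'$ that are bad for $\phi$ --- this equals the cyclic convolution $(1_{R'}\conv_M 1_{-H})[\phi]$, so all $M$ values are obtained by one FFT in $\tilde O(M)$ time --- and pick the minimizing $\phi^{(i)}$. Since a uniformly random $\phi$ has in expectation $|R'|\cdot|H|/M \le |R'|/2$ bad residues, the minimizer leaves at most $|R'|/2$ bad residues, so the good ones $R^{(i)} := \{r \in R' : r \text{ good for } \phi^{(i)}\}$ make up at least half of $R'$; we record $(R^{(i)},\phi^{(i)})$ and remove $R^{(i)}$ from $R'$. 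After $O(\log M)$ iterations $\bar C$ is exhausted, so $p = O(\log M)$, and for $c \in C^{(i)}$ we get $(1_{A+S}\conv_M 1_{\Delta+\phi^{(i)}})[c \bmod M] = G[(c-\phi^{(i)})\bmod M] \le T = O(K\log M)$, as required. The total time is $\tilde O(KM)$ to compute $G$ and $H$, plus $\tilde O(M)$ per greedy iteration, plus $O(|C|)$ to reduce $C$ modulo $M$ and to lift the partition, i.e.\ $\tilde O(KM + |C|)$.

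I do not expect a genuine obstacle here: once the mass bound $\|G\|_1 = O(KM\log M)$ is in hand --- which is exactly what the covering lemma's guarantee $|\Delta| = O(M\log M/|S|)$ buys us, and is what makes the threshold $O(K\log M)$ achievable rather than the much weaker $K\cdot|\Delta|$ --- the rest is the standard ``replace a random shift by the best shift, found by FFT, and halve the instance'' argument already used elsewhere in this section. The only minor bookkeeping point is the reduction to residues mod $M$ noted above.
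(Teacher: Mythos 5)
Your proof is correct and reaches the claim by essentially the same route as the paper: observe that the average of $G := 1_{A+S}\conv_M 1_\Delta$ over $\Z_M$ is $|A+S||\Delta|/M = O(K\log M)$, so by Markov at least half of $\Z_M$ has $G$-value below a threshold $T = O(K\log M)$; then cover the relevant residues by $O(\log M)$ shifts using a deterministic greedy where the best shift at each round is found via an FFT-computed cyclic correlation.

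The one genuine difference is in how the greedy is organized. The paper takes the set $\hat C_{\text{good}} = \{r : G[r] \le 2|A+S||\Delta|/M\}$ (the complement of your $H$) and invokes \cref{lem:detmodmcover} as a black box to produce $\Phi \subseteq \Z_M$ with $\hat C_{\text{good}} + \Phi = \Z_M$ and $|\Phi| = O(\log M)$; it then assigns each $c \in C$ to the part indexed by any $\phi$ with $(c\bmod M) - \phi \in \hat C_{\text{good}}$. You instead run a bespoke greedy that only covers $\bar C = C \bmod M$: in each round you compute $(1_{R'}\conv_M 1_{-H})[\phi]$ for all $\phi$ by one FFT, pick the minimizer, and peel off the (at least half) good residues. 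Both give $p = O(\log M)$ with the same per-round $\tilde O(M)$ cost. The paper's version has the small advantage of reusing a lemma it has already proved (and of producing a cover of all of $\Z_M$, which is slightly cleaner to state), while yours is more directly targeted at $\bar C$ and avoids needing $|\hat C_{\text{good}}|\ge M/2$ as an explicit intermediate. You also compute $G$ by a length-$M$ FFT (after reducing $A+S$ modulo $M$ as a multiset) rather than the paper's brute force in $O(|A+S|\cdot|\Delta|)$ time; both fit in the $\tilde O(KM)$ budget, so this is a mild efficiency improvement rather than a necessity.
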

\begin{proof}

Compute the cyclic convolution $1_{A+S}\conv_M 1_\Delta$ by brute force in $O(|A+S|\cdot |\Delta|) = O(KM\log M)$ time.
We have
\[
\Ex_{\hat c\in \Z_M}\big[(1_{A+S}\conv_M 1_\Delta)[\hat c] \big ] =|A+S||\Delta|/M. 
\]
By Markov's inequality, the set \[\hat C_{\text{good}}:= \{\hat c\in \Z_M : (1_{A+S}\conv_M 1_\Delta)[\hat c] \le 2|A+S||\Delta|/M\}\] has size $|\hat C_{\text{good}}|\ge M/2$.
Then, we apply \cref{lem:detmodmcover} to $\hat C_{\text{good}}\subseteq \Z_M$ and compute in $\tilde O(M)$ time a set of shifts $\Phi\subseteq \Z_M$, $|\Phi|\le O((M\log M)/|\hat C_{\text{good}}|) \le O(\log M)$, such that $\Phi + \hat C_{\text{good}} = \Z_M$. So for every $\hat c\in \Z_M$ we can pick a shift $\varphi_{\hat c}\in \Phi$ such that $\hat c\in \varphi_{\hat c}+\hat C_{\text{good}}$.
Now we define a partition $C = \bigcup_{\phi \in \Phi} C_\phi$ where $C_\phi := \{c\in C: \varphi_{c\bmod M}=\phi\}$. 
Then, for all $c\in C_\phi$, we have $(c\bmod M)-\phi \in \hat C_{\text{good}}$, and hence 
\[ ( 1_{A+S}\conv_M 1_{\Delta+\phi})[c\bmod M] = ( 1_{A+S}\conv_M 1_{\Delta})[(c\bmod M) - \phi]  \le 2|A+S||\Delta|/M \le O(K\log M).  \] 
Therefore, we can return $\{(C_\phi,\phi)\}_{\phi \in \Phi}$ as the desired $\{(C^{(i)}, \phi^{(i)})\}_{i\in [p]}$, where $p=|\Phi| = O(\log M)$.
\end{proof}

Let $\{(C^{(i)}, \phi^{(i)})\}_{i\in [p]}$ (where $p=O(\log M)$) be returned by 
\cref{claim:partitionc}.
In the following, we will solve the \#3SUM instance $(A,B,C^{(i)})$ separately for each $i\in [p]$, i.e., we compute the desired answers defined in the statement of \cref{thm:detsmalldoublethreesum} but with $C^{(i)}$ replacing $C$. Since $C$ is the disjoint union of all $C^{(i)}$, in the end we can easily merge these answers over all $i \in [p]$ to get the answers for the original \#3SUM instance $(A,B,C)$. This affects the total time complexity by at most a factor of $p=O(\log M)$.
From now on we focus on a particular pair of $C^{(i)}\subset \Z$ and $\phi^{(i)}\in \Z_M$.

Note that \[ (S + \Delta+\phi^{(i)})\bmod M = \hat S + \Delta + \phi^{(i)} = \Z_M + \phi^{(i)} = \Z_M.\]
Hence, for every $b\in \Z$, there exists $s_b \in S$ such that $(b - s_b) \bmod M \in \Delta + \phi^{(i)}$.
We pick such an $s_b$ for every $b\in B$, in total time $\tilde O(|S||\Delta+\phi^{(i)}| + |B|) = \tilde O(M+|B|)$ time.
Denote $d_b:= b - s_b$, which satisfies \[d_b\bmod M \in \Delta+\phi^{(i)}.\]
Note that any 3SUM solution $(a,b,c)\in A\times B\times C^{(i)}$ where $a+b=c$  must satisfy $d_b = (c-a)-s_b\in c- (A+S)$.
This motivates the definition of the set \[D_c:= \{ d\in c - (A+S): d\bmod M \in \Delta+\phi^{(i)}\}\]
for $c\in C^{(i)}$.
In this way, we can verify that the 3SUM solutions are decomposed into a disjoint union as follows  (where we naturally define $B_d:=\{b\in B:d_b=d\}, C_d:= \{c\in C^{(i)}: d\in D_c\}$),
\begin{equation}
    \label{eqn:decompose3sumsols}
 \{(a,b,c)\in A\times B\times C^{(i)}: a+b=c\} = \bigcup_{d\in \Z} \{(a,b,c)\in A\times B_d\times C_d: a+b=c\}.
\end{equation}
Note that $B_d\in d+S$ by the definition of $d_b$.

Before describing our final \#3SUM algorithm (which relies on the decomposition in \cref{eqn:decompose3sumsols}), we first need to give size upper bounds for the sets $C_d$, and show how to compute the sets $C_d$ efficiently.  
\begin{claim}
$\sum_{d\in \Z} |C_d| =  O(|C|K\log M)$.
\label{claim:cdsize}
\end{claim}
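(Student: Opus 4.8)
The plan is a straightforward double-counting argument: we swap the order of summation to turn $\sum_d |C_d|$ into a sum over $c\in C^{(i)}$, at which point each summand is exactly a cyclic-convolution coefficient that \cref{claim:partitionc} already bounds.

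First I would use the definitions of $C_d$ and $D_c$ directly. Since $c\in C_d$ holds precisely when $d\in D_c$, we have $\sum_{d\in\Z}|C_d| = \sum_{d\in\Z}|\{c\in C^{(i)}:d\in D_c\}| = \sum_{c\in C^{(i)}}|D_c|$. So it suffices to bound $|D_c|$ for each $c\in C^{(i)}$.

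Next I would identify $|D_c|$ with a convolution coefficient. Recall $D_c = \{d\in c-(A+S): d\bmod M\in\Delta+\phi^{(i)}\}$. Applying the bijection $x\mapsto c-x$ between $A+S$ and $c-(A+S)$ gives $|D_c| = |\{x\in A+S:(c-x)\bmod M\in\Delta+\phi^{(i)}\}|$. Because $\Delta+\phi^{(i)}\subseteq\Z_M$ is a genuine set of residues, for each $x\in A+S$ there is at most one $z\in\Delta+\phi^{(i)}$ with $x+z\equiv c\pmod M$, and hence $|D_c| = (1_{A+S}\conv_M 1_{\Delta+\phi^{(i)}})[c\bmod M]$. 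Now \eqref{eqn:forallcmultbound} from \cref{claim:partitionc} states exactly that this quantity is $O(K\log M)$ for every $c\in C^{(i)}$, so $\sum_{d\in\Z}|C_d| = \sum_{c\in C^{(i)}}|D_c| \le |C^{(i)}|\cdot O(K\log M) \le O(|C|K\log M)$, as claimed.

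There is no genuine obstacle in this claim; the one point that deserves care is the reindexing identity $|D_c| = (1_{A+S}\conv_M 1_{\Delta+\phi^{(i)}})[c\bmod M]$, and in particular the observation that a single element $x\in A+S$ contributes at most once to this count — which is precisely why \cref{claim:partitionc} was set up with $\Delta+\phi^{(i)}$ being a set of residues rather than a multiset. Everything else is bookkeeping, and the factor $\log M$ (and the slack $|C^{(i)}|\le|C|$) is already built into the statement.
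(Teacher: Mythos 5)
Your proof is correct and follows essentially the same route as the paper's: swap the order of summation to reduce to bounding $\sum_{c\in C^{(i)}}|D_c|$, identify $|D_c|$ with the cyclic-convolution coefficient $(1_{A+S}\conv_M 1_{\Delta+\phi^{(i)}})[c\bmod M]$, and invoke the bound from Claim~\ref{claim:partitionc}. The only difference is that you spell out the bijection $x\mapsto c-x$ and the ``each $x$ contributes at most once'' observation, which the paper leaves implicit.
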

\begin{proof}
For $c\in C^{(i)}$,     by the definition of $D_c$, we have 
    \[ |D_c| = \{ x\in A+S: (c-x)\bmod M \in \Delta+\phi^{(i)}\} =  
 ( 1_{A+S}\conv_M 1_{\Delta+\phi^{(i)}})[c\bmod M]\underset{\text{by \cref{eqn:forallcmultbound}}}{\le} O(K\log M). \]
 Then, $\sum_{d\in \Z: d\bmod M\in \Delta+\phi^{(i)}} |C_d| = \sum_{c\in C^{(i)}} |D_c| \le |C^{(i)}| \cdot O(K\log M) = O(|C|K\log M)$.
\end{proof}

\begin{claim}
    We can compute all non-empty sets $C_d$ where $d\bmod M\in \Delta+\phi^{(i)}$,  in time 
    $\tilde O(|C||\Delta|+|A+S|)$ in addition to the output size.
\end{claim}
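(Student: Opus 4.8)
The plan is to build the lists $C_d$ by enumerating, for each candidate $c\in C^{(i)}$, the whole of $D_c$, using that any $d\in D_c$ forces $c-d$ to be an element of the already-computed sumset $A+S$, and moreover that $d$ lies in one of only $|\Delta|$ prescribed residue classes modulo $M$.

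Concretely, I would first sort the elements of $A+S$ by their residue modulo $M$ (ties broken arbitrarily), so that for any fixed residue the corresponding elements of $A+S$ occupy a contiguous block that can be located by binary search; this preprocessing costs $\tilde O(|A+S|)$, and I would carry it out by sorting rather than by indexing an array of length $M$, since $M$ may exceed $|A+S|$ (for instance when $A$ is a single point). Now observe that $d\in D_c$ holds exactly when $c-d\in A+S$ and $d\bmod M\in\Delta+\phi^{(i)}$. Hence for every $c\in C^{(i)}$ and every $\delta\in\Delta$ I binary-search for the block of $A+S$ with residue $(c-\delta-\phi^{(i)})\bmod M$, and for each element $x$ of that block I emit the pair $(c-x,c)$, which records the membership $c\in C_{c-x}$. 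Since distinct $\delta\in\Delta$ give distinct shifts $\delta+\phi^{(i)}$ in $\Z_M$, each membership $c\in C_d$ is emitted exactly once, so the total number of emitted pairs is $\sum_{c\in C^{(i)}}|D_c|=\sum_d|C_d|$, which is exactly the output size. Finally I group the emitted pairs by their first coordinate (by sorting), producing $C_d$ for every $d$ that occurs; these are precisely the non-empty sets $C_d$, and each automatically satisfies $d\bmod M\in\Delta+\phi^{(i)}$.

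For the running time, the preprocessing sort costs $\tilde O(|A+S|)$; the $|C^{(i)}|\cdot|\Delta|\le |C|\cdot|\Delta|$ binary searches cost $\tilde O(|C||\Delta|)$, independently of how many elements they return; and emitting the pairs and then sorting them costs only $\tilde O(\text{number of pairs})=\tilde O(\text{output size})$ by the count above. In total this is $\tilde O(|C||\Delta|+|A+S|)$ in addition to the output size, as claimed.

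The one point needing care---more bookkeeping than genuine difficulty---is making this analysis tight: one has to verify the stated bijection between emitted pairs and the set $\{(c,d):c\in C_d\}$, so that no time is wasted on residue classes that contribute nothing to the output, and one has to handle residues modulo $M$ through sorting and binary search rather than through a length-$M$ array, since $M$ can be larger than $|A+S|$. Everything else is routine.
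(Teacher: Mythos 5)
Your proof is correct and follows essentially the same route as the paper: enumerate pairs $(\delta, c)\in\Delta\times C^{(i)}$, look up the elements of $A+S$ in the residue class $(c-\delta-\phi^{(i)})\bmod M$, and charge each reported element to the output. The paper phrases this as listing solutions of a modular 3SUM instance and is terser about the data structure; your sort-and-binary-search implementation and the explicit uniqueness argument (distinct $\delta$ give distinct residues, and $x$ determines both $\delta$ and $d=c-x$) are just a more spelled-out version of the same idea.
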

\begin{proof}
    Equivalently, we need to compute the sets $D_c$ for all $c\in C^{(i)}$.
To do this, by the definition of $D_c$, it suffices to list all triples in the following set,
\[ \{(\hat d,c,x) \in (\Delta+\phi^{(i)})\times C^{(i)} \times (A+S): \hat d = (c-x)\bmod M \}.\]
This can be viewed as listing all solutions in an instance of 3SUM modulo $M$.
It can be done by enumerating pairs $(\hat d,c)\in (\Delta+\phi^{(i)})\times C^{(i)}$ and searching for $x\in A+S$, in total time $\tilde O(|\Delta||C^{(i)}| + |A+S|) \le \tilde O(|\Delta||C|+|A+S|)$ in addition to the output size.
\end{proof}

Now the \#3SUM algorithm for $A,B,C^{(i)}$ works as follows based on the decomposition in \cref{eqn:decompose3sumsols}: for every $d\in \Z$ for which $B_d\neq \emptyset$ and $C_d \neq \emptyset$, we solve \#3SUM on $A,B_d,C_d$ in one of the two ways: 
\begin{enumerate}
\item Enumerate pairs in $B_d\times C_d$ and searching in $A$, in $\tilde O(|B_d||C_d|)$ time (after initially preprocessing $A$ into a binary search tree), or,
    \label{item:3sumbf}
\item Compute $1_{A}\conv 1_{B_d}$ using sparse convolution (\cref{thm:detnonnegsparsecovo}) and output $(1_{A}\conv 1_{B_d})[c]$ for all $c\in C_d$, and also compute $(1_{C_d}\conv 1_{-B_d})[a]$ for all $a\in A$ using \cref{lem:nickcount}, both in time complexity $O((|C_d|+ |A+B_d|)\polylog N) =  O((|C_d|+|A+S|)\polylog N)$ deterministically, where we used $B_d\subseteq d+S$.
    \label{item:3sumsparseconvo}
\end{enumerate}
    We choose \cref{item:3sumbf} if $|B_d|\le X$ for some parameter $X\ge 1$, otherwise choose \cref{item:3sumsparseconvo}. Then, the total time over all $d\in \Z$ is at most  (ignoring $\polylog(N)$ factors)
\begin{align*}
\sum_{d\in \Z: |B_d|\le X} |B_d||C_d| + \sum_{d\in \Z: |B_d|>X}(|C_d|+ |A+S|) &\le 
    X\cdot \sum_{d\in \Z}|C_d| + \frac{\sum_{d\in \Z}|B_d|}{X}\cdot |A+S|,
\end{align*}
which can be balanced to
\begin{align*}
     O\Big (\sqrt{\sum_{d\in \Z}|C_d| \sum_{d\in \Z}|B_d|\cdot |A+S|}\Big )
    & =  O\Big (\sqrt{(|C|K\log M)\cdot  |B|\cdot K|S|}\Big ), \tag{by \cref{claim:cdsize}}
\end{align*}
so the time complexity becomes
\[O\Big (K\sqrt{|B||C||S|} \polylog(N)\Big ).\]

To summarize, the total running time over all steps is (recall that $M\le |S|\cdot N^{o(1)}$ and $|\Delta| \le N^{o(1)}$)
\begin{align*}
&  |A+S|\polylog(N) + \tilde O(KM+ |C|) +\tilde O(|C||\Delta|) + K\sqrt{|B||C||S|}\polylog(N)  \\
& \le (|A+S|+|C|)\cdot N^{o(1)} +  K\sqrt{|B||C||S|}\polylog(N),
\end{align*}
as claimed.
\end{proof}
\section{Tool III: Almost Additive Hashing with Short Seed Length}
\label{sec:dethash}
In this section we design a small almost linear hash family $\caH\subset \{h\colon \{-N,\dots,N\}\to \F_p\}$ that satisfies certain independence properties. Such hash family (with large size) was used in the randomized 3SUM reductions of \cite{AbboudBF23} and \cite{JinX23}.
For our deterministic applications, we need to construct a hash  family of small size, so that we can afford to enumerate all functions in the family.

Our construction is inspired by \cite{AbboudBF23} and \cite{JinX23}, and additionally use the $\eps$-biased sets of \cite{AlonGHP92} to reduce the size of the family.
After introducing this  key ingredient in \cref{sec:epsbiased}, we present the definitions and properties of our hash family $\caH$ in \cref{subsec:hash}.
This hash family $\caH$ will be later used in \cref{sec:findsmallsubset},  \cref{sec:modtolowreduction}, and \cref{sec:trianglist}.

\subsection{Almost \texorpdfstring{$k$}{k}-wise independent distribution over \texorpdfstring{$\F_p^m$}{Fpm}}
\label{sec:epsbiased}
As a key ingredient, we will use the almost $k$-wise independent distribution with small sample space constructed by \cite{AlonGHP92}. For our purposes, we need to generalize the binary case of \cite{AlonGHP92} to the mod-$p$ case, following \cite{DBLP:conf/crypto/BierbrauerS00}.
The prime $p$ will be chosen small enough to fit in a machine word, so that addition and multiplication over $\F_p$ take constant time.

For vectors $a,b\in \F_p^m$, let $\langle a,b\rangle \in \F_p$ denote their inner product $\sum_{i=1}^m a_ib_i$.

We say a distribution over $\F_p^{k'}$ is \emph{$\delta$-almost uniform} if it is at most $\delta$ away from the uniform distribution over $\F_p^{k'}$ in $L_1$ distance.
Let $\mu$ be a distribution over $\F_p^m$. We say $\mu$ is \emph{$\delta$-almost $k$-wise independent}, if for every coordinate subset $S\subseteq [m]$ of size $|S|=k'\le k$, the marginal distribution of $\mu$ projected to the coordinates in $S$ is $\delta$-almost uniform.

The following lemma summarizes the construction borrowed from \cite{AlonGHP92,DBLP:conf/crypto/BierbrauerS00}, which will be used in the next section.
 \begin{lemma}
    \label{lem:dethash}
 We can deterministically construct $\caC \subset \F_p^m$ of size $|\caC|\le 4m^2p^{3k}$ in $\tilde O(m^3 p^{3k})$ time such that the following holds:
Suppose we sample $c\in \caC$ uniformly at random. 
\begin{enumerate}
   \item For any $k'\le k$ linearly independent vectors $x_1,x_2,\dots,x_{k'} \in \F_p^m$, the distribution of the $k'$-tuple $(\langle c,x_1\rangle ),\dots,\langle c,x_{k'}\rangle ) \in \F_p^{k'}$ is $\frac{1}{2p^{k-1}}$-almost uniform. 
      \label{item:dethashalmostuniform}
   \item For any $k'<k$ vectors $x_1,x_2,\dots,x_{k'} \in \F_p^m$, $\Pr[\langle c,x_1\rangle =\dots=\langle c,x_{k'}\rangle =0] \ge \frac{1}{2p^{k'}}$. 
      \label{item:dethashlb}
\end{enumerate}
 \end{lemma}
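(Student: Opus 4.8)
The goal is to construct a small sample space $\caC \subseteq \F_p^m$ whose uniform distribution is $\delta$-almost $k$-wise independent (with $\delta = \frac{1}{2p^{k-1}}$), and to extract from it both properties stated. The plan is to follow the Alon--Goldreich--H\aa stad--Peralta framework as generalized to $\F_p$ by Bierbrauer--Schellwat. Concretely, I would index a sample point by a pair $(u, v)$ where $u \in \F_{q}$ ranges over the field $\F_q$ with $q = p^{\ell}$ for $\ell = O(\log_p(mp^k))$ (so that $q \ge m p^{k}$ up to constants), and $v \in \F_q$ is a second element; a sample $c = c(u,v) \in \F_p^m$ is defined coordinate-wise by $c_i = \operatorname{tr}_{\F_q/\F_p}(u^{i} v)$ (the trace down to $\F_p$), for $i = 1, \dots, m$. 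The size is $|\caC| = q^2 \le (m p^k)^{O(1)}$; with the right constant in $\ell$ one gets $|\caC| \le 4 m^2 p^{3k}$, and each sample point is computable with $O(m)$ field operations, so the enumeration runs in $\tilde O(m^3 p^{3k})$ time as claimed.

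\textbf{Proving property 1.} For $k' \le k$ linearly independent vectors $x_1, \dots, x_{k'} \in \F_p^m$, I would analyze the distribution of $(\langle c, x_1\rangle, \dots, \langle c, x_{k'}\rangle)$ by a standard Fourier/character-sum argument: a distribution on $\F_p^{k'}$ is $\delta$-close to uniform in $L_1$ if every nontrivial Fourier coefficient is at most $\delta / p^{k'}$ in magnitude (after summing over the $p^{k'}-1$ nontrivial characters, or a slightly cleaner per-character bound). For a nontrivial character indexed by $(a_1,\dots,a_{k'}) \in \F_p^{k'} \setminus \{0\}$, the relevant exponential sum is $\frac{1}{q^2}\sum_{u,v} e_p\!\big(\sum_j a_j \langle c(u,v), x_j\rangle\big)$. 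Using bilinearity of the trace and linear independence of the $x_j$, the inner expression is $\operatorname{tr}_{\F_q/\F_p}\!\big(v \cdot g(u)\big)$ where $g(u) = \sum_j a_j \sum_i (x_j)_i u^i$ is a nonzero polynomial in $u$ of degree $< m$ (nonzero because the $x_j$ are independent and $(a_j) \ne 0$). Summing over $v \in \F_q$ first kills every term except where $g(u) = 0$; there are at most $m-1$ such $u$, so the whole sum is at most $(m-1)/q \le \frac{1}{2p^{k}}$ by the choice of $q$, which after the conversion to $L_1$ distance yields the stated bound $\frac{1}{2p^{k-1}}$ (the factor $p^{k'-1} \le p^{k-1}$ coming out of the Fourier-to-$L_1$ conversion). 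This is the technical heart, and the main obstacle is getting the counting of roots and the normalization of the Fourier bound exactly right so that the constant lands at $\frac{1}{2p^{k-1}}$ rather than something weaker; I would be prepared to adjust the constant in $\ell$ (hence the polynomial bound on $|\caC|$) to absorb slack.

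\textbf{Deriving property 2.} Property~2 should follow from property~1 together with property~1's companion bound for \emph{dependent} vectors. Given $k' < k$ arbitrary vectors $x_1, \dots, x_{k'}$, first pass to a maximal linearly independent subsystem $x_{i_1}, \dots, x_{i_r}$ with $r \le k'$; the event $\{\langle c, x_1\rangle = \cdots = \langle c, x_{k'}\rangle = 0\}$ is equivalent to $\{\langle c, x_{i_1}\rangle = \cdots = \langle c, x_{i_r}\rangle = 0\}$ since the remaining $x_j$ are $\F_p$-linear combinations of these. By property~1 applied to the $r$ independent vectors, this probability is at least $\frac{1}{p^r} - \frac{1}{2p^{k-1}} \ge \frac{1}{p^r} - \frac{1}{2p^r} = \frac{1}{2p^r} \ge \frac{1}{2p^{k'}}$, using $r \le k' \le k-1$. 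The only subtlety is the edge case $r = 0$ (all $x_j = 0$), where the probability is $1 \ge \frac{1}{2p^{k'}}$ trivially, so the claim holds in all cases. I expect this part to be routine once property~1 is in hand; the genuine work is entirely in the exponential-sum estimate above.
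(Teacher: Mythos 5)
Your proposal is essentially the same construction and argument as the paper's: the same $\F_q$-powering $\eps$-biased set (paper cites AGHP/Bierbrauer–Schellwat; you unpack the exponential sum directly via the nonzero polynomial $g(u)$ and a trace form, rather than first reducing to the coordinate case by an invertible linear map), and the same derivation of item~2 by restricting to a maximal linearly independent subset. The only slight delta is the Fourier-to-$L_1$ conversion constant (the paper uses the Cauchy–Schwarz bound $p^{k/2}\eps$ to land at $4m^2p^{3k}$), but as you note this only affects the polynomial bound on $|\caC|$ and is absorbable.
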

 
In the following we explain how \cref{lem:dethash} follows from known results in \cite{AlonGHP92, DBLP:conf/crypto/BierbrauerS00}. 
 
 For $\gamma \in \F_p^m$, define the Fourier coefficient $  \mu(\gamma) = \sum_{x\in \F_p^m}\mu(x) \omega_p^{-\langle \gamma,x\rangle}$, where $\omega_p = e^{2\pi i/p}$.
We say a distribution $\mu$ over $\F_p^m$ is \emph{$\eps$-biased} if $|  \mu(\gamma)| \le \eps$ for all $\gamma\in \F_p^m \setminus \{ \mathbf{0}\}$.

\begin{lemma}[e.g., {\cite[Corollary 1]{AlonGHP92}}]
   Every $\eps$-biased distribution over $\F_p^m$ is $p^{k/2}\eps$-almost $k$-wise independent.
   \label{lem:epsbiased-to-almostindep}
\end{lemma}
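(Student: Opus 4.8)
The plan is to run the standard Fourier-analytic argument that converts small bias into near-independence, carried out over $\F_p$ rather than $\F_2$. Fix an arbitrary coordinate set $S\subseteq[m]$ with $k'=|S|\le k$, and let $\mu_S$ be the marginal distribution of $\mu$ on the coordinates in $S$, a distribution on $\F_p^{k'}$; write $U$ for the uniform distribution on $\F_p^{k'}$. It suffices to show $\|\mu_S-U\|_1\le p^{k/2}\eps$, since then maximizing over all such $S$ gives exactly ``$\mu$ is $p^{k/2}\eps$-almost $k$-wise independent.''

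First I would compute the Fourier spectrum of the marginal. For $\gamma\in\F_p^{k'}$, let $\iota_S(\gamma)\in\F_p^m$ be the vector that equals $\gamma$ on the coordinates of $S$ and $0$ elsewhere. Grouping the defining sum of $\widehat{\mu}$ according to the restriction of $x$ to the coordinates in $S$ gives $\widehat{\mu_S}(\gamma)=\widehat{\mu}(\iota_S(\gamma))$. In particular $\widehat{\mu_S}(\mathbf{0})=1$, and for every $\gamma\ne\mathbf{0}$ the lifted vector $\iota_S(\gamma)$ is nonzero, so the $\eps$-biasedness of $\mu$ gives $|\widehat{\mu_S}(\gamma)|\le\eps$. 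Since $\widehat{U}(\mathbf{0})=1$ and $\widehat{U}(\gamma)=0$ for $\gamma\ne\mathbf{0}$, the difference $f=\mu_S-U$ satisfies $\widehat{f}(\mathbf{0})=0$ and $|\widehat{f}(\gamma)|\le\eps$ for all $\gamma$.

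Next I would pass through the $L_2$ norm. Parseval's identity over $\F_p^{k'}$ (with the unnormalized transform used in the excerpt) gives $\sum_x f(x)^2=\tfrac{1}{p^{k'}}\sum_{\gamma\ne\mathbf{0}}|\widehat{f}(\gamma)|^2\le\tfrac{p^{k'}-1}{p^{k'}}\,\eps^2<\eps^2$. Then Cauchy--Schwarz over the $p^{k'}$ points of $\F_p^{k'}$ yields $\|f\|_1\le p^{k'/2}\,\|f\|_2<p^{k'/2}\eps\le p^{k/2}\eps$, which is the claimed bound.

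I do not expect a genuine obstacle here: this is essentially a textbook fact (hence the ``e.g.'' attribution), and the whole content sits in the two observations that (i) the spectrum of the marginal $\mu_S$ is a restriction of the spectrum of $\mu$, and (ii) one must convert $L_2$ to $L_1$ via Cauchy--Schwarz rather than apply the triangle inequality directly to the inversion formula, since the latter only gives the useless bound $(p^{k'}-1)\eps$. The one thing worth double-checking is the normalization convention in the $\F_p$-Fourier inversion and Parseval identities for the non-binary modulus, but that is routine.
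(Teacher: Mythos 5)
Your proof is correct and is precisely the standard Fourier-analytic argument that the cited reference [AGHP'92, Corollary~1] (and its mod-$p$ generalizations) uses: the marginal's spectrum is a restriction of the full spectrum, Parseval converts the pointwise bias bound into an $L_2$ bound, and Cauchy--Schwarz converts $L_2$ to $L_1$ with the $p^{k'/2}$ loss. The paper does not reproduce a proof of this lemma (it is stated as a citation-only fact), so there is nothing further to compare; your writeup is a faithful and complete rendering of the textbook argument.
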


The following construction of $\eps$-biased distribution is a straightforward generalization of the powering construction from \cite{AlonGHP92}.
 \begin{lemma}[{\cite[Section 5]{AlonGHP92}}]
    \label{lem:aghppower}
    Let $\F_q = \F_{p^r}$ be a finite field  represented by a monic irreducible degree-$r$ polynomial over $\F_p$. For $a,b\in \F_q$, define their inner product $\langle a,b\rangle \in \F_p$ by viewing $a,b$ as vectors in $\F_p^r$.

     When $(x,y)$ is uniformly drawn from $\F_{q}\times \F_q$, the distribution of $(\langle 1,y\rangle,\langle x,y\rangle, \dots , \langle x^{m-1},y\rangle )\in \F_p^m$ is $\frac{m-1}{q}$-biased.
 \end{lemma}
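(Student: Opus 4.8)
\textbf{Proof plan for \cref{lem:aghppower}.}
The plan is to imitate the binary powering construction of \cite{AlonGHP92} and verify that the only ingredient that changes is the character sum estimate, which over $\F_p$ becomes a Weil-type bound for additive characters composed with polynomial maps. Concretely, fix a nonzero $\gamma = (\gamma_0,\dots,\gamma_{m-1}) \in \F_p^m$ and write the random vector as $V = (\langle 1,y\rangle, \langle x,y\rangle, \dots, \langle x^{m-1}, y\rangle)$ for $(x,y)$ uniform over $\F_q \times \F_q$. I would first unwind the definition of the Fourier coefficient: by linearity of the inner products in the first argument, $\sum_{j=0}^{m-1} \gamma_j \langle x^j, y\rangle = \langle P_\gamma(x), y\rangle$ where $P_\gamma(X) = \sum_{j=0}^{m-1} \gamma_j X^j \in \F_q[X]$ (here each $\gamma_j \in \F_p \subseteq \F_q$), and the right-hand side is computed via the $\F_p$-bilinear pairing on $\F_q = \F_p^r$. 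Then
\[
\widehat{\mu}(\gamma) = \Ex_{x,y}\bigl[\omega_p^{-\langle P_\gamma(x), y\rangle}\bigr] = \Ex_{x}\Bigl[\Ex_{y}\bigl[\omega_p^{-\langle P_\gamma(x), y\rangle}\bigr]\Bigr].
\]

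The key step is the inner expectation over $y$. For a fixed field element $z = P_\gamma(x) \in \F_q$, the map $y \mapsto \omega_p^{-\langle z, y\rangle}$ is an additive character of the group $(\F_q,+)$, and it is the trivial character exactly when $z = 0$ (because the pairing $\langle \cdot,\cdot\rangle$ is nondegenerate on $\F_p^r$), in which case the average is $1$; otherwise the average over all $y \in \F_q$ is $0$. Hence $\widehat\mu(\gamma) = \Pr_x[P_\gamma(x) = 0]$. Since $\gamma \neq \mathbf 0$, the polynomial $P_\gamma$ is a nonzero polynomial of degree at most $m-1$ over the field $\F_q$, so it has at most $m-1$ roots, giving $\Pr_x[P_\gamma(x)=0] \le (m-1)/q$. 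Therefore $|\widehat\mu(\gamma)| \le (m-1)/q$ for every nonzero $\gamma$, which is exactly the claimed $\tfrac{m-1}{q}$-biasedness.

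The only subtlety — and the place I would be most careful — is making sure the identification $\F_q \cong \F_p^r$ and the bilinear pairing $\langle \cdot,\cdot\rangle$ interact correctly with two facts used above: (i) that $P_\gamma(X)$ really is a \emph{nonzero} polynomial over $\F_q$ whenever $\gamma \neq \mathbf 0$ (this is immediate, since its coefficients $\gamma_j$ lie in the subfield $\F_p$ and not all vanish), and (ii) that $\langle z, \cdot\rangle \colon \F_q \to \F_p$ is the zero functional iff $z = 0$ (nondegeneracy of the standard coordinate inner product on $\F_p^r$, which holds for any basis representation). Both are routine, so there is no real obstacle; the proof is essentially a direct transcription of \cite[Section 5]{AlonGHP92} with the binary character replaced by a $p$-th root of unity and the "$\Pr[\text{poly}=0] \le \deg/q$" bound playing the role of the standard $\F_2$ argument. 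One should also note $q = p^r$ so that the whole chain lives inside the word-size prime $p$ regime assumed at the start of the section, though that plays no role in the inequality itself.
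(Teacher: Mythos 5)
Your proof is correct, and in fact the paper does not supply a proof of this lemma at all — it is stated with a citation to \cite[Section 5]{AlonGHP92} (and the remark that it is a "straightforward generalization of the powering construction"), so there is no in-paper argument to compare against. Your reconstruction is exactly the standard one: linearize the pairing to pull $\gamma$ into the polynomial $P_\gamma$, use nondegeneracy of the coordinate pairing to see that the inner average over $y$ is an indicator for $P_\gamma(x)=0$, and bound $\Pr_x[P_\gamma(x)=0]$ by $(\deg P_\gamma)/q \le (m-1)/q$. All the steps you flag as potential subtleties (that $\gamma_j \in \F_p$ acts by coordinatewise scaling so the pairing is $\F_p$-linear in the first slot; that $P_\gamma$ is a nonzero polynomial over $\F_q$ since its $\F_p$-coefficients are not all zero; that $\langle z,\cdot\rangle$ is the zero functional iff $z=0$) are handled correctly. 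The only cosmetic point is that $\widehat\mu(\gamma)$ in your argument comes out as the nonnegative real number $\Pr_x[P_\gamma(x)=0]$, so taking the absolute value at the end is harmless but not strictly necessary.
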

 
\begin{lemma}
 For an $\eps$-biased distribution $\mu$ over $\F_p^m$, and an invertible matrix $M\in \F_p^{m\times m}$, the distribution of $Mx\in \F_p^m$ where $x\sim \mu$ is also $\eps$-biased.
 \label{lem:fourierinvert}
\end{lemma}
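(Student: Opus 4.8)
The plan is to show that applying an invertible linear map to an $\eps$-biased distribution preserves $\eps$-biasedness by a direct Fourier-coefficient computation. First I would write down the Fourier coefficients of the pushforward distribution $\nu$ defined by $\nu(z) = \Pr_{x\sim\mu}[Mx = z]$, for each $\gamma \in \F_p^m$:
\[
\widehat{\nu}(\gamma) = \sum_{z\in \F_p^m}\nu(z)\,\omega_p^{-\langle \gamma, z\rangle} = \Ex_{x\sim\mu}\bigl[\omega_p^{-\langle \gamma, Mx\rangle}\bigr].
\]
The key step is the identity $\langle \gamma, Mx\rangle = \langle M^{\mathsf T}\gamma, x\rangle$, which holds over $\F_p$ by the definition of the transpose. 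Substituting this in gives $\widehat{\nu}(\gamma) = \Ex_{x\sim\mu}[\omega_p^{-\langle M^{\mathsf T}\gamma, x\rangle}] = \widehat{\mu}(M^{\mathsf T}\gamma)$.

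Now I would observe that since $M$ is invertible over $\F_p$, so is $M^{\mathsf T}$, and hence the map $\gamma \mapsto M^{\mathsf T}\gamma$ is a bijection of $\F_p^m$ that in particular sends $\F_p^m \setminus \{\mathbf 0\}$ onto itself. Therefore, for every $\gamma \neq \mathbf 0$ we have $M^{\mathsf T}\gamma \neq \mathbf 0$, and the $\eps$-biasedness of $\mu$ yields $|\widehat{\nu}(\gamma)| = |\widehat{\mu}(M^{\mathsf T}\gamma)| \le \eps$. This is exactly the definition of $\nu$ being $\eps$-biased, so the proof is complete.

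There is no real obstacle here; the only thing to be slightly careful about is the bookkeeping of the transpose versus inverse (the pushforward of $\mu$ under $M$ has Fourier coefficients indexed by $M^{\mathsf T}$, not $M^{-1}$), and the fact that we need $M$ invertible only to guarantee $M^{\mathsf T}$ fixes the set of nonzero vectors setwise — we do not actually use $M^{-1}$ anywhere. I would phrase the write-up as a two- or three-line computation followed by the bijection remark.
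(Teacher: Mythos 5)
Your proof is correct and follows exactly the same route as the paper's: both compute $\widehat{\nu}(\gamma) = \widehat{\mu}(M^{\top}\gamma)$ via the transpose identity and then use the invertibility of $M^{\top}$ to argue that nonzero frequencies map to nonzero frequencies, so the bias bound carries over. You simply spell out the one-line calculation that the paper states tersely.
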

\begin{proof}
 The new distribution $M\mu$ has Fourier coefficients $\widehat{M\mu}(\gamma) =  \mu(M^{\top}\gamma)$ (where $\gamma \neq \mathbf{0}$), which have magnitude at most $\eps$ since $\mu$ is $\eps$-biased.
\end{proof}

\cref{lem:dethash} follows from putting the three lemmas together.
 \begin{proof}[Proof of \cref{lem:dethash}]
    We prove \cref{item:dethashalmostuniform} first.
Let  $\delta = \frac{1}{2p^{k-1}}$, and $\eps = \delta/p^{k/2} = \frac{1}{2p^{3k/2-1}}$.
Let $r := \lceil \log_p \frac{m}{\eps} \rceil \le  \log_p(2m) + 3k/2$, and $q: = p^r \ge m/\eps$.
 Construct the $\frac{m-1}{q}$-biased (and hence $\eps$-biased) set $\caC \subset \F_p^m$ from \cref{lem:aghppower}, which has size $|\caC| = q^2 =p^{2r} \le 4m^2 p^{3k}$, and can be computed in $|\caC|\cdot m\cdot \tilde O(r) = \tilde O(m^3 p^{3k})$ time (constructing the finite field $\F_{p^r}$ takes  $\tilde O(r^4 p^{1/2})$ time deterministically \cite{Shoup88}, which is not a bottleneck). Now we verify that $\caC$ satisfies the claimed property. Fixing $k'\le k$ linearly independent $x_1,\dots,x_{k'}\in \F_p^m$, there is an invertible $m\times m$ matrix $M$ such that the first $k'$ coordinates of $Mc$ for any $c\in \F_p^m$ are $\langle c, x_1\rangle, \dots, \langle c,x_{k'}\rangle$. By \cref{lem:fourierinvert},  the distribution of $Mc$ where $c$ is uniformly drawn from $\caC$ is also $\eps$-biased, and hence $\delta$-almost $k$-wise independent by \cref{lem:epsbiased-to-almostindep}. In particular, the first $k'$ coordinates of $Mc$, $\langle c, x_1\rangle, \dots, \langle c,x_{k'}\rangle$, are $\delta$-almost uniform. This concludes the proof of \cref{item:dethashalmostuniform}.
 
Now we derive \cref{item:dethashlb}  from \cref{item:dethashalmostuniform}. Pick any maximal linear independent subset of $x_1,\dots,x_{k'}$, and without loss of generality assume it contains $x_1,\dots,x_r$. Since each $x_j$ is a linear combination of $x_1,\dots,x_r$, we have 
\[
   \Pr[\langle c,x_1\rangle = \cdots = \langle c,x_{k'}\rangle = 0] = \Pr[\langle c,x_1\rangle = \cdots = \langle c,x_{r}\rangle = 0] \underset{\text{by \cref{item:dethashalmostuniform}}}{\ge} \tfrac{1}{p^r} -\delta \ge \tfrac{1}{p^{k'}} -\tfrac{1}{2p^{k-1}} \ge \tfrac{1}{2p^{k'}}
   \]
   as claimed.
 \end{proof}

 \subsection{Almost linear hashing with ``almost'' almost \texorpdfstring{$k$}{k}-wise independence}
 \label{subsec:hash}
Let $k\ge 2$ be a fixed constant integer.
We now construct our hash family $\caH \subset \{h\colon \{-N,\dots,N\} \to \F_p\}$.
In the following we assume the universe size $N$ is larger than some big constant (possibly dependent on $k$) in order to simplify some calculation. 

\begin{definition}[Hash family $\caH$]
Let $k\ge 2$ be a fixed constant integer.
Given universe size $N$ and a prime parameter $p \ge \log^{2k}N$,  
define a hash family $\caH \subset \{h\colon \{-N,\dots,N\} \to \F_p\}$ as follows: 

\begin{itemize}
    \item  
    Let 
    \[Q := 100\log N < p.\]
    Deterministically pick distinct primes $q_1,q_2,\dots,q_{m-1}\in [Q/2,Q]$ such that 
    \begin{equation}
    \label{eqn:largeproduct}
    q_1q_2\cdots q_{m-1} \in [N(\log  N)^{2k}, N(\log  N)^{3k}],
    \end{equation}
    which is possible since the product of all primes in the interval $[50\log  N, 100\log  N]$ is at least $N^2$ (by the Prime Number Theorem, e.g.~\cite{RosserS62}). In particular, 
\[ m = (1+o(1))\tfrac{\log  N}{\log  \log  N}.\]
    \item  Draw a random  $c\in  \F_p^{m}$ from \cref{lem:dethash} with parameters $p,m,k$. 
    Then, for $x\in \{-N,\dots,N\}$,
   define vector $w =  (x\bmod q_1 ,x\bmod q_2, \dots, x\bmod q_{m-1}, 1) \in \F_p^m$ (where each $x\bmod q_i$ is converted into $\{0,1,\dots,q_i-1\} \subset \F_p$), and let the hash value of $x$ be $h(x) := \langle c, w\rangle $.
\end{itemize}

By \cref{lem:dethash}, this hash family has size $|\caH| \le 4m^2 p^{3k}$ and can be computed in $\tilde O(m^3 p^{3k})$ time (the time for computing $q_1,\dots,q_{m-1}$ is $O(log N )=o(p)$ and is negligible), and each hash function $h\in \caH$ can be evaluated at any given $x$ in $O(m) = o(\log N)$ time.
\label{def:hashfamilyh}
\end{definition}

First, we show that the hash family $\caH$ is almost linear.

\begin{lemma}[Almost linearity, $\Delta_h$]
   \label{lem:almostlinearity}
   In \cref{def:hashfamilyh}, for every $h\in \caH$ (defined using $c\in \F_p^m$), we have
   $h(x+y)-h(x)-h(y) \in \Delta_h$ for all $x,y \in \{-N,\dots,N\}$, where $\Delta_h \subseteq \F_p$ 
   is defined as the iterative sumset,
   \[ \Delta_{h}:= \{0,-q_1c_1\} + \{0,-q_2c_2\} + \dots + \{0,-q_{m-1}c_{m-1}\} + c_m \]
with size $|\Delta_h|\le 2^{m-1} = N^{o(1)}$.
\end{lemma}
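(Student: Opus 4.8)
The plan is to exploit the structure of the map $x \mapsto w(x) = (x \bmod q_1, \dots, x \bmod q_{m-1}, 1)$ and the linearity of the inner product $\langle c, \cdot\rangle$. First I would observe that for each index $i \in \{1, \dots, m-1\}$, reduction modulo $q_i$ is additive \emph{up to a carry}: for $x, y \in \{-N, \dots, N\}$ we have $(x+y) \bmod q_i = (x \bmod q_i) + (y \bmod q_i) - \epsilon_i q_i$ where $\epsilon_i \in \{0, 1\}$ (here I am treating the representatives as integers in $\{0, 1, \dots, q_i - 1\}$ and then embedding them into $\F_p$; note $q_i < Q < p$ so this embedding causes no wraparound within a single reduction, and the carry term $\epsilon_i q_i$ is likewise a genuine element of $\F_p$). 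The last coordinate is the constant $1$, which contributes a fixed $c_m$ term. Therefore
\[
w(x+y) = w(x) + w(y) - (\epsilon_1 q_1, \dots, \epsilon_{m-1} q_{m-1}, 1),
\]
where the subtracted vector's last coordinate accounts for the fact that $w(x)$ and $w(y)$ each carry a $1$ in the last slot but $w(x+y)$ should carry only one.

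Next I would apply $\langle c, \cdot \rangle$ to both sides. By bilinearity of the inner product over $\F_p$,
\[
h(x+y) = \langle c, w(x+y)\rangle = \langle c, w(x)\rangle + \langle c, w(y)\rangle - \sum_{i=1}^{m-1} \epsilon_i q_i c_i - c_m,
\]
so that
\[
h(x+y) - h(x) - h(y) = -\sum_{i=1}^{m-1} \epsilon_i q_i c_i - c_m = c_m + \sum_{i=1}^{m-1} (-\epsilon_i q_i c_i).
\]
Since each $\epsilon_i$ ranges over $\{0,1\}$ (depending on $x,y$), the right-hand side lies in the iterative sumset $\{0, -q_1 c_1\} + \dots + \{0, -q_{m-1} c_{m-1}\} + c_m$, which is exactly the definition of $\Delta_h$. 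Wait — the definition in the statement has $+c_m$ at the end rather than $-c_m$; I would double-check the sign convention for the constant coordinate, but since $\Delta_h$ is a fixed set and we are free to define it as the sumset translated by whichever of $\pm c_m$ comes out, this is a cosmetic matter (and in any case one can absorb the sign by noting the statement only asserts membership in a set of the stated form and size). Finally, the size bound $|\Delta_h| \le 2^{m-1}$ is immediate since $\Delta_h$ is a sumset of $m-1$ two-element sets translated by a single element; and $2^{m-1} = N^{o(1)}$ follows from $m = (1+o(1))\frac{\log N}{\log\log N}$ as recorded in \cref{def:hashfamilyh}.

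The only genuinely delicate point — the "main obstacle," though it is minor — is making sure the carry analysis is airtight when $x$, $y$, or $x+y$ is negative. Since $x, y \in \{-N, \dots, N\}$, the sum $x+y$ may fall outside this range, but the reduction $\bmod q_i$ of any integer is still well-defined, and the identity $(x+y) \bmod q_i = (x \bmod q_i) + (y \bmod q_i) - \epsilon_i q_i$ for some $\epsilon_i \in \{0,1\}$ holds for \emph{all} integers $x, y$ once we fix the convention that $z \bmod q_i$ denotes the representative in $\{0, 1, \dots, q_i-1\}$: the sum of two such representatives lies in $\{0, \dots, 2q_i - 2\}$, so exactly one subtraction of $q_i$ suffices. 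I would state this carry fact as a one-line sub-claim and then the rest of the proof is the bilinearity computation above.
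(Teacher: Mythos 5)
Your proof is correct and takes essentially the same approach as the paper's: both hinge on the one-line fact that $((x+y)\bmod q_i) - (x\bmod q_i) - (y\bmod q_i) \in \{0,-q_i\}$ and then expand $h(x+y)-h(x)-h(y)$ by linearity of $\langle c,\cdot\rangle$, with your write-up merely spelling out the carry bookkeeping more explicitly. One small caveat you already flagged: your computation correctly yields $-\sum_i \epsilon_i q_i c_i - c_m$, so the subsequent displayed equality asserting this equals $c_m + \sum_i(-\epsilon_i q_i c_i)$ is not literally valid (it would require $2c_m=0$); the paper's statement and proof both carry the same benign $+c_m$ sign slip, which is harmless because only the size and fixedness of $\Delta_h$ matter downstream.
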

\begin{proof}
  Note that $q_i\le Q< p$ for all $i$.
  For any $x,y\in \Z$, \[h(x+y)-h(x)-h(y) = c_m + \sum_{i=1}^{m-1}c_i\cdot \big (((x+y)\bmod q_i) - (x\bmod q_i) - (y\bmod q_i)\big ).\]
  By $((x+y)\bmod q_i) - (x\bmod q_i) - (y\bmod q_i) \in \{0,-q_i\}$, the claim immediately follows.
\end{proof}

Using almost linearity, we now show that the hash values of Sidon 4-tuples under $\caH$ are highly correlated.
Recall that for a set $\Delta$, we define the iterative sumset $k\Delta = (k-1)\Delta + \Delta$.
\begin{lemma}[Hashing Sidon 4-tuples]
 Suppose integers $x,y,z,w\in [N]$ satisfy $x+y=z+w$. Then the hash family $\caH$ from \cref{def:hashfamilyh} (with $k\ge 3$) satisfies
   \[ \Pr_{h\in \caH}\big[\exists i\in \F_p: \{h(x),h(y),h(z),h(w)\} \subseteq  i + \Delta'_h\big ] \ge \frac{1}{2p^2},\]
   where $\Delta'_h:= 3(\{0\}\cup \Delta_h)\subseteq \F_p$, with size   $|\Delta'_h| < (|\Delta_h|+1)^3 \le 2^{3m} = N^{o(1)}$.
   \label{lem:hashquadlb}
\end{lemma}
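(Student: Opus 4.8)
The plan is to use the almost-linearity of $\caH$ (\cref{lem:almostlinearity}) to reduce the claim to a statement about the values $h(x), h(y), h(z), h(w)$ modulo an additive coset, and then to invoke the lower-bound property \cref{item:dethashlb} of \cref{lem:dethash} applied to a suitable collection of $O(1)$ vectors in $\F_p^m$. The key observation is that, for integers with $x + y = z + w$, the corresponding weight vectors $w^{(x)}, w^{(y)}, w^{(z)}, w^{(w)} \in \F_p^m$ (recalling that the last coordinate is the constant $1$) satisfy $w^{(x)} + w^{(y)} - w^{(z)} - w^{(w)} = e$ for some error vector $e$ whose entries are controlled: in coordinate $i < m$ the entry is $q_i \cdot (\text{something in } \{-1,0,1,2\})$, coming from the discrepancies $((x+y)\bmod q_i) - (x\bmod q_i) - (y\bmod q_i) \in \{0,-q_i\}$ (and similarly for $z,w$), and in the last coordinate it is $0$. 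Hence $h(x) + h(y) - h(z) - h(w) = \langle c, e\rangle$, and as $e$ ranges over the (at most $O(1)$-many) possible error vectors determined by the choices of $x,y,z,w$, the value $\langle c, e \rangle$ lands in $3(\{0\}\cup\Delta_h) - 3(\{0\}\cup\Delta_h)$ or so — but we want something cleaner.

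A cleaner route: first I would show that each of $h(x), h(y), h(z), h(w)$ lies in a common coset $i + \Delta'_h$ whenever a single linear form vanishes on $c$. Concretely, set $v := w^{(x)} - w^{(y)}$, and observe that the difference $h(x) - h(y) = \langle c, w^{(x)} - w^{(y)}\rangle$. By almost-linearity applied to writing $x = y + (x-y)$ etc., one sees $h(x) - h(y) \in \Delta''$ for a small set $\Delta''$ of size $N^{o(1)}$ built from $\Delta_h$ — this is essentially the content of \cref{lem:almostlinearity} used twice. The same holds for $h(z) - h(x)$ and $h(w) - h(x)$ using the relation $x+y=z+w$, which lets us express $z - x = y - w$ and $w - x = y - z$, each a difference of two elements of $[N]$, so each pairwise hash-difference lies in $\Delta''$. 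Then taking $i := h(x)$, all four values lie in $i + \Delta''$, and one checks $\Delta'' \subseteq \Delta'_h = 3(\{0\}\cup\Delta_h)$ by a direct unwinding of how $\Delta_h$ composes under the $+$ and $-$ of two almost-additive evaluations. At that point the probability in question is actually $1$, not $\tfrac{1}{2p^2}$ — so this cannot be the whole story, meaning the genuine difficulty is more subtle than I first expected.

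Re-examining: the issue is that $\Delta_h$ and hence $\Delta'_h$ \emph{depend on $h$} (via $c$), and the sets can be large ($2^{O(m)}$), so while $h(x),h(y),h(z),h(w)$ always lie within a bounded-diameter window in the "integer-combination" sense, it is not automatic that they lie in a single translate $i + \Delta'_h$ of the \emph{specific} set $\Delta'_h$; the additive structure of $\Delta_h$ (an iterative sumset of $\{0,-q_ic_i\}$) must be used carefully, and the over-counting forces the probabilistic factor. So the actual argument must be: condition on the "carry pattern" — i.e., for each $i<m$, whether $x\bmod q_i + y\bmod q_i \geq q_i$, and similarly the carry for $z,w$. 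The hash values $h(x),\dots,h(w)$ are then fixed linear functions of $c$ plus explicit constants determined by these carries. The event "$\{h(x),h(y),h(z),h(w)\}\subseteq i+\Delta'_h$" will be guaranteed to hold \emph{provided} a certain small number ($k' < k$, at most $2$) of linear forms $\langle c, \cdot\rangle$ vanish — namely the forms capturing that the four carry-adjusted vectors are congruent modulo the lattice generated by the $q_ic_i$'s, which after the almost-linearity bookkeeping reduces to at most two independent constraints on $c$. Then \cref{item:dethashlb} of \cref{lem:dethash} with $k'=2 < k$ (this is why we need $k\geq 3$) gives probability $\geq \tfrac{1}{2p^2}$ that both constraints hold.

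The main obstacle, then, is the bookkeeping in the second paragraph above: precisely identifying which $\leq 2$ linear forms in $c$ must vanish so that all four hash values collapse into one translate of $\Delta'_h$, and verifying that $\Delta'_h = 3(\{0\}\cup\Delta_h)$ is genuinely large enough to absorb the residual discrepancies coming from the (at most three) non-canceling carry terms among $x,y,z,w$. Concretely I expect the count "$3$" in $\Delta'_h$ to arise because, after one constraint forces the bulk cancellation, each of $h(y),h(z),h(w)$ can differ from $h(x)$ by an element of $\{0\}\cup\Delta_h$ — three such potential offsets, hence $3(\{0\}\cup\Delta_h)$ as the ambient set — and the size bound $|\Delta'_h| \leq (|\Delta_h|+1)^3 \leq 2^{3m} = N^{o(1)}$ is then immediate from $|\Delta_h| \leq 2^{m-1}$. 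Everything else (the Fourier/$\eps$-biased machinery) is already packaged in \cref{lem:dethash}, so this lemma is really just an exercise in tracking how the additive-error sets compose, combined with a single clean application of the almost-$k$-wise-independence lower bound.
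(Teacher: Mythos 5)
Your proposal correctly locates the two ingredients---almost linearity (\cref{lem:almostlinearity}) and \cref{item:dethashlb} of \cref{lem:dethash} applied with $k'=2 < k$ (hence $k\ge 3$)---and correctly anticipates that two vanishing linear forms give the $\tfrac{1}{2p^2}$. But the bookkeeping you flag as ``the main obstacle'' is the whole proof, and your guess about its structure is wrong in the two places that matter.

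First, the two linear forms are simply the weight vectors of the integers $x-y$ and $y-z$. Applying \cref{item:dethashlb} to them gives $\Pr_h[h(x-y)=h(y-z)=0]\ge \tfrac{1}{2p^2}$, and the hypothesis $x+y=z+w$ yields $w-x=y-z$, so $h(w-x)=0$ for free---no third constraint. Your attempt to use $w^{(x)}-w^{(y)}$ directly runs into the very carries you then try to condition away; the clean substitute is to use the weight vector of the single integer $x-y$, and the entire carry discrepancy is exactly what \cref{lem:almostlinearity} packages into $\Delta_h$: from $h(a)-h(a-b)-h(b)\in\Delta_h$ one gets $h(a)-h(b)\in h(a-b)+\Delta_h$, which collapses to $h(a)-h(b)\in\Delta_h$ once $h(a-b)=0$. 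No explicit carry-pattern conditioning is needed or used.

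Second, your explanation of the factor~$3$ is incorrect. You conjecture that ``each of $h(y),h(z),h(w)$ can differ from $h(x)$ by an element of $\{0\}\cup\Delta_h$,'' but if that were the mechanism, the set $(\{0\}\cup\Delta_h)\cup(-\Delta_h)$ would already suffice, which is inconsistent with needing $3(\{0\}\cup\Delta_h)$. What actually happens is a \emph{chain}: the three vanishing hashes give $h(x)-h(y),\ h(y)-h(z),\ h(w)-h(x)\in\Delta_h$, so taking $i:=h(z)$ one reads off $h(z)\in i+\{0\}$, $h(y)\in i+\Delta_h$, $h(x)\in i+2\Delta_h$, $h(w)\in i+3\Delta_h$, and $\{0\}\cup\Delta_h\cup 2\Delta_h\cup 3\Delta_h = 3(\{0\}\cup\Delta_h)=\Delta'_h$. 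The exponent~$3$ records the length of that chain, not the number of points being compared to a common anchor.
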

\begin{proof}
  By the definition of $h$ and by \cref{item:dethashlb} of \cref{lem:dethash}, we have $h(x-y)=h(y-z)=0$ with at least $\frac{1}{2p^2}$ probability.
   Since $y-z=w-x$, in this case we also have $h(w-x)=0$. Then, by  \cref{lem:almostlinearity}, we have $h(y)-h(z) \in \Delta_h, h(x)-h(y)\in \Delta_h$, and $h(w)-h(x)\in \Delta_h$. So $h(y)\in h(z)+\Delta_h, h(x)\in h(z)+\Delta_h+\Delta_h$, and $h(w)\in h(z)+\Delta_h+\Delta_h+\Delta_h$. Then, for $i= h(z)$, we have  $\{h(x),h(y),h(z),h(w)\} \subseteq \{i\} \cup \{i+\Delta_h\}\cup \{i+2\Delta_h\} \cup \{i + 3\Delta_h\} = i + \Delta'_h$.
\end{proof}

To state the next property of $\caH$, we need the following definition (which also appeared in \cite{JinX23}):
\begin{definition}[$k$-term $\ell$-relation]
   \label{defn:lrelation}
   We say $k$ integers $a_1,\dots,a_k$ have a \emph{($k$-term) $\ell$-relation}, if there exist integers $\beta_1,\dots,\beta_k\in [-\ell,\ell]$ such that  $\sum_{j=1}^k\beta_j = \sum_{j=1}^k \beta_j a_j = 0$, and at least one of $\beta_1,\dots,\beta_k$ is non-zero.
\end{definition}
The following lemma states that the hash family $\caH$ is almost $k$-wise independent on integers \emph{without} $\ell$-relations for small $\ell$.\footnote{In contrast, on integers \emph{with} $\ell$-relations for small $\ell$, the almost $k$-wise independence must fail, as can be seen from \cref{lem:hashquadlb}.}

\begin{lemma}[``Almost'' almost $k$-wise independence]
   \label{lem:almostkwiseindep}
Let $\caH$ be the hash family from \cref{def:hashfamilyh}.
Let $k'\le k$, and integers $x_1,x_2,\dots,x_{k'} \in \{-N,\dots,N\}$ have no $2k^{2k+1}Q^{k'-1}$-relations. Then, 
\[ \Pr_{h\in \caH}\Big [h(x_1)=h(x_2)=\dots = h(x_{k'})\Big ] \le p^{-(k'-1)}\cdot 2(k+1)^2(2k+1)^{m-1} =p^{-(k'-1)} N^{o(1)} . \]
\end{lemma}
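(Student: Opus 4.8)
\textbf{Proof plan for \cref{lem:almostkwiseindep}.}
The plan is to reduce the statement to a single application of the almost-uniformity guarantee \cref{item:dethashalmostuniform} of \cref{lem:dethash}, after constructing the right family of test vectors. Recall that for $x\in\{-N,\dots,N\}$ the hash value is $h(x)=\langle c,w_x\rangle$ where $w_x=(x\bmod q_1,\dots,x\bmod q_{m-1},1)\in\F_p^m$ and $c$ is uniform in the set $\caC$ from \cref{lem:dethash}. Consequently, $h(x_i)=h(x_{i+1})$ for all $i$ is exactly the event that $\langle c, w_{x_i}-w_{x_{i+1}}\rangle =0$ for all $i\in\{1,\dots,k'-1\}$. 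Set $v_i := w_{x_i}-w_{x_{i+1}}\in\F_p^m$. If the $v_1,\dots,v_{k'-1}$ happen to be linearly independent over $\F_p$, then by \cref{item:dethashalmostuniform} the $(k'-1)$-tuple $(\langle c,v_1\rangle,\dots,\langle c,v_{k'-1}\rangle)$ is $\tfrac{1}{2p^{k-1}}$-almost uniform over $\F_p^{k'-1}$, so the probability that it equals $\mathbf 0$ is at most $p^{-(k'-1)}+\tfrac{1}{2p^{k-1}}\le 2p^{-(k'-1)}$, which is already within the claimed bound. So the whole difficulty is concentrated in the case where the $v_i$ are linearly \emph{dependent}; the key claim to establish is that this dependence can only happen when $x_1,\dots,x_{k'}$ admit a short ($\ell$-)relation with $\ell = 2k^{2k+1}Q^{k'-1}$, contradicting the hypothesis.

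The core of the argument is therefore the following contrapositive lemma: \emph{if $\sum_{i=1}^{k'-1}\lambda_i v_i = \mathbf 0$ in $\F_p^m$ for some $(\lambda_1,\dots,\lambda_{k'-1})\in\F_p^{k'-1}\setminus\{\mathbf 0\}$, then $x_1,\dots,x_{k'}$ have a $2k^{2k+1}Q^{k'-1}$-relation.} To prove this, first observe that $\sum_i \lambda_i v_i$ telescopes: writing it as a combination of the $w_{x_j}$'s, the coefficient of $w_{x_j}$ is $\mu_j := \lambda_{j-1}-\lambda_j$ (with the conventions $\lambda_0=\lambda_{k'}=0$), and these satisfy $\sum_j \mu_j = 0$ and $(\mu_1,\dots,\mu_{k'})\ne\mathbf 0$. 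The relation $\sum_j \mu_j w_{x_j}=\mathbf 0$ in $\F_p^m$ means, coordinatewise, that $\sum_j \mu_j (x_j\bmod q_i)\equiv 0\pmod p$ for every $i\in\{1,\dots,m-1\}$ and $\sum_j\mu_j\equiv 0\pmod p$ (the last coordinate). Now the crucial point is size control: lift each $\mu_j$ to the integer representative of least absolute value in $(-p/2,p/2]$; since $\sum_j\mu_j=0$ in $\F_p$ and $\sum_j|\mu_j|<k'p\le kp$, the lifts actually sum to $0$ over $\Z$ provided we argue carefully — and more importantly, I want the $\mu_j$ themselves to be \emph{small}, not merely bounded by $p/2$. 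This is where the gap between ``almost $k$-wise'' and genuine $k$-wise independence enters: I will not get arbitrarily small $\mu_j$ directly, so instead I will work modulo each $q_i$.

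The right move is to reduce the identity $\sum_j \mu_j (x_j\bmod q_i)\equiv 0 \pmod p$ modulo $q_i$ as well. Since $p\ge\log^{2k}N > Q \ge q_i$ and $p$ is coprime to $q_i$, and since $x_j\bmod q_i\in\{0,\dots,q_i-1\}$, reducing mod $q_i$ and then mod $p$ interact cleanly: one shows $\sum_j \mu_j x_j \equiv 0 \pmod{q_i}$ holds as an identity in $\Z_{q_i}$ for a suitable integer lift of the $\mu_j$ — here one uses that the coefficients $\mu_j$ can be taken bounded (by $\le kp$) while the moduli $q_i\le Q$ are tiny, so one must instead take a CRT/pigeonhole step to replace the $\F_p$-coefficients $\mu_j$ by genuinely small integer coefficients $\beta_j\in[-\ell,\ell]$ that still satisfy $\sum_j\beta_j=0$ and $\sum_j \beta_j x_j \equiv 0\pmod{q_i}$ for every $i$. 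Concretely: the vector of admissible small-coefficient tuples $(\beta_1,\dots,\beta_k)\in[-k^{2k},k^{2k}]^k$ with $\sum\beta_j=0$ has more than $Q^{m-1}\cdot(\text{slack})$ elements, so by pigeonhole two of them agree on the residues $(\sum_j\beta_j x_j\bmod q_i)_{i}$; their difference $\beta=\beta'-\beta''$ lies in $[-2k^{2k},2k^{2k}]^k$, is nonzero, satisfies $\sum_j\beta_j=0$, and satisfies $q_i\mid \sum_j\beta_jx_j$ for all $i$. Since the $q_i$ are distinct primes in $[Q/2,Q]$ and $|\sum_j\beta_jx_j|\le 2k^{2k}\cdot k\cdot N < \prod_i q_i$, actually only if this sum is nonzero would it need to be divisible by $\prod q_i$, a contradiction for $N$ large; hence $\sum_j\beta_j x_j = 0$ over $\Z$. (The extra $Q^{k'-1}$ factor in $\ell$ comes from tracking how the telescoping coefficients $\lambda_i$ blow up the $\mu_j$, and then the $\beta_j$, by at most a $Q^{k'-2}$-ish factor — I would fix the exact exponent when doing the calculation so it lands at $2k^{2k+1}Q^{k'-1}$.) This produces the forbidden $\ell$-relation and closes the argument.

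Finally, I would package the count: in the dependent case we bounded the bad probability by $1$ (or really, the contrapositive rules this case out entirely under the hypothesis), so only the independent case contributes, giving $\Pr_{h}[h(x_1)=\dots=h(x_{k'})]\le p^{-(k'-1)}+\tfrac{1}{2p^{k-1}}$. To match the stated bound $p^{-(k'-1)}\cdot 2(k+1)^2(2k+1)^{m-1}$ exactly, I suspect the intended proof is slightly different in flavor — probably it does \emph{not} first prove linear independence but instead directly invokes a slightly lossy version of almost-uniformity over a possibly-dependent system, picking a maximal independent sub-collection (of size $r\le k'-1$) and summing a union bound over the $\le p^{k'-1-r}$ fibers, each of measure $\le p^{-r}+\tfrac{1}{2p^{k-1}}$; the combinatorial factors $(k+1)^2$ and $(2k+1)^{m-1}$ then arise from bounding the number of relation-patterns ($(2k+1)^{m-1}$ coming from the $m-1$ coordinates each contributing a bounded coefficient, and $(k+1)^2$ from choosing the pivot structure). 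The main obstacle, in any case, is the pigeonhole/CRT step converting an $\F_p$-linear dependence among the $w_{x_j}$ into a genuinely small integer $\ell$-relation among the $x_j$; everything else is bookkeeping with the almost-uniformity bound of \cref{lem:dethash}.
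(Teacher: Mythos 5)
Your overall strategy is genuinely different from the paper's, and cleaner in principle: you test $h(x_1)=\dots=h(x_{k'})$ directly as $\langle c,v_i\rangle=0$ for $v_i=w_{x_i}-w_{x_{i+1}}$ and invoke almost-uniformity (\cref{item:dethashalmostuniform} of \cref{lem:dethash}) on the $v_i$, whereas the paper reduces via almost-linearity to membership of $h(x_1-x_2),\dots,h(x_{k'-1}-x_{k'})$ in $-\Delta_h$ and then invokes \cref{lem:deltahalmostkwiseindep}, paying the $(2k+1)^{m-1}$ union-bound factor over $\Delta_h$-shifts. If your route went through, it would give the sharper bound $2p^{-(k'-1)}$ — you speculate (incorrectly) that the paper must be doing something lossier, but the loss in the paper's bound is simply the price of routing through $\Delta_h$, and matching it exactly is not required.

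The genuine gap is the pigeonhole step meant to prove the contrapositive lemma (``$\F_p$-dependence of the $v_i$ forces a short integer relation among the $x_j$''). As written, this step is incorrect on two counts. First, the counting is backwards: the set of integer tuples $(\beta_1,\dots,\beta_{k'})\in[-k^{2k},k^{2k}]^{k'}$ with $\sum\beta_j=0$ has size $(2k^{2k}+1)^{k'-1}=O(1)$, a constant depending only on $k$, whereas the space of residue patterns $(\sum_j\beta_jx_j\bmod q_i)_i$ has size $\prod_i q_i \geq N(\log N)^{2k}$, which grows with $N$; there are far too few tuples for any collision, so the pigeonhole has nothing to bite on. Second, and more fundamentally, the pigeonhole as stated never uses the hypothesis that $\sum_i\lambda_i v_i=0$ in $\F_p^m$. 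It would therefore manufacture a short relation for \emph{every} choice of $x_1,\dots,x_{k'}$, directly contradicting the hypothesis of the lemma; this is a sign the argument cannot be right regardless of the counting. What is needed instead is the determinant/cofactor argument from the paper's proof of \cref{lem:deltahalmostkwiseindep}: since the $v_i$ have small integer entries (bounded by $Q$) and $p>\log^{2k}N$ dominates every $(k'-1)\times(k'-1)$ minor, $\F_p$-dependence forces all such minors to vanish over $\Z$, and cofactor expansion along a bordered minor produces integer coefficients $\lambda_i$ with $|\lambda_i|\le k!Q^{k'-2}$; telescoping to $\mu_j=\lambda_{j-1}-\lambda_j$ yields $\sum_j\mu_j=0$, and reduction modulo each $q_i$ together with $\prod_i q_i > |\sum_j\mu_j x_j|$ forces $\sum_j\mu_j x_j=0$ over $\Z$. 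That is the missing idea. With it in place, your direct approach closes (and gives a slightly stronger bound than stated), but without it the proposal does not constitute a proof.
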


We actually deduce \cref{lem:almostkwiseindep} from the following lemma, which will also be useful in our applications.
\begin{lemma}
   \label{lem:deltahalmostkwiseindep}
Let $\caH$ be the hash family from \cref{def:hashfamilyh} (with fixed integer parameter $k$).
Let $k'\le k-1$, and integers $x_1,x_2,\dots,x_{k'} \in \{-N,\dots,N\}$ be such that $x_1,\dots,x_{k'},0$ have no  $(k'+1)$-term 
$k^{2k+1}Q^{k'-1}$-relations.
Then,
\[ \Pr_{h\in \caH}\Big [h(x_i)\in \bigcup_{0\le k_1,k_2\le k}(k_1\Delta_h-k_2\Delta_h) \text{ for all }1\le i \le k'\Big ] \le p^{-k'}\cdot 2(k+1)^2(2k+1)^{m-1}=p^{-k'} N^{o(1)}.  \]
\end{lemma}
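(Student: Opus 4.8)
The plan is to analyze the probability in terms of the random vector $c \in \caC$ drawn from \cref{lem:dethash}, using the Fourier/almost-$k$-wise-independence structure. Recall each $h(x_i) = \langle c, w_i \rangle$ where $w_i = (x_i \bmod q_1, \dots, x_i \bmod q_{m-1}, 1) \in \F_p^m$, and that each element of $k_1\Delta_h - k_2\Delta_h$ has the shape $\langle c, v\rangle$ for a vector $v \in \F_p^m$ whose entries are bounded: specifically, unfolding the definition $\Delta_h = \{0,-q_1 c_1\} + \cdots + \{0,-q_{m-1}c_{m-1}\} + c_m$, an element of $k_1\Delta_h - k_2\Delta_h$ has the form $\langle c, u\rangle$ where the first $m-1$ coordinates of $u$ are in $\{-q_j(k_1+k_2), \dots, 0, \dots, q_j(k_1+k_2)\}$ (a multiple of $q_j$ of absolute value $\le (k_1+k_2)q_j \le 2kQ$) and the last coordinate is $k_1 - k_2 \in [-k,k]$. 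So the event "$h(x_i) \in \bigcup_{k_1,k_2} (k_1\Delta_h - k_2\Delta_h)$" is the event that there is a choice of such a bounded vector $u^{(i)}$ (depending on $c$ only through the indices $k_1,k_2$, i.e.\ $(k+1)^2$ choices) with $\langle c, w_i - u^{(i)}\rangle = 0$. By a union bound over the $((k+1)^2)^{k'}$ joint choices of $(k_1,k_2)$-pairs, it suffices to show that for each \emph{fixed} collection of bounded integer vectors $u^{(1)},\dots,u^{(k')}$, the probability that $\langle c, w_i - u^{(i)}\rangle = 0$ holds simultaneously for all $i$ is at most $p^{-k'} \cdot 2 (2k+1)^{m-1}/(k+1)^{2k'}$ or so, which after the union bound gives the claimed $p^{-k'} \cdot 2(k+1)^2 (2k+1)^{m-1}$ (absorbing constants, using that $k'\le k$).

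The heart of the argument is therefore: \textbf{the vectors $y_i := w_i - u^{(i)} \in \F_p^m$ are linearly independent over $\F_p$} (for any admissible choice of the bounded vectors $u^{(i)}$), given the no-$\ell$-relation hypothesis on $x_1,\dots,x_{k'},0$. Once that is established, \cref{item:dethashalmostuniform} of \cref{lem:dethash} says the tuple $(\langle c,y_1\rangle,\dots,\langle c,y_{k'}\rangle)$ is $\tfrac{1}{2p^{k-1}}$-almost uniform over $\F_p^{k'}$, so $\Pr[\langle c,y_i\rangle = 0\ \forall i] \le p^{-k'} + \tfrac{1}{2p^{k-1}} \le 2p^{-k'}$ (using $k' \le k-1$), which is already within the target bound before even worrying about the $(2k+1)^{m-1}$ slack — in fact we have room to spare, and the $(2k+1)^{m-1}$ factor is there to handle the case where the $y_i$ are \emph{not} independent but we can still argue via a sub-collection (see below). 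To prove linear independence, suppose $\sum_i \lambda_i y_i = 0$ in $\F_p^m$ with not all $\lambda_i$ zero; I would first argue we may take $\lambda_i$ to be integers in $[-(k^{?}),k^{?}]$ — but here is the subtlety: $\F_p$-linear combinations need not lift to small-integer combinations. The clean route is instead: reducing mod $p$ is not the issue since the coordinates of $y_i$ are genuine integers of absolute value $\le N + 2kQ < 2N$ (for the first $m-1$ coords, actually $\le Q + 2kQ$) and $\le k+1$ (last coord), all far below $p$; a nonzero $\F_p$-dependence among $\le k$ vectors with such tiny entries, by Cramer/minor considerations, yields a nonzero \emph{integer} dependence with coefficients bounded by (roughly) $k! \cdot (2kQ)^{k} \le k^{2k+1} Q^{k'-1}$-ish — this is exactly where the strange-looking bound $k^{2k+1}Q^{k'-1}$ in the hypothesis comes from. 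Looking at the last coordinate gives $\sum_i \lambda_i (k_1^{(i)} - k_2^{(i)} - 1) \cdot(\text{sign stuff}) = 0$, wait — more carefully, the last coordinate of $w_i$ is $1$ and of $u^{(i)}$ is $k_1^{(i)}-k_2^{(i)}$, so the last coordinate of $y_i$ is $1 - (k_1^{(i)}-k_2^{(i)})$; and for coordinate $j \le m-1$, the $j$-th entry of $y_i$ is $(x_i \bmod q_j) - (\text{multiple of } q_j)$, which reduces mod $q_j$ to $x_i \bmod q_j$. So a small-integer dependence $\sum_i \lambda_i y_i = 0$ forces, modulo each $q_j$, that $\sum_i \lambda_i x_i \equiv 0 \pmod {q_j}$; and CRT over all $j$ (since $\prod q_j > N (\log N)^{2k} \gg |\sum_i \lambda_i x_i|$, because $|\lambda_i| \le k^{2k+1}Q^{k'-1} \le N^{o(1)}$ and $|x_i| \le N$) forces $\sum_i \lambda_i x_i = 0$ over $\Z$; and the last-coordinate equation, after absorbing the $k_1-k_2$ contributions into a coefficient for the extra element $0$, gives a genuine $(k'+1)$-term relation on $x_1,\dots,x_{k'},0$ with bounded coefficients — contradicting the hypothesis, \emph{provided} not all $\lambda_i$ vanish, which propagates because if all $\lambda_i = 0$ the dependence was trivial.

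There is one gap in the above: when we only know the $y_i$ are linearly independent we get the bound directly, but to cover all cases cleanly I would actually \emph{not} prove full independence; instead, I would observe that if $\sum_i \lambda_i y_i = 0$ is a nonzero $\F_p$-dependence then (lifting to small integers as above) we derive an $\ell$-relation, contradiction — so independence always holds, and the $(2k+1)^{m-1}$ factor is genuinely slack we carry only to make the union bound over the $(k+1)^2$ choices per coordinate clean (the $(2k+1)^{m-1}$ should really be read as the number of distinct vectors $u^{(i)}$ can be, per $i$: each of the $m-1$ "body" coordinates of $u^{(i)}$ ranges over $\{-k,\dots,k\}$ scalings... hence $(2k+1)^{m-1}$ per index, and the union bound over all $((2k+1)^{m-1})^{k'} \cdot (k+1)^{2k'}$... ). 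The main obstacle I anticipate is getting the constants in the coefficient bound to line up \emph{exactly} with $k^{2k+1}Q^{k'-1}$ — tracking the dependence on $k'$ versus $k$ in the Cramer bound and making sure the CRT step's product $\prod_{j} q_j \ge N(\log N)^{2k}$ genuinely dominates $\sum_i |\lambda_i| |x_i| \le k \cdot k^{2k+1} Q^{k'-1} \cdot N$, which needs $Q^{k'-1} = (100\log N)^{k'-1} \le (\log N)^{O(k)} \ll (\log N)^{2k}$, true for $k' \le k$ and $N$ large — so this works but the bookkeeping is the delicate part, and I would be careful to write the union bound over $(k_1,k_2)$-choices before invoking almost-uniformity rather than after.

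Then \textbf{\cref{lem:almostkwiseindep} follows} by noting that the event $h(x_1) = \cdots = h(x_{k'})$ is equivalent to $h(x_i - x_{k'}) \in \Delta_h - \Delta_h$ (one-step, i.e.\ $k_1 = k_2 = 1$) for $i = 1,\dots,k'-1$ by almost-linearity (\cref{lem:almostlinearity}): indeed $h(x_i) - h(x_{k'}) - h(x_i - x_{k'}) \in \Delta_h$ and $h(x_{k'}) + h(-x_{k'}) - h(0) \in \Delta_h$ with $h(0) = \langle c, (0,\dots,0,1)\rangle$, so $h(x_i) = h(x_{k'})$ puts $h(x_i - x_{k'})$ into a bounded sumset of $\Delta_h$'s — this is the $k_1,k_2 \le k$ window with room to spare. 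Applying \cref{lem:deltahalmostkwiseindep} to the $k'-1$ integers $x_1 - x_{k'}, \dots, x_{k'-1} - x_{k'}$ (need: these together with $0$ have no $k^{2k+1}Q^{(k'-1)-1}$-relation, which follows from $x_1,\dots,x_{k'}$ having no $2k^{2k+1}Q^{k'-1}$-relation since a relation among the differences unfolds into one among the originals with at most doubled coefficients) yields the bound $p^{-(k'-1)} \cdot 2(k+1)^2(2k+1)^{m-1}$, as claimed.
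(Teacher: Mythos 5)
Your proposal takes essentially the same route as the paper: union-bound over the choices of $(k_1,k_2)$ and the internal coordinates of $k_1\Delta_h - k_2\Delta_h$, prove $\F_p$-linear independence of the resulting vectors for each fixed choice, invoke \cref{item:dethashalmostuniform} of \cref{lem:dethash}, and reduce \cref{lem:almostkwiseindep} to differences. The one place where your sketch leaves a nontrivial step implicit is the ``Cramer/minor considerations'' converting a nonzero $\F_p$-dependence among the vectors $y_j$ into a \emph{nonzero} small-integer dependence: the paper does this not by ``lifting'' the $\F_p$-coefficients $\lambda_i$ (which indeed cannot be done directly), but by selecting an $r\times r$ nonsingular minor $U[R,C]$ with $r<k'$, observing that the $(r+1)\times(r+1)$ determinant of $V$ is $\equiv 0\pmod p$ and of absolute value $<p$ hence zero over $\Z$, and reading off the coefficients $\beta_j$ from the cofactor expansion --- crucially with $\beta_c=(-1)^c\det U[R,C]\neq 0$; this is exactly what gives the $r!((k+1)Q)^r\le k^{2k}Q^{k'-1}$ coefficient bound with $r<k'$ rather than your looser $k!(2kQ)^k$ estimate. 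One small discrepancy worth noting: you union-bound over $k'$-tuples of witnesses, one per index $j$, yielding a factor $((k+1)^2(2k+1)^{m-1})^{k'}$, whereas the paper fixes a single $(k_1,k_2,x_1,\dots,x_{m-1})$ shared across all $j$ and states the factor $(k+1)^2(2k+1)^{m-1}$ without the $k'$ power --- your accounting is the one the conjunction-of-unions structure actually requires, and both are $N^{o(1)}$ since $k'\le k-1$ is constant, so the qualitative bound is unaffected.
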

\begin{proof}[Proof of \cref{lem:deltahalmostkwiseindep}]
Suppose $h\in \caH$ is specified by $c\in \F_p^m$.
By the definition of $\Delta_h$ in \cref{lem:almostlinearity}, we know 
\[ k_1\Delta_h- k_2\Delta_h = \Big \{ (k_1-k_2)c_m + \sum_{i=1}^{m-1} x_i q_i c_i\, :\, x_i \in \{-k_1,\dots,k_2\}\text{ for all }i \Big \} \subseteq \F_p. \]

   For $1\le j\le k'$, let $w_j:= (x_j \bmod q_1,\dots,x_j \bmod q_{m-1},1) \in \Z^m$, and let $\tilde w_j\in \F_p^m$ be $w_j$ with each coordinate viewed as an element in $\F_p$. 
   Recall from \cref{def:hashfamilyh} that $h(x_j) =  \langle c, \tilde w_j\rangle$.
Then, $h(x_j)\in k_1\Delta_h- k_2\Delta_h$ if and only if 
\begin{equation}
\label{eqn:temphashtozero}
  (1+k_2-k_1)c_m + \sum_{i=1}^{m-1} (\tilde w_j - x_iq_i)c_i =0 \in \F_p. 
\end{equation}
for some $0\le k_1,k_2\le k$ and $(x_1,\dots,x_{m-1})\in \{-k_1,\dots,k_2\}^{m-1}$.
Now we fix $k_1,k_2$ and $(x_1,\dots,x_{m-1})$, and later we will apply a union bound over all such possibilities (there are at most $(k+1)^2\cdot (2k+1)^{m-1}$ of them).

For each $1\le j\le k'$, define $u_j:= (w_j-x_1q_1, w_j-x_2q_2,\dots,w_j-x_{m-1}q_{m-1}, 1+k_2-k_1) \in \Z^m$, and let 
$\tilde u_j\in \F_p^m$ be $u_j$ with each coordinate viewed as an element in $\F_p$. Then \cref{eqn:temphashtozero} is equivalent to $\langle c, \tilde u_j\rangle = 0\in \F_p$.

   In the following we will prove that the vectors $\tilde u_1,\dots,\tilde u_{k'}$ are linearly independent. Once this is proved,  \cref{item:dethashalmostuniform} of \cref{lem:dethash}  would imply that the distribution of $(\langle c, \tilde u_1\rangle,\dots,\langle c, \tilde u_m\rangle)$ is $1/(2p^{k-1})$-almost uniform, and in particular they are all zeros with probability at most $p^{-k'} + 1/(2p^{k-1}) \le 2p^{-k'}$. So, by a union bound over all $\le (k+1)^2\cdot (2k+1)^{m-1}$ many possibilities of $k_1,k_2,(x_1,\dots,x_m)$, we get the claimed probability upper bound in \cref{lem:deltahalmostkwiseindep}.

To prove $\tilde u_1,\dots,\tilde u_{k'}$ are linearly independent, we suppose for contradiction that  they are linearly dependent.
   Define matrices $U:=\begin{pmatrix}
      | & | & & | \\
      u_1 & u_2 & \cdots & u_{k'} \\
      | & | & & |
      \end{pmatrix}  \in \Z^{m\times k'}$ and
      $\tilde U:=\begin{pmatrix}
      | & | & & | \\
      \tilde u_1 & \tilde u_2 & \cdots & \tilde u_{k'} \\
      | & | & & |
      \end{pmatrix}  \in \F_p^{m\times k'}.$
    Let $r<k'$ be the rank of $\tilde U$.
  We pick a subset $R$ of $r$ rows and a subset $C$ of $r$ columns such that the submatrix $\tilde U[R,C]$ has rank $r$. Since $r<k'$, we can pick another column $c\in \{1,2,\dots,k'\}\setminus C$.
  
  Now, we fix an arbitrary row $i\in \{1,2,\dots,m\}$. Observe that the $(r+1)\times (r+1)$ matrix
  $\tilde V:= \begin{pmatrix}
  \tilde U[R,C] & \tilde U[R,c]\\
  \tilde U[i,C] & \tilde U[i,c]
  \end{pmatrix}$
  must have rank $r$. This is because, if 
  $i\notin R$, then $\tilde V$ is a submatrix of $\tilde U$, and cannot have rank larger than $r$. Otherwise, $i\in R$, so the last row of $\tilde V$ repeats one of the previous $r$ rows. 
  Hence, $\det \tilde V = 0 \in \F_p$, which means the integer matrix 
  $ V:= \begin{pmatrix}
   U[R,C] &  U[R,c]\\
   U[i,C] &  U[i,c]
  \end{pmatrix}$
  has determinant congruent to $0$ modulo $p$. 
  Since each entry in $V$ is an integer bounded by $(k+1)Q$ (by the definition of $u_j$), $\det V$ is an integer bounded by $(r+1)! ((k+1)Q)^{r+1}\le (k+1)^{2k}Q^k = O((\log N)^k)$, which is smaller than 
  $p \ge \log^{2k}{N}$ assumed in \cref{def:hashfamilyh}. Hence, from $\det V\equiv 0\pmod{p}$ we have $\det V = 0$.
   By expanding $\det V$ along the last row, we have (for notational convenience, assume $C \cup\{c\} = \{1,2,\dots,r+1\}$ without loss of generality)
  \begin{equation}
     \label{eqn:relationj}
   \sum_{j\in C\cup \{c\}} \underbrace{(-1)^j\det U[R,C\cup \{c\} \setminus \{j\}]}_{=:\beta_j \in \Z} \cdot U[i,j]  = 0.
  \end{equation}
  Here, 
   \[|\beta_j| \le r!((k+1)Q)^r\le k^{2k}Q^{k'-1},\] and 
  \[\beta_c = (-1)^c \det U[R,C]\neq 0\] since $\tilde U[R,C]$ is full-rank and so is $U[R,C]$.
  We let $\beta_j =0$ for $j\notin C\cup \{c\}$.
  Since row $i$ is arbitrary, \cref{eqn:relationj} holds for all $1\le i\le m$.  
  For $i\le m-1$, recall $U[i,j] = w_j - x_iq_i  \equiv x_j \pmod{q_i}$, so \cref{eqn:relationj} implies 
  $ \sum_{1\le j \le k'} \beta_j x_j\equiv 0 \pmod{q_i}.$
  Since \[ \Big \lvert \sum_{1\le j\le k'}\beta_jx_j \Big \rvert \le k'\max_j |\beta_j| \cdot N \le k' k^{2k}Q^{k'-1}  N = O(N\log^{k'-1} N),\] 
 which by \cref{eqn:largeproduct} gives the upper bound $\lvert \sum_{1\le j\le k'}\beta_jx_j  \rvert{<} q_1q_2\cdots q_{m-1}$. Then,
 by Chinese Remainder Theorem with moduli $q_1,\dots,q_{m-1}$,  we have \[\sum_{1\le j \le k'} \beta_j x_j=0.\]
Set $\beta_{k'+1}=-\sum_{1\le j\le k'}\beta_j$, with $|\beta_{k'+1}|\le \sum_{1\le j\le k'}|\beta_j|\le  k^{2k+1}Q^{k'-1}$. Then, we have found a $(k'+1)$-term $k^{2k+1}Q^{k'-1}$-relation for $x_1,\dots,x_{k'},0$ with coefficients $\beta_1,\dots,\beta_{k'+1}$,  a contradiction to the assumption that such relations do not exist.
\end{proof}

\begin{proof}[Proof of \cref{lem:almostkwiseindep} using \cref{lem:deltahalmostkwiseindep}]
By almost linearity \cref{lem:almostlinearity} of $\caH$, we know $h(x_i)=h(x_j)$ implies $h(x_i-x_j)\in  -\Delta_h$. So $h(x_1)=\dots = h(x_{k'})$ implies $h(x_1-x_2),h(x_2-x_3),\dots,h(x_{k'-1}-x_{k'}) \in -\Delta_h$. The probability that this happens can be bounded using \cref{lem:deltahalmostkwiseindep} by at most 
$p^{-(k'-1)}\cdot 2(k+1)^2(2k+1)^{m-1}$, provided that we show $x_1-x_2,\dots,x_{k'-1}-x_{k'}, 0$ have no $k'$-term $k^{2k+1}Q^{k'-1}$-relations. To see this,  note that if there is such a relation, then it immediately implies a $k'$-term $2k^{2k+1}Q^{k'-1}$-relation on $x_1,x_2,\dots,x_{k'}$, contradicting our assumption.
\end{proof}

\section{Tool IV: Deterministic BSG Theorem in Subquadratic Time}
\label{sec:bsg}
In this section, we prove the following deterministic algorithmic version of the BSG theorem running in subquadratic time when $K$ is not too small.
    \begin{theorem}[Full version of \cref{thm:bsgintro}]
        \label{thm:derandbsg}
 Given set $A\subseteq [N], r\ge 2$ and $K\ge 1$ such that $\sfE(A)\ge |A|^3/K$, 
    we can deterministically find subsets $A',B' \subseteq A$ in $O((|A|^2/ K^{2r-4} + K^{13}|A|)N^{o(1)})$ time, such that 
    \begin{itemize}
        \item $|A'| \ge |A|/(64K)$,
        \item $|B'| \ge |A|/\big (K^{r+3}\cdot 2^{\frac{C\log N}{\sqrt{\log \log N}}}\big )$, and
            \item $|A'+B'|\le K^5|A|\cdot 2^{\frac{C\log N}{\sqrt{\log \log N}}}$,
    \end{itemize}
    for some absolute constant $C\ge 1$.
    \end{theorem}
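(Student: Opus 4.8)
The strategy is to avoid paying quadratic time for derandomizing dependent random choice on all of $A$: instead, one runs a slow but correct deterministic constructive BSG on a much smaller set $A_0 \subseteq A$ and then transfers the resulting structure back to $A$. Concretely I would proceed in three stages: (i) carve out $A_0 \subseteq A$ of size $\approx |A|/K^{r-2}$ that inherits the additive-energy density of $A$ up to an $N^{o(1)}$ factor; (ii) run a deterministic constructive BSG on $A_0$ in time $\tilde O(|A_0|^2) = (|A|^2/K^{2r-4})\cdot N^{o(1)}$; (iii) pull the small-doubling subset found inside $A_0$ back up to the claimed $A'$ together with an auxiliary set $B'$, and read off the quantitative bounds. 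All the exponents appearing in the statement ($64$, $13$, $5$, $r+3$) are pure bookkeeping and I would fix them only at the very end.

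\textbf{Stage (i): energy-preserving bucketing.} I would invoke the almost-additive hash family $\caH$ of \cref{def:hashfamilyh} with a prime $p = \Theta(\max\{K^{r-2}, \log^{2k} N\})$, so that a typical bucket $h^{-1}(v)$ meets $A$ in roughly $|A|/p$ points. Since $|\caH| = p^{O(1)}\cdot N^{o(1)}$, I can enumerate all $h \in \caH$ and all $v \in \F_p$; for each candidate $A_0 := A \cap h^{-1}(v)$ I estimate $\sfE(A_0)$ deterministically via approximate $4$SUM-counting (\cref{cor:approxenergy}) in time $\tilde O(\epsilon^{-1}|A_0|)\cdot N^{o(1)}$, and retain a bucket with $|A_0| = \Theta(|A|/p)$ and $\sfE(A_0) \ge |A_0|^3/(K\cdot N^{o(1)})$. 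Such a bucket exists because of two properties of $\caH$: almost-linearity (\cref{lem:almostlinearity}) forces, for any energy quadruple $a+b=c+d$ of $A$, all four hash values to lie in one coset of the tiny set $\Delta'_h$ (\cref{lem:hashquadlb}), so that mapping into a bucket does not break these quadruples by more than an $N^{o(1)}$ factor — crucially this means the condition ``$a,b,c,d$ all in bucket $v$'' costs only one $1/p$ hit plus an $N^{o(1)}$ refinement, not the $1/p^2$ a random subset would suffer; and ``almost'' almost-$k$-wise independence (\cref{lem:almostkwiseindep,lem:deltahalmostkwiseindep}) rules out pathologically over- or under-loaded buckets, so a pigeonhole/averaging argument over $v$ finds a bucket keeping a $\approx 1/p$ fraction of both $|A|$ and $\sfE(A)$.

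\textbf{Stages (ii)--(iii): slow BSG on $A_0$ and transfer.} On $A_0$ I would directly derandomize the standard constructive BSG proof: build the popular-difference bipartite graph on $A_0\times A_0$ — whose degrees are exactly $3$SUM-solution counts, estimated via \cref{thm:introdeterministicapprox3sum} — and replace the random pivot(s) of dependent random choice by exhaustive enumeration over the $O(|A_0|^{O(1)})$ possible pivots, keeping one that meets the guarantee of the probabilistic argument; using sparse convolution (\cref{thm:detnonnegsparsecovo}) and partial convolution (\cref{lem:nickcount}) to evaluate the relevant neighbourhood sizes and path counts keeps the cost per pivot near-linear in $|A_0|$ and the total at $|A_0|^2\cdot N^{o(1)} = (|A|^2/K^{2r-4})\cdot N^{o(1)}$. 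This yields $A_0'\subseteq A_0$ with $|A_0'|\ge |A_0|/O(K)$, $|A_0'+A_0'|\le K^{O(1)}|A_0|\cdot N^{o(1)}$, and path-richness linking $A_0'$ to a large fraction of $A_0$. For the transfer, I would set $A'$ (independently of the bucketing) to be the elements of $A$ lying in many popular pairs — a double count against $\sfE(A)\ge|A|^3/K$ gives $|A'|\ge|A|/(64K)$, computed in near-linear time by \cref{thm:introdeterministicapprox3sum} — and take $B'$ to be the subset of $A_0'$ richly path-connected to all of $A'$; the extra $K^{O(1)}$ loss from $A_0'$ to $B'$ together with the $\Theta(K^{r-2})$ loss from bucketing gives $|B'|\ge|A|/(K^{r+3}N^{o(1)})$. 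Path-richness then lets me write each $a_1+a_2$ ($a_1,a_2\in A'$) using a bounded number of popular differences plus one element of $B'+B'$, and a Plünnecke--Ruzsa estimate on these pieces yields $|A'+B'|\le K^5|A|\cdot N^{o(1)}$ (the Ruzsa triangle inequality then recovers the $|A'+A'|\le K^{16}|A|\cdot N^{o(1)}$ form of \cref{thm:bsgintro}).

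\textbf{Main obstacle.} I expect the crux to be Stage (i): forcing a \emph{single, deterministically chosen} bucket to keep a $1/p$ fraction of \emph{both} the size and the additive energy. A random affine hash does this in expectation, but a plain random subset loses a $\mathrm{poly}(K)$ factor of energy, so the almost-\emph{linearity} of $\caH$ is indispensable — and yet the averaging argument also needs genuine independence, which is in tension with linearity (no linear hash is even $3$-wise independent). Threading this needle is precisely the purpose of Tool III, and squeezing the quantitative bound ``$\sfE(A_0)\gtrsim|A_0|^3/(K\cdot N^{o(1)})$'' out of the $\ell$-relation-sensitive independence lemmas (\cref{lem:almostkwiseindep,lem:deltahalmostkwiseindep}) — which forces one to peel off and separately bound the few integers that participate in small $\ell$-relations — is where I anticipate the real technical work. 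A secondary pitfall is keeping the enumerated BSG on $A_0$ genuinely near-quadratic (rather than cubic) in $|A_0|$, which is why the path-count evaluations must be pushed through FFT/sparse-convolution rather than brute force.
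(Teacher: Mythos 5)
Your high-level plan (shrink $A$ to a much smaller $A_0$ via almost-linear hashing so that exhaustive derandomization of dependent random choice becomes affordable, then transfer the structure found in $A_0$ back to $A$) matches the paper's strategy, and you correctly identify the right tools (\cref{def:hashfamilyh}, \cref{cor:approxenergy}, and a Schoen-style pivot enumeration). But there are two genuine gaps in how you carry it out.

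First, your Stage~(i) uses a \emph{single} hash with prime $p = \Theta(K^{r-2})$. The family $\caH$ of \cref{def:hashfamilyh} has size $|\caH| = \Theta(m^2 p^{3k})$ with $k=4$, i.e.\ $p^{12+o(1)}$; enumerating all $h\in\caH$ and estimating bucket energies costs roughly $p^{12}\cdot K|A|\cdot N^{o(1)} = K^{12(r-2)+1}|A|\cdot N^{o(1)}$, which for $r>3$ blows past the claimed $K^{13}|A|\cdot N^{o(1)}$ term. The paper avoids this by \emph{iterating} \cref{lemma:find-subset-highenergy-lemma} roughly $\sqrt{\log\log N}$ times with a small prime $p = K\cdot N^{o(1)}$ at each step, so that $|\caH| = K^{12}N^{o(1)}$ throughout and each iteration loses only an $N^{o(1)}$ factor in the energy density; the cumulative loss stays $N^{o(1)}$ while the running time stays $K^{13}|A|\cdot N^{o(1)}$. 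This iteration is what \cref{thm:find-subset-highenergy} packages.

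Second, your Stage~(iii) defines $A'$ ``independently of the bucketing'' as the elements of $A$ in many popular pairs, and then requires $B'\subseteq A_0'$ to be ``richly path-connected to all of $A'$.'' But hash-bucketing preserves additive energy, not popularity: there is no guarantee that the chosen bucket (and hence $A_0'$, and hence your candidate $B'$) intersects the popular part of $A$ at all, in which case such a $B'$ would be empty. The paper's proof inserts a crucial preprocessing step before bucketing: it first restricts to $B_1\subseteq A$, the set of elements that participate in many \emph{popular sums} of $A$ (with respect to the set $S$ computed via \cref{thm:introdeterministicapprox3sum}), shows $|B_1|\geq\Theta(|A|/K)$ and $\sfE(B_1)\geq|B_1|^3/O(K)$, and only then applies \cref{thm:find-subset-highenergy} to $B_1$. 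Because the resulting Schoen set $X$ lies inside $B_1$, every $x\in X$ satisfies $(1_S\conv 1_{-A})[x]\geq|A|/(32K)$, which is exactly what makes the double count $\sum_{a\in A}(1_S\conv 1_{-X})[a]\geq|X||A|/(32K)$ go through and yields $|A'|\geq|A|/(64K)$ for $A':=\{a\in A:(1_S\conv 1_{-X})[a]\geq|X|/(128K)\}$. Without this $B_1$ restriction your averaging argument for $|A'|$ is fine, but the existence of a nonempty $B'$ connecting to that $A'$ is not. (A minor further caveat: the Schoen-style pivot enumeration as in \cref{cor:schoenalgo} costs $O(|A_0|^2 K^2 c^{-2} N^{o(1)})$, not $\tilde O(|A_0|^2)$; the paper compensates by taking $|A_0|$ a bit smaller, $R=|A|/(3K^r)$ rather than your $|A|/K^{r-2}$, so the final time bound still comes out right.)
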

    Note that Ruzsa triangle inequality implies $|A'+A'| \le \frac{|A'+B'|^2}{|B'|} \le K^{13+r}|A| \cdot 2^{\frac{3C\log N}{\sqrt{\log \log N}}}$.

\subsection{Finding a small high-energy subset}
\label{sec:findsmallsubset}
The goal of this subsection is to prove the following theorem about finding a subset of appropriate size and high additive energy. 
This theorem will be used later for proving our derandomized subquadratic-time BSG theorem.

\begin{theorem}[Finding a small high-energy subset]
 Given set $A \subseteq [N]$, size parameter $1\le R \le |A|$, and $1\le   K\le R^{1/3}$ such that $\sfE(A)\ge |A|^3/K$ holds, in $O(  K^{13} |A|N^{o(1)})$ time we can deterministically find a subset $A'\subseteq A$ 
 such that
  \[ \sfE(A') \ge  \frac{|A'|^3}{  K}\cdot 2^{-\frac{C\log  N}{\sqrt{\log \log  N}}},\]
    and 
   \[\frac{R}{  K^2} \le |A'|\le R\cdot 2^{\frac{C\log  N}{\sqrt{\log \log  N}}},\] 
   where $C\ge 1$ is some absolute constant.
\label{thm:find-subset-highenergy}
\end{theorem}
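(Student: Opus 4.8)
The plan is to combine a bucketing/pigeonhole argument via the almost-additive hash family $\caH$ from \cref{def:hashfamilyh} with the approximate $4$SUM counting from \cref{cor:approxenergy} (or \cref{lem:approx4sumcount}). At a high level: we want a subset $A_0 \subseteq A$ of size roughly $R/K^2$ up to $R$ that inherits (up to $N^{o(1)}$ losses) the normalized additive energy $\sfE(A)/|A|^3 \ge 1/K$. Uniform subsampling would destroy too much energy (a random subset of density $\rho$ keeps only a $\rho$ fraction of the energy in expectation, i.e. $\sfE(\rho A) \approx \rho^4 |A|^3 \cdot \frac 1K = \frac{1}{K}(\rho|A|)^3 \cdot \rho$, losing an extra $\rho$ factor), so instead I would hash $A$ into roughly $t := |A|/R$ buckets using a function $h \in \caH$ with a suitable prime $p = \Theta(t \cdot \polylog N)$, and pick the best bucket $A_0 = h^{-1}(v) \cap A$. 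The key point is that almost-linearity of $\caH$ (\cref{lem:almostlinearity}) means that a $4$SUM solution $a+b=c+d$ in $A$ gets mapped to a bucket-quadruple $(h(a),h(b),h(c),h(d))$ whose coordinates satisfy an approximate additive relation ($h(a)+h(b)-h(c)-h(d) \in \Delta_h - \Delta_h + \Delta_h - \Delta_h$, a set of size $N^{o(1)}$); in particular, conditioned on $h(a),h(b),h(c)$ all landing in the same bucket, $h(d)$ is forced into one of only $N^{o(1)}$ buckets. This is exactly the phenomenon exploited in \cref{lem:hashquadlb}, and it is what makes a single bucket retain an $\Omega(1/(t^3 \cdot N^{o(1)}))$ (rather than $1/t^4$) fraction of the $4$SUM solutions.

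Concretely, the steps I would carry out are: (i) Choose $t$ so that $R/K^2 \lesssim |A|/t \lesssim R$; pick the prime $p$ for $\caH$ to be $\Theta(t\,\polylog N)$ and enumerate all $h \in \caH$ (possible since $|\caH| = p^{O(k)} \cdot N^{o(1)} = N^{o(1)}$ when $t \le N$, and in the relevant regime $t$ is polynomially bounded). (ii) For each $h$ and each target bucket-value $v \in \F_p$, consider $A_v := \{a \in A : h(a) = v\}$; I want to find a pair $(h,v)$ with both $|A_v|$ in the right range and $\sfE(A_v)$ large. To control energy I would use \cref{lem:approx4sumcount} to estimate, for each $v$, the quantity $|\{(a,b,c,d)\in A_v^4 : a+b=c+d\}|$ up to additive error $\varepsilon |A_v|^3$ with $\varepsilon$ a small inverse-polylog quantity, and similarly estimate $|A_v|$ exactly by bucketing. (iii) Lower-bound $\sum_v \sfE(A_v)$ from below: each of the $\sfE(A) \ge |A|^3/K$ quadruples $(a,b,c,d)$ with $a+b=c+d$ contributes to $\sfE(A_v)$ whenever $h(a)=h(b)=h(c)=h(d)=v$; by the almost-$k$-wise-independence of $\caH$ on tuples without short $\ell$-relations (\cref{lem:almostkwiseindep}), "most" such quadruples have $h(a)=h(b)=h(c)$ (hence $=h(d)$, by almost-linearity and $a+b=c+d$) with probability $\approx p^{-2} \cdot N^{-o(1)} = t^{-2} N^{-o(1)}$ over the choice of $h$. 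Hence $\Ex_h \sum_v \sfE(A_v) \ge \frac{|A|^3}{K} \cdot t^{-2} \cdot N^{-o(1)}$, so some $h \in \caH$ achieves this, and for that $h$, since $\sum_v |A_v| = |A|$, an averaging argument (choosing $v$ proportional to, say, $\sfE(A_v)$) yields a bucket with $\sfE(A_v) \ge \frac{|A_v|^3}{K} N^{-o(1)}$ and $|A_v| = \Theta(|A|/t)$. (iv) Handle the tuples that do carry a short $\ell$-relation separately — these are a lower-order term: an element $a$ can participate in only $O(N^{o(1)})$ such "structured" relations with other elements of a bucket given the relation coefficients are bounded, so their total contribution to $\sfE(A)$ is $O(|A|^2 N^{o(1)})$, negligible compared to $|A|^3/K$ in the regime $K \le R^{1/3} \le |A|^{1/3}$ (I would need to double-check this bookkeeping carefully). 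Adjusting $t$ by a factor of at most $K^2 \cdot N^{o(1)}$ to land $|A'|$ in the stated window $[R/K^2, R\cdot 2^{C\log N/\sqrt{\log\log N}}]$ finishes the argument.

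The running time is dominated by enumerating the $N^{o(1)}$ hash functions and, for each, running \cref{lem:approx4sumcount} with $\varepsilon^{-1} = \poly(K)\cdot N^{o(1)}$ on the buckets, which costs $\varepsilon^{-1}|A| \cdot N^{o(1)} = K^{O(1)}|A| \cdot N^{o(1)}$; the claimed $K^{13}|A| N^{o(1)}$ bound should come out of tracking the polynomial losses in $K$ through the bucketing (each of the three "lost" hash coordinates costs a factor of roughly $K$ in the bucket count, and the energy estimation precision must beat $1/K$, giving a handful more $K$ factors). The main obstacle I anticipate is step (iii)–(iv): making the "most quadruples have all three hash values equal" claim rigorous requires carefully separating the $4$SUM solutions of $A$ into those with a short additive relation among $\{a,b,c\}$ (where almost-$k$-wise independence fails, by the footnote after \cref{lem:almostkwiseindep}) and those without, bounding the former combinatorially, and applying \cref{lem:almostkwiseindep} (with $k=3$ or $4$) to the latter to get the $\approx p^{-2}$ collision probability — and then verifying that $\Delta_h$-type slack in the almost-linearity does not blow up the number of "accidental" buckets beyond $N^{o(1)}$. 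A secondary subtlety is that we must also ensure $|A'|$ is not too \emph{small}: the bucket maximizing energy could a priori be tiny, so the averaging in step (iii) should be done with weights that simultaneously control size and energy (e.g. restrict to buckets with $|A_v| \ge |A|/(2t)$ first, using that at most half the mass can lie in small buckets, then average energy over those).
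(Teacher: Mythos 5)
Your high-level strategy — hash $A$ with an almost-additive $h\in\caH$, argue that Sidon $4$-tuples collapse into a small number of buckets, find a bucket with the right size and high energy via enumeration plus approximate $4$SUM counting — is the same as the paper's. However, there are three concrete gaps that break the proposal as stated.

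First, you work with single buckets $A_v = h^{-1}(v)\cap A$ and claim that $h(a)=h(b)=h(c)$ together with $a+b=c+d$ forces $h(d)=h(a)$ "by almost-linearity." This is false: almost-linearity only gives $h(a)+h(b)-h(c)-h(d) \in 2\Delta_h - 2\Delta_h$, so the four hash values are constrained to lie in a small $\Delta'_h$-window of each other, not to coincide. Because of this the paper never works with singleton buckets; it defines $G_i = \bigcup_{d\in\Delta'_h}h^{-1}(i+d)$ (a union of $|\Delta'_h| = N^{o(1)}$ shifted buckets) and proves \cref{lem:hashquadlb}: with probability $\ge 1/(2p^2)$, $\{h(a),h(b),h(c),h(d)\}\subseteq i+\Delta'_h$ for some $i$. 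All subsequent moment computations (\cref{lem:expboundshash}) are carried out for $A\cap G_i$, not $A\cap h^{-1}(v)$. Without this you lose the energy lower bound entirely.

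Second, the running time claim is wrong. You pick $p=\Theta(t\polylog N)$ with $t=|A|/R$, which can be polynomially large in $|A|$, and then assert $|\caH| = p^{O(k)}N^{o(1)} = N^{o(1)}$ because "$t\le N$." But $p^{O(k)}$ is polynomially large, not $N^{o(1)}$, when $p$ is polynomial; enumerating $\caH$ would then cost $p^{\Omega(1)}$ which is not $O(K^{13}|A|N^{o(1)})$ in general (e.g. $R=K^3$ gives $(|A|/K^3)^{\Omega(1)}$). The paper avoids this by using a small prime $p = \Theta(K\cdot 2^{2m\sqrt{\log m}}) = K\cdot N^{o(1)}$, so $|\caH| = K^{12}N^{o(1)}$, and then \emph{iterating} the size reduction (\cref{lemma:find-subset-highenergy-lemma}) $O(\sqrt{\log\log N})$ times until the size reaches the target window $[R/K^2, R\cdot 2^{C\log N/\sqrt{\log\log N}}]$. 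The iteration is not optional for the complexity; your one-shot version does not have the claimed running time.

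Third, you invoke the "almost $k$-wise independence" \cref{lem:almostkwiseindep} to lower-bound the probability of a triple collision, but that lemma only gives \emph{upper} bounds on collision probabilities (it is used in the paper to bound the moments $\sum_i |A\cap G_i|^{k'}$ from \emph{above}). The lower bound you need for the energy term comes from a different source: \cref{item:dethashlb} of \cref{lem:dethash} (the $\ge\frac{1}{2p^{k'}}$ bound), which is what \cref{lem:hashquadlb} is built on. Relatedly, a sound version of your step (iii)–(iv) requires the linear-combination trick of \cref{cor:existsgoodset} — subtracting scaled moment upper bounds from the energy lower bound and observing positivity — to get a bucket that simultaneously has the right size (bounded above and below) and high energy; "restrict to large buckets then average over energy" is closer, but you have not verified that the $\Delta'_h$-overcounting ($\sum_i |A\cap G_i| \le |\Delta'_h||A|$, not $|A|$) and the short-$\ell$-relation outliers don't spoil the averaging, which is exactly what the explicit moment bounds of \cref{lem:expboundshash} are for.
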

\cref{thm:find-subset-highenergy} follows from iterating the following lemma.
\begin{lemma}\label{lemma:find-subset-highenergy-lemma}
   Assume $N\ge N_0$ for some absolute constant $N_0>1$.
Given set $A \subseteq [N]$ and $1\le K\le |A|^{1/3}\cdot  2^{-\frac{3\log  N}{\sqrt{\log  \log  N}}}$ such that $\sfE(A)\ge |A|^3/K$ holds, in $O(K^{13} |A|N^{o(1)})$ time we can deterministically find a subset $A'\subseteq A$  such that
  \[ \sfE(A') \ge  \frac{|A'|^3}{  K}\cdot 2^{-\frac{700\log  N}{\log \log  N}}\]
  and 
   \[\frac{|A|}{  K^2}\cdot 2^{-\frac{3\log  N}{\sqrt{\log  \log  N}}} \le |A'| \le  |A|\cdot 2^{-\frac{\log  N}{\sqrt{\log  \log  N}}}.\] 
\end{lemma}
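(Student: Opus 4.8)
The plan is to derandomize the natural randomized proof --- where one would pass to a uniformly random subset of the target size, keeping the energy-to-size-cubed ratio roughly unchanged --- by instead pseudorandomly partitioning $A$ through the almost-additive hash family $\caH$ of \cref{def:hashfamilyh} and then singling out one bucket that has the right size and still has energy density about $1/K$. Using $\caH$ rather than a plain random restriction is what keeps the per-step energy loss as small as $2^{-O(\log N/\log\log N)}$, which is crucial because \cref{thm:find-subset-highenergy} will iterate this lemma $\Theta(\sqrt{\log\log N})$ times.

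\emph{Setup and energy preservation.} Fix a constant $k\ge 3$ and a prime $p$ of size $\Theta(2^{\log N/\sqrt{\log\log N}}\cdot K^{O(1)})$ --- small enough that $|\caH|\le p^{O(1)}N^{o(1)}=K^{O(1)}N^{o(1)}$ can be enumerated exhaustively, and large enough that the buckets shrink by at least the required factor $2^{\log N/\sqrt{\log\log N}}$. For $h\in\caH$ and $i\in\F_p$ consider the cluster $A^{(i)}_h := \{a\in A : h(a)\in i+\Delta'_h\}$, where $\Delta'_h$ is the $N^{o(1)}$-size set of \cref{lem:hashquadlb}; note $\sum_i |A^{(i)}_h| = |A|\,|\Delta'_h|$ and $\sum_i \sfE(A^{(i)}_h)\le \sfE(A)\,|\Delta'_h|$ for every $h$. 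By \cref{lem:hashquadlb}, each quadruple $(a,b,c,d)\in A^4$ with $a+b=c+d$ lies, with probability $\ge \tfrac{1}{2p^2}$ over $h\in\caH$, entirely inside some cluster $A^{(i)}_h$. Summing over the $\ge |A|^3/K$ such quadruples and using linearity of expectation yields $\Ex_{h\in\caH}\bigl[\sum_i \sfE(A^{(i)}_h)\bigr]\ge \tfrac{|A|^3}{2Kp^2}$.

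\emph{Derandomization and running time.} We enumerate all $h\in\caH$; for each we build the clusters and run the approximate-energy routine \cref{cor:approxenergy} on each cluster with error parameter $\epsilon=\Theta(1/(Kp^2|\Delta'_h|))$, so the accumulated additive error in $\sum_i \sfE(A^{(i)}_h)$ stays below $\tfrac{|A|^3}{4Kp^2}$. Since $\sum_i|A^{(i)}_h|=|A|N^{o(1)}$ and $\epsilon^{-1}=KN^{o(1)}$, this is $K|A|N^{o(1)}$ per hash function and $|\caH|\cdot K|A|N^{o(1)}=K^{O(1)}|A|N^{o(1)}$ in total; tracking the exponents ($|\caH|\le K^{12}N^{o(1)}$ times $\epsilon^{-1}=KN^{o(1)}$) gives the claimed $O(K^{13}|A|N^{o(1)})$. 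Averaging lets us fix and certify an $h^*$ with $\sum_i \sfE(A^{(i)}_{h^*})\ge \tfrac{|A|^3}{4Kp^2}$.

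\emph{Extracting a bucket, and the main obstacle.} Group the $p$ clusters of $h^*$ into $O(\log N)$ dyadic size classes; a class with cluster sizes $\approx s$ carries an $\Omega(1/\log N)$ fraction of $\sum_i \sfE(A^{(i)}_{h^*})$ and contains $\le O(|A|\,|\Delta'_h|/s)$ clusters, so its top-energy cluster $A'$ satisfies $\sfE(A')\ge \Omega\bigl(\tfrac{s\,|A|^2}{Kp^2|\Delta'_h|\log N}\bigr)$ with $|A'|=\Theta(s)$. Then $\sfE(A')\le|A'|^3$ gives the lower size bound $|A'|\ge\tfrac{|A|}{K^2}2^{-O(\log N/\sqrt{\log\log N})}$, and substituting into $\sfE(A')/|A'|^3$ verifies the density bound $\sfE(A')\ge\tfrac{|A'|^3}{K}2^{-O(\log N/\log\log N)}$, uniformly in which dyadic class was chosen. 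The delicate part --- which I expect to be the technical heart --- is the \emph{upper} size bound $|A'|\le|A|2^{-\log N/\sqrt{\log\log N}}$: plain pigeonholing does not preclude all the captured energy concentrating in one oversized cluster (a single cluster can hold up to $\sfE(A)\,|\Delta'_h|$ of the mass). I would deal with this by not accepting an oversized high-energy cluster but re-processing it --- recursing a bounded number of times on such a cluster, using the almost-$k$-wise independence of $\caH$ on tuples without short $\ell$-relations (\cref{lem:deltahalmostkwiseindep}) together with a separate, easier treatment of very additively-structured $A$ to guarantee strict shrinkage at each round, and/or interleaving with the deterministic modulus hashing of \cref{lem:detfindmodulus} to regularize bucket sizes first --- all the while checking that the extra energy loss stays within the $2^{-O(\log N/\log\log N)}$ budget. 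The remaining bookkeeping (the precise power of $K$ and the constant $700$) then finishes the proof.
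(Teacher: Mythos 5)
Your setup correctly identifies the energy-preservation mechanism --- hashing into $\caH$ and passing to the clusters $G_i$ keeps a $\tfrac{1}{2p^2}$ fraction of the energy in expectation by \cref{lem:hashquadlb}, and this can be certified with \cref{cor:approxenergy} over the $|\caH|=K^{O(1)}N^{o(1)}$ functions. But you have correctly put your finger on the gap and then not closed it: the upper size bound $|A'|\le |A|2^{-\log N/\sqrt{\log\log N}}$ is not a ``delicate bookkeeping'' point, it is where the proof actually lives, and the recursion/case-analysis you sketch is not an argument. Your dyadic extraction gives you \emph{some} cluster with density $\Omega(1/K)\cdot 2^{-O(\log N/\log\log N)}$, but nothing prevents that cluster from being essentially all of $A$ (e.g.\ when the hash is unlucky and everything with large energy lands in one $G_i$); recursing on that cluster could repeat this failure, and you give no quantitative argument that the recursion terminates in a bounded number of rounds or that the cumulative energy loss across rounds remains within the $2^{-O(\log N/\log\log N)}$ budget. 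The appeal to ``a separate, easier treatment of very additively-structured $A$'' and to interleaving with \cref{lem:detfindmodulus} is not backed by any lemma in your write-up.

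The paper avoids this entirely with a moment-method argument that you are missing. Besides the energy lower bound $\Ex_h[\sum_i\sfE(A\cap G_i)]\ge \sfE(A)/(2p^2)$, one also proves (via \cref{lem:almostkwiseindep}, controlling tuples that have and that lack short $\ell$-relations) upper bounds on $\Ex_h[\sum_i|A\cap G_i|^{k'}]$ for $k'=2,3,4$; this is why the family is instantiated with $k=4$, not $k=3$ as in your proposal. Subtracting scaled versions of the second-, third-, and fourth-moment bounds from the energy bound yields a single quantity whose expectation is positive, hence positive for some concrete $(h,i)$, and positivity of that one expression \emph{simultaneously} enforces the density lower bound, the size lower bound, and the size upper bound for $A\cap G_i$. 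No recursion, no branching on structuredness of $A$. To find such $(h,i)$ deterministically one only needs to estimate the energy of each cluster via \cref{cor:approxenergy} within the claimed time, which your running-time analysis does match. You should replace your extraction step by this linear-combination-of-moments argument; as written your proof has a genuine hole precisely at the step you flagged.
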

\begin{proof}[Proof of \cref{thm:find-subset-highenergy} using \cref{lemma:find-subset-highenergy-lemma}]
If $N=O(1)$ then $|A|=O(1)$, and we can simply return $A$ which satisfies the requirements for large enough $C\ge 1$. Hence, in the following we assume $N$ is large enough in order to simplify calculation.
   
   Let $A_0 = A, K_0=K$. In iteration $i$ ($i\ge 0$): 
   \begin{itemize}
      \item If $|A_i|\le R\cdot 2^{\frac{2109\log N}{\sqrt{\log \log N}}}$, then return $A_i$ as our answer $A'$.
         \item Otherwise, apply \cref{lemma:find-subset-highenergy-lemma} to $A_i$ and $K_i$, and let $A_{i+1}\subseteq A_i$ be the returned subset, and let $K_{i+1} = K_i\cdot 2^{\frac{700\log  N}{\log  \log  N}}$. 
   \end{itemize}
   We now verify the conditions required by \cref{lemma:find-subset-highenergy-lemma} are satisfied.
   By the energy bound in \cref{lemma:find-subset-highenergy-lemma}, we have $\sfE(A_{i+1})\ge |A_{i+1}|^3/K_{i+1}$.
    By the size bound in \cref{lemma:find-subset-highenergy-lemma}, $|A_{i+1}|/|A_i| \le 2^{-\frac{\log  N}{\sqrt{\log  \log  N}}}$, so the algorithm terminates in at most $\sqrt{\log \log  N}$ iterations before the size of $A_i$ drops below $1$.
     Hence, \[K_i= K  2^{\frac{700\log  N}{\log  \log  N}\cdot i} \le  K 2^{\frac{700\log  N}{\sqrt{ \log  \log  N}}} \le R^{1/3} 2^{\frac{700\log  N}{\sqrt{ \log  \log  N}}} < \big (|A_i|2^{\frac{-2109\log N}{\sqrt{\log \log N}}}\big )^{1/3} 2^{\frac{700\log  N}{\sqrt{\log   \log  N}}} = |A_i|^{1/3}2^{\frac{-3\log  N}{\sqrt{\log   \log  N}}},\] satisfying the conditions required by \cref{lemma:find-subset-highenergy-lemma}.

Now we show the returned subset $A_i\subseteq A$ satisfies the claimed properties. If the returned subset is $A_0=A$ then clearly all requirements are satisfied, so we assume $i\ge 1$.  
Then, $|A_{i-1}|>R\cdot 2^{\frac{2109\log N}{\sqrt{\log \log N}}}\ge |A_i|$, so by the size bound of \cref{lemma:find-subset-highenergy-lemma} we have $|A_{i}|\ge \frac{|A_{i-1}|}{K_{i-1}^2}\cdot 2^{-\frac{3\log  N}{\sqrt{\log  \log  N}}} > \frac{R\cdot 2^{\frac{2109\log N}{\sqrt{\log \log N}}}}{(K \cdot 2^{\frac{700\log  N}{\sqrt{\log  \log  N}}})^2} \cdot 2^{-\frac{3\log  N}{\sqrt{\log  \log  N}}} \ge  \frac{R}{K^2}$ as desired. The energy bound of \cref{lemma:find-subset-highenergy-lemma} gives $\sfE(A_i) \ge \frac{|A_i|^3}{K_{i-1}}\cdot 2^{-\frac{700\log  N}{\log  \log  N}} \ge \frac{|A_i|^3}{K\cdot 2^{\frac{700\log N}{\sqrt{\log \log N}}}}\cdot 2^{-\frac{700\log  N}{\log  \log  N}} \ge \frac{|A_i|^3}{K} \cdot 2^{-\frac{701\log  N}{\sqrt{\log  \log  N}}}$ as desired.

The total time complexity over the at most $\sqrt{\log \log N}$ iterations is still $O(K^{13}|A|N^{o(1)})$. This concludes the proof.
\end{proof}

In the rest of this subsection, we prove \cref{lemma:find-subset-highenergy-lemma}.  We use the hash family $\caH \subset \{h\colon \{-N,\dots,N\} \to \F_p\}$ from \cref{def:hashfamilyh} with $k:=4$, and prime parameter $p \ge  2^{\frac{3\log N}{\sqrt{\log \log N}}}$ to be determined later. 
Recall from \cref{def:hashfamilyh} that $m =\frac{(1+o(1))\log N}{\log \log N}$, and $Q = 100\log N$. We assume $N$ is large enough in order to simplify some calculation. 

For $h\in \caH$ and $i\in \F_p$, let $h^{-1}(i)= \{x\in \{-N,\dots,N\}: h(x)=i\}$, and define 
\begin{equation}
   \label{eqn:defngi}
G_i := \bigcup_{d\in \Delta'_h}h^{-1}(i+d),
\end{equation}
where $\Delta'_h$ is defined in \cref{lem:hashquadlb} with size $|\Delta_h'|\le 2^{3m}$. Note that 
\begin{equation}
   \label{eqn:sumgi}
  \sum_{i\in \F_p} |A\cap G_i| \le \sum_{i\in \F_p} \sum_{d\in \Delta_h'} |A\cap h^{-1}(i+d)| =\sum_{d\in \Delta_h'}\sum_{i\in \F_p}  |A\cap h^{-1}(i+d)| = |\Delta_h'||A| \le 2^{3m}|A|.
\end{equation}

\begin{lemma}
   \label{lem:expboundshash}
Let set $A\subseteq [N]$. 
   \begin{enumerate}
      \item 
      \label{item:expectedenergy}
Let $\sfE(A)$ be the additive energy of $A$. Then,
\begin{equation}
      \label{eqn:item:expectedenergy}
  \Ex_{h\in \caH}\Big [\sum_{i\in \F_p}\sfE(A\cap G_i)\Big ] \ge  \frac{\sfE(A)}{2p^2}.
\end{equation}
   \item
      \label{item:expectedsizemoment}
    For $2\le k'\le k$, 
    \begin{equation}
      \label{eqn:item:expectedsizemoment}
   \Ex_{h\in \caH}\Big [\sum_{i\in \F_p}|A\cap G_i|^{k'}\Big ] \le
   2^{10mk^3}\cdot \left ( \frac{|A|^{k'}}{p^{k'-1}} + Q^{(k'-1)^2} |A|^{k'-1}\right ).
    \end{equation}
   \end{enumerate}
\end{lemma}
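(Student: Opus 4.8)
\medskip\noindent\textbf{Proof plan.}
For Part~\ref{item:expectedenergy}, the plan is to expand $\sum_{i\in\F_p}\sfE(A\cap G_i)$ over additive quadruples and swap summation with expectation. Since $x\in G_i$ holds exactly when $h(x)\in i+\Delta'_h$, a fixed quadruple $(x,y,z,w)\in A^4$ with $x+y=z+w$ satisfies $\{x,y,z,w\}\subseteq G_i$ iff $\{h(x),h(y),h(z),h(w)\}\subseteq i+\Delta'_h$. Writing $\sfE(A\cap G_i)$ as the number of additive quadruples inside $A\cap G_i$ and summing over $i$, each additive quadruple of $A$ is counted $|\{i:\{x,y,z,w\}\subseteq G_i\}|$ times, which is at least the indicator that such an $i$ exists. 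Taking $\Ex_{h\in\caH}$, this indicator has expectation $\Pr_{h\in\caH}[\exists i\in\F_p:\{h(x),h(y),h(z),h(w)\}\subseteq i+\Delta'_h]\ge \frac1{2p^2}$ by \cref{lem:hashquadlb} (applicable since here $k=4\ge 3$ and $x,y,z,w\in[N]$). Summing over the $\sfE(A)$ additive quadruples gives \cref{eqn:item:expectedenergy}.

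For Part~\ref{item:expectedsizemoment}, I would first expand $\sum_{i\in\F_p}|A\cap G_i|^{k'}=\sum_{(x_1,\dots,x_{k'})\in A^{k'}}\bigl|\bigcap_{j=1}^{k'}(h(x_j)-\Delta'_h)\bigr|$, again using $x_j\in G_i\iff i\in h(x_j)-\Delta'_h$. Each intersection lies inside $h(x_1)-\Delta'_h$, hence has size at most $|\Delta'_h|\le 2^{3m}$, and it is nonempty only when $h(x_j)-h(x_1)\in\Delta'_h-\Delta'_h$ for all $j\ge 2$. Therefore
\[
\sum_{i\in\F_p}|A\cap G_i|^{k'}\ \le\ 2^{3m}\sum_{(x_1,\dots,x_{k'})\in A^{k'}}\ \prod_{j=2}^{k'}\bigl[\,h(x_j)-h(x_1)\in\Delta'_h-\Delta'_h\,\bigr].
\]
By almost linearity (\cref{lem:almostlinearity}), $h(x_j-x_1)\in(h(x_j)-h(x_1))-\Delta_h$, and since $\Delta'_h=3(\{0\}\cup\Delta_h)=\bigcup_{0\le a\le 3}a\Delta_h$ (with $0\Delta_h:=\{0\}$) one checks that $\Delta'_h-\Delta'_h-\Delta_h\subseteq\bigcup_{0\le k_1,k_2\le 4}(k_1\Delta_h-k_2\Delta_h)$. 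So, after taking expectations, it remains to bound (up to the factor $2^{3m}$) the sum over $(x_1,\dots,x_{k'})\in A^{k'}$ of $\Pr_{h\in\caH}[\,h(x_j-x_1)\in\bigcup_{0\le k_1,k_2\le 4}(k_1\Delta_h-k_2\Delta_h)\ \text{for all }j=2,\dots,k'\,]$.

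I would then split the tuples according to whether the $k'-1$ differences $x_2-x_1,\dots,x_{k'}-x_1$, together with $0$, admit a $k'$-term $k^{2k+1}Q^{k'-2}$-relation (\cref{defn:lrelation}). If not, \cref{lem:deltahalmostkwiseindep} applies to these $k'-1\le k-1$ integers and bounds the probability by $p^{-(k'-1)}\cdot 2(k+1)^2(2k+1)^{m-1}$; summing over all $\le|A|^{k'}$ such tuples contributes at most $2^{3m}\cdot 2(k+1)^2(2k+1)^{m-1}\cdot|A|^{k'}/p^{k'-1}$, which is at most $2^{10mk^3}\cdot|A|^{k'}/p^{k'-1}$ for $k\ge 2$, $m\ge 1$. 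If such a relation exists, I bound the probability by $1$ and count tuples instead: the relation rearranges into a nontrivial integer relation $\sum_{j=1}^{k'}\beta_jx_j=0$ with all $|\beta_j|\le(k'-1)k^{2k+1}Q^{k'-2}$ and some $\beta_j\ne 0$; fixing the coefficient vector (at most $(3k^{2k+1}Q^{k'-2})^{k'}$ choices) and a nonzero coordinate $j$, the value $x_j$ is determined by the other $k'-1$ coordinates, so at most $|A|^{k'-1}$ tuples arise per coefficient vector. Using $(k'-2)k'\le(k'-1)^2$ to write $(3k^{2k+1}Q^{k'-2})^{k'}\le Q^{(k'-1)^2}\cdot(3k^{2k+1})^{k}$ and absorbing the $2^{3m}$ and $(3k^{2k+1})^{k}$ factors into $2^{10mk^3}$, this case contributes at most $2^{10mk^3}Q^{(k'-1)^2}|A|^{k'-1}$. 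Adding the two cases yields \cref{eqn:item:expectedsizemoment}.

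The main obstacle is the additive-combinatorial bookkeeping in Part~\ref{item:expectedsizemoment}: one must verify precisely that the "forbidden" set $\Delta'_h-\Delta'_h-\Delta_h$ lies inside $\bigcup_{0\le k_1,k_2\le 4}(k_1\Delta_h-k_2\Delta_h)$ so that \cref{lem:deltahalmostkwiseindep} with parameter $k=4$ is exactly applicable, and one must carefully track the power of $Q=100\log N$ so that the relation threshold $k^{2k+1}Q^{k'-2}$ and the ensuing coefficient count produce exactly the factor $Q^{(k'-1)^2}$ rather than a larger power. The remaining numeric estimates (the crude absorption of the $2^{O(mk)}$ and $k^{O(k^2)}$ factors into $2^{10mk^3}$, using $m=(1+o(1))\log N/\log\log N\ge 1$) are routine.
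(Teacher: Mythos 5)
Your Part~\ref{item:expectedenergy} is identical to the paper's argument. For Part~\ref{item:expectedsizemoment} you take a genuinely different route that is also correct: the paper first bounds $\Ex_h\bigl[\sum_i |A\cap h^{-1}(i)|^{k'}\bigr]$ by expanding over tuples $(x_1,\dots,x_{k'})\in A^{k'}$, applying \cref{lem:almostkwiseindep} to the exact coincidence event $h(x_1)=\dots=h(x_{k'})$, and then converts from $h^{-1}(i)$ to $G_i$ via H\"older's inequality, paying a factor $|\Delta'_h|^{k'}$. You instead expand $\sum_i|A\cap G_i|^{k'}$ directly as $\sum_{(x_1,\dots,x_{k'})}\bigl|\bigcap_j(h(x_j)-\Delta'_h)\bigr|$, bound the intersection size by $|\Delta'_h|\cdot\mathbf 1[\text{nonempty}]$, and observe that nonemptiness forces $h(x_j)-h(x_1)\in\Delta'_h-\Delta'_h$ and hence, by almost linearity, $h(x_j-x_1)\in\Delta'_h-\Delta'_h-\Delta_h\subseteq\bigcup_{0\le k_1,k_2\le 4}(k_1\Delta_h-k_2\Delta_h)$; you then invoke \cref{lem:deltahalmostkwiseindep} on the $k'-1$ differences rather than \cref{lem:almostkwiseindep} on the $k'$ original points. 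This skips the H\"older step, pays only one $|\Delta'_h|$ factor instead of $|\Delta'_h|^{k'}$ (irrelevant once absorbed into $2^{10mk^3}$), and shifts the relation bookkeeping from the $x_j$'s to the differences $x_j-x_1$, which after relabeling gives exactly the same $Q^{(k'-1)^2}$ exponent. Your claimed count of coefficient vectors, $(3k^{2k+1}Q^{k'-2})^{k'}$, is off by a benign $k^{O(k')}$ factor (it should account for the extra $(k'-1)$ in $|\beta_j|\le(k'-1)k^{2k+1}Q^{k'-2}$), but since you absorb all $k^{O(k^2)}$ factors into $2^{10mk^3}$ anyway, the final bound holds.
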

\begin{proof}
   \begin{itemize}
      \item Proof of \cref{item:expectedenergy}:
 For any $(x,y,z,w)\in A^4$ such that $x+y=z+w$,    by \cref{lem:hashquadlb} and by definition of $G_i$,  we have  $\Pr_{h\in \caH} \big  [ \exists i\in \F_p: \{x,y,z,w\} \subseteq G_i\big ]\ge \frac{1}{2p^2}$. Then, by linearity of expectation, 
 \begin{align*}
  \Ex_{h\in \caH}\Big [\sum_{i\in \F_p}\sfE(A\cap G_i)\Big ] &= \sum_{(x,y,z,w)\in A^4: x+y=z+w} \sum_{i\in \F_p} \Pr_{h\in \caH}[\{x,y,z,w\} \subseteq G_i]\\
  &\ge \sum_{(x,y,z,w)\in A^4: x+y=z+w} \Pr_{h\in \caH}[\exists i\in \F_p: \{x,y,z,w\} \subseteq G_i]\\
& \ge \frac{\sfE(A)}{2p^2}.
 \end{align*}

      \item Proof of \cref{item:expectedsizemoment}:
         We first bound $\Ex_{h\in \caH} \big [\sum_{i\in \F_p}|A\cap h^{-1}(i)|^{k'}\big ]$. By linearity of expectation, 
         \begin{equation}
            \label{eqn:expectedmoment}
         \Ex_{h\in \caH} \Big [\sum_{i\in \F_p}|A\cap h^{-1}(i)|^{k'}\Big ] = \sum_{(x_1,\dots,x_{k'})\in A^{k'}}\Pr_{h\in \caH}[h(x_1)=\cdots = h(x_{k'})].
         \end{equation}
         For $k'$-tuples $(x_1,\dots,x_{k'})\in A^{k'}$ that have no $k'$-term
 $k^{2k+1}Q^{k'-1}$-relations,
          we use \cref{lem:almostkwiseindep} to bound $\Pr_{h\in \caH}[h(x_1)=\cdots = h(x_{k'})] \le
          p^{-(k'-1)}\cdot 2(k+1)^2(2k+1)^{m-1}$.
         To bound the number of the remaining $k'$-tuples $(x_1,\dots,x_{k'})\in A^{k'}$, we can first pick the coefficients $\beta_1,\dots,\beta_{k'}$ (which sum to zero and are not all zero) of the $k'$-term relation involving $x_1,\dots,x_{k'}$ in $(2k^{2k+1}Q^{k'-1}+1)^{k'-1} - 1$ ways, and assuming $\beta_j\neq 0$, we then pick $x_1,\dots,x_{j-1},x_{j+1},\dots,x_{k'}\in A$ in $|A|^{k'-1}$ ways, which uniquely determine $x_j$ via the relation $\sum_{j=1}^{k'}\beta_j x_j =0$. The number of ways is 
         $\big ((2k^{2k+1}Q^{k'-1}+1)^{k'-1} - 1\big )\cdot |A|^{k'-1} \le k^{3k^2}Q^{(k'-1)^2} |A|^{k'-1}$.
         Hence, \cref{eqn:expectedmoment} can be upper bounded by 
         \[   \frac{|A|^{k'}}{p^{k'-1}} 2(k+1)^2(2k+1)^{m-1} + k^{3k^2}Q^{(k'-1)^2} |A|^{k'-1}.\]

         Then,
        \begin{align*}
        \Ex_{h\in \caH} \left [ \sum_{i\in \F_p}|A\cap G_i|^{k'}\right ] &= \Ex_h \left [ \sum_{i\in \F_p}\Big (\sum_{d\in \Delta_h'}|A\cap h^{-1}(i+d)| \Big)^{k'}\right ] \\
        & \le  \Ex_{h\in \caH} \left [ \sum_{i\in \F_p}|\Delta_h'|^{k'-1}\sum_{d\in \Delta_h'}|A\cap h^{-1}(i+d)|^{k'}\right ] \tag{H\"older}\\
        & = \Ex_{h\in \caH} \left [ |\Delta_h'|^{k'}\sum_{i\in \F_p}|A\cap h^{-1}(i)|^{k'}\right ] \\
        & \le 2^{3mk'} \cdot \left (
            \frac{|A|^{k'}}{p^{k'-1}} 2(k+1)^2(2k+1)^{m-1} + k^{3k^2}Q^{(k'-1)^2} |A|^{k'-1}
         \right )\\
         & \le  2^{10mk^3}\cdot \left ( \frac{|A|^{k'}}{p^{k'-1}} + Q^{(k'-1)^2} |A|^{k'-1}\right ).
        \end{align*} 
   \end{itemize}
\end{proof}

Using \cref{lem:expboundshash}, we now show that there exists a function $h\in \caH$ such that one of the sets $A\cap G_i$ has appropriate size and high additive energy.

\begin{corollary}
   \label{cor:existsgoodset}
   Suppose set $A \subseteq [N]$ has size 
  \begin{equation}
     \label{eqn:existsgoodsetrequirement}
   |A|\ge Q^9p^3 
  \end{equation} 
    and additive energy $\sfE(A) \ge |A|^3/K$. Then, there exist a hash function $h\in \caH$ and $i\in \F_p$ such that 
  \begin{enumerate}
   \item $\sfE(A\cap G_i) \ge \frac{|A\cap G_i|^3}{K} \cdot \frac{1}{10\cdot 2^{10mk^3}}$.
      \item  $|A\cap G_i| \ge \frac{|A|}{p} \cdot \frac{1}{10\cdot 2^{10mk^3} K }$.
         \item $|A\cap G_i| \le \frac{|A|}{p} \cdot 10\cdot 2^{10mk^3} K$.
  \end{enumerate}
\end{corollary}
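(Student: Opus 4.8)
The plan is to run a first-moment argument over $h\in\caH$, but with some care: \cref{lem:expboundshash} provides only a \emph{lower} bound on the expected total energy and \emph{upper} bounds on the expected power sums, so one cannot simply take an $h$ that is "above average" in energy. Instead, write $n_i:=|A\cap G_i|$, set the target thresholds $L:=\tfrac{|A|}{p}\cdot\tfrac{1}{10\cdot 2^{10mk^3}K}$ and $U:=\tfrac{|A|}{p}\cdot 10\cdot 2^{10mk^3}K$ from the statement, put $c:=\tfrac{1}{10\cdot 2^{10mk^3}K}$, and consider the random variable (over a uniform $h\in\caH$)
\[
\Psi_h \;:=\; \sum_{i\in\F_p:\,L\le n_i\le U}\bigl(\sfE(A\cap G_i)-c\,n_i^3\bigr).
\]
If $\Ex_{h\in\caH}[\Psi_h]>0$, then some $h$ has $\Psi_h>0$, and for that $h$ there is an index $i$ with $L\le n_i\le U$ and $\sfE(A\cap G_i)>c\,n_i^3$ --- which are exactly the three claimed conclusions (with a strict inequality in the first). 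So the whole proof reduces to showing $\Ex_h[\Psi_h]>0$.

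To this end, split the energy: $\sum_{i:L\le n_i\le U}\sfE(A\cap G_i)=\sum_i\sfE(A\cap G_i)-\sum_{i:n_i<L}\sfE(A\cap G_i)-\sum_{i:n_i>U}\sfE(A\cap G_i)$. The first term has expectation at least $\tfrac{\sfE(A)}{2p^2}\ge\tfrac{|A|^3}{2Kp^2}$ by \cref{eqn:item:expectedenergy} and the hypothesis $\sfE(A)\ge|A|^3/K$. For the light buckets, use the elementary facts $\sfE(X)\le|X|^3$ and $n_i^3\le L^2 n_i$ (valid when $n_i<L$) together with $\sum_i n_i\le 2^{3m}|A|$ from \cref{eqn:sumgi}, to get the deterministic bound $\sum_{i:n_i<L}\sfE(A\cap G_i)\le L^2\cdot 2^{3m}|A|\le\tfrac{|A|^3}{20Kp^2}$, where the last step only uses $K\ge 1$ and $2^{3m}\le 2^{20mk^3}$. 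For the heavy buckets, use $\sfE(X)\le|X|^3$ and $n_i^3< n_i^4/U$ (valid when $n_i>U$), so that $\Ex_h\bigl[\sum_{i:n_i>U}\sfE(A\cap G_i)\bigr]\le\tfrac1U\Ex_h\bigl[\sum_i n_i^4\bigr]$, and invoke \cref{eqn:item:expectedsizemoment} with $k'=4$; using the hypothesis $|A|\ge Q^9p^3$ (\cref{eqn:existsgoodsetrequirement}) to absorb $Q^9|A|^3$ into $|A|^4/p^3$, this is at most $\tfrac1U\cdot 2\cdot 2^{10mk^3}\tfrac{|A|^4}{p^3}=\tfrac{|A|^3}{5Kp^2}$. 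Finally, for the penalty term, $\Ex_h\bigl[c\sum_{i:L\le n_i\le U}n_i^3\bigr]\le c\,\Ex_h\bigl[\sum_i n_i^3\bigr]\le c\cdot 2\cdot 2^{10mk^3}\tfrac{|A|^3}{p^2}=\tfrac{|A|^3}{5Kp^2}$, again by \cref{eqn:item:expectedsizemoment} (now with $k'=3$, using $|A|\ge Q^4p^2$, which follows from $|A|\ge Q^9p^3$). Combining the four estimates,
\[
\Ex_{h\in\caH}[\Psi_h]\;\ge\;\Bigl(\tfrac12-\tfrac{1}{20}-\tfrac15-\tfrac15\Bigr)\cdot\frac{|A|^3}{Kp^2}\;=\;\frac{|A|^3}{20Kp^2}\;>\;0,
\]
which is what we wanted.

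The main obstacle is not any single deep step but the calibration of constants: the factor $2^{10mk^3}$ and the numeric factor $10$ in $L,U,c$, as well as the size hypothesis $|A|\ge Q^9p^3$, are tuned precisely so that the light-bucket energy, the heavy-bucket energy, and the $c\,n_i^3$ penalty each consume strictly less than the $\tfrac{|A|^3}{2Kp^2}$ energy surplus supplied by \cref{eqn:item:expectedenergy}. A point worth emphasizing is that killing the heavy-bucket contribution genuinely requires the fourth power-sum bound (\cref{eqn:item:expectedsizemoment} with $k'=4$), not merely the third: a few unusually heavy hash buckets could otherwise carry a constant fraction of the total energy and the argument would collapse. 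This is also precisely why the hash family in \cref{def:hashfamilyh} is constructed to be almost $k$-wise independent for the choice $k=4$ fixed in the surrounding proof.
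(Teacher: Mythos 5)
Your proof is correct, and it takes a route that is recognizably different from the paper's, so a comparison is worthwhile. The paper forms a single linear combination
\[
\Ex_{h}\sum_{i\in\F_p}\Big[K\,\sfE(A\cap G_i)-\tfrac{|A|}{10\cdot 2^{10mk^3}p}\,n_i^2-\tfrac{1}{10\cdot 2^{10mk^3}}\,n_i^3-\tfrac{p}{10\cdot 2^{10mk^3}|A|}\,n_i^4\Big],
\]
shows it is positive (after WLOG normalizing $\sfE(A)=|A|^3/K$), and then for a good $(h,i)$ each of the three negative terms is separately dominated by $K\,\sfE(A\cap G_i)$; dividing through by $\sfE(A\cap G_i)\le n_i^3$ yields the three conclusions. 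Its first-moment computation needs the $k'=2,3,4$ cases of \cref{eqn:item:expectedsizemoment}, and the quadratic ($n_i^2$) penalty is exactly what prevents a single tiny bucket from carrying too much weight. You instead truncate explicitly to the target window $L\le n_i\le U$, kill the light buckets with the deterministic $\ell_1$ bound $\sum_i n_i\le 2^{3m}|A|$ from \cref{eqn:sumgi} (so the $k'=2$ moment bound is never invoked), kill the heavy buckets with the $k'=4$ bound, and charge the $c\,n_i^3$ penalty via $k'=3$. This gives a cleaner separation of what each moment estimate is for, and dispenses with the paper's WLOG step since the bound $\Ex[\sum_i\sfE(A\cap G_i)]\ge\tfrac{\sfE(A)}{2p^2}\ge\tfrac{|A|^3}{2Kp^2}$ plugs directly into your $\Psi_h$. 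The constants in both arguments are checked to land at a positive slack ($\tfrac{1}{20}$ for you, $\tfrac{1}{5}$ minus a small-universe error term for the paper), and both rest on the same Lemma~\ref{lem:expboundshash}; the approaches are of comparable difficulty, with yours a bit more modular and theirs a bit more compact.
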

\begin{proof}
   Without loss of generality, assume $\sfE(A) = |A|^3/K$.
   By scaling and subtracting the inequalities in \cref{eqn:item:expectedsizemoment} (for $k'=2,3,4$) from  \cref{eqn:item:expectedenergy},  we have 
  \begin{align*}
   & \Ex_{h\in \caH} \sum_{i\in \F_p}\Big [\sfE(A\cap G_i)\cdot K - |A\cap G_i|^2 \cdot \tfrac{|A|}{ {2^{10mk^3}}\cdot 10p} - |A\cap G_i|^3 \cdot \tfrac{1}{ {2^{10mk^3}}\cdot 10} - |A\cap G_i|^4 \cdot \tfrac{p}{ {2^{10mk^3}}\cdot 10|A|} \Big ] \\ 
  & \ge  \tfrac{\sfE(A)}{2p^2}\cdot K  -\big (\tfrac{|A|^2}{p} + Q|A|\big ) \cdot \tfrac{|A|}{10p} - \big (\tfrac{|A|^3}{p^2} + Q^4|A|^2\big )\cdot \tfrac{1}{10} - \big (\tfrac{|A|^4}{p^3} + Q^{9}|A|^3\big )\cdot \tfrac{p}{10|A|}\\
  & = \frac{|A|^3}{p^2} \Big ( \frac{1}{2} -  \frac{1}{10} - \frac{1}{10}- \frac{1}{10}\Big ) -\frac{Q|A|^2}{10p} - \frac{Q^4|A|^2}{10} - \frac{Q^9|A|^2 p}{10}\\
  & \ge  \frac{|A|^3}{5p^2}  - \frac{Q^9|A|^2 p}{5}\\
   & >0.  \tag{by \cref{eqn:existsgoodsetrequirement}}
  \end{align*}
  Hence there exist $h\in \caH$ and $i\in \F_p$ such that 
   \[ \sfE(A\cap G_i)\cdot K >  |A\cap G_i|^2 \cdot \tfrac{|A|}{ {2^{10mk^3}}\cdot 10p} + |A\cap G_i|^3 \cdot \tfrac{1}{ {2^{10mk^3}}\cdot 10} + |A\cap G_i|^4 \cdot \tfrac{p}{ {2^{10mk^3}}\cdot 10|A|} .  \]
  The three terms on the RHS are all non-negative, so each of them is smaller than the LHS, giving us three inequalities. These three inequalities (combined with the bound $\sfE(A\cap G_i) \le |A\cap G_i|^3$) immediately give us the three claimed properties of $A\cap G_i$.
\end{proof}

Now we are ready to prove \cref{lemma:find-subset-highenergy-lemma}. The idea is to enumerate all $h\in \caH$ and all $i\in \F_p$, and find a good subset of $A$ as guaranteed by \cref{cor:existsgoodset}, using the deterministic algorithm for approximating additive energy from \cref{cor:approxenergy}.

\begin{proof}[Proof of \cref{lemma:find-subset-highenergy-lemma}]
We use the hash family $\caH \subset \{h\colon \{-N,\dots,N\} \to \F_p\}$ in \cref{def:hashfamilyh}. Recall we set $k=4$.
Now we pick an arbitrary prime (recall $m = \frac{(1+o(1))\log N}{\log \log N}$) \[p\in [  K\cdot 2^{2m \sqrt{\log  m}},2  K\cdot 2^{2m \sqrt{\log  m}}],\]
which satisfies the $p \ge \log^{2k} N$ requirement from \cref{def:hashfamilyh}, and 
$p\le O(K 2^{(2+o(1))\frac{\log  N }{\sqrt{\log  \log  N}}}) \le  O(  K N^{o(1)})$.

From the assumption 
$  K\le |A|^{1/3} 2^{-\frac{3\log  N}{\sqrt{\log  \log  N}}}$ in the lemma statement, we can also verify that the condition \cref{eqn:existsgoodsetrequirement} in \cref{cor:existsgoodset} of satisfied.

Now we describe the algorithm. We enumerate all the $O(m^2p^{12})$ hash functions in the family $\caH$ from \cref{def:hashfamilyh}. For each $h\in \caH$, compute the buckets $A \cap h^{-1}(i)$ for all $i\in \F_p$, compute sets $\Delta_h$ and $\Delta_h'$, and then compute the sets $A \cap G_i$ for all $i\in \F_p$ (\cref{eqn:defngi}). 
   Then, for each $i\in \F_p$ such that $A\cap G_i$ satisfies the size bounds of \cref{cor:existsgoodset}, that is,
   \begin{equation}
   |A\cap G_i| \in \big[\tfrac{|A|}{p}\cdot \tfrac{1}{10\cdot 2^{10mk^3} \cdot   K}, \tfrac{|A|}{p} \cdot 10\cdot 2^{10mk^3}\cdot   K\big ]\subset \big[\tfrac{|A|}{  K^2}\cdot 2^{-\frac{3\log  N}{\sqrt{\log  \log  N}}}, |A|\cdot 2^{-\frac{\log  N}{\sqrt{\log  \log  N}}}\big ],
   \label{eqn:sizeboundagi}
   \end{equation}
   we use \cref{cor:approxenergy} to deterministically approximate $\sfE(A\cap G_i)$ up to additive error $\frac{1}{3}\cdot \frac{|A\cap G_i|^3}{  K} \cdot \frac{1}{10\cdot 2^{10mk^3}}$ in $O(  K 2^{10mk^3} |A\cap G_i| N^{o(1)})$ time.
    If the approximate value of $\sfE(A\cap G_i)$ is at least $\frac{2}{3}\cdot \frac{|A\cap G_i|^3}{  K}\cdot \frac{1}{10\cdot 2^{10mk^3}}$, then we can terminate and return $A\cap G_i$ as the answer, which has additive energy
  \[ \sfE(A\cap G_i) \ge \tfrac{1}{3}\cdot \tfrac{|A\cap G_i|^3}{  K}\cdot \tfrac{1}{10\cdot 2^{10mk^3}} \ge \tfrac{|A\cap G_i|^3}{  K}\cdot 2^{-\frac{700\log  N}{\log \log  N}},\]
   as desired, and also satisfies the claimed size bound due to \cref{eqn:sizeboundagi}. 
   On the other hand, by \cref{cor:existsgoodset}, we know there must exist some $h\in \caH$ and $i\in \F_p$ that make our algorithm terminate, so our algorithm is correct.  Observe that the bottleneck of our algorithm is invoking \cref{cor:approxenergy} for all $h \in \caH$ and all $i\in \F_p$, which has total time complexity \[|\caH|\cdot \sum_{i\in \F_i}O(  K 2^{10mk^3} |A\cap G_i| N^{o(1)}) \underset{\text{by \cref{eqn:sumgi}}}{\le} O(m^2 p^{12})\cdot O( K 2^{10mk^3} 2^{3m} |A| N^{o(1)}) =O(  K^{13}|A| N^{o(1)}) \]
   as claimed.
\end{proof}

\subsection{Proof of \texorpdfstring{\cref{thm:derandbsg}}{Theorem~\ref*{thm:derandbsg}}}

We need the following lemma which is implicit in the work of Schoen \cite{schoen} on the BSG theorem.
\begin{lemma}[{\cite[Proof of Lemma 2.1]{schoen}}]
    Suppose set $ A\subset \Z$, $\sfE(A) =\kappa |A|^3$, and $c>0$.  Let $Q = \{x: (1_A\conv 1_{-A})[x] \ge \frac{\kappa |A|}{4}\}$.
    Then there exists $s\in Q$ such that $X = A\cap (A+s)$ satisfies $|X| \ge \frac{\kappa |A|}{3}$ and $T = \{(a,b)\in X\times X: (1_A\conv 1_{-A})[a-b] < c\kappa |A|\}$ has size $ |T| \le 16c |X|^2$.
    \label{lem:schoenoriginal}
    \end{lemma}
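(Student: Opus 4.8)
The plan is to run the standard popular-difference (``dependent random choice'') argument underlying the Balog--Szemer\'edi--Gowers theorem: pick the shift $s$ from a distribution on $\Z$ weighted by difference multiplicities and show, by a first-moment/counting argument, that the two desired properties of $X=A\cap(A+s)$ hold simultaneously for some $s$. Throughout I would set $r(x):=(1_A\conv 1_{-A})[x]=|\{(a,a')\in A\times A:a-a'=x\}|$, so that $\sum_x r(x)=|A|^2$, $\sum_x r(x)^2=\sfE(A)=\kappa|A|^3$, and---crucially---$|A\cap(A+s)|=r(s)$ for every $s$, since $a\in A\cap(A+s)$ precisely when $(a,a-s)$ is a difference representation of $s$. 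For $s\in\Z$ I abbreviate $X_s:=A\cap(A+s)$ and $T_s:=\{(a,b)\in X_s\times X_s:r(a-b)<c\kappa|A|\}$; the target is some $s\in Q$ with $|X_s|\ge\kappa|A|/3$ and $|T_s|\le 16c|X_s|^2$.

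First I would record two global estimates. (i) Unpopular differences carry little energy: since $r(x)<\kappa|A|/4$ for $x\notin Q$, we get $\sum_{x\notin Q}r(x)^2\le\tfrac{\kappa|A|}{4}\sum_x r(x)=\tfrac{\kappa|A|^3}{4}$, hence $\sum_{s\in Q}r(s)^2\ge\tfrac{3}{4}\kappa|A|^3$. (ii) The total count of ``bad pairs'' across all shifts is small: swapping the order of summation,
\begin{align*}
\sum_s|T_s| &= \sum_{\substack{(a,b)\in A^2\\ r(a-b)<c\kappa|A|}}\bigl|\{s:a-s\in A,\ b-s\in A\}\bigr| = \sum_{\substack{(a,b)\in A^2\\ r(a-b)<c\kappa|A|}}r(a-b) \\
&= \sum_{x:\,r(x)<c\kappa|A|}r(x)^2 \;\le\; c\kappa|A|\cdot|A|^2 \;=\; c\kappa|A|^3,
\end{align*}
where the middle identity holds because, for fixed $a,b$, the shifts $s$ with $a-s,b-s\in A$ are in bijection with the $r(a-b)$ difference representations of $a-b$ by elements of $A$.

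Then I would argue by contradiction: suppose no $s\in Q$ works, so every $s\in Q$ has $r(s)=|X_s|<\kappa|A|/3$ or $|T_s|>16c|X_s|^2$. Partition $Q=Q_1\cup Q_2$ with $Q_1=\{s\in Q:r(s)<\kappa|A|/3\}$. On $Q_1$ we have $\sum_{s\in Q_1}r(s)^2<\tfrac{\kappa|A|}{3}\sum_s r(s)=\tfrac{\kappa|A|^3}{3}$, so by (i), $\sum_{s\in Q_2}r(s)^2\ge\tfrac{3}{4}\kappa|A|^3-\tfrac{1}{3}\kappa|A|^3=\tfrac{5}{12}\kappa|A|^3$. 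Each $s\in Q_2$ has $|X_s|\ge\kappa|A|/3$, hence by the contradiction hypothesis $|T_s|>16c|X_s|^2=16c\,r(s)^2$; summing over $Q_2$ gives $\sum_{s\in Q_2}|T_s|>16c\cdot\tfrac{5}{12}\kappa|A|^3=\tfrac{20}{3}c\kappa|A|^3>c\kappa|A|^3$, contradicting (ii). Hence a good $s\in Q$ exists.

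I expect the whole argument to be short and essentially routine; the only places needing care will be the double-counting identity in (ii) (pinning down the bijection between admissible shifts $s$ and difference representations of $a-b$) and the bookkeeping of constants --- checking that the choices $\tfrac{1}{4}$ (threshold defining $Q$), $\tfrac{1}{3}$ (size threshold for $X$), and $16$ (bad-pair threshold) leave a strict gap in the final chain of inequalities, which they do comfortably.
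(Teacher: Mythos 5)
Your proof is correct, and it is the standard averaging argument underlying Schoen's Lemma~2.1 (the paper only cites \cite{schoen} for this and does not reproduce a proof). The key identities you rely on all check out: $|A\cap(A+s)|=r(s)$, the double-counting $\sum_s|T_s|=\sum_{x:r(x)<c\kappa|A|}r(x)^2\le c\kappa|A|^3$, and the energy lower bound $\sum_{s\in Q}r(s)^2\ge\tfrac{3}{4}\kappa|A|^3$; the final chain gives $\tfrac{20}{3}c\kappa|A|^3>c\kappa|A|^3$, a clean contradiction.
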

    The following corollary is an algorithmic version of \cref{lem:schoenoriginal} via deterministic approximate 3SUM counting.
    \begin{corollary}
    \label{cor:schoenalgo}
    Given set $A\subseteq [N]$ and parameters $K\ge 1, c\in (0,1)$ such that $\sfE(A)\ge |A|^3/K$, we can deterministically find a subset $X\subseteq A$ in $O(|A|^2K^2c^{-2} N^{o(1)})$ time such that  $|X|\ge \frac{|A|}{3K}$ and
$|\{(a,b)\in X\times X: (1_A\conv 1_{-A})[a-b] < \frac{c|A|}{3K}\}|\le 18c|X|^2$.
    \end{corollary}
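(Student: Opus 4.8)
The plan is to turn the double-counting proof of \cref{lem:schoenoriginal} into an algorithm, using \cref{thm:introdeterministicapprox3sum} at two scales. Throughout write $r := 1_A \conv 1_{-A}$, so that $r[s] = |A \cap (A+s)|$, and set $\beta := \tfrac{c|A|}{3K}$ (the threshold appearing in the statement). We will use only $\sum_s r[s] = |A|^2$ and $\sum_s r[s]^2 = \sfE(A) \ge |A|^3/K$.

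\textbf{Existence.} Let $Q := \{s : r[s] \ge \tfrac{2|A|}{3K}\}$. Since $\sum_{s \notin Q} r[s]^2 \le \tfrac{2|A|}{3K}\sum_s r[s] = \tfrac{2|A|^3}{3K}$, we get $\sum_{s \in Q} r[s]^2 > \tfrac{|A|^3}{3K}$. For a shift $s$ write $X_s := A \cap (A+s)$ and $T_s^{(\tau)} := \{(a,b) \in X_s \times X_s : r[a-b] < \tau\}$. The key identity is that a fixed pair $(a,b) \in A \times A$ lies in $X_s \times X_s$ for exactly $r[a-b]$ shifts $s$ (namely $s \in (a-A) \cap (b-A)$), so $\sum_s |T_s^{(3\beta)}| = \sum_{d : r[d] < 3\beta} r[d]^2 < 3\beta |A|^2 = \tfrac{c|A|^3}{K}$. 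Hence $\sum_{s \in Q}\bigl(|T_s^{(3\beta)}| - 3c\, r[s]^2\bigr) < \tfrac{c|A|^3}{K} - 3c \cdot \tfrac{|A|^3}{3K} = 0$, so there is $s^\star \in Q$ with $|T_{s^\star}^{(3\beta)}| < 3c\,|X_{s^\star}|^2$ and $|X_{s^\star}| = r[s^\star] \ge \tfrac{2|A|}{3K}$.

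\textbf{Algorithm.} \emph{Coarse scale:} apply \cref{thm:introdeterministicapprox3sum} to $A$ and $-A$ (after translating so all sets lie in $[O(N)]$) with $\eps := \tfrac{c}{12K}$, obtaining a sparse $f$ with $\|f - r\|_\infty \le \eps|A| = \tfrac{\beta}{4}$ and $\|f\|_0 = O(\eps^{-1}|A|\log^2 N) = O(\tfrac{K|A|}{c}\log^2 N)$. Put $\widehat P := \{d : f[d] \ge 2\beta\}$ and $\mathcal S := \{s : f[s] \ge \tfrac{|A|}{2K}\}$, both read off from $\supp(f)$. One checks, using $\|f-r\|_\infty \le \tfrac{\beta}{4}$, that $\{d : r[d] \ge 3\beta\} \subseteq \widehat P \subseteq \{d : r[d] > \beta\}$, that every $s \in \mathcal S$ has $r[s] \ge \tfrac{5|A|}{12K} > \tfrac{|A|}{3K}$ while $s^\star \in \mathcal S$, and that $|\widehat P| \le \|f\|_0$ and $|\mathcal S| \le |\{s : r[s] \ge \tfrac{5|A|}{12K}\}| = O(K|A|)$. \emph{Fine scale:} fix $\eps' := c^2/(C K^2 \log^2 N)$ for a large enough constant $C$. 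For each $s \in \mathcal S$ compute $X_s = A \cap (A+s)$, apply \cref{thm:introdeterministicapprox3sum} to $X_s$ and $-X_s$ with precision $\eps'$ to get a sparse $h_s$ with $\|h_s - 1_{X_s}\conv 1_{-X_s}\|_\infty \le \eps'|X_s|$ and $\|h_s\|_0 = O(\eps'^{-1}|X_s|\log^2 N)$, and then compute $\widehat N_s := \sum_{d \in \widehat P} h_s[d]$ in $O(\|h_s\|_0)$ time by scanning $\supp(h_s)$ against a hash set for $\widehat P$. Output $X := X_s$ for the first $s \in \mathcal S$ with $|X_s|^2 - \widehat N_s \le 5c\,|X_s|^2$.

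\textbf{Correctness and running time.} Put $\widehat T_s := |X_s|^2 - \sum_{d \in \widehat P}(1_{X_s}\conv 1_{-X_s})[d]$. The containments of $\widehat P$ give $|T_s^{(\beta)}| \le \widehat T_s \le |T_s^{(3\beta)}|$, and since $\bigl|\widehat N_s - \sum_{d \in \widehat P}(1_{X_s}\conv 1_{-X_s})[d]\bigr| \le |\widehat P|\cdot \eps'|X_s| \le c|X_s|^2$ (this is how $C$ is chosen, using $|X_s| \ge \tfrac{5|A|}{12K}$ and $|\widehat P| = O(\tfrac{K|A|}{c}\log^2 N)$), we have $\bigl|(|X_s|^2 - \widehat N_s) - \widehat T_s\bigr| \le c|X_s|^2$. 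For $s^\star$ this gives $|X_{s^\star}|^2 - \widehat N_{s^\star} \le |T_{s^\star}^{(3\beta)}| + c|X_{s^\star}|^2 < 4c|X_{s^\star}|^2$, so the search succeeds; and the output $s$ satisfies $|T_s^{(\beta)}| \le \widehat T_s \le (|X_s|^2 - \widehat N_s) + c|X_s|^2 \le 6c|X_s|^2 \le 18c|X_s|^2$ with $|X_s| \ge \tfrac{|A|}{3K}$, which is exactly the claim since $\beta = \tfrac{c|A|}{3K}$. For the time: the coarse call costs $O(\tfrac{K|A|}{c}N^{o(1)})$, computing all $X_s$ costs $O(|\mathcal S|\,|A|) = O(K|A|^2)$, and the fine calls cost $\sum_{s \in \mathcal S}(\eps'^{-1}|X_s| + |X_s|)N^{o(1)} = \eps'^{-1} N^{o(1)}\cdot\sum_s |X_s|$; the crucial point is $\sum_s |X_s| = \sum_s |A \cap (A+s)| = |A|^2$, so this is $O(\eps'^{-1}|A|^2 N^{o(1)}) = O(|A|^2 K^2 c^{-2} N^{o(1)})$, which dominates.

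\textbf{Main obstacle.} Certifying the $T_s$-bound for a candidate $s$ naively means touching all of $X_s \times X_s$, which over all candidates totals $\sum_s |X_s|^2 = \sfE(A)$, possibly $\Theta(|A|^3)$ and far above the budget. Avoiding this is the heart of the proof: one must access $1_{X_s}\conv 1_{-X_s}$ only through a \emph{sparse} approximation $h_s$ of support size $\tilde O(\eps'^{-1}|X_s|)$ (not $|X_s|^2$), paired against the small ``popular difference'' set $\widehat P$, after which the identity $\sum_s |A \cap (A+s)| = |A|^2$ caps the total work. Making this rigorous requires propagating the coarse error (for $\widehat P$) and the fine error (for $\widehat N_s$) through the averaging argument so that all slack stays between $c$ and $18c$, which is what forces the two precisions $\eps \sim c/K$ and $\eps' \sim c^2/K^2$ and the bookkeeping above.
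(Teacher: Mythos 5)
Your proof is correct and follows essentially the same approach as the paper: compute a popular-difference set and a candidate-shift set via deterministic approximate 3SUM-counting (\cref{thm:introdeterministicapprox3sum}), then for each candidate shift $s$ estimate the number of pairs in $X_s \times X_s$ whose difference is non-popular, stopping at the first candidate passing a threshold test. The only differences are implementation details---you re-derive Schoen's averaging argument inline rather than citing \cref{lem:schoenoriginal}, and you apply \cref{thm:introdeterministicapprox3sum} directly to $X_s$ with a fixed fine precision and scan the sparse output against $\widehat P$ (budgeting via $\sum_s |X_s|=|A|^2$) instead of the paper's call to \cref{lem:approx4sumcount} with a per-candidate precision---but both routes yield the same $O(|A|^2K^2c^{-2}N^{o(1)})$ bound.
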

    \begin{proof}
        Let $\sfE(A) = \kappa |A|^3$ and $\kappa \ge 1/K$.
        Use \cref{thm:introdeterministicapprox3sum} to compute a vector $p$, such that $\|p-1_A\conv 1_{-A}\|_\infty\le \frac{c}{3K}\cdot |A|$, in $O(Kc^{-1}|A| N^{o(1)})$ time. Let $\tilde P= \{x: p[x]\ge \frac{2c}{3K}\cdot |A|\}$, which  satisfies
    \[\underbrace{\{x: 1_{A}\conv 1_{-A}[x]\ge c\kappa |A|\}}_{=:P} \subseteq\{x: 1_{A}\conv 1_{-A}[x]\ge \tfrac{c|A|}{K}\} \subseteq  \tilde P \subseteq \{x: 1_{A}\conv 1_{-A}[x]\ge \tfrac{c|A|}{3K}\}.\] 
and has size $|\tilde P| \le |A|^2/\frac{c|A|}{3K} = O(Kc^{-1}|A|)$.

    Similarly,  we can deterministically compute a set $\tilde Q$ of size $|\tilde Q| = O(K|A|\log^2 N)$ in $O(K|A|N^{o(1)})$ time such that
    \[\underbrace{\{x: 1_{A}\conv 1_{-A}[x]\ge \tfrac{\kappa |A|}{4}\}}_{=:Q}  \subseteq  \tilde Q \subseteq \{x: 1_{A}\conv 1_{-A}[x]\ge \tfrac{|A|}{12K}\}.\] 
 We enumerate every $s\in \tilde Q$, and let $X = A\cap (A+s)$.
 Use \cref{lem:approx4sumcount} (applied to $(A,B,C,D):= (X,\{0\},X,\tilde P),\eps := \min\{1,\frac{c|X|}{2|\tilde P|}\}$) to compute $\tilde m \in |\{(a,b)\in X\times X: a-b \in \tilde P\}| \pm c |X|^2$ in time $O((\eps^{-1} |X| + |\tilde P|)N^{o(1)}) = O((|X|+c^{-1}|\tilde P|)N^{o(1)}) \le O(Kc^{-2}|A| N^{o(1)})$, and we return $X$ as the answer if $|X| \ge \frac{|A|}{3K}$ and $|X|^2 - \tilde m \le 17c|X|^2$. The total time complexity is $|\tilde Q| \cdot O(Kc^{-2}|A| N^{o(1)}) = O(K^2c^{-2}|A|^2 N^{o(1)})$ as claimed. It remains to show the correctness of the algorithm.
 
First note that by \cref{lem:schoenoriginal} there exists $s\in Q\subseteq \tilde Q$ such that $X=A\cap (A+s)$ satisfies $|X|\ge \frac{\kappa |A|}{3} \ge \frac{|A|}{3K}$ and $T = \{(a,b)\in X\times X: a-b \notin P\}$ has size $|T|\le 16c|X|^2$, which means $|X|^2 - \tilde m \le |\{(a,b)\in X\times X: a-b \notin \tilde P\}| +c|X|^2 \le|T| + c|X|^2  \le 17c|X|^2$. Hence, our algorithm will always return some subset. On the other hand, when our algorithm returns $X$, we must have $|X|\ge \frac{|A|}{3K}$, and
$|\{(a,b)\in X\times X: (1_A\conv 1_{-A})[a-b] < \frac{c|A|}{3K}\}|\le |\{(a,b)\in X\times X: a-b\notin \tilde P\}| < (|X|^2 - \tilde m) + c|X|^2 \le 18c|X|^2$ as claimed.
    \end{proof}

    Now we are ready to prove \cref{thm:derandbsg}.
    \begin{proof}[Proof of \cref{thm:derandbsg}]
        Use \cref{thm:introdeterministicapprox3sum} to deterministically compute a set $S$ such that
        \[ \{s:(1_A\conv 1_A)[s] \ge \tfrac{|A|}{2K}\} \subseteq S \subseteq \{s:(1_A\conv 1_A)[s] \ge \tfrac{|A|}{4K}\}\]
        in $O(K|A|N^{o(1)})$ time, where $|S|\le |A|^2/\frac{|A|}{4K} = 4K|A|$.
        We have
        \begin{equation}
            \sum_{s\in S}((1_A\conv 1_A)[s])^2 = \sfE(A) - \sum_{s\notin S}((1_A\conv 1_A)[s])^2  
             \ge \tfrac{|A|^3}{K}- \sum_{s\notin S}\tfrac{|A|}{2K}(1_A\conv 1_A)[s]  \ge \tfrac{|A|^3}{2K} \label{eqn:sumxaconvasquared}
        \end{equation}
        and hence
         \begin{equation}
             \label{eqn:sumxaconva}
         \sum_{s\in S}(1_A\conv 1_A)[s] \ge \sum_{s\in S}\tfrac{((1_A\conv 1_A)[s])^2}{|A|} \ge \tfrac{|A|^2}{2K}.
         \end{equation}
         
         Use \cref{thm:introdeterministicapprox3sum} to compute a subset $B_1 \subseteq A$ such that 
       \[ \{b\in A:   (1_S\conv 1_{-A})[b] \ge \tfrac{|A|}{16K}\}  \subseteq B_1 \subseteq \{b\in A:   (1_S\conv 1_{-A})[b] \ge \tfrac{|A|}{32K}\}\]
       in time $O((K|S|+|A|)N^{o(1)}) = O(K^2|A| N^{o(1)})$.
       Then, 
       \begin{align*}
       |B_1| \ge \sum_{b\in A: (1_S\conv 1_{-A})[b]\ge \frac{|A|}{16K}}\tfrac{(1_S\conv 1_{-A})[b]}{|A|} &\ge \tfrac{1}{|A|}\Big (\big (\sum_{b\in A}(1_S\conv 1_{-A}[b]\big ) - |A|\cdot \tfrac{|A|}{16K}\big )\Big )\\
       & = \tfrac{1}{|A|}\Big (\big (\sum_{s\in S}(1_A\conv 1_{A}[s]\big ) - |A|\cdot \tfrac{|A|}{16K}\big )\Big )\\
       & \ge \tfrac{7|A|}{16K}. \tag{by \cref{eqn:sumxaconva}}
       \end{align*}
       Now we show a lower bound on $\sfE(B_1)$. For any $b\in A\setminus B_1$, $|\{a\in A: a+b\in S\}| = (1_S\conv 1_{-A})[b] < \frac{|A|}{16K}$, so $|\{(a,c,d)\in A^3: a+b=c+d\in S\}| < \frac{|A|^2}{16K}$, and hence
       \begin{equation}
|\{(a,b,c,d)\in A\times (A\setminus B_1)\times A\times A: a+b=c+d\in S\}| < \tfrac{|A|^3}{16K}.\label{eqn:energyaminusb}
       \end{equation}
       One can get three other analogous inequalities by symmetry. Then,
       \begin{align*}
        \sfE(B_1) &\ge |\{(a,b,c,d)\in B_1^4: a+b=c+d\in S\}|\\
        & \ge |\{(a,b,c,d)\in A^4: a+b=c+d\in S\}| - 4\cdot \tfrac{|A|^3}{16K} \tag{by \cref{eqn:energyaminusb}}\\
        & \ge \tfrac{|A|^3}{4K} \tag{by \cref{eqn:sumxaconvasquared}}\\
        & \ge \tfrac{|B_1|^3}{4K}.
       \end{align*}

       Assume $|A|\ge 192K^{r+3}$; otherwise, return $A'=A$ with any $B'\subseteq A, |B'|=1$ as answer, which satisfies  all the claimed properties for large enough constant $C$.
 Apply \cref{thm:find-subset-highenergy} to $B_1$ with parameter $4K$ and size parameter $R :=  \frac{|A|}{3K^r} \le |B_1|$ (which satisfy the required condition $4K\le R^{1/3}$ due to $|A|\ge 192K^{r+3}$), and obtain $B_0\subseteq  B_1$ in $O(K^{13}|B_1|N^{o(1)})$ time such that \[\sfE(B_0) \ge \tfrac{|B_0|^3}{4K} \cdot 2^{-\frac{C\log N}{\sqrt{\log\log N}}},\text{ and } \tfrac{|A|}{48K^{r+2}} = \tfrac{R}{16K^2}\le |B_0| \le R\cdot  2^{\frac{C\log N}{\sqrt{\log \log N}}} = \tfrac{|A|}{3K^r}\cdot 2^{\frac{C\log N}{\sqrt{\log\log N}}}.\]

Then, apply \cref{cor:schoenalgo} to $B_0$ with $c := (2^{13}K)^{-1}$ to find $X\subseteq B_0$ such that \[|X| \ge \tfrac{|B_0|}{12K} 2^{-\frac{C\log N}{\sqrt{\log \log N}}}\] and 
    \[ \big \lvert \big \{(b,b')\in X\times X: (1_{B_0}\conv 1_{-B_0})[b-b'] < c\cdot \tfrac{|B_0|}{12K} 2^{-\frac{C\log N}{\sqrt{\log \log N}}}\big \}\big \rvert \le 18c |X|^2,\]
     in $O(|B_0|^2K^2 c^{-2} N^{o(1)}) = O(R^2K^4 N^{o(1)}) = O(|A|^2/K^{2r-4} \cdot N^{o(1)})$ time.
Then, compute \[B':= \Big \{b\in X \,:\,  \big \lvert \big \{b'\in X: (1_{B_0}\conv 1_{-B_0})[b-b']< c\cdot \tfrac{|B_0|}{12K} 2^{-\frac{C\log N}{\sqrt{\log \log N}}}\big \}\big \rvert \le 36c|X| \Big \},\]
directly in $\tilde O(|B_0|^2+|X|^2)= \tilde O(|B_0|^2)$ time (not a bottleneck), which satisfies \[|B'|\ge |X| - \tfrac{18c |X|^2}{36c|X|} = \tfrac{|X|}{2}\ge  \tfrac{|B_0|}{24K} 2^{-\frac{C\log N}{\sqrt{\log \log N}}} \ge \tfrac{|A|}{48 K^{r+2} 24K} 2^{-\frac{C\log N}{\sqrt{\log \log N}}}  \ge \tfrac{|A|}{K^{r+3}} 2^{-\frac{C'\log N}{\sqrt{\log \log N}}}.\] 
    
Since $X\subseteq B_0\subseteq B_1$, 
by the definition of $B_1$ we have $(1_S\conv 1_{-A})[x] \ge \tfrac{|A|}{32K}$ for all $x\in X$,
so 
\begin{equation}
\sum_{a\in A}(1_S\conv 1_{-X})[a] = \sum_{x\in X}(1_S\conv 1_{-A})[x] \ge |X|\cdot \tfrac{|A|}{32K}.
\label{eqn:sumsminusxa}
\end{equation}
Use \cref{thm:introdeterministicapprox3sum} to compute a subset $A'\subseteq A$ such that
\[ \{ a\in A: (1_{S}\conv 1_{-X})[a]\ge \tfrac{|X|}{64K} \} \subseteq A' \subseteq \{ a\in A: (1_{S}\conv 1_{-X})[a]\ge \tfrac{|X|}{128K} \}\]
in time complexity $O((K|S| + |X| + |A|)N^{o(1)}) = O(K^2|A|N^{o(1)})$, where
    \[|A'|\ge \sum_{a\in A: (1_S\conv 1_{-X})[a]\ge \frac{|X|}{64K}}\tfrac{(1_S\conv 1_{-X})[a]}{|X|} \ge \tfrac{1}{|X|}\Big (\big (\sum_{a\in A}(1_S\conv 1_{-X})[a]\big ) - |A|\cdot \tfrac{|X|}{64K}\Big ) \underset{\text{by \cref{eqn:sumsminusxa}}}{\ge} \tfrac{|A|}{64K}. \]

    We return $A',B'\subseteq A$ as the answer, and it remains to show $|A'+B'|$ is small. By the definitions of $A',B'$, for every $a\in A',b\in B'$, there exist at least $\frac{|X|}{128K} - 36c |X| \ge \frac{|X|}{2^9K}$ many $b'\in X$ such that $a+b'\in S$ and $(1_{B_0}\conv 1_{-B_0})[b-b'] \ge c\cdot \tfrac{|B_0|}{12K} 2^{-\frac{C\log N}{\sqrt{\log \log N}}}$, so 
the number of distinct ways to represent $a+ b$ as  $a+b=(a+b') + (b-b')  =(a+b')+ b_0-b_0'$, where  $a+b'\in S, b_0\in B_0, b_0'\in B_0$, is at least $\frac{|X|}{2^9 K} \cdot c\cdot \tfrac{|B_0|}{12K} 2^{-\frac{C\log N}{\sqrt{\log \log N}}}$. Therefore,
\[|A'+B'| \le \frac{|S|\cdot |B_0|\cdot |B_0|}{\frac{|X|}{2^9 K} \cdot c\cdot \tfrac{|B_0|}{12K} 2^{-\frac{C\log N}{\sqrt{\log \log N}}}}  = |S|\cdot \frac{|B_0|}{|X|} \cdot K^3\cdot (2^9\cdot 2^{13}\cdot 12)2^{\frac{C\log N}{\sqrt{\log \log N}}} \le |A| K^5  2^{\frac{C'\log N}{\sqrt{\log \log N}}},\] 
as claimed.
The total time complexity of this algorithm is 
$O(K^2|A|N^{o(1)}) + O(K^{13}|B_1|N^{o(1)})+ O(|A|^2/K^{2r-4} \cdot N^{o(1)}) =O((|A|^2/ K^{2r-4} + K^{13}|A|)N^{o(1)}) $.
    \end{proof}

\section{Tool V: Deterministic 3SUM Counting for Popular Sums}

In this section we prove \cref{thm:introdetpopsum}, restated below.
\thmpopular*
\begin{proof}
  Let parameters $K\ge 1$ be determined later. Initialize $\hat A:= A$. We perform the following loop for iterations $i=1,2,\dots$: 
  \begin{itemize}
   \item If $|\hat A| \le |A|/K^{1/3}$ (in particular, $\sfE(\hat A) \le |A|^3/K$), then exit the loop.
   \item Use \cref{cor:approxenergy} to estimate $\sfE(\hat A)$ up to additive error $\frac{|\hat A|^3}{2K}$, in $O(K|\hat A|N^{o(1)})$ time. If the approximate energy is $\le \frac{3|\hat A|^3}{2K}$, then we exit the loop (and we know $\sfE(\hat A) \le \frac{2|\hat A|^3}{K}$). 
      \item Now we know $\sfE(\hat A) \ge  \frac{|\hat A|^3}{K}$. Apply our deterministic constructive BSG theorem (\cref{thm:derandbsg} with $r=3$) to $\hat A$, and obtain subsets $A_i,B_i \subseteq  \hat A$ such that 
         \begin{equation}
      |A_i| \ge |\hat A|/(64K),\,\,
|B_i| \ge |\hat A|/\big (K^{6}\cdot N^{o(1)}\big ),\,\, |A_i+B_i|\le K^5|\hat A|\cdot N^{o(1)}.
\label{eqn:biproperty}
         \end{equation}
\item Let $\hat A \gets \hat A \setminus A_i$.
  \end{itemize}
  Since after each iteration  $|\hat A|$ drops to at most $(1-\frac{1}{64K})|\hat A|$, the number of iterations before $|\hat A|$ drops below $|A|/K^{1/3}$ is at most $O(K\log K)$. By \cref{thm:derandbsg}, each iteration takes time $O((|\hat A|^2/ K^{2} + K^{13}|\hat A|)N^{o(1)})$. So the total time for all iterations is at most $O((|A|^2/ K + K^{14}|A|)N^{o(1)})$. In the end we obtain a decomposition $A = A_1 \cup \dots \cup A_{g} \cup \hat A$, where $g=O(K\log K)$, $\sfE(\hat A) \le \frac{2|A|^3}{K}$, along with sets $B_i$ satisfying \cref{eqn:biproperty}.
  
  Use \cref{thm:introdeterministicapprox3sum} to deterministically compute a set $\tilde S$ such that 
  \[ \{s: (1_{A}\conv 1_B)[s] \ge |A|/k\} \subseteq \tilde S \subseteq \{s: (1_{A}\conv 1_B)[s] \ge |A|/(2k)\}\]
  in $O((|A|+k|B|)N^{o(1)})$ time, where $|\tilde S|\le \frac{|A||B|}{|A|/2k} = 2k|B|$. Our goal is to exactly compute $(1_A\conv 1_B)[s]$ for all $s\in \tilde S$.
  
 For every $1\le i\le g $, use \cref{thm:introdeterministicapprox3sum} to deterministically approximate $(1_{A_i}\conv 1_B)[s]$ for all $s\in \tilde S$ up to additive error $\frac{|A_i|}{12k}$, in time $\tilde O((k|B|+|A_i|)N^{o(1)})$, and total time $O((Kk|B| + |A|)N^{o(1)})$ for all $1\le i\le g$. Then we define a partition $\tilde S = S_1\cup S_2 \cdots \cup S_g \cup \hat S$ as follows: for each $s\in \tilde S$, pick an arbitrary $1\le i\le g$ such that the approximate value of $(1_{A_i} \conv 1_B)[s]$ is at least $\frac{|A_i|}{6k}$ (so that the real value satisfies 
 $(1_{A_i} \conv 1_B)[s] \ge \frac{|A_i|}{12k}$), and put $s$ in $S_i$. If no such $i$ exists, then put $s$ in $\hat S$. 
 Note that every $s\in \hat S$ satisfies 
 \begin{equation}
 (1_{\hat A}\conv 1_B)[s] = (1_{A}\conv 1_B)[s] - \sum_{i=1}^g (1_{A_i}\conv 1_B)[s] \ge  \frac{|A|}{2k} - \sum_{i=1}^g(\frac{|A_i|}{6k} + \frac{|A_i|}{12k}) \ge \frac{|A|}{4k}.
 \label{eqn:smultlowerbound}
 \end{equation}
 We have the following claims:
 \begin{claim}
   \label{claim7.2}
For all $1\le i\le g$, $|S_i +  B_i| \le O(kK^6|B|N^{o(1)})$.
 \end{claim}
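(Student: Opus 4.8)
The plan is to transfer the abundance of representations that each $s\in S_i$ enjoys (by construction, $(1_{A_i}\conv 1_B)[s]\ge |A_i|/(12k)$) from $S_i$ to the sumset $S_i+B_i$, routing it through the sumset $A_i+B_i$. Concretely, I would first establish the pointwise bound
\[
  (1_{A_i+B_i}\conv 1_B)[t] \;\ge\; \frac{|A_i|}{12k} \qquad\text{for every } t\in S_i+B_i .
\]
To prove this, fix $t\in S_i+B_i$ and write $t=s+b_i$ with $s\in S_i$, $b_i\in B_i$. Each of the $\ge |A_i|/(12k)$ pairs $(a,b)\in A_i\times B$ with $a+b=s$ gives rise to the pair $(a+b_i,\,b)$ with $a+b_i\in A_i+B_i$, $b\in B$, and $(a+b_i)+b=t$. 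Since $b_i$ is fixed and $A_i,B$ are ordinary sets, distinct pairs $(a,b)$ have distinct $a$ (hence distinct $a+b_i$), so the map $(a,b)\mapsto(a+b_i,b)$ is injective; this yields at least $|A_i|/(12k)$ distinct representations of $t$ as an element of $A_i+B_i$ plus an element of $B$.

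Next I would sum this bound over all $t\in S_i+B_i$ and compare with the total convolution mass $\sum_t (1_{A_i+B_i}\conv 1_B)[t]=|A_i+B_i|\cdot|B|$, obtaining
\[
  |S_i+B_i|\cdot\frac{|A_i|}{12k} \;\le\; |A_i+B_i|\cdot|B|,
  \qquad\text{i.e.}\qquad
  |S_i+B_i| \;\le\; \frac{12k\,|A_i+B_i|\,|B|}{|A_i|}.
\]
Finally I would plug in the guarantees from \eqref{eqn:biproperty}: $|A_i+B_i|\le K^5|\hat A|\cdot N^{o(1)}$ and $|A_i|\ge |\hat A|/(64K)$ (and $|\hat A|\le|A|$), so $|A_i+B_i|/|A_i|\le 64K^6\cdot N^{o(1)}$, and hence $|S_i+B_i|\le O(kK^6|B|\,N^{o(1)})$, as claimed.

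This argument is essentially a direct ``mass-spreading'' estimate, so I do not anticipate a real obstacle. The only point requiring a moment of care is the injectivity in the pointwise step --- i.e.\ that the many representations of $s$ really do induce that many distinct representations of $t$ --- but this is immediate once one notes that $b_i$ is held fixed and $A_i$, $B$ contain no repeated elements.
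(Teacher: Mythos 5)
Your proof is correct and is essentially the paper's own argument: you establish the same pointwise bound $(1_{A_i+B_i}\conv 1_B)[t]\ge|A_i|/(12k)$ for $t\in S_i+B_i$ by shifting the representations of $s$ by $b_i$, then divide the total mass $|A_i+B_i|\cdot|B|$ by this lower bound, and finally substitute the guarantees $|A_i|\ge|\hat A|/(64K)$ and $|A_i+B_i|\le K^5|\hat A|\cdot N^{o(1)}$ from the BSG decomposition. The paper phrases the decomposition as $s+b_i=(s-a)+(a+b_i)$ rather than $(a+b_i)+(s-a)$, but the injection, the counting, and the arithmetic are identical.
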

 \begin{proof}
  For any $s\in S_i, b\in B_i$, there are at least $(1_{A_i}\conv 1_B)[s] \ge \frac{|A_i|}{12k}$ many distinct representations of $s+b= (s-a) + (a+b)$ such that $a\in A_i, s-a\in B$, $a+b\in A_i+B_i$. Hence, $(1_{B}\conv 1_{A_i+B_i})[s+b] \ge \frac{|A_i|}{12k}$. Therefore, $|S_i+B_i| \le \frac{|B|\cdot |A_i+B_i|}{|A_i|/(12k)} \le 12k|B| \cdot 64K^6 N^{o(1)}$, where the last step uses \cref{eqn:biproperty}.
 \end{proof}
 We use \cref{thm:detsmalldoublethreesum} (applied to $(A,B,C,S):= (S_i,-B,A,B_i)$) to compute $(1_{A}\conv 1_{B})[s]$ for all $s\in S_i$, in 
   time complexity 
  \begin{align*}
 &   O\Big(\frac{|S_i+B_i|\sqrt{|-B||A|}}{\sqrt{|B_i|}}\cdot N^{o(1)} 
   \Big )\\
    & \le O\Big(\frac{kK^6|B|\sqrt{|B||A|}}{\sqrt{\frac{|A|}{K^{1/3}}/K^{6}}}\cdot N^{o(1)} 
   \Big ) \tag{by \cref{eqn:biproperty} and \cref{claim7.2}}\\
   & \le O(|B|^{1.5}k K^{55/6} N^{o(1)}).
  \end{align*} 
   Summing over all $1\le i\le g$, the total time is $O(|B|^{1.5}k K^{61/6} N^{o(1)})$.

\begin{claim}
  $|\hat S| \le O\left (k^2\sqrt{\frac{|B|^3}{K|A|}}\right )$. 
\end{claim}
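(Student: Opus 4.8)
The claim should follow by a short energy argument, with no serious obstacle: the key is to relate the popularity of sums in $\hat S$ to the mixed additive energy $\sfE(\hat A, B)$, and then use that this mixed energy is small because $\hat A$ has small energy. Concretely, I would first record the lower bound: by \cref{eqn:smultlowerbound} every $s \in \hat S$ satisfies $(1_{\hat A}\conv 1_B)[s] \ge |A|/(4k)$, so using the identity $\sfE(\hat A, B) = \sum_s \big((1_{\hat A}\conv 1_B)[s]\big)^2$ (from \cref{sec:prelim}) and restricting the sum to indices in $\hat S$,
\[
\sfE(\hat A, B) \;\ge\; |\hat S|\cdot \frac{|A|^2}{16k^2}.
\]

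Next I would give the matching upper bound on $\sfE(\hat A, B)$. By the Cauchy-Schwarz inequality $\sfE(\hat A, B)^2 \le \sfE(\hat A)\,\sfE(B)$ (also recorded in \cref{sec:prelim}), together with the bound $\sfE(\hat A) \le 2|A|^3/K$ guaranteed upon exiting the main loop, and the trivial bound $\sfE(B) \le |B|^3$, we get
\[
\sfE(\hat A, B) \;\le\; \sqrt{\sfE(\hat A)\,\sfE(B)} \;\le\; \sqrt{\frac{2|A|^3\,|B|^3}{K}}.
\]
Combining the two displays yields $|\hat S| \le \frac{16k^2}{|A|^2}\sqrt{2|A|^3|B|^3/K} = O\big(k^2\sqrt{|B|^3/(K|A|)}\big)$, as claimed.

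\textbf{Main obstacle.} Honestly there is none of substance here — the only idea is to introduce the mixed energy $\sfE(\hat A,B)$ as the bridge between "$\hat S$ is large" and "$\hat A$ has low energy", and then invoke Cauchy-Schwarz; the rest is a one-line computation. The mild care needed is just to make sure the correct low-energy bound for $\hat A$ ($\sfE(\hat A)\le 2|A|^3/K$ rather than, say, relative to $|\hat A|^3$) is the one carried out of the loop, which the preceding paragraph already establishes.
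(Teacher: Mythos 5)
Your proposal is correct and follows exactly the same argument as the paper: lower-bound $\sfE(\hat A,B)$ by restricting the sum $\sum_s((1_{\hat A}\conv 1_B)[s])^2$ to $s\in\hat S$ and using \cref{eqn:smultlowerbound}, then upper-bound it by Cauchy--Schwarz together with $\sfE(\hat A)\le 2|A|^3/K$ and $\sfE(B)\le|B|^3$. No differences worth noting.
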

\begin{proof}
  By \cref{eqn:smultlowerbound}, 
  \[  |\hat S|\cdot \left (\frac{|A|}{4k}\right )^2 \le \sum_{s\in \hat S} ((1_{\hat A}\conv 1_{B})[s])^2 \le \sfE(\hat A,B) \le \sqrt{\sfE(\hat A) \sfE(B)} \le \sqrt{\frac{2|A|^3}{K} \cdot |B|^3},  \]
  so $|\hat S| \le O\left (k^2\sqrt{\frac{|B|^3}{K|A|}}\right )$.
\end{proof}
We use brute force to compute $(1_A\conv 1_B)[s]$ for all $s\in \hat S$, in $\tilde O(|B| + |\hat S||A|) = \tilde O (k^2 \sqrt{|A||B|^3/K})$ time.

Therefore, we have computed $(1_A\conv 1_B)[s]$ for all $s\in \tilde S = S_1\cup S_2 \cdots \cup S_g \cup \hat S$, in total time 
\[O\Big ( \Big ((|A|^2/ K + K^{14}|A|)  +  (|B|^{1.5}k K^{61/6} ) + k^2 \sqrt{|A||B|^3/K} \Big )\cdot N^{o(1)}\Big )\] 
By setting $K= |A|^{3/64}$, the total time becomes
\[ O\Big ((|A|^{2-3/64} + k^2|B|^{1.5}|A|^{61/128})\cdot N^{o(1)}\Big ) \le O(k^2\cdot |A|^{2-3/128} N^{o(1)}).\]
\end{proof}
\section{Deterministic Approximate Text-to-Pattern Hamming Distances}
\label{sec:ham}
We use our deterministic approximate 3SUM-counting algorithm (\cref{thm:introdeterministicapprox3sum}) to show an $O(nm^{o(1)}/\eps)$-time deterministic algorithm for $(1+\eps)$-approximate Text-to-Pattern Hamming Distances, proving \cref{thm:dethdmain}.

\begin{proof}[Proof of \cref{thm:dethdmain}]
    Let $T[1\dd n]$ and $P[1\dd m]$ denote the input text and pattern strings over an alphabet $\Sigma$.  By dividing $T$ into $O(n/m)$ chunks, we can  assume $n=O(m)$ without loss of generality.

    We start by describing a very simple $\eps m$-additive approximation algorithm. For each symbol $s\in \Sigma$, let $A_s:= \{j: T[j]=s\}$ and $B_s:=\{k: P[k]=s\}$. Then the Hamming distance between $T[i+1\dd i+m]$ and $P[1\dd m]$ can be expressed as 
    \[ |P| - \sum_{s\in \Sigma} (1_{A_s} \conv 1_{-B_s})[i].\]
    For each $s\in \Sigma$, use \cref{thm:introdeterministicapprox3sum} to compute in $\tilde O((\eps^{-1}|A_s|+|B_s|)\cdot m^{o(1)})$ time a sparse vector $f_s$ such that $\|f_s - 1_{A_s} \conv 1_{-B_s}\|_\infty\le \eps |B_s|$. The total time complexity is $\tilde O((\eps^{-1}\sum_{s\in \Sigma}|A_s|+\sum_{s\in \Sigma}|B_s|)\cdot m^{o(1)}) = \tilde O(\eps^{-1}m^{1+o(1)})$.
    Then, sum up the vectors $f_s$ over $s\in \Sigma$, and output $|P| - \sum_{s\in \Sigma}f_s[i]$ as our approximation for the Hamming distance at shift $i$, which clearly has additive error at most $\sum_{s\in \Sigma} \big \lvert \big (f_s - 1_{A_s} \conv 1_{-B_s}\big )[i]\big \rvert\le \sum_{s\in \Sigma}\eps |B_s| = \eps m$.

    This algorithm already achieves $(1+\eps)$-approximation for distances that are $\Theta(m)$, but in general we still need to handle smaller distances $\Theta(k)$ and achieve better additive error $O(\eps k)$ for them. 
    Similarly to \cite[Lemma 3.5]{focs23}, we handle this using the technique of Gawrychowski and Uzna\'{n}ski \cite{GawrychowskiU18}, which reduced $k$-bounded Text-to-Pattern Hamming Distances to the special case where both text and pattern strings have run-length encodings bounded by $O(k)$ (i.e., they are concatenations of $O(k)$ blocks of identical characters). This reduction runs in near-linear time, and preserves additive approximation.
    In the following we show how to approximate $\sum_{s\in \Sigma}(1_{A_s}\conv 1_{-B_s})$ with $O(\eps k)$ coordinate-wise additive error in this special case in $\tilde O(\eps^{-1}m^{1+o(1)})$ time, from which the theorem follows.

    Now each $A_s, B_s$  is a union of blocks (intervals), and there are $O(k)$ blocks in total over all symbols $s\in \Sigma$. Since these $O(k)$ blocks have total length $O(m)$, we can assume each block has length $O(m/k)$ (by possibly breaking up longer blocks).
     Then, we use the same strategy as \cite[Lemma 3.5]{focs23} and assume each block is dyadic (i.e., it takes the form $\{z\cdot 2^u, z\cdot 2^u+1,\dots,(z+1)\cdot 2^u-1\}$ for some integers $z,u\ge 0$), by breaking up each original block into at most $2\log (m/k) +O(1)$ blocks; the total number of blocks becomes $O(k\log m)$.
     For $s\in \Sigma$ and $1\le 2^u \le O(m/k)$, define set $A^{u}_s$ as the left endpoints of $2^u$-length blocks in $A_s$, and define $B^{u}_s$ similarly.  In particular, $1_{A_s} = \sum_u (1_{A^u_s} \conv 1_{\{0,1,\dots,2^{u}-1\}})$. We can decompose $(1_{A_s}\conv 1_{-B_s})$ into the sum of contribution from pairs of blocks, and obtain
     \begin{equation}
      \sum_{s\in \Sigma}(1_{A_s}\conv 1_{-B_s}) = \sum_{s\in \Sigma}\sum_{u,v}1_{A_s^u} \conv 1_{-B_s^v} \conv \underbrace{1_{\{0,\dots,2^u-1\}} \conv 1_{-\{0,\dots,2^v-1\}}}_{=: L_{u,v}}.
      \label{eqn:apxequation}
     \end{equation}
     For each $s,u,v$, we use \cref{thm:introdeterministicapprox3sum} to compute $f_{s,u,v}$ such that $\|f_{s,u,v} - 1_{A_s^u}\conv 1_{-B_s^v}\|_\infty\le \frac{\eps k}{m} |B_s^v|$, in $O\big (\big ((\eps^{-1}m/k)|A_s^u|+|B_s^v|\big )\cdot m^{o(1)}\big )$ time. The total time complexity over all $s,u,v$ is $O(\eps^{-1}m^{1+o(1)})$ since $\sum_{u,s}|A_s^u|,\sum_{v,s}|B_s^v|\le O(k\log m)$ and $u,v\le O(\log m)$.
      When $u\ge v$ (the $u<v$ case is symmetric),
     $L_{u,v}$ is a multiset with multiplicity at most $2^v$ and supported on an interval of length $<2\cdot 2^u$,
 and   $A_s^u$ and $B_s^v$ only contain multiples of $2^v$, so for any $i$ the absolute difference between $(1_{A_s^u} \conv 1_{-B_s^v} \conv L_{u,v})[i]$ and $(f_{s,u,v} \conv L_{u,v})[i]$ (WLOG  assuming $f_{s,u,v}$ is only supported on multiples of $2^v$) is at most
 \begin{align*}
  \sum_{j} \big \lvert (f_{s,u,v}-1_{A_s^u}\conv 1_{-B_s^v}) [j]\big \rvert \cdot L_{u,v}[i-j]&\le \frac{2\cdot 2^u}{2^v} \|f_{s,u,v} - 1_{A_s^u}\conv 1_{-B_s^v}\|_\infty  \|L_{u,v}\|_\infty \\ & \le \frac{2\cdot 2^u}{2^v} \cdot \frac{\eps k}{m} |B_s^v| \cdot 2^v \\ & = O(\eps |B_s^v|).
 \end{align*}
 Hence, if we use $f_{s,u,v}$ to replace $1_{A_s^u}\conv 1_{-B_s^v}$ in \cref{eqn:apxequation} and compute the summation using FFT in $\tilde O(m)$ time, we can obtain an approximation of  $\sum_{s\in \Sigma}(1_{A_s}\conv 1_{-B_s})$ with coordinate-wise additive error $\sum_{s,u,v} O(\eps |B_s^v|) \le O(\eps k\log^2 m)$. This error can be made $O(\eps k)$ by suitably decreasing $\eps$. The total time complexity is $ O(\eps^{-1}m^{1+o(1)})$.
\end{proof}

\section{Deterministic 3SUM Reductions}
In this section, we prove the deterministic fine-grained reduction from 3SUM to 3SUM with small additive energy \cref{thm:smalldoub3sumhardnessmain}.
The overall framework is the same as previous randomized reductions in \cite{AbboudBF23,JinX23}, but here we need to make everything deterministic, by
combining the derandomized versions of the BSG theorem, the small-doubling 3SUM algorithm, and the derandomized almost additive hashing from previous sections.

\subsection{From high-energy to moderate-energy}
Throughout this section, we consider the following version of 3SUM: given $A\subset \{-n^3,\dots,n^3\}$ of size $|A|=n$, determine if there exists $(a,b,c)\in A\times A\times A$ such that $a+b+c=0$.\footnote{The case where $A$ comes from a larger universe can be deterministically reduced to this version by \cite[Theorem 1.2]{fischer3sum}.} The additive energy of the 3SUM instance is $n^2< \sfE(A)< n^3$.

\begin{lemma}
\label{lem:hightomod}
   There is a deterministic fine-grained reduction from 3SUM to 3SUM with additive energy at most $n^{2.95}$.
\end{lemma}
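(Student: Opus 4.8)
The plan is to follow the structure-versus-randomness \emph{peeling} strategy of \cite{AbboudBF23,JinX23}, replacing every randomized subroutine by its deterministic counterpart from the previous sections. Given the 3SUM instance $A \subseteq \{-n^3,\dots,n^3\}$ with $|A| = n$, the algorithm maintains a current set $\hat A$ (initially $\hat A := A$) and repeats the following loop. First, use \cref{cor:approxenergy} to estimate $\sfE(\hat A)$ up to additive error $\eps|\hat A|^3$ with $\eps = \Theta(|\hat A|^{-0.05})$, in time $O(|\hat A|^{1.05} n^{o(1)})$. If this certifies $\sfE(\hat A) \le |\hat A|^{2.95}$, exit the loop; also exit if $|\hat A|$ has dropped below a threshold $n^{0.9}$. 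Otherwise $\sfE(\hat A) = \Omega(|\hat A|^{2.95})$, so the ``energy deficiency'' $K := n^3/\sfE(\hat A)$ satisfies $2 \le K = O(|\hat A|^{0.05})$ (clamping $K$ up to $2$ if the true ratio is smaller).

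Next, apply the deterministic constructive BSG theorem (\cref{thm:derandbsg}) to $\hat A$ with this $K$ and with rank parameter $r := \max\bigl(2,\ \Theta(\tfrac{\log N}{\log K})\bigr)$, so that $r$ is a large constant when $K$ is polynomial in $n$ and $r = \Theta(\log N)$ when $K$ is only a constant; in both regimes $K^{r} = \poly(n)$. This returns subsets $A', B' \subseteq \hat A$ with $|A'| \ge |\hat A|/(64K)$, $|B'| \ge |\hat A|/(K^{r+3}n^{o(1)})$ and $|A'+B'| \le K^5|\hat A|\,n^{o(1)}$, in time $O\bigl((|\hat A|^2/K^{2r-4}+K^{13}|\hat A|)n^{o(1)}\bigr)$; the choice of $r$ makes $K^{2r-4} \ge |\hat A|^{\Omega(1)}$, and the precondition $|\hat A| \ge 192 K^{r+3}$ of \cref{thm:derandbsg} holds precisely because $K^{r+3} = \poly(n)$ with a small exponent and we only enter the loop body while $|\hat A| \ge n^{0.9}$, so this cost is $O(|\hat A|^{2-\Omega(1)}n^{o(1)})$.

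Now I would resolve all 3SUM solutions of the current instance that touch $A'$. The key observation is the dichotomy: $\hat A$ has a 3SUM solution if and only if either $0 \in A' + \hat A + \hat A$ (a solution using at least one element of $A'$) or $\hat A \setminus A'$ has a 3SUM solution; the latter is deferred to the next iteration, so it suffices to decide the former. But ``$0 \in A' + \hat A + \hat A$'' is exactly the question whether $(1_{A'} \conv 1_{\hat A})[c] > 0$ for some $c \in -\hat A$ — a \emph{small-doubling} \#3SUM query, since $A'+B'$ is small. Concretely, invoke \cref{thm:detsmalldoublethreesum} with $(A,B,C,S) := (A',\hat A,-\hat A,B')$: then $|{A}+{S}| = |A'+B'|$ is small relative to $|{S}| = |B'|$, and the first output $(1_{A'}\conv 1_{\hat A})[c]$, $c \in -\hat A$, answers the query (sum over $c$, or just test positivity). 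Using the above size bounds and the choice of $r$, this runs in time $O\bigl(\tfrac{|A'+B'|}{\sqrt{|B'|}}\cdot|\hat A|\cdot n^{o(1)} + (|A'+B'|+|\hat A|)n^{o(1)}\bigr) = O(|\hat A|^{2-\Omega(1)}n^{o(1)})$. If a solution is detected, output a trivial ``YES'' 3SUM instance and halt; otherwise set $\hat A \gets \hat A \setminus A'$ and iterate.

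When the loop exits, either $|\hat A| < n^{0.9}$ — in which case solve 3SUM on $\hat A$ directly by brute force in $O(n^{1.8})$ time and output a trivial instance accordingly — or $\sfE(\hat A) \le |\hat A|^{2.95}$, in which case we output $\hat A$ itself, a 3SUM instance whose additive energy is at most $(\text{its size})^{2.95}$ (its universe can be shrunk to $|\hat A|^{O(1)}$ via \cite[Theorem~1.2]{fischer3sum} if one insists). Each iteration removes a $\ge 1/(64K) = \Omega(n^{-0.05})$ fraction of $\hat A$, so the loop runs for at most $O(n^{0.05}\log N)$ iterations; since every iteration costs $O(n^{2-\Omega(1)}n^{o(1)})$, the whole reduction is deterministic and subquadratic and produces a single moderate-energy instance. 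The main obstacle I anticipate is the quantitative balancing: one must choose $r$ (together with the size parameter $R$ fed into \cref{thm:find-subset-highenergy} inside \cref{thm:derandbsg}, which requires $K \le R^{1/3}$) so that \emph{simultaneously} (i) BSG is subquadratic even when $K$ is a mere constant (forcing $r = \Theta(\log N)$ there), (ii) $K^{r}$ — hence the doubling $|A'+A'|/|A'| = K^{O(r)}$ and the quantity $|A'+B'|/\sqrt{|B'|}$ — stays polynomially bounded, and (iii) the $O(n^{0.05})$-factor from the number of rounds is absorbed by the per-round exponent savings (which pins down how much slack the exponent $2.95$ must leave). The correctness — the ``touches $A'$ versus avoids $A'$'' dichotomy and its reformulation as a small-doubling \#3SUM query — is routine; the delicacy lies entirely in this parameter bookkeeping.
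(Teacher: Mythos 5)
Your overall strategy is exactly the paper's: a peeling loop that repeatedly (i) approximates $\sfE(\hat A)$, (ii) calls the deterministic BSG theorem to extract a small-doubling piece, (iii) uses deterministic small-doubling \#3SUM (Theorem~\ref{thm:detsmalldoublethreesum}) to clear all solutions touching that piece, and (iv) removes the piece. The step-(iii) reformulation (solutions touching $A'$ are exactly the question of whether $(1_{A'}\conv 1_{\hat A})[c]>0$ for some $c\in-\hat A$) is the same as the paper's, up to which of the two outputs of Theorem~\ref{thm:detsmalldoublethreesum} you read off.

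The genuine gap is exactly the one you flag and decline to close: your adaptive $K$ and adaptive $r = \Theta(\log N/\log K)$ create a moving-target balancing problem (per-iteration cost in terms of a $K$ that changes each round, number of iterations in terms of that same $K$, and the $K^{r}$-type quantities all need simultaneous polynomial control). The paper's proof is noticeably simpler precisely because it does \emph{not} adapt these parameters: it fixes $K := n^{0.05}$ and $r := 3$ once and for all. With fixed $K$, the energy test is ``approximate $\sfE(A)$ to additive error $|A|^3/(3K)$; if small, output $A$ (which then has energy $\le |A|^3/K \le |A|^{2.95}$ since $|A|\le n$), otherwise $\sfE(A)\ge |A|^3/(3K)$ and we run BSG with parameter $3K$.'' This gives per-iteration cost $n^{o(1)}\bigl(|A|^2/K^2 + K^{13}|A| + K^8|A|^{1.5}\bigr)$, and multiplying by the $O(K\log n)$ iterations yields $n^{o(1)}(n^{2}/K + K^{14}n + K^9 n^{1.5}) = O(n^{1.95+o(1)})$ with no further case analysis. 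Your adaptive scheme can likely be made to work with a careful choice of the constant $c$ in $K^r=n^c$, but it buys nothing here and leaves exactly the accounting unresolved. One smaller bug: you define $K := n^3/\sfE(\hat A)$ and then claim $K = O(|\hat A|^{0.05})$; for that bound you need $K := |\hat A|^3/\sfE(\hat A)$, since with $|\hat A|$ as small as $n^{0.9}$ the quantity $n^3/\sfE(\hat A)$ can be as large as $\Theta(n^{0.345})$.
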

\begin{proof}
    The randomized version of this reduction was already given by \cite{AbboudBF23} and \cite{JinX23} using a randomized algorithmic version of the BSG theorem. Our derandomized version follows the same idea, except that we need to use the deterministic BSG theorem and deterministic energy approximation algorithm.
    
    Let $A_0$ denote the 3SUM instance, $|A_0|=n$. Let $K = n^{0.05}$.
    Maintain a subset $A\subseteq A_0$, which initially equals $A_0$.
      We iteratively proceed as follows. In each iteration:
      \begin{enumerate}
       \item If $|A|\le n^{0.9}$, then solve 3SUM on $A$ in $O(|A|^2)= O(n^{1.8})$ time, and return the 3SUM answer.
        \item Use \cref{cor:approxenergy} to deterministically approximate $\sfE(A)$ up to additive error $\frac{|A|^3}{3K}$, in $ O(K|A|n^{o(1)})$ time. If the approximate value is at most $\frac{2|A|^3}{3K}$, then we know $\sfE(A)\le |A|^3/K \le |A|^{2.95}$, and we can use the moderate-energy 3SUM solver to solve $A$, and return the 3SUM answer.
            \item Otherwise, we have $\sfE(A)\ge \frac{|A|^3}{3K}$.
                We apply the deterministic BSG algorithm (\cref{thm:derandbsg} with $r=3$) to $A$ with parameter $3K$, and in  $O(|A|^2/K^2 + K^{13}|A|)\cdot n^{o(1)}$ time  we can obtain $A'\subseteq A$ of size $|A'|\ge |A|/(192K)$ and another set $B'$ such that $|B'|\ge |A|/(K^6n^{o(1)}), |A'+B'|\le K^5 |A| n^{o(1)}$.
                We then use \cref{thm:detsmalldoublethreesum} (applied to $(A,B,C,S):= (A',A,-A,B')$) to decide whether $(1_{-A}\conv 1_{-A})[a]>0$ for some $a\in A'$ (i.e., there exists $(a,b,c)\in A'\times A\times A$ such that $-b-c=a$), in time complexity 
                \begin{align*}
& O\Big(\frac{|A'+B'|\sqrt{|A||A|}}{\sqrt{|B'|}}\cdot \polylog(n) 
 + \big (|A'+B'|+|A|\big)\cdot n^{o(1)} \Big ) \\
 &= O((\frac{K^5|A|^2}{\sqrt{|A|/K^6}} + K^5|A|)n^{o(1)})\\
 & = O(K^8 |A|^{1.5}n^{o(1)}).
                \end{align*}
                If a 3SUM solution is found, then we return and terminate. Otherwise, we know $A$ has no 3SUM solutions that involve numbers in $A'$, so we can remove $A'$ and let $A\gets A\setminus A'$, and proceed to the next iteration.
      \end{enumerate}
      Since $|A'|\ge |A|/(192K)$, each iteration decreases $|A|$ to at most $(1-1/(192K))|A|$, so the number of iteration before $|A|$ drops below $n^{0.9}$ is at most $O(K\log n)$. The time complexity in each iteration is 
      \[ O(K|A|n^{o(1)}) + O((|A|^2/K^2 + K^{13}|A|)n^{o(1)}) + O(K^8|A|^{1.5}n^{o(1)}),\]
      so over all $O(K\log n)$ iterations, the total time is at most
      \[ n^{o(1)}\cdot O(K^{14}n + K^{9}n^{1.5} + n^2/K ) \le O(n^{1.95+o(1)}). \]
      So the time complexity of this reduction is truly subquadratic.
\end{proof}

\subsection{From moderate-energy to low-energy}
\label{sec:modtolowreduction}

In this section, we use our almost additive hash family $\caH$ from \cref{def:hashfamilyh} and the deterministic energy approximation algorithm (\cref{cor:approxenergy}) to show the following fine-grained reduction:

\begin{lemma}
\label{lem:modtolow}
   Let $\alpha \in (0,1)$ be a fixed constant. 
There is a deterministic fine-grained reduction from 3SUM with additive energy at most $n^{2+\alpha}$ to 3SUM with additive energy at most $n^{2+\beta}$, where $\beta = \alpha-\alpha(1-\alpha)/30 \in (0, \alpha)$.
\end{lemma}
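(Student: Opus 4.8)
The plan is to derandomize the reduction from moderate-energy to low-energy 3SUM used in the randomized framework of Abboud--Bringmann--Fischer~\cite{AbboudBF23} and Jin--Xu~\cite{JinX23}, whose only uses of randomness are (i) a random almost-linear hash and (ii) concentration bounds on the energies of the hashed sub-instances; we replace (i) by the deterministic family $\caH$ of \cref{def:hashfamilyh} and (ii) by the deterministic energy estimator of \cref{cor:approxenergy}. Fix $\caH$ with $k$ a suitable constant and a prime $p = n^{\rho}$ for a parameter $\rho = \rho(\alpha) \in (0,1)$ to be chosen. Given the moderate-energy instance $A \subseteq \{-n^3,\dots,n^3\}$ with $|A| = n$ and $\sfE(A) \le n^{2+\alpha}$, a hash $h \in \caH$ splits $A$ into buckets $A_i = A \cap h^{-1}(i)$, $i \in \F_p$. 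By almost-linearity (\cref{lem:almostlinearity}), every solution $a+b+c=0$ satisfies $h(a)+h(b)+h(c) \in h(0)-\Delta_h-\Delta_h$, a set of size $n^{o(1)}$, so the answer is preserved by the collection of at most $p^{2}\cdot n^{o(1)}$ sub-instances obtained by choosing two buckets and one shift from this small set (which pins down the third bucket). After the standard ``well-separated'' re-encoding of a three-set 3SUM instance as a single-set one, each sub-instance is a 3SUM instance on roughly $n/p = n^{1-\rho+o(1)}$ elements, and it remains to argue that---for a good choice of $h$---all but a few of them have additive energy at most the target threshold $T := (n^{1-\rho})^{2+\beta}$.

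The heart of the proof is the energy accounting. The well-separated encoding is chosen so that the energy of a sub-instance splits cleanly into within-bucket terms $\sfE(A_i)$ and cross-bucket terms $\sfE(A_i,A_j)$, and hence the total energy over all sub-instances equals $n^{o(1)}$ times a sum of such terms over pairs in $\F_p^2$. The goal is to bound this, in expectation over $h \sim \caH$, by $\sfE(A)\cdot n^{o(1)}/p + n^{2+o(1)}$. The main term accounts for additive quadruples $(a,b,c,d)\in A^4$ with $a+b=c+d$ whose pairwise differences carry no short linear relation: for those, the relevant hash coincidences (e.g.\ $h(a)=h(a')$, which by almost-linearity asks that $h(a-a')$ fall into a fixed translate of $\Delta_h$) behave as under a truly $k$-wise independent hash, and \cref{lem:almostkwiseindep}/\cref{lem:deltahalmostkwiseindep} bound their probabilities by the ideal value up to an $n^{o(1)}$ factor. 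The lower-order term absorbs the ``degenerate'' quadruples---those carrying a second short linear relation among $a,b,c,d$, together with the trivial ones $a=a',\,b=b'$; a second relation fixes two of the four elements in terms of the other two, leaving only $n^{2+o(1)}$ such quadruples, each landing in $n^{o(1)}$ sub-instances.

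From here the argument is bookkeeping. Enumerate all $|\caH| = n^{o(1)}$ hash functions, and for each use \cref{cor:approxenergy} to estimate $\sfE(A_i)$ for all buckets---hence the energy of every sub-instance---up to additive error a small constant times $T$; by Markov, some $h \in \caH$ achieves the above expectation bound up to a constant factor, and we fix such an $h$. We then call the low-energy 3SUM oracle on each sub-instance of estimated energy $\le T$, and solve the remaining ``heavy'' sub-instances by brute force in $O((n^{1-\rho})^2)$ time each. Since the total sub-instance energy is at most $n^{2+\alpha-\rho+o(1)}$ (using $\rho < \alpha$ to absorb the $n^{2+o(1)}$ term), the number of heavy sub-instances is at most $n^{2+\alpha-\rho}/T\cdot n^{o(1)}$, and the brute-force step therefore costs $n^{4+\alpha-3\rho-(1-\rho)(2+\beta)+o(1)}$; asking this to be truly subquadratic, together with the requirements that the reduction's own cost (hashing, $n^{o(1)}$ energy estimations per bucket, assembling the sub-instances) be subquadratic and that \cref{cor:approxenergy}'s error be small enough never to misclassify an instance of true energy $>T$, imposes a tradeoff between $\rho$, $\alpha$ and $\beta$. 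Optimizing $\rho$ against these constraints yields the stated $\beta = \alpha - \alpha(1-\alpha)/30$.

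I expect the main obstacle to be the energy accounting of the second paragraph: the derandomized $\caH$ is provably \emph{not} $k$-wise independent on inputs carrying a short linear relation (this is precisely the content of \cref{lem:hashquadlb}), so one must first isolate the additive quadruples carrying such a relation and show there are only $n^{2+o(1)}$ of them, and then, for all other quadruples, push the ``almost'' almost-$k$-wise independence of \cref{lem:almostkwiseindep}/\cref{lem:deltahalmostkwiseindep}---with its $\polylog(N)$-scale relation threshold $\sim Q^{k'-1}$---to obtain collision probabilities matching a random hash up to $n^{o(1)}$. A secondary but genuinely constraining difficulty is that each application of the lemma buys only a $\Theta(\alpha(1-\alpha))$ improvement in the energy exponent, so the inequalities above must be steered carefully to extract any gain at all.
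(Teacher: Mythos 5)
Your high-level plan matches the paper's proof: split $A$ into buckets with a deterministic almost-linear hash $h \in \caH$, use the 3-partite re-encoding whose energy decomposes as $\sfE(X)+\sfE(Y)+\sfE(Z)+\sfE(X,Y)+\sfE(Y,Z)+\sfE(Z,X)$, bound the expected total energy of all sub-instances via the case analysis on whether the four entries of an additive quadruple carry a short linear relation, estimate bucket energies with \cref{cor:approxenergy}, and find a good $h$ by enumeration. However, there is a concrete error that breaks your parameter tradeoff: you assert $|\caH| = n^{o(1)}$, which is false. By \cref{def:hashfamilyh}, $|\caH| \le 4m^2 p^{3k}$, which with $p = n^{\rho}$ is polynomially large, namely $n^{\Theta(\rho)}$. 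In the paper's proof, enumerating $\caH$ and running the energy estimator for each $h$ is the dominant cost — it is $p^{O(1)}\cdot |A|^{2-\alpha+O(\eps)}$, i.e., roughly $n^{2 - \alpha + c\rho}$ — and keeping this subquadratic is exactly what caps $\rho$ at $\Theta(\alpha)$ and produces the factor $30$ in $\beta = \alpha - \alpha(1-\alpha)/30$. If you genuinely believed $|\caH| = n^{o(1)}$, optimizing your constraints would give a much larger $\rho$ and a much better $\beta$, so your derivation cannot actually produce the claimed exponent: the "bookkeeping" you defer is where the specific constant lives, and it depends on the polynomial size of $\caH$.

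A secondary gap: you treat every sub-instance as having size roughly $n/p$, but some buckets can be much larger than average, and their sub-instances cannot be classified as "light" or brute-forced within the stated time budget. The paper handles this with a separate preprocessing pass: it bounds $\Ex_h\sum_i |h^{-1}(i)\cap A|^2 \le O(n^2/p \cdot n^{o(1)})$, declares a bucket \emph{heavy} if it exceeds $\frac13 |A|^{1+\eps}/p$, observes that the total mass in heavy buckets is $O(n^{1-\eps})$, and solves 3SUM directly for elements in heavy buckets in $\tilde O(n)$ time each (total $O(n^{2-\eps})$), removing them before the sub-instance analysis. Without this step your brute-force cost for "heavy" (high-energy) sub-instances is not controlled by $(n^{1-\rho})^2$ each. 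Finally, a minor point: estimating $\sfE(A_i)$ per bucket does not directly give the sub-instance energies because the cross-terms $\sfE(A_i, A_j)$ matter; the paper instead applies \cref{cor:approxenergy} directly to each (single-set re-encoded) sub-instance.
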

\begin{proof}
   Suppose all 3SUM instances of additive energy at most $n^{2+\beta}$  can be solved in $n^{2-\delta}$ time for some $\delta>0$.
   
Let $A\subset [N]$ be the input 3SUM instance with $\sfE(A)\le |A|^{2+\alpha}$, with $|A|=n$, $N=\poly(n)$.
We will perform the standard 3SUM self-reduction using the hash family $\caH \subset \{h\colon \{-N,\dots,N\} \to \F_p\}$ from \cref{lem:dethash}, with parameter $k=3$, and prime $2\le p<|A|$ to be determined later.

By almost linearity 
 of $\caH$ (\cref{lem:almostlinearity}), any 3SUM solution $a+b+c=0$ satisfies $h(0)-h(a)-h(b+c) \in \Delta_h$ and $h(b+c) - h(b)-h(c) \in \Delta_h$, so $h(a)+h(b)+h(c) \in h(a)+ h(b+c)-\Delta_h \in h(0)-2\Delta_h$.
For $h\in \caH$ and $i\in \F_p$, let $h^{-1}(i)= \{x\in \{-N,\dots,N\}: h(x)=i\}$.
Define $\Delta'_h = h(0)-2\Delta_h$ which has size $|\Delta'_h|\le |\Delta_h|^2 = n^{o(1)}$.
So in order to solve the original 3SUM instance, we only need to solve the smaller 3-partite 3SUM instances $\big (h^{-1}(i)\cap A, h^{-1}(j)\cap A, h^{-1}(k)\cap A\big )$, where $i+j+k\in \Delta'_h$. (There are $p\cdot p\cdot |\Delta'_h| = p^2n^{o(1)}$ such instances.)

We need to reduce these 3-partite 3SUM  instances back to the one-partite version that we are working with, as follows:
 suppose $X,Y,Z\subseteq \{-N,\dots,N\}$ are the three input sets of the 3-partite 3SUM. we define set $W = (X+30N) \cup (Y + 100N) \cup (Z - 130N)$ as the one-partite 3SUM instance. Note that any original 3SUM solution $(x,y,z)\in X\times Y\times Z$, $x+y+z=0$ implies a solution $(x+30N)+(y+100N)+(z-130N)=0$ in the new instance. One can verify that, conversely, any 3SUM solution in $W$ correspond to an original solution, due to the way we shifted the sets. Furthermore, we can similarly verify that any Sidon 4-tuple in $W$ can be divided into two pairs, $(a,b),(c,d)$, where $a-b=c-d$, and the two numbers in each pair come from the same part ($X+30N$, $Y+100N$, or $Z-130N$).  
 So it actually corresponds to a (shifted) Sidon 4-tuple from the original instance in $X,Y,Z,$ or between $X,Y$, between $Y,Z$, or between $Z,X$.
 This means the new one-partite 3SUM instance $W$ has additive energy \[ \sfE(W) = \sfE(X) + \sfE(Y)+\sfE(Z) + \sfE(X,Y)+\sfE(Y,Z)+\sfE(Z,X).\]

Our goal is to deterministically pick a $h\in \caH$ to perform the 3SUM self-reduction mentioned above, so that the produced instances have low total additive energy.
To do this, we try every $h\in \caH$, and check whether $h$ is good.
Recall from \cref{def:hashfamilyh} that $|\caH| \le 4m^2 p^{12}$.

We first need to analyze the expected total energy of the produced instances under a random $h\in \caH$.
To do this, we consider each individual Sidon 4-tuple in $A$, $a+b=c+d$, and analyze its expected contribution to the total energy of all produced instances.
Given the discussion above, the contribution of this 4-tuple to the total energy (under a specific $h\in \caH$) is determined as follows:
\begin{itemize}
    \item 
If $h(a)=h(b)=h(c)=h(d)=i^*$ for some $i^*\in \F_p$, then they are hashed to the same bucket $i^*$, and it corresponds to a Sidon 4-tuple in every instance  
$\big (h^{-1}(i^*)\cap A, h^{-1}(j)\cap A, h^{-1}(k)\cap A\big )$, where $i^*+j+k\in \Delta'_h$.  There are $O(p\cdot |\Delta'|)$ such choices of $j,k$, so this Sidon 4-tuple contributes $O(p\cdot |\Delta'|)$ to the total energy.
\item 
If $h(a),h(b),h(c),h(d)$ are not distinct but not all equal, then by a similar reasoning, this Sidon 4-tuple can contribute at most $O(|\Delta'|)$ to the total energy.
\item 
If $h(a),h(b),h(c),h(d)$ are distinct,  then they are hashed to four different buckets, and hence do not contribute to the total energy.
\end{itemize}

Now we can analyze the expected contribution of the Sidon 4-tuple $a+b=c+d$ to the total energy, by distinguishing several cases:
\begin{itemize}
      \item if $a=b=c=d$, then clearly $h(a)=h(b)=h(c)=h(d)$. 
      There are only $|A|$ such 4-tuples.
         \item Otherwise, there are at least two distinct values among $a,b,c,d$. Say they are $a,b$. Note that $a\neq b$ cannot have any $2$-term $\ell$-relation as defined in \cref{defn:lrelation}, so by \cref{lem:almostkwiseindep} we know $h(a)=h(b)$ happens with at most $n^{o(1)}/p$ probability.
            Now we look at more subcases.
            \begin{itemize}
      \item If $a,b,c$ has no $O(Q^2)$-relation, then by \cref{lem:almostkwiseindep}, $h(a)=h(b)=h(c)$ happens with probability at most $n^{o(1)}/p^2$. 
      The number of such 4-tuples are trivially upper-bounded by   $\sfE(A)$.
   \item Otherwise, $a,b,c$ has a $O(Q^2)$-relation. Then there are at most $(O(Q^2))^2-1 = n^{o(1)}$ ways to pick the coefficients $\beta_1+\beta_2+\beta_3=0$ of this relation, and there are $|A|^2$ possibilities for $(a,b)\in A^2$, which then uniquely determine $c$ by the relation $\beta_1a+\beta_2b+\beta_3c=0$, and then uniquely determine $d$ by $a+b=c+d$.
   Hence, in total there are at most $n^{o(1)}|A|^2$ such 4-tuples. 
            \end{itemize}
\end{itemize}
Hence, the expected number of tuples $(a,b,c,d)\in A^4$ such that $h(a)=h(b)=h(c)=h(d)$ can be bounded by $|A| + \sfE(A)\cdot n^{o(1)}/p^2 + n^{o(1)}|A|^2 \cdot n^{o(1)}/p$.
The expected number of 
$(a,b,c,d)\in A^4$ such that 
$h(a),h(b),h(c),h(d)$ are not distinct can be bounded by $|A| + \sfE(A)\cdot n^{o(1)}/p$.
Together, their contribution to the total energy of all instances is in expectation
\begin{align}
& \big (|A| + \sfE(A)\cdot n^{o(1)}/p^2 + n^{o(1)}|A|^2 \cdot n^{o(1)}/p\big ) \cdot O(p\cdot |\Delta'|)   + \big (|A| + \sfE(A)\cdot n^{o(1)}/p\big ) \cdot O(|\Delta'|)\nonumber  \\
& = O( (|A|p + \sfE(A)/p + |A|^2)\cdot n^{o(1)}).
\label{eqn:energybound}
\end{align}

We can also use a similar (and simpler) argument  to bound the expected total squared size of the buckets, by
\begin{equation}
    \Ex_{h\in \caH}\Big [\sum_i |h^{-1}(i)\cap A|^2\Big ] \le  O(|A|^2\cdot n^{o(1)}/p).
\label{eqn:squaresizebound}
\end{equation}

By Markov's inequality and a union bound, we know there is an $h \in\caH$ such that both upper bounds \cref{eqn:energybound,eqn:squaresizebound} hold (up to a constant factor) without taking expectation. The goal is to find such $h$ (or any other good enough $h$), by enumerating $h\in \caH$ and checking whether $h$ is good deterministically.

Let $\eps>0$ be some constant to be determined later.
For each $h\in \caH$, do the following steps:
\begin{itemize}
    \item 
    If the bound in \cref{eqn:squaresizebound} is violated for $h$ then we terminate this procedure and start to check the next hash function.
    Otherwise, use $h$ to perform the 3SUM self-reduction described above and obtain $p^2|\Delta'|=p^2n^{o(1)}$ smaller 3SUM instances.  
    \item 
    All instances of size $< |A|^{1-\eps}/p$ are marked as \emph{light}.
    All buckets of size $|h^{-1}(i)\cap A| \ge \frac{1}{3}|A|^{1+\eps}/p$ are marked as \emph{heavy}. 
    All instances that include at least one heavy buckets are marked as \emph{heavy}.
    \item 
    Let $\mu:=|A|^{\alpha - 1-4\eps}$.
    For each non-heavy instance (which can have size $\le |A|^{1+\eps}/p$), use \cref{cor:approxenergy} to approximate its additive energy up to additive error
    $\mu\cdot \big (|A|^{1+\eps}/p\big )^3 = |A|^{2+\alpha-\eps }/p^3$.   If the approximate energy is greater than  $0.5|A|^{2+\alpha + 4\eps }/p^3$, then mark this instance as \emph{high-energy}.
    
    The total time of this step over all $p^2n^{o(1)}$ instances is $p^2n^{o(1)}\cdot O(\mu^{-1}\cdot |A|^{1+\eps}/p \cdot n^{o(1)}) = p|A|^{2-\alpha+5\eps+o(1)}$.
    \item If the sum of the approximate energy (from the previous step) over all non-heavy instances is greater than $|A|^{2+\alpha+\eps}/p$, then $h$ is bad, and we terminate this procedure and start to check the next hash function.
    Otherwise, we declare $h$ as good, and we will not try other hash functions any more.

    The following steps are only run for the good $h$.
    \item 
    We solve all light, and high-energy instances by brute force in quadratic time. For each item $a\in A$ landing in a heavy bucket, use $\tilde O(|A|)$ time to find 3SUM solutions involving it.
    \item For each remaining instance that is non-heavy, it has size $s\in [|A|^{1-\eps}/p,|A|^{1+\eps}/p]$, and has additive energy at most  $0.5|A|^{2+\alpha+4\eps}/p^3 + \mu \big (|A|^{1+\eps}/p\big )^3 \le |A|^{2+\alpha+4\eps}/p^3$. 
    We will set parameters 
    later so that  
    \[
        |A|^{2+\alpha + 4\eps}/p^3 \le (|A|^{1-\eps}/p)^{2+\beta}
    \]
    holds,
    which is equivalent to 
   \begin{equation}
      |A|^{\alpha+(6+\beta)\eps - \beta} \le p^{1-\beta}. 
        \label{eqn:tohold}
   \end{equation} 
 Then this instance has additive energy at most $s^{2+\beta}$, and hence can be solved in $s^{2-\delta}$ time by assumption.
\end{itemize}

To show the correctness of this reduction, we need to show at least one $h$ is good. Note that when estimating the total additive energy of all $p^2n^{o(1)}$ instances, the total additive error incurred is 
\[ \mu\cdot \big (|A|^{1+\eps}/p\big )^3 \cdot p^2n^{o(1)}  = |A|^{2+\alpha-\eps +o(1)}/p. \]
By \cref{eqn:energybound} we know there is some $h$ such that the total energy is at most \[O((|A|^2+|A|^{2+\alpha}/p))\cdot n^{o(1)} \le O( |A|^{2+\alpha+o(1)}/p).\]
For such $h$, the approximate total energy is at most $O( |A|^{2+\alpha+o(1)}/p) + |A|^{2+\alpha-\eps +o(1)}/p \le  |A|^{2+\alpha + \eps }/p$, so our algorithm will declare this $h$ as good.
    On the other hand, if we declare some $h$ as good, then we know the total additive energy of all non-heavy instances is at most $|A|^{2+\alpha+\eps}/p + |A|^{2+\alpha-\eps +o(1)}/p \le 2|A|^{2+\alpha+\eps}/p$.

Now it remains to analyze the time complexity of the whole algorithm.
\begin{itemize}
    \item The energy estimation step takes $p|A|^{2-\alpha+5\eps+o(1)}$ time for each $h\in \caH$. We need to enumerate all $h\in \caH$ (where $|\caH| \le n^{o(1)}p^{12})$, so the total time is 
    \begin{equation}
    \label{eqn:tosubquad1}
    p^{13}|A|^{2+5\eps-\alpha+o(1)}.
    \end{equation}
    \item  By \cref{eqn:squaresizebound}, the total size of heavy buckets $i$ (that is, with size $|h^{-1}(i)\cap A|\ge \frac{1}{3}|A|^{1+\eps}/p$) is at most $O(|A|^2\cdot n^{o(1)}/p)/ \big (\frac{1}{3}|A|^{1+\eps}/p\big ) \le O(|A|^{1-\eps+o(1)})$. For each of them we spend $\tilde O(|A|)$ time to find 3SUM solutions involving it, which is $O(|A|^{2-\eps+o(1)})$ total time.
    \item The total time for solving (at most $p^2 n^{o(1)}$ many) light instances is $p^2 n^{o(1)} \cdot \big (|A|^{1-\eps}/p\big )^2 = |A|^{2-2\eps+o(1)}$.
    \item Since the total additive energy of all non-heavy instances is $\le 2|A|^{2+\alpha+\eps}/p$, the number of instances with that are marked as high-energy (which should have energy at least $0.5|A|^{2+\alpha+4\eps}/p^3 - \mu\cdot \big (|A|^{1+\eps}/p\big )^3 \ge 0.4|A|^{2+\alpha+4\eps}/p^3$) is at most 
    $\frac{2|A|^{2+\alpha+\eps}/p}{0.4|A|^{2+\alpha+4\eps}/p^3}= 5p^2/|A|^{3\eps}$.
    The total time for solving these high-energy non-heavy instances is  at most $\big (5p^2/|A|^{3\eps}\big )\cdot O((|A|^{1+\eps}/p)^2) \le O(|A|^{2-\eps+o(1)})$.
    \item The remaining instances are solved by the assumed subquadratic-time algorithm on lower energy  instances. The total time is 
    \begin{equation}
     p^2n^{o(1)} \cdot (|A|^{1+\eps}/p)^{2-\delta} 
    \label{eqn:tosubquad2}
    \end{equation}
\end{itemize}

The remaining goal is to set parameters $\beta\in (0,\alpha),p$, and $\eps>0$, so that \cref{eqn:tohold} holds, and \cref{eqn:tosubquad1,eqn:tosubquad2} are truly-subquadratic in $|A|$. This is achieved by setting 
\begin{align*}
\beta &= \alpha - \alpha(1-\alpha)/30,\\
p &= \Theta(|A|^{\alpha/15}),
\end{align*}
and $\eps>0$ small enough (in particular, $\eps:= \min\{ \alpha(1-\alpha)/420, \delta/100\}$).
To verify, note that \cref{eqn:tohold} holds because
\[ |A|^{\alpha+(6+\beta)\eps-\beta} = |A|^{\alpha(1-\alpha)/30+(6+\beta)\eps} \le |A|^{\alpha(1-\alpha)/20} = \Theta(p^{0.75(1-\alpha)}),\]
whereas $p^{1-\beta} = p^{(1-\alpha)(1+\alpha/30)} \ge p^{0.75(1-\alpha)}$.
\cref{eqn:tosubquad1} can be upper-bounded by 
\[ |A|^{13\alpha/15}\cdot |A|^{2+5\eps-\alpha+o(1)} \le |A|^{2+5\eps - 2\alpha/15 + o(1)} \le |A|^{2-\alpha/15+o(1)}.   \]
And \cref{eqn:tosubquad2} can be upper-bounded by $|A|^{(1+\eps)(2-\delta)+o(1)}p^{\delta} \le  |A|^{2+2\eps -\delta +\delta\alpha/15+o(1)}\le |A|^{2-0.9\delta+o(1)}$.
\end{proof}

Now we can prove  \cref{thm:smalldoub3sumhardnessmain}.
\begin{proof}[Proof of \cref{thm:smalldoub3sumhardnessmain}]
For any given constant $\delta>0$, we deterministically reduce from 3SUM to 3SUM with additive energy at most $n^{2+\delta}$ as follows: first apply \cref{lem:hightomod} to reduce from 3SUM to 3SUM with additive energy at most $n^{2.95}$. Then, starting with $\alpha_0:=0.95$, repeatedly apply \cref{lem:modtolow} to reduce from 3SUM with additive energy at most $\alpha_{i}$ to 3SUM with additive energy at most $\alpha_{i+1}= \alpha_i-\alpha_i(1-\alpha_i)/30$.
One can check that the sequences $\{\alpha_i\}$ converges to $0$. Hence, there is some step where we arrive at 3SUM instances of additive energy at most $n^{2+\delta}$ as desired.

Recall that we initially start from 3SUM with universe $O(n^3)$ (which is allowed by \cite[Theorem 1.2]{fischer3sum}). There are only finitely many steps of reductions. In each step, one can verify that  \cref{lem:modtolow} only blows up the universe size by a fixed polynomial (strictly speaking, the universe size does not change, but the instance size shrinks by a fixed polynomial factor).
Hence, in the end we arrive at 3SUM instances whose universe size is still $n^{O(1)}$ (where the $O(1)$ depends on $\delta$).
\end{proof}

\section{Reduction to Triangle Listing}
\label{sec:trianglist}
In this section, we prove the following theorem.
\begin{theorem}
\label{thm:to-trianglelisting}
    Given a 3SUM instance on a size $n$ set $A \subset [n^{O(1)}]$ with additive energy $\sfE(A) \le n^{2+\delta}$, a small parameter $\eps > 0$, and a parameter $P = n^{k \cdot \eps} \le n$ for some integer $k$, there is a deterministic reduction that reduces the 3SUM instance to $n^{o(1)}$  Triangle Listing instances on graphs with 
    \begin{itemize}
        \item $P^{2} \cdot n^{O(\eps)}$ nodes;
        \item $O(n^{1+\eps} / P)$ maximum degree;
        \item $n^{2+\delta+o(1)}/P + n^{4+o(1)}/P^4$ $4$-cycles, 
    \end{itemize}
    and we are required to report $n^{3+o(1)} / P^2$ triangles for each graph. 
    Furthermore, the running time of the reduction is $O(n^{2-\eps} + n^{1+O(\eps)} \cdot P)$.
\end{theorem}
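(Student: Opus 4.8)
The plan is to derandomize the reduction of Abboud, Bringmann, and Fischer~\cite{AbboudBF23} (see also Jin and Xu~\cite{JinX23}), keeping the same three-phase structure but replacing every randomized ingredient with a deterministic counterpart developed earlier in this paper. \emph{Phase 1: hashing.} I would feed $A$ into the almost-additive hash family $\caH$ of \cref{def:hashfamilyh} with a prime $p=\Theta(n^{\eps})$, independence parameter $k\ge 5$, and partition $A$ into buckets $A_i=A\cap h^{-1}(i)$; iterating this $k$ times (which is exactly what the hypothesis $P=n^{k\eps}$ is tailored for) shrinks the bucket size to $\approx n/P$. By almost-linearity (\cref{lem:almostlinearity}) any 3SUM solution $a+b+c=0$ falls into a bucket-triple $(i,j,k)$ with $i+j+k$ in a set of size $n^{o(1)}$, so only $P^{2}n^{o(1)}$ bucket-triples matter. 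Using the almost-$k$-wise independence of $\caH$ on tuples without short $\ell$-relations (\cref{lem:almostkwiseindep}, cf.\ the second-moment computations in \cref{lem:expboundshash}), the expected total squared bucket size is $n^{2+o(1)}/P$, so after peeling off the $O(n^{1-\eps})$ elements in ``heavy'' buckets of size $>n^{1+\eps}/P$ — handled directly in $\tilde O(n^{2-\eps})$ time, which is the source of the $n^{2-\eps}$ term — every remaining bucket has size $\le n^{1+\eps}/P$.

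The randomized argument only needs $h$ to make a constant number of quantities close to their expectations: the total squared bucket size; the number of hashing false positives $\sum \#\{(a,b,c)\in A_i\times A_j\times A_k: a+b+c\neq 0\}$ over the compatible bucket-triples; and, most importantly, the ``compatible $4$-tuple'' count governing the $4$-cycle total. Each of these expectations is bounded by combining \cref{lem:almostlinearity,lem:hashquadlb,lem:almostkwiseindep} exactly as in~\cite{AbboudBF23}, and — this is the point of the derandomization — each can be \emph{estimated deterministically} to the precision needed for a Markov-type threshold: the collision and false-positive counts via \cref{thm:introdeterministicapprox3sum}, and the $4$-tuple count (which, up to the hashing bookkeeping, is an additive-energy statistic) via \cref{cor:approxenergy}. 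So I would enumerate all $|\caH|\le n^{O(\eps)}$ hash functions, estimate these quantities for each, and keep the first $h$ passing all thresholds; at least one passes by a union bound over Markov's inequality. The enumeration plus building the bucket structure and, below, the graphs costs $n^{1+O(\eps)}P$.

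\emph{Phase 2: assembling the graph.} Each of the $P^{2}n^{o(1)}$ surviving bucket-triples is a tripartite 3SUM instance on sets of size $\le n^{1+\eps}/P$, and I reduce each to a tripartite Triangle Listing instance through the standard P\u{a}tra\c{s}cu-style gadget~\cite{Patrascu10} used in~\cite{AbboudBF23,JinX23,fischer3sum}, laying these out in parallel inside $n^{o(1)}$ graphs. Tracking the gadget parameters yields the claimed bounds: total node count $P^{2}n^{O(\eps)}$, maximum degree $O(n^{1+\eps}/P)$, and a triangle set (the real 3SUM solutions together with the hashing false positives, controlled in Phase~1) of size $n^{3+o(1)}/P^2$. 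The FFT-type steps inside the gadget and bucket construction are done deterministically via sparse convolution (\cref{thm:detnonnegsparsecovo}), staying within the $n^{1+O(\eps)}P$ budget.

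\emph{Phase 3: counting $4$-cycles} — this is the heart of the proof and the main obstacle. A $4$-cycle in the assembled graph unwinds into four elements of $A$ that are pairwise hashed into compatible buckets and satisfy the additive relation forced by the gadget. If the four elements form an additive-energy witness $a+b=c+d$, then there are at most $\sfE(A)\le n^{2+\delta}$ of them and, by \cref{lem:hashquadlb}, each survives the hashing with ``probability'' $n^{o(1)}/P$ on average — this gives the $n^{2+\delta+o(1)}/P$ term. If the four elements satisfy no such relation and no short $\ell$-relation, then by almost-$4$-wise independence (\cref{lem:almostkwiseindep}) they hash compatibly with probability $n^{o(1)}/P^{4}$, producing the $n^{4+o(1)}/P^{4}$ term; the remaining tuples, those with a short $\ell$-relation, number only $n^{2+o(1)}$ (the same count that appears in \cref{lem:expboundshash}) and are absorbed. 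In~\cite{AbboudBF23} this bound is proved in expectation over a genuinely random hash and then boosted to hold whp by a variance computation, which does not derandomize. Instead I would take the expectation bound from \cref{lem:almostkwiseindep} and \cref{lem:hashquadlb}, add ``$4$-cycle count $\le$ threshold'' to the list of Markov conditions verified in Phase~1, and check it deterministically since the count is again an energy-type quantity approximable by \cref{cor:approxenergy}. The real difficulty is that $\caH$ must be \emph{simultaneously} almost-linear and almost-$k$-wise independent for Phases~1 and~3 to be compatible — precisely why the coarser deterministic hashing of~\cite{fischer3sum} is insufficient and why Tool~III is needed — and a persistent secondary annoyance is that \cref{lem:almostkwiseindep} degrades on tuples carrying short $\ell$-relations, so such tuples must be peeled off and bounded by hand at every point where the hash independence is invoked.
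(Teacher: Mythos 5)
Your high-level architecture — hash into $\approx P$ buckets via the almost-additive family of \cref{def:hashfamilyh}, build a tripartite graph on hash-value labels, bound 4-cycles by the ``almost almost $k$-wise independence'' of \cref{lem:almostkwiseindep} with short $\ell$-relations peeled off, and derandomize by enumerating the small hash family and Markov-thresholding — matches the paper. However, the core of the paper's proof is precisely the step you gloss over, and your fix does not work.

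The quantities that must be verified at each derandomization step (the paper's \cref{lem:tri-listing-small-prime}, items 3--9) are sums of the form
$\sum_{(a_1,a_2,a_3,a_4)\in A^4} \mathcal{C}_{F,G,H}(a_1,\dots,a_4)$ or $\mathcal{D}_{F,G,H}(a_1,\dots,a_4)$,
where $\mathcal C,\mathcal D$ are indicator conditions like ``$F(a_1)+F(a_2)-F(a_3)-F(a_4)\in 2\Delta_F-2\Delta_F$ and $G(a_1)-G(a_4)\in\Delta_G-\Delta_G$ and \dots''. These are \emph{not} additive-energy statistics — they are hash-dependent counts over 4-tuples, and \cref{cor:approxenergy} (which approximates $\lvert\{(a,b,c,d):a+b=c+d\}\rvert$) does not compute them. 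The paper computes them \emph{exactly} by a bespoke algorithm: it first uses \cref{lem:tri-listing-small-prime-preprocessing} to drive $\sum_{a\ne b}[G(a)-G(b)\in\Delta_G-\Delta_G]$ below $n^{2-C\eps}$, so that enumerating exactly those pairs $(a_1,a_4)$ and $(a_2,a_3)$ with the $G$-condition and doing range-count data-structure lookups on the $F,H$-coordinates takes subquadratic time. You cannot simply ``add the 4-cycle count to the list of Markov conditions'' without this machinery, because (i) the count depends on the candidate $(f,g,h)$ and must be re-evaluated for each of the $n^{O(\eps)}$ choices, (ii) direct computation of the graph-theoretic 4-cycle count would cost $\Theta(n^{2+2\eps})$ per candidate, and (iii) no deterministic approximation tool in this paper targets this quantity. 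This is the genuine gap: your proposal recognizes that the derandomization hinges on deterministically verifiable thresholds, but it does not supply a way to evaluate them.

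Two smaller points. First, the paper uses \emph{three} independent hash functions $f,g,h$; your description reads as if a single $h$ drives everything, but independence across the three coordinates of the graph is what makes, e.g., the events ``$f(a_1+a_2-a_3-a_4)$ small'' and ``$h(a_1+a_2-a_3-a_4)$ small'' multiply to give the $1/p^2$ factor in the 4-cycle analysis. Second, ``reduce each bucket-triple to a separate gadget and lay them out in parallel'' is not the construction: a separate gadget per bucket-triple gives $\Theta(P^2)\cdot\Theta(n^{1+\eps}/P)=P\,n^{1+\eps}$ vertices, not $P^2 n^{O(\eps)}$; the paper instead builds a single graph (per choice of shift tuple $\delta_\bullet$) whose vertices are labeled by \emph{pairs} of hash values $\F_p^{\ell+k}\times\F_p^{\ell+k}$ and whose edges are defined directly by $(F,G,H)$.
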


\subsection{Warm-up: randomized reduction}
\label{sec:tri-listing-randomized}

Before we prove \cref{thm:to-trianglelisting}, we first show a randomized version of the reduction and we later derandomize it. The reduction is similar to \cite{AbboudBF23}'s reduction from low-energy 3SUM to Triangle Listing with few $4$-cycles (the difference is that they work over the group $\F_p^d$ for some prime $p$ and integer $d$, while we work over integers), and also shares similar high-level structure with \cite{JinX23}'s reduction from strong 3SUM to Triangle Detection. 

Let $p = \Theta(P)$ be a prime. 
Let $\mathcal{H}\colon  \{-N, \ldots, N\} \rightarrow \F_p$, for $N = n^{O(1)}$ be the
hash family constructed in \cref{def:hashfamilyh}, which is almost linear ``almost'' almost $k$-wise independent by \cref{lem:almostkwiseindep}. Note that in this case, two integers $x_1 \ne x_2$ cannot have an $n^{o(1)}$-relation, so the hash family is actually almost $2$-wise independent. 

Let $f, g, h$ be three hash functions sampled from $\mathcal{H}$.

For any $a \ne b \in A$, the probability $f(a) = f(b)$ is $O(1/ p)$. Therefore, 
\[
\Ex\left[\sum_{a, b \in A} [f(a) = f(b)]\right] = O(n^2 / p). 
\]
Therefore, the number of $a \in A$ that is in the same hash bucket (under hash function $f$) with $\ge n^{1+\eps} / p$ numbers in $A$ is $O(n^{1-\eps})$. For these numbers, we can detect whether they are in any 3SUM solution in $O(n^{2-\eps})$ total time, using the naive algorithm for 3SUM. Afterwards, we can remove these numbers from $A$, so that all hash bucket (under hash function $f$) contains at most $n^{1+\eps}/p$ numbers. 

Similarly, after $\tO(n^{2-\eps})$ time pre-processing we can ensure that any number in $A$ is mapped to the same bucket with $\le n^{1+\eps}/p$ other numbers under either $g$ or $h$.

We will construct $n^{o(1)}$ graphs, one for every $(\delta_{f, 1}, \delta_{f, 2}, \delta_{f, 3}, \delta_{g, 1}, \delta_{g, 2}, \delta_{g, 3}, \delta_{h, 1}, \delta_{h, 2}, \delta_{h, 3}) \in \Delta_f^3 \times \Delta_g^3 \times \Delta_h^3$. The constructed graph will be a tripartite graph, where the nodes of the three parts are labeled by 
\[
\{0 \} \times \F_p \times \F_p, \quad \F_p \times \{0\} \times \F_p, \quad \F_p \times \F_p \times \{0\}
\]
respectively. Next we describe the edges of the graph. We add an edge from a vertex $(x_1, y_1, z_1) \in \{0 \} \times \F_p \times \F_p$ to $(x_2, y_2, z_2) \in \F_p \times \{0\} \times \F_p$ if there is some $a \in A$ such that $x_2 -x_1 = f(a) + \delta_{f, 1}$, $y_2 -y_1 = g(a) + \delta_{g, 1}$ and $z_2 - z_1 = h(a) + \delta_{h, 1}$. Edges between other parts are similarly added, except that we replace $(\delta_{f, 1}, \delta_{g, 1}, \delta_{h, 1})$ with $(\delta_{f, 2}, \delta_{g, 2}, \delta_{h, 2})$ or $(\delta_{f, 3}, \delta_{g, 3}, \delta_{h, 3})$ depending on which part it is. More precisely:
\begin{itemize}
    \item For any $a \in A$, we add an edge with label $a$ (the label is purely for convenience in the analysis, and does not technically appear in the Triangle Listing instances) between $(0, y_1, z_1) \in \{0 \} \times \F_p \times \F_p$ and $(x_2, 0, z_2) \in \F_p \times \{0\} \times \F_p$ if $x_2 = f(a) + \delta_{f, 1}$, $-y_1 = g(a) + \delta_{g, 1}$ and $z_2 - z_1 = h(a) + \delta_{h, 1}$. 
    \item For any $a \in A$, we add an edge  with label $a$ between $(x_2, 0, z_2) \in \F_p \times \{0\} \times \F_p$ and $(x_3, y_3, 0) \in \F_p \times \F_q \times \{0\}$ if $x_3 - x_2 = f(a) + \delta_{f, 2}$, $y_3 = g(a) + \delta_{g, 2}$ and $- z_2 = h(a) + \delta_{h, 2}$. 
    \item For any $a \in A$, we add an edge  with label $a$ between $(x_3, y_3, 0) \in \F_p \times \F_p \times \{0\}$ and $(0, y_1, z_1) \in \{0\} \times \F_p \times \F_p$ if $- x_3 = f(a) + \delta_{f, 3}$, $y_1 - y_3 = g(a) + \delta_{g, 3}$ and $z_1 = h(a) + \delta_{h, 3}$. 
\end{itemize}
We only keep one copy of all duplicated edges. 

The construction of the graph is efficient. For instance, to find all edges between the first two parts, we enumerate $a \in A$ and $z \in \F_p$, and then add edges between $(0, -g(a) - \delta_{g, 1}, z) \in \{0\} \times \F_p \times \F_p$ and $(f(a) + \delta_{f, 1}, 0, z + h(a) + \delta_{h, 1}) \in  \F_p \times \{0\} \times \F_p$. The running time is essentially linear in the number of edges. 

\paragraph{Number of Vertices. } Clearly, the number of vertices in the constructed graph is $O(p^2)$. 

\paragraph{Maximum Degree. } Consider a vertex $(0, y, z) \in \{0\} \times \F_p \times \F_p$ in the first node part. In order for it to connect to some node in the second node part, there must exist some $a \in A$ where $-y = g(a) + \delta_{g, 1}$. By the pre-processing, the number of such $a$ is at most $n^{1+\eps} / p$. Similarly, in order for $(0, y, z)$ to connect to some node in the third node part, we need to have some $a \in A$ where $z = h(a) + \delta_{h, 3}$, so the number of such $a$ again is at most $n^{1+\eps} / p$. Thus, the degree of  $(0, y, z)$ is $O(n^{1+\eps} / p)$. By symmetry, the maximum degree of the graph is $O(n^{1+\eps} / p)$. 

\paragraph{Number of Triangles. } 
First of all, if $a, b, c \in A$ have $a+b+c = 0$, then $f(a) + f(b) + f(c) \in f(a+b+c)-2\Delta_f \subseteq -3\Delta_f$ (note that as $f(0 + 0) - f(0) - f(0) \in \Delta_f$, we must have $f(0) \in -\Delta_f$). Thus, there must exist $\delta_{f, 1}, \delta_{f, 2}, \delta_{f, 3} \in \Delta_f$, so that $(f(a) + \delta_{f, 1}) + (f(b) + \delta_{f, 2}) + (f(c) + \delta_{f, 3}) = 0$. Similarly, must exist $\delta_{g, 1}, \delta_{g, 2}, \delta_{g, 3} \in \Delta_g$, so that $(g(a) + \delta_{g, 1}) + (g(b) + \delta_{g, 2}) + (g(c) + \delta_{g, 3}) = 0$, and $\delta_{h, 1}, \delta_{h, 2}, \delta_{h, 3} \in \Delta_h$, so that $(h(a) + \delta_{h, 1}) + (h(b) + \delta_{h, 2}) + (h(c) + \delta_{h, 3}) = 0$. Then the following three vertices must form a triangle in the graph constructed with parameters $(\delta_{f, 1}, \delta_{f, 2}, \delta_{f, 3}, \delta_{g, 1}, \delta_{g, 2}, \delta_{g, 3}, \delta_{h, 1}, \delta_{h, 2}, \delta_{h, 3})$:
\[
(0, -(g(a) + \delta_{g, 1}), h(c) + \delta_{h, 3}), \quad (f(a) + \delta_{f, 1}, 0, -(h(b) + \delta_{h, 2})), \quad (-(f(c) + \delta_{f, 3}), g(b) + \delta_{g, 2}, 0).
\]

For any triangle in the constructed graph, it is not difficult to verify that the edge labels $a, b, c$ of the three edges of the triangle satisfy 
\begin{equation}
\label{eq:triangle-bound-random}
\begin{gathered}
(f(a) + \delta_{f, 1}) + (f(b) + \delta_{f, 2}) + (f(c) + \delta_{f, 3}) = 0,\\
(g(a) + \delta_{g, 1}) + (g(b) + \delta_{g, 2}) + (g(c) + \delta_{g, 3}) = 0,\\
(h(a) + \delta_{h, 1}) + (h(b) + \delta_{h, 2}) + (h(c) + \delta_{h, 3}) = 0.
\end{gathered}
\end{equation}
Thus, triples $(a, b, c)$ satisfying \cref{eq:triangle-bound-random} are in one-to-one correspondence with triangles in the graph. 

\cref{eq:triangle-bound-random} implies that 
\[
f(a + b + c) \in 2\Delta_f - 3\Delta_f, \quad g(a + b + c) \in 2\Delta_g - 3\Delta_g, \quad h(a + b + c) \in 2\Delta_h - 3\Delta_h. 
\]
If $a + b + c \ne 0$, then by \cref{lem:deltahalmostkwiseindep}, each of these equations holds with probability $n^{o(1)} / p$, so the probability that they all hold is $n^{o(1)} / p^3$. Furthermore, each triple $(a, b, c)$ can contribute at most $1$ triangle in each constructed graph. Thus, the expected number of triangles in each graph with edge labels $a, b, c$ for $a + b + c \ne 0$ is $n^{3+o(1)} / p^3$, and it suffices to list more triangles than this number before we find an actual 3SUM solution if an actual 3SUM solution exists.

\paragraph{Number of Four-Cycles. } By symmetry, it suffices to consider the following two types of four-cycles:
\begin{itemize}
    \item The $4$-cycle (with edge labels) is  $(0, y_1, z_1) \overset{a_1}{-} (x_2, 0, z_2) \overset{a_2}{-} (x_3, y_3, 0) \overset{a_3}{-} (x_4, 0, z_4) \overset{a_4}{-} (0, y_1, z_1)$. Such a $4$-cycle exists only if the following set of equations hold:
    \begin{alignat*}{8}
        & x_2 \in f(a_1) + \Delta_f, \quad && x_3 - x_2 \in f(a_2) + \Delta_f, \quad && x_3 - x_4 \in f(a_3) + \Delta_f, \quad && x_4 \in f(a_4) + \Delta_f, \\
        & -y_1 \in g(a_1) + \Delta_g,\quad && y_3 \in g(a_2) + \Delta_g,\quad && y_3 \in g(a_3) + \Delta_g,\quad && -y_1 \in g(a_4) + \Delta_g, \\
        & z_2 - z_1 \in h(a_1) + \Delta_h, \quad && -z_2 \in h(a_2) + \Delta_h, \quad && -z_4 \in h(a_3) + \Delta_h, \quad && z_4 - z_1 \in h(a_4) +\Delta_h.
    \end{alignat*}
    These equations imply 
    \begin{gather*}
        f(a_1) + f(a_2) - f(a_3) - f(a_4) \in 2\Delta_f - 2\Delta_f,\\
        g(a_1) - g(a_4) \in \Delta_g - \Delta_g, \quad g(a_2) - g(a_3) \in \Delta_g - \Delta_g\\
         h(a_1) + h(a_2) - h(a_3) - h(a_4) \in 2\Delta_h - 2\Delta_h,
    \end{gather*}
    which further imply 
    \begin{equation}
    \label{eq:C4-bound-randomized-1}
    \begin{gathered}
        f(a_1 + a_2 - a_3 - a_4) \in 3\Delta_f - 4\Delta_f,\\
        g(a_1 - a_4) \in \Delta_g - 2\Delta_g, \quad g(a_2- a_3) \in \Delta_g - 2\Delta_g\\
         h(a_1 + a_2 - a_3 - a_4)  \in 3\Delta_h - 4\Delta_h. 
    \end{gathered}
    \end{equation}
    Furthermore, each tuple $(a_1, a_2, a_3, a_4)$ satisfying \cref{eq:C4-bound-randomized-1} can contribute at most $n^{o(1)}$ such $4$-cycles. Notice that if $a_2 = a_3$ or $a_1 = a_4$, then the above walk from $(0, y_1, z_1)$ back to $(0, y_1, z_1)$ actually uses duplicated edges, so it is not a $4$-cycle. Thus, we can assume $a_2 \ne a_3$ and $a_1 \ne a_4$. We consider the following types of tuples $(a_1, a_2, a_3, a_4)$ satisfying \cref{eq:C4-bound-randomized-1}:
    \begin{itemize}
        \item $a_1 + a_2 - a_3 - a_4 = 0$. In this case, the probability that $g(a_1 - a_4) \in \Delta_g - 2\Delta_g$ is $n^{o(1)}/p$, so the expected number of such tuples satisfying \cref{eq:C4-bound-randomized-1} is bounded by $\sfE(A) \cdot n^{o(1)} / p = n^{2+\delta+o(1)} / p$. 
        \item $a_1 + a_2 - a_3 - a_4 = 0$. Fix any tuple $(a_1, a_2, a_3, a_4)$ with $a_1 + a_2 - a_3 - a_4 \ne 0$. By \cref{lem:deltahalmostkwiseindep}, the probability that $f(a_1 + a_2 - a_3 - a_4) \in 3\Delta_f - 4\Delta_f$ is $n^{o(1)}/p$, and the probability that $h(a_1 + a_2 - a_3 - a_4)  \in 3\Delta_h - 4\Delta_h$ is $n^{o(1)} / p$. So the probability that the tuple satisfies these two conditions is $n^{o(1)}/p^2$. We then consider two subcases:
        \begin{itemize}
            \item $a_1 - a_4, a_2 - a_3, 0$ have a $3$-term $n^{o(1)}$-relation. In this case, given $a_1, a_2, a_3$, there can only be $n^{o(1)}$ choices for $a_4$, so the number of such tuples is $n^{3+o(1)}$. For each such tuple, the probability that $g(a_1 - a_4) \in \Delta_g - 2\Delta_g$ is $n^{o(1)} / p$. Combined with the probability that $f(a_1 + a_2 - a_3 - a_4) \in 3\Delta_f - 4\Delta_f$  and $h(a_1 + a_2 - a_3 - a_4)  \in 3\Delta_h - 4\Delta_h$ both hold, each such tuple satisfies \cref{eq:C4-bound-randomized-1} with probability $n^{o(1)} / p^3$, so the expected number of such tuples satisfying \cref{eq:C4-bound-randomized-1} is $n^{3+o(1)} / p^3$. 
            \item $a_1 - a_4, a_2 - a_3, 0$ have no $3$-term $n^{o(1)}$-relation. In this case, the probability that $g(a_1 - a_4) \in \Delta_g - 2\Delta_g, g(a_2- a_3) \in \Delta_g - 2\Delta_g$ both hold is $n^{o(1)} / p^2$ by \cref{lem:deltahalmostkwiseindep}. Thus, the expected number of such tuples  satisfying \cref{eq:C4-bound-randomized-1} is $n^{4+o(1)} / p^4$. 
        \end{itemize}
    \end{itemize}
         \item The $4$-cycle (with edge labels) is  $(0, y_1, z_1) \overset{a_1}{-} (x_2, 0, z_2) \overset{a_2}{-} (0, y_3, z_3) \overset{a_3}{-} (x_4, 0, z_4) \overset{a_4}{-} (0, y_1, z_1)$. Such a $4$-cycle exists only if the following set of equations hold:
    \begin{alignat*}{8}
        & x_2 \in f(a_1) + \Delta_f, \quad && x_2 \in f(a_2) + \Delta_f, \quad && x_4 \in f(a_3) +\Delta_f, \quad && x_4 \in f(a_4) + \Delta_f, \\
        & -y_1 \in g(a_1) + \Delta_g,\quad && -y_3 \in g(a_2) + \Delta_g,\quad && -y_3 \in g(a_3) + \Delta_g,\quad && -y_1 \in g(a_4) + \Delta_g, \\
        & z_2 - z_1 \in h(a_1) + \Delta_h, \quad && z_2 - z_3 \in h(a_2) + \Delta_h, \quad && z_4 - z_3 \in h(a_3) + \Delta_h, \quad && z_4 - z_1 \in h(a_4) + \Delta_h.
    \end{alignat*}
        These equations imply
    \begin{gather*}
    f(a_1) - f(a_2) \in \Delta_f - \Delta_f, \quad f(a_4) - f(a_3) \in \Delta_f - \Delta_f,\\
    g(a_1) - g(a_4) \in \Delta_g - \Delta_g, \quad g(a_2) - g(a_3) \in \Delta_g - \Delta_g \\
    h(a_1) + h(a_3) - h(a_2) - h(a_4) \in 2\Delta_h - 2 \Delta_h, 
    \end{gather*}
    which further imply 
    \begin{equation}
    \label{eq:C4-bound-randomized-2}
    \begin{gathered}
    f(a_1 - a_2) \in \Delta_f - 2\Delta_f, \quad f(a_4 - a_3) \in \Delta_f - 2 \Delta_f,\\
    g(a_1 - a_4) \in \Delta_g - 2 \Delta_g, \quad g(a_2 - a_3) \in \Delta_g - 2 \Delta_g \\
    h(a_1 + a_3 - a_2 - a_4) \in 3\Delta_h - 4 \Delta_h. 
    \end{gathered}
    \end{equation}
    Additionally, for every tuple $(a_1, a_2, a_3, a_4) \in A^4$ satisfying \cref{eq:C4-bound-randomized-2}, there can be at most $n^{o(1)} p$ $4$-cycles of this type in the graph with labels $a_1, a_2, a_3, a_4$ (the $p$ factor comes from the $p$ choices of $z_1$). Similar as before, we can assume $a_1 \ne a_2, a_2 \ne a_3, a_3 \ne a_4, a_4 \ne a_1$, as otherwise the $4$-cycle degenerates to a $4$-walk. We consider the following types of tuples satisfying \cref{eq:C4-bound-randomized-2} (and the number of corresponding $4$-cycles is a factor of $n^{o(1)} \cdot p$ larger):
    \begin{itemize}
        \item $a_1 - a_4 = a_2 - a_3$. The number of such tuples is $\sfE(A)$. For each such tuple, the probability that $f(a_1 - a_2) \in \Delta_f - 2\Delta_f$ is $n^{o(1)} / p$, and the probability that $g(a_1 - a_4) \in \Delta_g - 2 \Delta_g$ is $n^{o(1)} / p$. These two events are independent, so the probability that each tuple satisfies \cref{eq:C4-bound-randomized-2} is $n^{o(1)} / p^2$, and thus the expected number of  $(a_1, a_2, a_3, a_4)$ with $a_1 - a_4 = a_2 - a_3$ satisfying \cref{eq:C4-bound-randomized-2} is $\sfE(A) \cdot n^{o(1)} / p^2 = n^{2+\delta+o(1)} / p^2$. 
        \item $a_1 - a_4 \ne a_2 - a_3$. In this case, we consider four subcases:
        \begin{itemize}
        \item $a_1 - a_2, a_4 - a_3, 0$ have a $3$-term $n^{o(1)}$-relation and $a_1 - a_4, a_2 - a_3, 0$ have a $3$-term $n^{o(1)}$-relation. In this case, there are rational numbers $C$ and $D$ whose denominator and numerator are both $n^{o(1)}$ such that $a_1 - a_2 = C(a_4 - a_3)$ and $a_1 - a_4 = D(a_2 - a_3)$. Rearranging the equations, we get that $a_1 - a_2 = C(a_4 - a_3)$ and $a_1 - D a_2 = a_4 - D a_3$. We know $D \ne 1$ as $a_1 - a_4 \ne a_2 - a_3$, so we can uniquely solve $a_1, a_2$ given $a_3, a_4, C, D$. Therefore, the number of $(a_1, a_2, a_3, a_4) \in A^4$ falling into this case is $n^{2+o(1)}$. For each such $(a_1, a_2, a_3, a_4)$, the probability they satisfy \cref{eq:C4-bound-randomized-2} is $n^{o(1)} / p^3$, so the expected number of tuples satisfying \cref{eq:C4-bound-randomized-2} this case is $n^{2+o(1)} / p^3$. 
        \item $a_1 - a_2, a_4 - a_3, 0$ have a $3$-term $n^{o(1)}$-relation and $a_1 - a_4, a_2 - a_3, 0$ have no $3$-term $n^{o(1)}$-relation. Here, the number of such tuples is $n^{3+o(1)}$, as given $a_1, a_2, a_3$ and the coefficients for the $3$-term $n^{o(1)}$-relation, we can uniquely determine $a_4$. For each such tuple, the probability that $g(a_1 - a_4) \in \Delta_g - 2 \Delta_g$ and  $g(a_2 - a_3) \in \Delta_g - 2 \Delta_g$ is $n^{o(1)} / p^2$, as $a_1 - a_4, a_2 - a_3, 0$ have no $3$-term $n^{o(1)}$-relation. The probability that $f(a_1 - a_2) \in \Delta_f - 2\Delta_f$ and the probability that $h(a_1 + a_3 - a_2 - a_4) \in 3\Delta_h - 4 \Delta_h$ are both $n^{o(1)} / p$. The previous three events are independent, so the expected number of tuples satisfying \cref{eq:C4-bound-randomized-2} this case is $n^{3+o(1)} / p^4$. 
        \item $a_1 - a_2, a_4 - a_3, 0$ have no $3$-term $n^{o(1)}$-relation and $a_1 - a_4, a_2 - a_3, 0$ have a $3$-term $n^{o(1)}$-relation. This case is symmetric to the previous case, and the expected number is $n^{3+o(1)} / p^4$. 
        \item Both $a_1 - a_2, a_4 - a_3, 0$ and $a_1 - a_4, a_2 - a_3, 0$ have no $3$-term $n^{o(1)}$-relations. The number of such tuples is $O(n^4)$, and each tuple satisfies \cref{eq:C4-bound-randomized-2} with probability $n^{o(1)} / p^5$ by \cref{lem:deltahalmostkwiseindep}. Thus, the expected number of tuples satisfying \cref{eq:C4-bound-randomized-2} this case is $n^{4+o(1)} / p^5$. 
        \end{itemize}
    \end{itemize}

\end{itemize}
Overall, the expected number of $4$-cycles is 
\[
n^{2+\delta+o(1)} / p + n^{4+o(1)} / p^4 + n^{2+o(1)} / p^2 + n^{3+o(1)} / p^4 + n^{4+o(1)} / p^4 = n^{2+\delta +o(1)} / p + n^{4+o(1)} / p^4. 
\]
\subsection{Derandomization}

The randomness of the above construction comes from the random choices of the hash functions $f, g, h$. By \cref{def:hashfamilyh}, the size of the hash family is $p^{O(1)}$, so it will be expensive to enumerate over all triples of hash functions when $p$ is big. Similar to \cite{fischer3sum}, the idea for derandomization is to pick hash functions from smaller families (say when $p = n^\eps$, the hash family will have size $n^{O(\eps)}$), and taking their direct product to get the final hash function.

We need the following lemma.

\begin{lemma}
\label{lem:tri-listing-small-prime}
Let $A \subset [n^{O(1)}]$ be a size-$n$ set without 3SUM solutions, and $p = \Theta(n^\eps)$ be a prime. 
Let $F, G, H: \{-N, \ldots, N\} \rightarrow \F_p^k$ for some constant $k$ be hash functions, where for any $x, y \in \{-N, \ldots, N\}$, $F(x + y) - F(x) - F(y) \in \Delta_F, G(x + y) - G(x) - G(y) \in \Delta_G, H(x + y) - H(x) - H(y) \in \Delta_H$ for $|\Delta_F|, |\Delta_G|, |\Delta_H| = n^{o(1)}$. Additionally, evaluating $F, G, H$ takes $n^{o(1)}$ time. For any $f, g, h \in \mathcal{H}$ from the family in \cref{def:hashfamilyh} with parameter $p$, define $F \times f: \{-N, \ldots, N\} \rightarrow \F_p^{k+1}$ by $(F \times f)(x) = (F(x), f(x))$, and define $\Delta_{F \times f} = \Delta_F \times \Delta_f$, and similarly define $G \times g, \Delta_{G \times g}$ and $H \times h, \Delta_{H \times h}$. Then there exist $f, g, h \in \mathcal{H}$, so that the following properties hold:
\begin{enumerate}[itemsep=\medskipamount]
    \item \label{item:lem:tri-listing-small-prime:item1}
    It holds that:
    \[
    \sum_{a \ne b \in A} [(F \times f)(a) = (F \times f)(b)] \le O(1/p) \cdot \sum_{a \ne b \in A} [F(a) = F(b)].  
    \]
    The same holds when replacing $(F, f)$ with $(G, g)$ or $(H, h)$. 
    \item \label{item:lem:tri-listing-small-prime:item2}
    It holds that:
    \[
    \sum_{\substack{a, b, c \in A}} [(F \times f)(a)+(F \times f)(b)+(F \times f)(c) \in - 3\Delta_{F \times f}] \le \frac{n^{o(1)}}{p} \cdot  \sum_{\substack{a, b, c \in A}} [F(a) + F(b) +  F(c) \in - 3\Delta_{F}]. 
    \]
    The same holds when replacing $(F, f)$ with $(G, g)$ or $(H, h)$.
    \item \label{item:lem:tri-listing-small-prime:item3} 
    Let $\mathcal{C}_{f_0, g_0, h_0}(a_1, a_2, a_3, a_4)$ be the indicator variable for the conditions
    \begin{equation*}
    \begin{gathered}
        f_0(a_1) + f_0(a_2) - f_0(a_3) - f_0(a_4) \in 2\Delta_{f_0} - 2\Delta_{f_0},\\
        g_0(a_1) - g_0(a_4) \in \Delta_{g_0} - \Delta_{g_0}, \quad g_0(a_2) - g_0(a_3) \in \Delta_{g_0} - \Delta_{g_0}\\
         h_0(a_1) + h_0(a_2) - h_0(a_3) - h_0(a_4) \in 2\Delta_{h_0} - 2\Delta_{h_0}. 
    \end{gathered}
    \end{equation*}
    Then 
    \[
    \sum_{\substack{a_1, a_2, a_3, a_4 \in A^4 \\ a_2 \ne a_3, a_1 \ne a_4 \\ a_1 - a_4 = a_3 - a_2}}  \mathcal{C}_{F \times f, G \times g, H \times h}(a_1, a_2, a_3, a_4) \le \frac{n^{o(1)}}{p^2} \sum_{\substack{a_1, a_2, a_3, a_4 \in A^4 \\ a_2 \ne a_3, a_1 \ne a_4 \\ a_1 - a_4 = a_3 - a_2}}  \mathcal{C}_{F, G, H}(a_1, a_2, a_3, a_4).
    \]
    The same holds for any permutations of $(F, f), (G, g), (H, h)$. 
    \item \label{item:lem:tri-listing-small-prime:item4}
    It holds that:
    \begin{align*}
    & \sum_{\substack{a_1, a_2, a_3, a_4 \in A^4 \\ a_2 \ne a_3, a_1 \ne a_4 \\ a_1 - a_4 \ne a_3 - a_2 \\ 
    a_1 - a_4, a_2 - a_3, 0 \text{ have } n^{o(1)}\text{-relation}}} \mathcal{C}_{F \times f, G \times g, H \times h}(a_1, a_2, a_3, a_4) \\
     \le \frac{n^{o(1)}}{p^3} \cdot {} & \sum_{\substack{a_1, a_2, a_3, a_4 \in A^4 \\ a_2 \ne a_3, a_1 \ne a_4 \\ a_1 - a_4 \ne a_3 - a_2 \\ 
    a_1 - a_4, a_2 - a_3, 0 \text{ have } n^{o(1)}\text{-relation}}} \mathcal{C}_{F, G, H}(a_1, a_2, a_3, a_4).
    \end{align*}
    The same holds for any permutation of $(F, f), (G, g), (H, h)$. 
    \item \label{item:lem:tri-listing-small-prime:item5}
    It holds that:
    \begin{align*}
    & \sum_{\substack{a_1, a_2, a_3, a_4 \in A^4 \\ a_2 \ne a_3, a_1 \ne a_4 \\ a_1 - a_4 \ne a_3 - a_2 \\ 
    a_1 - a_4, a_2 - a_3, 0 \text{ do not have } n^{o(1)}\text{-relation}}} \mathcal{C}_{F \times f, G \times g, H \times h}(a_1, a_2, a_3, a_4)\\
     \le \frac{n^{o(1)}}{p^4} \cdot {} & \sum_{\substack{a_1, a_2, a_3, a_4 \in A^4 \\ a_2 \ne a_3, a_1 \ne a_4 \\ a_1 - a_4 \ne a_3 - a_2 \\ 
    a_1 - a_4, a_2 - a_3, 0 \text{ do not have } n^{o(1)}\text{-relation}}} \mathcal{C}_{F, G, H}(a_1, a_2, a_3, a_4).
    \end{align*}
    The same holds for any permutation of $(F, f), (G, g), (H, h)$. 
    \item \label{item:lem:tri-listing-small-prime:item6} 
    Let $\mathcal{D}_{f_0, g_0, h_0}(a_1, a_2, a_3, a_4)$ be an indicator variable for the conditions
    \begin{equation*}
    \begin{gathered}
            f_0(a_1) - f_0(a_2) \in \Delta_{f_0} - \Delta_{f_0}, \quad f_0(a_4) - f_0(a_3) \in \Delta_{f_0} - \Delta_{f_0},\\
    g_0(a_1) - g_0(a_4) \in \Delta_{g_0} - \Delta_{g_0}, \quad g_0(a_2) - g_0(a_3) \in \Delta_{g_0} - \Delta_{g_0} \\
    h_0(a_1) + h_0(a_3) - h_0(a_2) - h_0(a_4) \in 2\Delta_{h_0} - 2 \Delta_{h_0}. 
    \end{gathered}
    \end{equation*}
    Then 
    \begin{align*}
    & \sum_{\substack{a_1, a_2, a_3, a_4 \in A^4 \\ a_1 \ne a_2, a_2 \ne a_3, a_3 \ne a_4, a_4 \ne a_1 \\ a_1 - a_4 = a_2 - a_3}}  \mathcal{D}_{F \times f, G \times g, H \times h}(a_1, a_2, a_3, a_4) \\
    \le \frac{n^{o(1)}}{p^2} \cdot {} & \sum_{\substack{a_1, a_2, a_3, a_4 \in A^4 \\ a_1 \ne a_2, a_2 \ne a_3, a_3 \ne a_4, a_4 \ne a_1 \\ a_1 - a_4 = a_2 - a_3}}  \mathcal{D}_{F, G, H}(a_1, a_2, a_3, a_4).
    \end{align*}
    The same holds for any permutation of $(F, f), (G, g), (H, h)$. 
    \item \label{item:lem:tri-listing-small-prime:item7}
    It holds that:
    \begin{align*}
    &\sum_{\substack{a_1, a_2, a_3, a_4 \in A^4 \\ a_1 \ne a_2, a_2 \ne a_3, a_3 \ne a_4, a_4 \ne a_1 \\ a_1 - a_4 \ne a_2 - a_3 \\ \text{both } (a_1 - a_2, a_4 - a_3, 0) \text{ and } (a_1 - a_4, a_2 - a_3, 0) \\ \text{have 3-term } n^{o(1)}\text{-relations}}}  \mathcal{D}_{F \times f, G \times g, H \times h}(a_1, a_2, a_3, a_4) \\ \le \frac{n^{o(1)}}{p^3} \cdot {} &\sum_{\substack{a_1, a_2, a_3, a_4 \in A^4 \\ a_1 \ne a_2, a_2 \ne a_3, a_3 \ne a_4, a_4 \ne a_1 \\ a_1 - a_4 \ne a_2 - a_3 \\ \text{both } (a_1 - a_2, a_4 - a_3, 0) \text{ and } (a_1 - a_4, a_2 - a_3, 0) \\ \text{have 3-term } n^{o(1)}\text{-relations}}} \mathcal{D}_{F, G, H}(a_1, a_2, a_3, a_4).
    \end{align*}
    The same holds for any permutation of $(F, f), (G, g), (H, h)$. 
    \item \label{item:lem:tri-listing-small-prime:item8}
    It holds that:
    \begin{align*}
    &\sum_{\substack{a_1, a_2, a_3, a_4 \in A^4 \\ a_1 \ne a_2, a_2 \ne a_3, a_3 \ne a_4, a_4 \ne a_1 \\ a_1 - a_4 \ne a_2 - a_3 \\ \text{exactly one of } (a_1 - a_2, a_4 - a_3, 0) \text{ and } (a_1 - a_4, a_2 - a_3, 0) \\ \text{has 3-term } n^{o(1)}\text{-relation}}}  \mathcal{D}_{F \times f, G \times g, H \times h}(a_1, a_2, a_3, a_4) \\ \le \frac{n^{o(1)}}{p^4} \cdot {} &\sum_{\substack{a_1, a_2, a_3, a_4 \in A^4 \\ a_1 \ne a_2, a_2 \ne a_3, a_3 \ne a_4, a_4 \ne a_1 \\ a_1 - a_4 \ne a_2 - a_3 \\ \text{exactly one of } (a_1 - a_2, a_4 - a_3, 0) \text{ and } (a_1 - a_4, a_2 - a_3, 0) \\ \text{has 3-term } n^{o(1)}\text{-relation}}} \mathcal{D}_{F, G, H}(a_1, a_2, a_3, a_4).
    \end{align*}
    The same holds for any permutation of $(F, f), (G, g), (H, h)$. 
    \item \label{item:lem:tri-listing-small-prime:item9}
    It holds that:
    \begin{align*}
    &\sum_{\substack{a_1, a_2, a_3, a_4 \in A^4 \\ a_1 \ne a_2, a_2 \ne a_3, a_3 \ne a_4, a_4 \ne a_1 \\ a_1 - a_4 \ne a_2 - a_3 \\ \text{neither } (a_1 - a_2, a_4 - a_3, 0) \text{ nor } (a_1 - a_4, a_2 - a_3, 0) \\ \text{has 3-term } n^{o(1)}\text{-relation}}}  \mathcal{D}_{F \times f, G \times g, H \times h}(a_1, a_2, a_3, a_4) \\ \le \frac{n^{o(1)}}{p^5} \cdot {} &\sum_{\substack{a_1, a_2, a_3, a_4 \in A^4 \\ a_1 \ne a_2, a_2 \ne a_3, a_3 \ne a_4, a_4 \ne a_1 \\ a_1 - a_4 \ne a_2 - a_3 \\ \text{neither } (a_1 - a_2, a_4 - a_3, 0) \text{ nor } (a_1 - a_4, a_2 - a_3, 0) \\ \text{has 3-term } n^{o(1)}\text{-relation}}} \mathcal{D}_{F, G, H}(a_1, a_2, a_3, a_4).
    \end{align*}
    The same holds for any permutation of $(F, f), (G, g), (H, h)$. 
\end{enumerate}
Additionally, we can find such $f, g, h$ in time
\[
n^{O(\eps)} \cdot \left(n + p^k + \sum_{W \in \{F, G, H\}}\sum_{a \ne b \in A} [W(a) - W(b) \in \Delta_W - \Delta_W] \right). 
\]
\end{lemma}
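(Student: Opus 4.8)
The plan is to show that a uniformly random, independently sampled triple $(f,g,h)\in\caH^3$ satisfies all nine bounds in expectation, and then to apply Markov's inequality together with a union bound over the constantly many conditions. The key structural observation is that, since $\Delta_{F\times f}=\Delta_F\times\Delta_f$ (and likewise for $G,g$ and $H,h$), every hash predicate on the left-hand sides factors as ``(predicate on $F$) and (predicate on $f$)''; consequently each left-hand side is exactly a sum, over the \emph{same} tuple set that defines its right-hand side, of a product of three indicators depending on $f$, $g$, $h$ separately. By linearity of expectation and independence it then suffices to bound, for each tuple in that set, the product $\Pr_f[\cdot]\,\Pr_g[\cdot]\,\Pr_h[\cdot]$. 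For the ``generic'' tuples (in property~1 those with $a\ne b$; in property~2 those with $a+b+c\ne 0$, using that $A$ is 3SUM-free; in properties~4--5 and 7--9 those with the stated distinctness, $a_1-a_4\ne a_3-a_2$, and no small relation), every argument of $f$ that is forced into a set of size $n^{o(1)}$ is a nonzero integer combination of the $a_i$ with no small relation, so \cref{lem:almostkwiseindep}/\cref{lem:deltahalmostkwiseindep} bound that factor by $n^{o(1)}/p$; counting the ``free'' arguments gives precisely the claimed power $p^{-j}$. For the ``degenerate'' tuples isolated in properties~3 and~6 (and the relation-having subcases of 4, 7, 8), one hash predicate holds deterministically up to the $\Delta$-slack (e.g.\ $a_1+a_2=a_3+a_4$ makes the $h$-predicate automatic by almost-linearity, \cref{lem:almostlinearity}), which is why those properties carry a smaller power of $p$, and the combinatorial bound on the number of such tuples is already built into their right-hand counts — exactly mirroring the classification in \cref{sec:tri-listing-randomized}. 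Choosing the hidden $n^{o(1)}$ slack large enough makes each property fail with probability below $1/20$, so the union bound produces the desired triple.

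\textbf{Finding the triple.} By \cref{def:hashfamilyh}, $|\caH|=n^{O(\eps)}$, the family is constructible in $n^{O(\eps)}$ time, and each function evaluates in $n^{o(1)}$ time, so one can afford to enumerate $\caH$ (or $\caH^3$) and check every candidate. The point is to verify each property for a fixed candidate within the budget $n^{O(\eps)}\cdot(n+p^k+\sum_W\#\{a\ne b:W(a)-W(b)\in\Delta_W-\Delta_W\})$ \emph{without} enumerating the (possibly $\Theta(n^3)$- or $\Theta(n^4)$-size) tuple sets. The idea is to encode, for each $a\in A$, the relevant hash values into a single index of a small product group $\F_p^{O(k)}$ (one block of $k$ coordinates per hash function involved, of total size $p^{O(k)}=n^{O(\eps)}$), and then compute a constant-order cyclic convolution of the resulting $n$-sparse vector over that group; each left-hand side is then a sum of a few output coordinates (those whose ``$\Delta$-blocks'' land in the prescribed $n^{o(1)}$-size sets), costing $n^{O(\eps)}+n^{1+o(1)}$ per candidate and thus $n^{O(\eps)}\cdot n$ overall. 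Properties~1--2 involve a single hash function, so they are checked by searching over $f$ (resp.\ $g$, $h$) alone; in particular, for property~2 one first uses 3SUM-freeness to replace ``$f(a)+f(b)+f(c)\in-3\Delta_f$'' by ``$f(a+b+c)\in 2\Delta_f-3\Delta_f$ with $a+b+c\ne0$'' before performing the convolution count.

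\textbf{Main obstacle.} The hard part will be the four-tuple properties~3--9, because their tuple sets are cut out not only by hash predicates but by arithmetic side conditions ($a_1+a_2=a_3+a_4$, various $a_i\ne a_j$, and the presence or absence of a small relation among the differences $a_1-a_4,\,a_2-a_3$), which do not fold into a single convolution over $\F_p^{O(k)}$. I would handle these by a two-level scheme: first build, for each $W\in\{F,G,H\}$, the list of near-colliding pairs $\{(a,b):W(a)-W(b)\in\Delta_W-\Delta_W\}$ (its size appears in the budget by hypothesis), indexed by their common difference; then express each four-tuple left-hand side as a sum over pairs $(a_1,a_4)$ and $(a_2,a_3)$ drawn from these lists with a matching difference, reducing the count to a convolution over the pair-lists and the residual hash blocks, all of whose sizes lie inside the budget. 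The delicate part is to keep the ``small relation'' subcase bookkeeping in this algorithmic count consistent with the probabilistic existence bound, so that the same power of $p$ emerges — essentially re-running the tuple classification of \cref{sec:tri-listing-randomized} at the level of pair-lists rather than individual tuples.
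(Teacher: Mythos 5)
Your proposal takes essentially the same route as the paper's proof sketch: establish existence by bounding expectations via \cref{lem:deltahalmostkwiseindep} (mirroring the tuple classification in \cref{sec:tri-listing-randomized}) and then applying Markov plus a union bound over the constantly many conditions; then find the triple by exhaustive enumeration over $\caH$, verifying items~1--2 by FFT over the embedded hash universe and items~3--9 by enumerating the $G$-near-colliding pairs into a data structure and matching pairs $(a_1,a_4)$ against $(a_2,a_3)$ on residual hash values and small-integer-scaled differences (exactly the paper's ``compute the count for every small $C,D$ or $C_1,D_1,C_2,D_2$'' scheme). The only small slip is that you attach 3SUM-freeness of $A$ to the convolution check for item~2 --- in fact the FFT count works directly on the stated predicate, and 3SUM-freeness is needed only in the probabilistic existence analysis to exclude triples with $a+b+c=0$, for which the per-tuple ratio would be $1$ rather than $n^{o(1)}/p$.
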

\begin{proof}[Proof Sketch]
For random $f, g, h \in \mathcal{H}$, one can show that the expectation of the left hand side is upper bounded by the right hand side, for each of the inequalities, by  \cref{lem:deltahalmostkwiseindep} (similar to \cref{sec:tri-listing-randomized}). Therefore, by union bound, there exist $f, g, h$ where the left hand side is upper bounded by $100$ times its expectation, for all the inequalities.

The majority of the proof is dedicated to computing the right hand sides given $F, G, H$ (and also similarly the left hand sides given $F, G, H, f, g, h$) deterministically. Then it suffices to try all possible choices of $f, g, h$ and choose one that satisfy the inequalities. 

\begin{itemize}
    \item Computing \cref{item:lem:tri-listing-small-prime:item1}. Clearly, we can compute the right hand side in $n^{1+o(1)}$ time. 
    \item Computing \cref{item:lem:tri-listing-small-prime:item2}. Here, we can embed $\F_p^k$ to $\Z$ (i.e., view all elements in $\F_p^k$ as a number in base $3p$), and then use FFT to compute the right hand side in $\tO((3p)^k) = \tO(p^k)$ time. 
    \item Computing \cref{item:lem:tri-listing-small-prime:item3,item:lem:tri-listing-small-prime:item4,item:lem:tri-listing-small-prime:item5}. For every pair of integers $C, D$ bounded by $n^{o(1)}$ in absolute values, we aim to compute 
    \[\sum_{\substack{a_1, a_2, a_3, a_4 \in A^4 \\ a_2 \ne a_3, a_1 \ne a_4 \\ 
    C(a_1 - a_4) + D(a_2 - a_3) = 0}} \mathcal{C}_{F, G, H}(a_1, a_2, a_3, a_4).\]
    Clearly, the above values are sufficient for computing \cref{item:lem:tri-listing-small-prime:item3,item:lem:tri-listing-small-prime:item4,item:lem:tri-listing-small-prime:item5}. 

    Recall $\mathcal{C}_{F, G, H}(a_1, a_2, a_3, a_4)$ is the indicator variable for the following conditions:
    \begin{equation*}
    \begin{gathered}
        F(a_1) + F(a_2) - F(a_3) - F(a_4) \in 2\Delta_{F} - 2\Delta_{F},\\
        G(a_1) - G(a_4) \in \Delta_{G} - \Delta_{G}, \quad G(a_2) - G(a_3) \in \Delta_{G} - \Delta_{G}\\
         H(a_1) + H(a_2) - H(a_3) - H(a_4) \in 2\Delta_{H} - 2\Delta_{H}. 
    \end{gathered}
    \end{equation*}

    First, we enumerate $a_1 \ne a_4 \in A$ where $G(a_1) - G(a_4) \in \Delta_G - \Delta_G$ (after an $n^{1+o(1)}$ preprocessing, we can perform this enumeration in $O(n^{1+o(1)} + \sum_{a \ne b \in A} [G(a) - G(b) \in \Delta_G - \Delta_G])$) time, and then insert
    \[
    \left(F(a_1) - F(a_4), H(a_1) - H(a_4), C(a_1 - a_4)\right)
    \]
    to a multiset. Then we enumerate $a_2 \ne a_3 \in A$ where $G(a_2) - G(a_3) \in \Delta_G - \Delta_G$, and we add the number of points in the multiset inside 
    \[
    \left(F(a_3) - F(a_2) + 2\Delta_F - 2\Delta_F, H(a_3) - H(a_2) + 2\Delta_H - 2\Delta_H, -D(a_2 - a_3)\right)
    \]
    to the final answer. 
    The running time of this algorithm is 
    \[
    n^{1+o(1)} + n^{o(1)} \cdot \sum_{a \ne b \in A} [G(a) - G(b) \in \Delta_G - \Delta_G]. 
    \]

    \item Computing \cref{item:lem:tri-listing-small-prime:item6,item:lem:tri-listing-small-prime:item7,item:lem:tri-listing-small-prime:item8,item:lem:tri-listing-small-prime:item9}. For every integers $C_1, D_1, C_2, D_2$ bounded by $n^{o(1)}$ in absolute values, we aim to compute 
    \[\sum_{\substack{a_1, a_2, a_3, a_4 \in A^4 \\ a_1 \ne a_2, a_2 \ne a_3, a_3 \ne a_4, a_4 \ne a_1 \\ C_1(a_1 - a_2) = D_1(a_4 - a_3)\\
    C_2 (a_1 - a_4) = D_2 (a_2 - a_3)}} \mathcal{D}_{F, G, H}(a_1, a_2, a_3, a_4).
    \]
    Clearly, we can compute \cref{item:lem:tri-listing-small-prime:item6,item:lem:tri-listing-small-prime:item7,item:lem:tri-listing-small-prime:item8,item:lem:tri-listing-small-prime:item9} once we compute the above values for all $C_1, D_1, C_2, D_2$. Recall $\mathcal{D}_{F, G, H}(a_1, a_2, a_3, a_4)$ is the indicator variable for the following conditions:
    \begin{equation*}
    \begin{gathered}
            F(a_1) - F(a_2) \in \Delta_{F} - \Delta_{F}, \quad F(a_4) - F(a_3) \in \Delta_{F} - \Delta_{F},\\
    G(a_1) - G(a_4) \in \Delta_{G} - \Delta_{G}, \quad G(a_2) - G(a_3) \in \Delta_{G} - \Delta_{G} \\
    H(a_1) + H(a_3) - H(a_2) - H(a_4) \in 2\Delta_{H} - 2 \Delta_{H}. 
    \end{gathered}
    \end{equation*}
    First, we enumerate $a_1 \ne a_4 \in A$ where $G(a_1) - G(a_4) \in \Delta_G - \Delta_G$. Similar to the previous case, this can be done in $O(n^{1+o(1)} + \sum_{a \ne b \in A} [G(a) - G(b) \in \Delta_G - \Delta_G])$ time. Then we add 
    \[
    \left(F(a_1), F(a_4), H(a_1) - H(a_4), a_1, a_4\right)
    \]
    to a multiset. 
    Then we enumerate $a_2 \ne a_3 \in A$ where $G(a_2) - G(a_3) \in \Delta_G - \Delta_G$, query how many points in the multiset are in
    \[
    \left(F(a_2) + \Delta_F - \Delta_F, F(a_3) + \Delta_F - \Delta_F, H(a_2) - H(a_3) + 2\Delta_H - 2\Delta_H, \Z \setminus\{a_2\}, \Z \setminus \{a_3\} \right),
    \]
    and add this count to the final answer. 
    
    Using data structure, each insertion and query can be performed in $n^{o(1)}$ time, so the overall running time is \[
    n^{1+o(1)} + n^{o(1)} \cdot \sum_{a \ne b \in A} [G(a) - G(b) \in \Delta_G - \Delta_G]. 
    \]
\end{itemize}
\end{proof}

Now the idea is to apply \cref{lem:tri-listing-small-prime} $O(1/\eps)$ times to find the final hash functions. However, the running time \cref{lem:tri-listing-small-prime} has a dependency on $\sum_{W \in \{F, G, H\}}\sum_{a \ne b \in A} [W(a) - W(b) \in \Delta_W - \Delta_W])$ which can be as large as $O(n^2)$. Therefore, we need to perform the next lemma several times first. 

\begin{lemma}
\label{lem:tri-listing-small-prime-preprocessing}
Let $A \subset [n^{O(1)}]$ be a size-$n$ set, and $p = \Theta(n^\eps)$ be a prime. 
Let $F: \{-N, \ldots, N\} \rightarrow \F_p^k$ for some constant $k$ be a hash function, where for any $x, y \in \{-N, \ldots, N\}$, $F(x + y) - F(x) - F(y) \in \Delta_F$ and $|\Delta_F| = n^{o(1)}$. Additionally, evaluating $F$ takes $n^{o(1)}$ time. For any $f \in \mathcal{H}$ from the family in \cref{def:hashfamilyh} with parameter $p$, define $F \times f: \{-N, \ldots, N\} \rightarrow \F_p^{k+1}$ by $(F \times f)(x) = (F(x), f(x))$, and define $\Delta_{F \times f} = \Delta_F \times \Delta_f$. Then there exists $f\in \mathcal{H}$, so that
\[
\sum_{a \ne b \in A} [(F \times f)(a) - (F \times f)(b) \in \Delta_{F \times f} - \Delta_{F \times f}] \le \frac{n^{o(1)}}{p} \sum_{a \ne b \in A} [F(a) - F(b) \in \Delta_{F} - \Delta_{F}]. 
\]
Additionally, we can find $f$ in $n^{1+O(\eps)}$ time. 
\end{lemma}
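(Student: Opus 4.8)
The plan is a probabilistic‑method argument over the choice of $f\in\mathcal{H}$, combined with the observation that for any fixed $f$ the left‑hand side of the claimed inequality can be computed \emph{exactly} in near‑linear time; since $\mathcal{H}$ is small, we can then simply return the best $f$ by exhaustive search. Throughout write $L(f):=\sum_{a\ne b\in A}[(F\times f)(a)-(F\times f)(b)\in\Delta_{F\times f}-\Delta_{F\times f}]$ and $R:=\sum_{a\ne b\in A}[F(a)-F(b)\in\Delta_F-\Delta_F]$. Since $\Delta_{F\times f}-\Delta_{F\times f}=(\Delta_F-\Delta_F)\times(\Delta_f-\Delta_f)$ and $(F\times f)(a)-(F\times f)(b)=(F(a)-F(b),\,f(a)-f(b))$, the quantity $L(f)$ just counts ordered pairs $a\ne b$ satisfying both $F(a)-F(b)\in\Delta_F-\Delta_F$ and $f(a)-f(b)\in\Delta_f-\Delta_f$, and the goal is to find $f$ with $L(f)\le (n^{o(1)}/p)\cdot R$.

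First I would show a good $f$ exists. We may assume $n$ is large enough that $p=\Theta(n^\eps)$ meets the requirement $p\ge\log^{2k}N$ of \cref{def:hashfamilyh} (otherwise the statement is trivial). Fix an ordered pair $(a,b)\in A^2$ with $a\ne b$. By almost linearity of $\mathcal{H}$ (\cref{lem:almostlinearity}, applied with $x=a-b$, $y=b$) we have $f(a)-f(b)-f(a-b)\in\Delta_f$, so $f(a)-f(b)\in\Delta_f-\Delta_f$ forces $f(a-b)\in\Delta_f-2\Delta_f\subseteq\bigcup_{0\le k_1,k_2\le k}(k_1\Delta_f-k_2\Delta_f)$ (valid since $k\ge 2$). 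Because $a-b\ne 0$, the single integer $a-b$ together with $0$ admits no $2$‑term $k^{2k+1}$‑relation in the sense of \cref{defn:lrelation}, so \cref{lem:deltahalmostkwiseindep} with $k'=1$ gives $\Pr_{f\in\mathcal{H}}[f(a-b)\in\bigcup_{0\le k_1,k_2\le k}(k_1\Delta_f-k_2\Delta_f)]\le p^{-1}\cdot 2(k+1)^2(2k+1)^{m-1}=n^{o(1)}/p$ (using $m=(1+o(1))\log N/\log\log N$ and $N=n^{O(1)}$), hence $\Pr_{f\in\mathcal{H}}[f(a)-f(b)\in\Delta_f-\Delta_f]\le n^{o(1)}/p$. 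Summing this bound over all ordered pairs $a\ne b$ with $F(a)-F(b)\in\Delta_F-\Delta_F$ and applying linearity of expectation yields $\Ex_{f\in\mathcal{H}}[L(f)]\le (n^{o(1)}/p)\cdot R$, so some $f\in\mathcal{H}$ attains $L(f)\le (n^{o(1)}/p)\cdot R$.

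Next I would describe the algorithm. By \cref{def:hashfamilyh}, $|\mathcal{H}|\le 4m^2 p^{3k}=n^{o(1)}\cdot n^{O(\eps)}=n^{O(\eps)}$, so we can enumerate all $f\in\mathcal{H}$. For each fixed $f$, compute $L(f)$ exactly by bucketing: precompute $F(a)$ for all $a\in A$ once (total $n^{1+o(1)}$ time), evaluate $f(a)$ for all $a\in A$ (total $n\cdot O(m)=n^{1+o(1)}$ time), and partition $A$ into buckets keyed by $(F(a),f(a))\in\F_p^k\times\F_p$, storing each bucket size $s(\cdot)$. Then, for every non‑empty bucket with key $(u_F,u_f)$ and every pair $(\delta_F,\delta_f)\in(\Delta_F-\Delta_F)\times(\Delta_f-\Delta_f)$, add $s(u_F,u_f)\cdot s(u_F-\delta_F,u_f-\delta_f)$ to a running total (one table lookup each), and finally subtract $n$ to discard the diagonal pairs $(a,a)$, which are generated exactly once, via $(\delta_F,\delta_f)=(0,0)$. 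The result equals $L(f)$ exactly. Since $|\Delta_F-\Delta_F|\le|\Delta_F|^2=n^{o(1)}$, $|\Delta_f-\Delta_f|\le|\Delta_f|^2\le 2^{2(m-1)}=n^{o(1)}$, and there are at most $n$ non‑empty buckets, this costs $n^{1+o(1)}$ per $f$ and $n^{1+O(\eps)}$ over all of $\mathcal{H}$; we output the $f$ minimizing $L(f)$, which by the previous paragraph satisfies the claimed bound. The one point requiring care is precisely that the bucketing computes $L(f)$ \emph{exactly} rather than some difference‑count over‑approximation, which is what makes returning the minimizer sound with no separate verification; the remaining ingredients (checking the no‑$\ell$‑relation hypothesis of \cref{lem:deltahalmostkwiseindep} for a single nonzero integer, and confirming that the stray factors from $m$, $|\Delta_f|$, and $|\mathcal{H}|$ are $n^{o(1)}$, $n^{o(1)}$, and $n^{O(\eps)}$ respectively) are routine.
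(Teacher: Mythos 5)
Your proposal is correct and follows the approach the paper intends: a probabilistic‑method existence argument via Lemma~\ref{lem:deltahalmostkwiseindep} (almost linearity reduces $f(a)-f(b)\in\Delta_f-\Delta_f$ to a condition on $f(a-b)$, and since $a-b\ne 0$ the $k'=1$ case applies with no $\ell$‑relation to worry about), followed by exhaustive enumeration over the $n^{O(\eps)}$‑sized family $\mathcal{H}$ with exact near‑linear‑time counting via bucketing on $(F(a),f(a))$. The paper omits its own proof of this lemma, noting only that it is "similar to (and simpler than)" Lemma~\ref{lem:tri-listing-small-prime}; your argument is exactly that simplification and matches the sketched proof strategy there (expectation bound plus deterministic enumeration). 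The one cosmetic difference is that you return the minimizer of $L(f)$ rather than first computing the right‑hand side and checking a threshold, but both are sound and have the same complexity.
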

We omit the proof of \ref{lem:tri-listing-small-prime-preprocessing} as it is similar to (and simpler than) the proof of \cref{lem:tri-listing-small-prime}. 

Finally, we are ready to prove \cref{thm:to-trianglelisting}. 

\begin{proof}[Proof of \cref{thm:to-trianglelisting}]
    We first call  \cref{lem:tri-listing-small-prime-preprocessing} $\ell = O(1)$ times iteratively, and reset $F := F \times f$ after every iteration, to obtain $F$, so that 
    \[
    \sum_{a \ne b \in A} [(F \times f)(a) - (F \times f)(b) \in \Delta_{F \times f} - \Delta_{F \times f}] \le n^{2 - C \eps}
    \]
    for a sufficiently large constant $C$. The running time is $n^{1+O(\eps)}$. 

    Then we set $G, H = F$, and then call  \cref{lem:tri-listing-small-prime} $k = \log_n P /\eps$ times iteratively (after each call, set $F := F \times f, G := G \times g, H := H \times h$). The running time will be $n^{1+O(\eps)} + n^{O(\eps) + 2 - C\eps}$. As $C$ is sufficiently large, the running time can be made $n^{1+O(\eps)} + n^{2 - \eps}$. 

    One caveat is that \cref{lem:tri-listing-small-prime} requires that $A$ to not contain 3SUM solutions. However, we can run  \cref{lem:tri-listing-small-prime} on $A$ regardless, and if at any point, the lemma fails, we can immediately conclude that $A$ has a 3SUM solution. Thus, even if $A$ contains 3SUM solutions, we will end up getting desired hash functions $F, G, H$. 

    The remainder of the proof is largely similar to the proof in \cref{sec:tri-listing-randomized}. The key differences are that the nodes of the graph is 
    \[
\{0 \} \times \F_{p}^{\ell+k} \times \F_{p}^{\ell+k}, \quad \F_{p}^{\ell+k} \times \{0\} \times \F_{p}^{\ell+k}, \quad \F_{p}^{\ell+k} \times \F_{p}^{\ell+k} \times \{0\}
\]
where $p^{\ell + k} = P \cdot n^{O(\eps)}$, 
and the hash functions to use if $(F, G, H)$ instead of $(f, g, h)$. The number of vertices becomes $O((P \cdot n^{O(\eps)})^2) = P^2 \cdot n^{O(\eps)}$. By \cref{lem:tri-listing-small-prime} \cref{item:lem:tri-listing-small-prime:item1}, we have 
\[
 \sum_{a \ne b \in A} [F(a) = F(b)] = O(1/p^k) \cdot n^2 = O(n^2 / P).
\]
The number of elements that are in bucket (under $F$) with $\ge n^{1+\eps} / P$ elements is $O(n^{1-\eps})$, so we can test whether any of them is in a 3SUM solution and then remove them in $O(n^{2-\eps})$ time. We can similarly do so for buckets under $G$ and $H$ as well. Thus, then the maximum degree of the graph is bounded by $O(n^{1+\eps} / P)$, similar to \cref{sec:tri-listing-randomized}. The running time for constructing the graphs is essentially the total number of edges of all graphs, which is $P^2 \cdot n^{O(\eps)} \cdot n^{1+\eps} / P = n^{1+O(\eps)} \cdot P$. 

The bound on the number of triangles and the number of $4$-cycles similarly follow from a combination of the bounds in \cref{lem:tri-listing-small-prime} and the analysis in  \cref{sec:tri-listing-randomized}. We omit the details for conciseness. 
\end{proof}

Next we use \cref{thm:to-trianglelisting} to prove \cref{thm:4cyclehardnessmain}, which we  recall below:
\fourcycle*

\begin{proof}
For the sake of contradiction, suppose such an $O(n^{2-\delta} + t)$ time or $O(m^{4/3-\delta} + t)$ time algorithm for $4$-Cycle Listing exists for some $\delta > 0$, and we will use it to design a deterministic truly subquadratic time algorithm for 3SUM, violating the deterministic 3SUM hypothesis. 

First, by \cref{thm:smalldoub3sumhardnessmain}, it suffices to solve 3SUM instances with additive energy $\le n^{2.01}$. Then we run \cref{thm:to-trianglelisting} with $\eps = \delta / C$ for a sufficiently large $C$ with $C/\delta$ being an integer and $k = 1+C/\delta$ (so that $P = n^{1/2 + \delta/C}$) in $O(n^{2-\delta/C} + n^{3/2+O(\delta / C)})$ time to further reduce to $n^{o(1)}$ Triangle Listing instances on graphs with 
    \begin{itemize}
        \item $n^{1+O(\delta / C)}$ nodes;
        \item $n^{0.5}$ maximum degree (thus, $n^{1.5+O(\delta / C)}$ edges); 
        \item $n^{1.51 - \delta / C + o(1)} + n^{2 - 4\delta / C +o(1)}$  $4$-cycles, 
    \end{itemize}
    and for each graph we need to list $n^{2-2\delta/C+o(1)}$ triangles. The running time becomes $O(n^{2-\delta/C})$ for sufficiently large $C$, which is truly subquadratic. 

    For each produced Triangle Listing instance on some graph $G$, we use the following standard  reduction \cite{DBLP:conf/stoc/AbboudBKZ22, AbboudBF23, JinX23} to reduce it to a $4$-Cycle Listing instance. We create four node sets $V_1, V_2, V_3, V_4$, each being a copy of $V(G)$. Between $V_1$ and $V_2$, we add a matching. Between $V_2$ and $V_3$, $V_3$ and $V_4$, and $V_4$ and $V_1$, we add copies of $E(G)$. Observe that $4$-cycles in the new graph that use one node from each of $V_1, V_2, V_3, V_4$ correspond to triangles in $G$, while all other $4$-cycles correspond to $4$-cycles in $G$. Furthermore, each $4$-cycle in $G$ only adds $O(1)$ $4$-cycles in the new graph. Therefore, it suffices to list $n^{1.51 - \delta / C + o(1)} + n^{2 - 4\delta / C +o(1)} + n^{2-2\delta / C + o(1)} = n^{2-2 \delta /C+o(1)}$ $4$-cycles (for sufficiently large $C$) in the new graph. 
    
    The bounds on number of nodes and edges in the new graph are only a constant factor larger than those in $G$. Thus, we can apply the assumed $4$-Cycle Listing algorithm on the new graph to obtain a running time 
    \[
    \left(n^{1+O(\delta / C)}\right)^{2-\delta} + n^{2-2 \delta /C+o(1)}
    \]
    or 
    \[
    \left(n^{1.5+O(\delta / C)}\right)^{4/3-\delta} + n^{2-2 \delta /C+o(1)},
    \]
    either of which achieves a truly subquadratic running time for sufficiently large $C$. 
\end{proof}

\section{\texorpdfstring{$k$}{k}-Mismatch Constellation}

Recall that in the Constellation with mismatches problem,
   we are given integer sets $A,B\subseteq [N]$ with $|A|, |B| \le n$ and a parameter $1\le k \le  |B|$, and the task is to compute the set \[\{c\in \Z:  |(c+B)\setminus A| < k\}.\]
   As in previous works on this question, we impose the condition
$k\le (1-\eta)|B|$ for a fixed constant $\eta\in (0,1)$, so that the output size is at most $\frac{|A||B|}{|B|-k} \le \eta^{-1}|A| = O(n)$. 
Here we let $\eta = 0.7$. The case where $\eta$ is a smaller (but fixed) constant in $(0,0.7)$ (in particular, $k = \Theta(|B|)$) can be solved via different methods: using \cite[Lemma 8.4]{ChanWX23} (randomized) or our \cref{thm:introdetpopsum} (deterministic), we can solve it in $n^{2-\Omega(1)} = k^{1-\Omega(1)}\cdot n$ time.

Our algorithm for Constellation with $k$ mismatches first reduces the original problem to the following partial convolution problem  with promise (\cref{prob:reducedprob}). This reduction is simple but randomized, so the algorithm obtained this way will be randomized as well (we stress that our result is new even allowing randomization).
Later in \cref{sec:constellationderand} we will describe how to derandomize this algorithm.
\begin{prob}
    \label{prob:reducedprob}
   Given sets $A,B,C\subseteq [N]$,  and parameter $k\le (1-\eta)|B|$ for some fixed constant $\eta\in (0,1)$, with the promise that 
   \begin{equation}
       \label{eqn:promisec}
    |(c+B)\setminus A| \le k
   \end{equation}
    for all $c\in C$, the task is to compute $(1_A \conv 1_{-B})[c]$ for all $c\in C$.
\end{prob}
\begin{proposition}[Simple reduction via subsampling]
    \label{prop:simpleconstellationreduction}
   There is a randomized reduction from  Constellation with $k_0$ mismatches (where $k_0 \le (1-\eta_0)|B|$) to \cref{prob:reducedprob} with the same input sets $A,B$, and parameter $k=(1+\eta_0/20)k_0 \le (1-0.95\eta_0)|B|$, and $|C| \le 2\eta_0^{-1}|A|$.  The reduction  takes time $\tilde O(|A|+|B|)$.
\end{proposition}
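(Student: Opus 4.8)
The plan is to rephrase the problem as a \emph{popular‑differences} question, extract a small candidate set by random sampling, prune it so that it meets all the requirements, and then invoke \cref{prob:reducedprob}.

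\medskip\noindent\emph{Step 1 (reformulation).} For every shift $c$ we have $|(c+B)\setminus A| = |B| - |(c+B)\cap A| = |B| - (1_A\conv 1_{-B})[c]$, so $c$ is a valid output of Constellation with $k_0$ mismatches iff $(1_A\conv 1_{-B})[c] > |B| - k_0$. Hence it suffices to produce a set $C\subseteq\Z$ with (i) every $c$ satisfying $|(c+B)\setminus A|<k_0$ lies in $C$; (ii) $|(c+B)\setminus A|\le k$ for every $c\in C$, so that the promise of \cref{prob:reducedprob} holds with parameter $k=(1+\eta_0/20)k_0$; and (iii) $|C|\le 2\eta_0^{-1}|A|$. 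Given such a $C$, we call \cref{prob:reducedprob} on $(A,B,C)$ to obtain $(1_A\conv 1_{-B})[c]$ exactly for all $c\in C$, and output $\{c\in C: |B| - (1_A\conv 1_{-B})[c] < k_0\}$; since $|A|,|B|\le n$ this final filtering is trivial and, by (i), complete.

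\medskip\noindent\emph{Step 2 (candidate superset).} Sample $r=\Theta(\log N)$ elements $b_1^{\ast},\dots,b_r^{\ast}$ uniformly at random from $B$ and set $C_0:=\bigcup_{i=1}^{r}(A-b_i^{\ast})$. A true answer $c$ has fewer than $k_0\le(1-\eta_0)|B|$ ``bad'' shifts $b$ with $c+b\notin A$, so a single random $b^{\ast}$ is bad for $c$ with probability $<1-\eta_0$; thus $\Pr[c\notin C_0] \le (1-\eta_0)^r \le N^{-\Omega(1)}$ for a suitable constant in $r$. Union‑bounding over the at most $\eta_0^{-1}|A|\le N$ true answers, $C_0$ contains all of them with high probability. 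Deterministically $|C_0|\le r|A| = O(|A|\log N)$, and $C_0$ is built in $\tilde O(|A|+|B|)$ time.

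\medskip\noindent\emph{Step 3 (pruning) and the main obstacle.} We now filter $C_0$ to obtain $C$. We need, for each $c\in C_0$, an estimate of $|(c+B)\setminus A|$ accurate enough to separate ``$<k_0$'' from ``$>k$''; additive accuracy $\Theta(\eta_0 k_0)$ suffices. Subsample $B$ by keeping each element independently with probability $\rho:=\min\{1,\,\Theta(\eta_0^{-2}k_0^{-1}\log N)\}$, obtaining $B^{\circ}$. For $c\in C_0$ put $c$ into $C$ iff $|\{b\in B^{\circ}: c+b\notin A\}| \le \rho k_0(1+\eta_0/40)$. Because $\rho k_0 = \Omega(\eta_0^{-2}\log N)$, a Chernoff bound shows that with high probability, simultaneously over all $c\in C_0$: every true answer is kept and every $c$ with $|(c+B)\setminus A|>k$ is discarded, giving (i) and (ii). For (iii), note $C\subseteq\{c:(1_A\conv 1_{-B})[c]>|B|-k\}$, and since $\sum_c (1_A\conv 1_{-B})[c]=|A||B|$ while $|B|-k\ge 0.95\eta_0|B|$, we get $|C|\le |A|/(0.95\eta_0) \le 2\eta_0^{-1}|A|$. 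The delicate point—the step I expect to be the main obstacle—is to perform this pruning within the claimed $\tilde O(|A|+|B|)$ time: testing each of the $\tilde O(|A|)$ candidates separately against $B^{\circ}$ costs $\tilde O(|A|\,|B^{\circ}|)=\tilde O(|A|\,|B|/k_0)$, which is not near‑linear when $k_0$ is small (the worst case being $\rho=1$). The counts $(1_A\conv 1_{-B^{\circ}})[c]$ for $c\in C_0$ should instead be obtained by a single \emph{batched} computation—either the partial‑convolution routine \cref{lem:nickcount} applied to $1_A$ and $1_{-B^{\circ}}$ with target set $C_0$, or the randomized approximate popular‑sums routine behind \cref{thm:introdeterministicapprox3sum} with error parameter $\Theta(\eta_0 k_0/|B|)$—and the crux of the proof is to bound the resulting cost by $\tilde O(|A|+|B|)$ using the structure $C_0=\bigcup_i(A-b_i^{\ast})$ together with the freedom in choosing $\rho$ and $r$. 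All remaining ingredients (correctness and completeness of the final filter, the bound (iii), and the high‑probability guarantees) are routine Chernoff and double‑counting arguments.
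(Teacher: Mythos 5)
Your Steps 1 and 2 are fine and in the same spirit as the paper (subsample to shrink the threshold), but Step 3 is a genuine gap that you yourself flag and do not close, and the two routes you hint at will not close it. Applying \cref{lem:nickcount} to $(1_A,1_{-B^{\circ}})$ with target set $C_0$ costs $\widetilde O(|A|+|C_0+B^{\circ}|)$, and there is no a priori bound on $|C_0+B^{\circ}|$ better than $r\,|A|\,|B^{\circ}|$: the candidates $c\in C_0$ that are about to be pruned away are exactly those for which $c+B^{\circ}$ is spread out away from $A$, so the sumset can be huge. Applying \cref{thm:introdeterministicapprox3sum} to get additive accuracy $\Theta(\eta_0\rho k_0)$ forces $\eps=\Theta(\eta_0 k_0/|B|)$ and hence cost $\Theta(\eta_0^{-1}|A||B|/k_0)$, which is $\omega(|A|+|B|)$ for small $k_0$. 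The extra preliminary sampling $C_0=\bigcup_i(A-b_i^{\ast})$ does not rescue either route.

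The paper avoids the problem entirely by using an \emph{existing near-linear algorithm for the $k'$-mismatch Constellation problem itself} as the pruning step: it subsamples $B'\subseteq B$ with probability $p=\Theta(\eta_0^{-2}k_0^{-1}\log n)$, sets $k'=(1+\eta_0/40)pk_0=\widetilde O(\eta_0^{-2})$, and runs the Cardoze--Schulman / Fischer $\widetilde O(nk')$-time Constellation algorithm on $(A,B')$. Since $k'$ is polylogarithmic, this takes $\widetilde O(n)$ time and directly outputs a set $C$ that (by the same Chernoff arguments you use) contains every true answer and consists only of $c$ with $|(c+B)\setminus A|\le(1+\eta_0/20)k_0$. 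In other words, the ``crux'' you isolate is not to be solved with a batched convolution or popular-sums call; the correct move is to invoke the known $\widetilde O(nk)$-time Constellation solver with the threshold driven down to $\widetilde O(1)$ by the subsampling. Your $C_0$-superset step is then unnecessary. If you replace your Step 3 with this invocation, the proposal becomes essentially the paper's proof.
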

\begin{proof}
    Denote $n:= |A|+|B|$.
First, assume $k_0\ge  100\eta_0^{-2}\log n$, since otherwise the algorithm of \cite{CardozeS98,nickconstellation} already solves Constellation with $k_0$ mismatches in $\tilde O(nk_0) \le \tilde O(n)$ time.

Subsample $B'\subseteq B$ by keeping each $b\in B$ independently with probability $p=\frac{100\log n}{\eta_0^2 k_0}$.
By Chernoff bound and $|B|>k_0 \ge 100\eta_0^{-2}\log n$, we have  $|B'| \in (1\pm \eta_0/20 ) p|B|$ with $1-\exp(-\Omega(\eta_0^2 p|B|) ) = 1-1/\poly(n)$ probability.
Let $k' = (1+\eta_0/40)pk_0$.
Then, for every $c\in \Z$, the following holds with $1-1/\poly(n)$ probability by Chernoff bound:
\begin{enumerate}
    \item If $|(c+B)\setminus A| \le k_0$, then $|(c+B')\setminus A| \le k'$.
        \label{item1}
        \item 
If $|(c+B)\setminus A| > (1+\eta_0/20) k_0$, then $|(c+B')\setminus A| > k'$.
\label{item2}
\end{enumerate}
By a union bound, the properties above hold for all $c\in A-B$, with $1-1/\poly(n)$ probability.
Hence we can run the algorithm of \cite{CardozeS98,nickconstellation} to solve the Constellation  with $k'$ mismatches instance $(A,B')$ in $\tilde O(nk') = \tilde O(n)$ time (note $k'<(1-0.1\eta_0)|B'|$ still holds),  and obtain $C = \{c\in \Z: |(c+B')\setminus A|\le k'\}$. By \cref{item1} above,  $C$ is a set of candidates which contains all possible answers in the original Constellation with $k_0$ mismatches instance $(A,B)$  (with $1-1/\poly(n)$ probability).
Note that by \cref{item2} above, $|C| \le \frac{|A||B|}{|B|-(1+\eta_0/20) k_0} \le 2\eta_0^{-1}|A|$.
Then, it remains to compute $(1_A\conv 1_{-B})[c]$ for all $c\in C$, and output all $c\in C$ satisfying $(1_{A}\conv 1_{-B})[c] > |B|-k_0$.  This is the task of \cref{prob:reducedprob} and satisfies the promise of \cref{prob:reducedprob} for $k:= (1+\eta_0/20)k_0$ (by \cref{item2} above). This finishes the description of the reduction.
\end{proof}

Fischer \cite{nickconstellation} already gave an algorithm solving \cref{prob:reducedprob} which is efficient when $C$ is small:
\begin{lemma}[follows from \cite{nickconstellation}]
\cref{prob:reducedprob} has a deterministic algorithm in $\tilde O(|A|+|B|+k|C|)$ time.
\label{lem:smallc}
\end{lemma}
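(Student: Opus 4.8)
The plan is to observe that \cref{prob:reducedprob} is, essentially verbatim, a batch of partial convolution queries of the kind handled by \cref{lem:nickcount}, and that the promise \cref{eqn:promisec} is exactly what is needed to control the cost parameter of that algorithm. First I would record that for every $c$ one has $(1_A\conv 1_{-B})[c] = |(c+B)\cap A|$, so computing $(1_A\conv 1_{-B})[c]$ for all $c\in C$ amounts to running the partial convolution algorithm with $x := 1_A$ and $y := 1_{-B}$ (so that $\supp(x)\subseteq A$ and $\supp(y)\subseteq -B$) at the query points $C$. After translating all three sets by $N$ so that they live in a nonnegative universe of size $O(N)$ --- which affects the $\polylog$ factors by only a constant --- \cref{lem:nickcount} computes these values in time $O\big((|A| + |C-(-B)|)\cdot\polylog N\big) = O\big((|A| + |C+B|)\cdot\polylog N\big)$, using that $\|x\|_\infty = \|y\|_\infty = 1$.

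The only quantity that still needs to be bounded is the sumset size $|C+B|$, and this is exactly where the promise enters. Writing $C+B = \bigcup_{c\in C}(c+B)$, the promise \cref{eqn:promisec} says that for each $c\in C$ all but at most $k$ of the $|B|$ elements of $c+B$ lie in $A$; hence $C+B \subseteq A \cup \bigcup_{c\in C}\big((c+B)\setminus A\big)$, and taking cardinalities gives $|C+B| \le |A| + \sum_{c\in C}|(c+B)\setminus A| \le |A| + k|C|$. Substituting this bound into the running time of \cref{lem:nickcount}, and adding the $O(|B|)$ time needed merely to read $B$ and form $1_{-B}$, yields the claimed $\tilde O(|A|+|B|+k|C|)$ bound.

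This argument is short, and I do not anticipate a genuine obstacle: the only actual idea is the one-line observation that the promise forces $|C+B|\le|A|+k|C|$, and everything else is bookkeeping (most notably the harmless shift to a nonnegative universe so that the hypothesis ``$\subseteq[N]$'' of \cref{lem:nickcount} applies literally). It is worth noting, for later use in \cref{sec:constellationderand}, that the whole procedure is already deterministic, since \cref{lem:nickcount} is; the only randomized ingredient in the overall $k$-Mismatch Constellation pipeline so far is the subsampling reduction of \cref{prop:simpleconstellationreduction}.
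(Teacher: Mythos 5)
Your proof is correct and follows the same route as the paper: bound $|B+C|\le|A|+k|C|$ via the promise and then invoke the partial convolution algorithm (\cref{lem:nickcount}). The extra bookkeeping about shifting to a nonnegative universe and the $O(|B|)$ reading time are harmless elaborations of the same argument.
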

\begin{proof}
Note that \[|B+C|\le |A| + \sum_{c\in C}|(c+B)\setminus A|  \underset{\text{\cref{eqn:promisec}}}{\le}  |A|+k|C|.\] 
Using \cref{lem:nickcount}, we can compute $(1_A\conv 1_{-B})[c]$ 
for all $c\in C$ in $\tilde O(|A|+|B+C|) \le \tilde O(|A|+|B|+k|C|)$ time.
\end{proof}

\subsection{Algorithm for large \texorpdfstring{$B$}{B}}
\label{sec:largeb}
The goal of this section is to prove the randomized version of \cref{thm:constelargebmain} ($k$-mismatch Constellation for small $B$) as a warm-up. To do this, we will solve \cref{prob:reducedprob}. The algorithm for \cref{prob:reducedprob} itself will be deterministic, but it only implies a randomized version of \cref{thm:constelargebmain} due to the randomness in the reduction to \cref{prob:reducedprob} in \cref{prop:simpleconstellationreduction}.  
We will modify this randomized algorithm into a  deterministic one (hence fully proving \cref{thm:constelargebmain}) in the next subsection, using a standard scaling technique.

In \cref{prob:reducedprob}, we denote $L = |A|/|B|$, and we think of $L$ as small.

Since we originally started with Constellation with $k\le 0.3|B|$ mismatches (as stated in \cref{thm:constelargebmain}), after the reduction of \cref{prop:simpleconstellationreduction} we still have \[k\le 0.4|B|.\]
 
 Let $1\le R \le k$ be a parameter to be determined later. If $|C|\le |A|/R$, then we can use \cref{lem:smallc} and solve \cref{prob:reducedprob} in $\tilde O(|A|+|B| +k\cdot |C|) \le \tilde O(k|A|/R)$ time deterministically, which is faster than $\tilde O(k|A|)$.
Hence,  in the following, we assume 
\begin{equation}
    \label{eqn:clb}
|C|\ge |A|/R.
\end{equation}

\begin{lemma}
$|C-C|\le 5|A|L$.
\label{lem:cminusc}
\end{lemma}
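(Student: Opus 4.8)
The plan is to bound $|C-C|$ by relating differences of candidate shifts back to the sumset $B+C$, and then to $A$ via the promise. The key observation is that if $c_1, c_2 \in C$, then $c_1 - c_2 = (c_1 + b) - (c_2 + b)$ for any $b \in B$; so for each difference $c_1 - c_2$ and each $b$, we get a pair in $(C+B) \times (C+B)$ whose difference is $c_1 - c_2$. Since both $c_1 + b$ and $c_2 + b$ land inside $A$ for all but at most $k$ choices of $b$ each (by the promise \eqref{eqn:promisec}), most of these pairs actually lie in $A \times A$, which would pin $c_1 - c_2$ down as an element of $A - A$. But $A - A$ can be as large as $|A|^2$, so that alone is not enough — the point is rather a \emph{counting} argument: each element of $C - C$ is witnessed many times inside $A - A$.

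**The counting argument.** First I would fix some $c_0 \in C$ and consider $C - c_0 \subseteq C - C$; since $|(c_0 + B)\setminus A| \le k \le 0.4|B|$, at least $0.6|B|$ elements of $c_0 + B$ lie in $A$. For each $c \in C$, again at least $0.6|B|$ elements of $c + B$ lie in $A$; hence there are at least $0.2|B|$ values $b \in B$ for which \emph{both} $c_0 + b \in A$ and $c + b \in A$, i.e. at least $0.2|B|$ representations of $c - c_0$ as a difference of two elements of $A$. So the multiset $\{a - a' : a, a' \in A\}$ contains each element of $C - c_0$ with multiplicity at least $0.2|B|$, which gives $|C - c_0| \cdot 0.2|B| \le |A|^2$, i.e. $|C - c_0| \le 5|A|^2/|B| = 5|A|L$. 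Now I need to upgrade from $|C - c_0|$ to $|C - C|$: the natural route is Plünnecke–Ruzsa, noting $|C - c_0| = |C|$ combined with $|C + (-C)| \le \frac{|C - c_0|^2}{|C|}$ type inequalities — but actually the cleanest fix is to observe that the same argument applies verbatim with $c_0$ replaced by any fixed element, and then use the Ruzsa triangle inequality $|C - C| \le \frac{|C - \{c_0\}| \cdot |C - \{c_0\}|}{|\{c_0\}|}$, wait — that is just $|C|^2$.

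**The right finish.** Let me reconsider: the correct tool is the Ruzsa covering/triangle inequality in the form $|X - Y| \le \frac{|X - Z||Z - Y|}{|Z|}$. Taking $X = Y = C$ and $Z = \{c_0\}$ gives only $|C|^2$, so I instead want to use that $C$ itself has small doubling. Here is the fix: by symmetry of the argument above, for \emph{every} pair $c, c' \in C$ the difference $c - c'$ has at least $0.2|B|$ representations in $A - A$ — indeed $c + b \in A$ and $c' + b \in A$ both hold for at least $0.2|B|$ values of $b$, since each excludes at most $0.4|B|$ of the $b$'s. Therefore the multiset $A - A$ (which has total size exactly $|A|^2$) contains every element of $C - C$ with multiplicity at least $0.2|B|$, giving directly $|C - C| \cdot 0.2|B| \le |A|^2$, hence $|C - C| \le 5|A|^2/|B| = 5|A|L$. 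That is the whole proof — a one-line double-counting, no additive-combinatorics machinery needed.

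**Main obstacle.** The only subtlety is getting the constant right: the promise gives $|(c+B)\setminus A| \le k$ and we have $k \le 0.4|B|$ after the reduction, so $|(c+B)\cap A| \ge 0.6|B|$; for two shifts the intersection of the "good $b$" sets has size $\ge 0.6|B| + 0.6|B| - |B| = 0.2|B|$, and $|A|^2 / (0.2|B|) = 5|A|^2/|B| = 5|A|L$. I expect no real obstacle here; the statement is essentially immediate once one writes $c - c' = (c+b) - (c'+b)$. I would present it as: \textit{For any $c, c' \in C$, both $c+b \in A$ and $c'+b \in A$ hold for at least $0.2|B|$ values $b \in B$ by \eqref{eqn:promisec} and $k \le 0.4|B|$; each such $b$ gives a distinct representation $c - c' = a - a'$ with $a, a' \in A$. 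Summing over all $c - c' \in C - C$, which are distinct, yields $|C-C| \cdot 0.2|B| \le |A - A|_{\mathrm{multiset}} = |A|^2$, so $|C - C| \le 5|A|^2/|B| = 5|A|L$.}
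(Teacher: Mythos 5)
Your final argument is correct and is essentially identical to the paper's proof: fix $c,c'\in C$, note that the promise and $k\le 0.4|B|$ force at least $0.2|B|$ values of $b\in B$ to satisfy both $c+b\in A$ and $c'+b\in A$, hence $(1_A\conv 1_{-A})[c-c']\ge 0.2|B|$, and summing $1_A\conv 1_{-A}$ gives $|A|^2$, yielding $|C-C|\le 5|A|L$. The intermediate detour through Pl\"unnecke--Ruzsa/Ruzsa-triangle inequalities is unnecessary and is correctly discarded by you; the paper phrases the same double-counting via a bipartite graph $\{(b,c)\in B\times C : b+c\in A\}$, but the content is the same.
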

\begin{proof}
 Consider the bipartite graph $G = \{(b,c)\in B\times C: b+c\in A\}$. The degree of a node $c\in C$ in $G$ is 
 $\ge |B|-k$
  by \cref{eqn:promisec}.
 For every $(c,c')\in C\times C$, since both $c,c'$ have degree at least  $|B|-k \ge 0.6|B|$, they have at least $2\cdot (|B|-k)-|B| \ge 0.2|B|$ common neighbors in $G$.
 Hence, $c-c'$ can be represented as $(c+b)-(c'+b)$ where $c+b,c'+b\in A$, for at least $0.2|B|$ many distinct $b$. Therefore, $(1_A\conv 1_A) [c-c']\ge 0.2|B|$ for all $c-c'\in C-C$, and  $|C-C|\le \frac{|A|\cdot |A|}{0.2|B|} = 5|A|L$.
\end{proof}

We use \cref{thm:introdeterministicapprox3sum} to 
 deterministically compute a vector $f$ such that $\|f-1_A\conv 1_{-C}\|_\infty \le 0.25|C|$, in $O((|A|+|B|+|C|)N^{o(1)})$ time, and let $B_{\mathrm{bad}}:= \{b\in B: f[b] \le 0.5|C|\}$. Then, we know
\begin{itemize}
    \item For all $b\in B_{\mathrm{bad}}$, $(1_A\conv 1_{-C})[b] \le 0.75|C|$.
        \item For all $b\in B\setminus B_{\mathrm{bad}}$, $(1_A\conv 1_{-C})[b] \ge 0.25 |C|$.
\end{itemize}
Note that by \cref{eqn:promisec}, $\sum_{b\in B} (1_A\conv 1_{-C})[b]  = \sum_{c\in C} (1_A\conv 1_{-B})[c] \ge |C|\cdot (|B|-k)$. Hence,
\begin{equation}
    \label{eqn:bbadub}
 |B_{\mathrm{bad}}|\le  \frac{\sum_{b\in B}(|C|- (1_A\conv 1_{-C})[b])}{|C|-0.75|C|} \le \frac{|B||C| - |C|\cdot (|B|-k)}{0.25|C|} =  4k.
\end{equation}

Let $B' = B\setminus B_{\mathrm{bad}}$.

\begin{lemma}
    \label{lem:bprimeplusc}
   $|B'+C| \le 20|A|RL$.
\end{lemma}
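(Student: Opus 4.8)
The plan is to show that every element of $B'+C$ admits at least $0.25|C|$ representations as a sum of an element of $A$ and an element of $C-C$, and then to turn this into an upper bound on $|B'+C|$ via a simple double-counting, using the small difference set $C-C$ established in \cref{lem:cminusc}.

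Concretely, I would fix $b\in B'$ and $c\in C$. By construction of $B'=B\setminus B_{\mathrm{bad}}$ we have $(1_A\conv 1_{-C})[b]\ge 0.25|C|$, i.e.\ there are at least $0.25|C|$ elements $c'\in C$ with $b+c'\in A$. For every such $c'$ we can write
\[
b+c=(b+c')+(c-c'),
\]
where $b+c'\in A$ and $c-c'\in C-C$. Since $c$ is fixed, distinct choices of $c'$ give distinct summands $c-c'$ and hence distinct decompositions, so $(1_A\conv 1_{C-C})[b+c]\ge 0.25|C|$. In particular $(1_A\conv 1_{C-C})[x]\ge 0.25|C|$ for every $x\in B'+C$. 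Summing this over all $x\in B'+C$ and using $\sum_x(1_A\conv 1_{C-C})[x]=|A|\,|C-C|$ gives
\[
0.25|C|\cdot|B'+C|\le |A|\,|C-C|.
\]
Then I would substitute $|C-C|\le 5|A|L$ from \cref{lem:cminusc} and the assumption $|C|\ge|A|/R$ from \cref{eqn:clb} to conclude
\[
|B'+C|\le \frac{|A|\cdot 5|A|L}{0.25\,|A|/R}=20|A|RL,
\]
which is exactly the claimed bound.

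I do not anticipate a serious obstacle here: the argument is a direct consequence of the two structural facts already in hand, namely the density lower bound $(1_A\conv 1_{-C})[b]\ge 0.25|C|$ guaranteeing each kept $b$ has many $C$-neighbours mapping into $A$, and the bound $|C-C|\le 5|A|L$ coming from the fact that any two elements of $C$ share roughly $|B|$ common neighbours in the bipartite incidence graph $\{(b,c):b+c\in A\}$. The only point requiring a line of care is verifying that the decompositions $b+c=(b+c')+(c-c')$ are genuinely distinct for distinct $c'$, which is immediate because $c$ is held fixed.
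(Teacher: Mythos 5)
Your proof is correct and follows essentially the same argument as the paper: both use the degree lower bound $(1_A\conv 1_{-C})[b]\ge 0.25|C|$ for $b\in B'$, the decomposition $b+c=(b+c')+(c-c')$ to deduce $(1_A\conv 1_{C-C})[b+c]\ge 0.25|C|$, and then divide $|A|\cdot|C-C|$ by this lower bound and plug in \cref{lem:cminusc} and \cref{eqn:clb}.
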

\begin{proof}
 Consider the bipartite graph $G = \{(b,c)\in B\times C: b+c\in A\}$. For every $b\in B'$, the degree of $b$ in $G$ is $(1_A\conv 1_{-C})[b]\ge 0.25|C|$.
Hence, for every $b\in B', c\in C$, $b+c$ can be represented as $(b+c')+(c-c')$ where $(b+c')\in A, (c-c')\in (C-C)$  by choosing $c'\in C$ from the neighbors of  $b$ in $G$, which has $\ge  0.25|C|$ different possibilities. Therefore, $(1_A\conv 1_{C-C})[b+c]\ge 0.25|C|$ for all $b+c\in B'+C$, and hence \[|B'+C| \le \frac{ |A| \cdot |C-C|}{0.25|C|} \underset{\text{\cref{eqn:clb}}}{\le} 4R |C-C| \underset{\text{\cref{lem:cminusc}}}{\le} 20|A| R L.  \]
\end{proof}

Then, the algorithm decomposes the answers into two parts,
\[(1_A \conv 1_{-B})[c] = (1_A \conv 1_{-B'})[c] + (1_A \conv 1_{-B_{\mathrm{bad}}})[c],\]
which are separately computed as follows for all $c\in C$:
\begin{itemize}
    \item 
Use \cref{lem:nickcount} to compute $(1_{A}\conv 1_{-B'})[c]$ for all $c\in C$, in $\tilde O(|B'+C|) \le  \tilde O(|A| RL)$ time by \cref{lem:bprimeplusc} (this running time is dominated by the running time for the next part). 

\item Use \cref{thm:detsmalldoublethreesum} (applied to $(A,B,C,S):=(C,B_{\mathrm{bad}},A,-C)$ ) to compute $(1_{A}\conv 1_{ B_{-\mathrm{bad}}})[c]$ for all $c\in C$, in  
    \begin{align*}
 &        \tilde O\Big (\frac{|C-C|\sqrt{|A||B_{\mathrm{bad}}|}}{\sqrt{|-C|}}\Big ) + (|C-C|+|A|)N^{o(1)} 
 \\
 & \le  O\Big (\frac{|A|L\sqrt{|A|\cdot k}}{\sqrt{|A|/R}}N^{o(1)}\Big )  \tag{by \cref{lem:cminusc}, \cref{eqn:clb}, \cref{eqn:bbadub}}\\
 & =  O(|A| L\sqrt{kR} N^{o(1)}). 
    \end{align*}
\end{itemize}
        
Combined with the running time $\tilde O(|A|k/R)$ for the $|C|<|A|/R$ case,  the total time for solving \cref{prob:reducedprob} becomes
\begin{align*}
& \tilde O(|A|k/R) +  O(|A| L\sqrt{kR}N^{o(1)}) \\
 & \le  O(|A|k^{2/3}L^{2/3} N^{o(1)}) 
\end{align*}
by choosing $R=(k/L^2)^{1/3}\ge 1$ (assuming $L\le \sqrt{k}$; otherwise the $\tilde O(|A|k)$-time algorithm by \cite{nickconstellation} already satisfies the claimed time bound).

\subsection{Derandomization \texorpdfstring{$B$}{B} by scaling}
\label{sec:constellationderand}

In this section we will prove (the deterministic version of) \cref{thm:constelargebmain}, so we are not allowed to use the randomized reduction to \cref{prob:reducedprob} any more. The main idea is basically the same as in the previous section, and we only need to 
apply the scaling trick (e.g., \cite{BringmannN21,BringmannFN22}) on top of it, so we will be brief in the proof. 
We remark that Fischer's deterministic algorithm for $k$-mismatch Constellation \cite{nickconstellation} also used scaling, but for a multiset (i.e., weighted) version of the problem.
Here we avoid dealing with multisets in order to simplify some of the arguments.

\begin{proof}[Proof of \cref{thm:constelargebmain}]
   Let $A,B\subseteq [N]$ be the input instance of the $k$-mismatch Constellation problem, where $k\le 0.3|B|$. 

   At the very beginning, we use \cref{lem:detfindmodulus} to find a modulus $M \in [|B|/2, |B|\cdot N^{o(1)}]$ such that the set $B\bmod M$ has size $|B\bmod M| \ge 0.9|B|$, in $O(|B|\cdot N^{o(1)})$ time.
   
   We perform iterations $0\le i \le \lceil \log_2 (2N/M)\rceil$, where in iteration $i$ we define modulus $M_i = M\cdot 2^i$, and define \emph{sets} $A_i:= (A\bmod M_i) + \{0,M_i\} \subseteq \Z \cap [0,2M_i), B_i := B\bmod M_i \subseteq \Z \cap [0,M_i)$.  Note that $|B_i|\ge |B \bmod M| \ge 0.9|B|$, so $k\le 0.3|B| < 0.4|B_i|$ for all $i$.
We have the following simple observation:
\begin{observation}
For all $0\le c < M_{i+1}$, $|(c+B_{i+1} )\setminus A_{i+1}| \ge |((c\bmod M_i) + B_i)\setminus A_i|$.
\end{observation}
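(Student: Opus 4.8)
The plan is to prove the observation directly by exhibiting an injection from the set $((c\bmod M_i)+B_i)\setminus A_i$ into the set $(c+B_{i+1})\setminus A_{i+1}$, using the relation $M_i \mid M_{i+1}$ (since $M_{i+1}=2M_i$). First I would fix $c$ with $0\le c<M_{i+1}$ and write $c' := c\bmod M_i$, so that $c = c' + \epsilon M_i$ for some $\epsilon\in\{0,1\}$ (because $c<M_{i+1}=2M_i$). The goal is then to show $|(c'+B_i)\setminus A_i| \le |(c+B_{i+1})\setminus A_{i+1}|$.

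The key step is to build the map. Take any $y\in (c'+B_i)\setminus A_i$; write $y = c' + \beta$ with $\beta\in B_i$, i.e.\ $\beta = b\bmod M_i$ for some $b\in B$. Let $\beta' := b\bmod M_{i+1} \in B_{i+1}$; note $\beta'\equiv \beta \pmod{M_i}$ and $0\le \beta' < M_{i+1}$. Define the image of $y$ to be $z := (c+\beta')\bmod M_{i+1}$, which lies in $c+B_{i+1}$ (interpreting $c+B_{i+1}$ as a subset of $\Z\cap[0,2M_{i+1})$ or of $\Z_{M_{i+1}}$ — one must be careful here about whether the problem works in $\Z$ or mod $M_{i+1}$; I would match whatever convention the sets $A_i$ were defined with, namely $A_i\subseteq \Z\cap[0,2M_i)$, so $c+B_{i+1}\subseteq \Z\cap[0,M_{i+1}+M_{i+1})$ and no reduction is needed, $z := c+\beta'$). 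I then need two things: (1) $z\notin A_{i+1}$, and (2) the map $y\mapsto z$ is injective. For (1): if $z = c+\beta' \in A_{i+1} = (A\bmod M_{i+1})+\{0,M_{i+1}\}$, then reducing mod $M_i$ gives $z\bmod M_i = (c'+\beta)\bmod M_i \in (A\bmod M_i) + \{0, M_i \cdot(\text{something})\}\bmod M_i = A\bmod M_i$; combined with $c'+\beta < M_i + M_i$, this forces $c'+\beta \in (A\bmod M_i)+\{0,M_i\} = A_i$, contradicting $y=c'+\beta\notin A_i$. For (2): distinct $y_1=c'+\beta_1, y_2=c'+\beta_2$ have $\beta_1\ne\beta_2$ in $[0,M_i)$; their images are $c+\beta_1', c+\beta_2'$, and since $\beta_j'\equiv\beta_j\pmod{M_i}$ and the $\beta_j$ are distinct representatives mod $M_i$, we get $\beta_1'\ne\beta_2'$, hence the images differ.

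The main obstacle I anticipate is pinning down the exact conventions: whether "$c+B_{i+1}$" and the membership tests are in $\Z$ or in $\Z_{M_{i+1}}$, and whether the "$+\{0,M_i\}$" in the definition of $A_i$ is meant to make $A_i$ capture wrap-around (so that the cyclic picture mod $M_i$ is faithfully represented by an interval of length $2M_i$). This matters for step (1) above: the claim $z\bmod M_i \in A\bmod M_i \implies z\bmod M_i \in A_i$ relies on $A_i$ containing both the "low copy" and the "high copy" of $A\bmod M_i$, and on the range bound $c'+\beta < 2M_i$. Once the conventions are fixed consistently with how $A_i$ was defined, the rest is the routine injection argument sketched above. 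I would also note the edge case $\epsilon=0$, where $z = c'+\beta' $ and the argument is even simpler since $c=c'$. Finally I would remark that the same reasoning, applied iteratively, is what lets the algorithm lift a candidate solution found at a coarse scale $M_i$ to the next scale $M_{i+1}$ while controlling the mismatch count, which is the purpose of this observation in the scaling framework.
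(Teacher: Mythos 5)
Your proof is correct and takes essentially the same approach as the paper's: you lift each $\beta \in B_i$ to $\beta' \in B_{i+1}$ with $\beta' \equiv \beta \pmod{M_i}$, and argue that $c+\beta' \notin A_{i+1}$ by reducing modulo $M_i$ and using the fact that $A_i = (A \bmod M_i) + \{0,M_i\}$ catches both representatives of $c'+\beta$ in $[0,2M_i)$. The only difference is presentational — you spell out injectivity and the range check $c'+\beta < 2M_i$ explicitly, whereas the paper leaves these implicit.
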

\begin{proof}
   For any element $b_i \in B_i$ such that $(c\bmod M_i) + b_i \in ((c\bmod M_i)+B_i)\setminus A_i$, let $b_{i+1} \in B_{i+1}$ be such that $b_{i+1}\in b_i + \{0,M_i\}$. If $(c+b_{i+1})\in A_{i+1}$, then we would have $(c+b_i)\bmod M_i = (c+b_{i+1})\bmod M_{i}  \in (A_{i+1})\bmod M_i = A\bmod M_i$, so that $((c\bmod M_i)+b_{i}) \in (A\bmod M_i)+\{0,M_i\} = A_i$, a contradiction.  Hence, $c+b_{i+1}\in (c+B_{i+1})\setminus A_{i+1}$. This implies the desired size inequality.
\end{proof}

We will maintain the following invariant: at the end of iteration $i$ ($i\ge 0$), we have computed the set 
\[C_i:= \{c\in [0,M_i): (1_{A_i} \conv 1_{-B_i})[c] \ge |B_i|-k\},\]
which has size $|C_i| \le \frac{|A_i||B_i|}{|B_i|-k} = O(|A_i|) = O(|A|)$, and satisfies $|B_i+C_i| \le |A_i|+k|C_i|$.

This invariant at the end of iteration $0$ can be easily satisfied: we first use FFT to compute $1_{A_0} \conv 1_{-B_0}$ in $\tilde O(M_0) = O(|B|N^{o(1)})$ time, and let $C_0:= \{c\in [0,M_0): (1_{A_0} \conv 1_{-B_0})[c] \ge |B_0|-k\}$.

Now we need to satisfy the invariant at the end of iteration $i$ ($i\ge 1$).  We already have $C_{i-1}$ from the previous iteration. We run the algorithm from \cref{sec:largeb} (in a whitebox manner) on $A_{i-1},B_{i-1},C_{i-1}$, $k$, which satisfy the promise of \cref{prob:reducedprob} that $k\le 0.4|B_{i-1}|$ and $(c+B_{i-1})\setminus A_{i-1}|\le k$ for all $c\in C_{i-1}$ (due to the invariant).
Now compute $C'_{i} = C_{i-1} + \{0,M_{i-1}\}$. We know $C_i \subseteq C'_i$. 
Recall that the algorithm from \cref{sec:largeb} has two cases:
\begin{itemize}
    \item Case $|C_{i-1}|<|A_{i-1}|/R$.
        
    In this case we have $|B_i+ C_i'| \le |B_{i-1}+\{0,M_{i-1}\} + C_{i-1}+\{0,M_{i-1}\}| \le 3|B_{i-1}+C_{i-1}| \le 3(|A_i| + k|C_{i-1}|) = O(|A|k/R)$. So we can compute $(1_{A_i}\conv 1_{-B_{i}})[c]$ for all $c\in C'_i$ in $\tilde O(|A|k/R)$ time using \cref{lem:nickcount} (similarly to \cref{lem:smallc}).
    
    \item Case $|C_{i-1}|<|A_{i-1}|/R$.
        
    In this case, the algorithm from \cref{sec:largeb} computes the decomposition $B_{i-1} =B_{\mathrm{bad}}\cup  B'$, such that $|B'+C_{i-1}|= O(|A|RL)$ and $|C_{i-1}-C_{i-1}| = O(|A|L)$, where $L = |A_{i-1}|/|B_{i-1}| \le |A|/(0.9|B|)$. 
    
    We use this to obtain a decomposition $B_i = B_{\mathrm{bad},i}\cup  B'_i$, such that $B_{\mathrm{bad},i} \subseteq B_{\mathrm{bad}} + \{0,M_{i-1}\}$ and $B'_i \subseteq B' + \{0,M_{i-1}\}$. Then we still have $|B'_i + C_{i}| \le 3|B'+C_{i-1}| = O(|A|RL)$ and $|C_i-C_i| \le 3|C_{i-1}-C_{i-1}|= O(|A|L)$, and $|B_{\mathrm{bad},i}| \le 2|B_{\mathrm{bad}}|$. So we can use the same procedure described at the end of \cref{sec:constellationderand} to compute 
$(1_{A_i} \conv 1_{-B_i})[c] = (1_{A_i} \conv 1_{-B_i'})[c] + (1_{A_i} \conv 1_{-B_{\mathrm{bad},i}})[c]$ for all $c\in C_i'$, and the running time is the same as before up to a constant factor.
\end{itemize}
In both cases we have computed $(1_{A_i} \conv 1_{-B_i})[c]$ for all $c\in C'_i$. Since $C_i \subseteq C'_i$, we use this information to obtain the desired set $C_i$.

When the last iteration $i^*$ finishes, since $M_{i^*}\ge  2N$,  the set $C_{i^*}$ gives the answer to the original $k$-mismatch Constellation  instance.
\end{proof}

\subsection{Combinatorial lower bound}

In this section, we prove \cref{thm:BMMLowerBound}, which we recall below: 
\BMMLowerBound*

\begin{proof}
For any $0 < \alpha \le 1$, it was shown in \cite{GawrychowskiU18} that, finding all shifts that result in $\le k$ mismatches in a Text-to-Pattern Hamming Distances instance with text $T$ and pattern $P$ where $|T| = n$, $|P| = \Theta(n^\alpha), k = \Theta(n^\alpha)$ requires $n^{1+\alpha/2-o(1)}$ time for combinatorial algorithms, under the BMM hypothesis. 

We can then further reduce this instance to a $k$-Mismatch Constellation instance with $\Theta(n)$-size integer sets $A, B$. Without loss of generality, we can assume all characters in the Text-to-Pattern Hamming Distances instance are integers in $[n]$. Let $M = 10n$ be a sufficiently large integer. Then we let 
\[
A := \{1, \ldots, 2n\} \cup \{i + M \cdot T[i]: 1 \le i \le |T|\}
\]
and 
\[
B := \{1, \ldots, n\} \cup \{i + M \cdot P[i]: 1 \le i \le |P|\}. 
\]
For any $0 \le c \le |T| - |P|$, consider $ \left| (c + B) \setminus A\right|$. The $c + \{1, \ldots, n\}$ part of $c+B$ will lie in $A$, so they will not contribute towards $ \left| (c + B) \setminus A\right|$. Hence, 
\[
    \left| (c + B) \setminus A\right| = \left| \left\{ 1 \le i \le |P|: i+ c + M \cdot P[i] \not \in A\right\}\right|.
\]
For every $1 \le i \le |P|$, the only number in $A$ that can possibly be equal to $i + c + M \cdot P[i]$ is $i + c + M \cdot T[i+c]$, by considering the numbers modulo $M$ (also, $i + c + M \cdot P[i] \ge M$, so numbers in $\{1, \ldots, 2n\}$  cannot be equal to it). Therefore, 
\begin{align*}
    \left| (c + B) \setminus A\right| &= \left| \left\{ 1 \le i \le |P|: i+ c + M \cdot P[i]  = i + c + M \cdot T[i+c]\right\}\right|\\
    & = \left| \left\{ 1 \le i \le |P|: P[i] =  T[i+c]\right\}\right|, 
\end{align*}
which is exactly the number of mismatches between $T$ and $P$ with shift $c$. Hence, if we can solve $k$-Mismatch Constellation between $A$ and $B$, we can determine the set of shifts $c$ for $P$ so that its number of mismatches with $T$ is bounded by $k$. Therefore, by the lower bound from \cite{GawrychowskiU18}, $k$-Mismatch Constellation requires $n^{1+\alpha /2  - o(1)} = \sqrt{k} n^{1-o(1)}$ time for combinatorial algorithms, under the BMM hypothesis.
\end{proof}

\subsection{Applications to \texorpdfstring{$k$}{k}-Mismatch String Matching with Wildcards}
\label{subsec:wildcard}
 Given a text string $T[0\dd n) \in \Sigma^{n}$ and a pattern string $P[0\dd m) \in (\Sigma \cup \{\diamondsuit\})^{m}$, our task is to find all occurrences of $P$ inside $T$ allowing up to $k$ Hamming mismatches (where wildcard $\diamondsuit$ can match any single character).
We first present a simple reduction which yields an $\tilde O(k^{2/3}n)$-time algorithm for this task, and then describe how to improve it to $\tilde O(k^{1/2}n)$ after some small modifications.

Assume $m\le n \le 2m$ via the standard trick of dividing $T$ into $\lfloor n/m\rfloor$ (overlapping) chunks each of length $\le 2m$. 

\begin{proof}[Proof of \cref{cor:wildcardbasic}]
View the alphabet $\Sigma$ as integers $\{1,2,\dots,|\Sigma|\}$. Construct two point sets $A, B\subset \Z^2$ by $A:= \{(i,T[i]) : 0\le i< n\}$ and $B:= \{(i,P[i]) : 0\le i< m, P[i]\neq \diamondsuit\}$.  The key observation is that, for all $c\in \{0,1,\dots,n-m\}$, $T[c\dd c+m)$ is a $k$-mismatch occurrence of $P[0\dd m)$ if and only if $\big \lvert \big (B+(c,0)\big ) \setminus A\big \rvert \le k$. (For every Hamming mismatch $P[i]\neq T[i+c]$, the point $(c+i,P[i])\in B+(c,0)$ is not in $A$, contributing one to the size of $(B+(c,0)\big ) \setminus A$. The wildcards do not make any contribution.)

 Hence, we can apply our $k$-mismatch Constellation algorithm (\cref{thm:constelargebmain}) to $A,B$ and obtain the set $C' = \{ c'\in \Z^2: |(c'+B)\setminus A| < k\}$, and return $\{c\in \{0,1,\dots,n-m\}: (c,0) \in C'\}$ as the answer.\footnote{To invoke \cref{thm:constelargebmain} we need to first flatten each 2-dimensional point $(x,y)$ into $(yM+x)$ where $M$ is a large enough integer ($M = 10n$ suffices). }
 Assuming $|B|\ge \Omega(m)$, the (randomized) time complexity of \cref{thm:constelargebmain}  is $\tilde O(|A|\cdot k^{2/3}\cdot (|A|/|B|)^{2/3}) = \tilde O(n\cdot k^{2/3}\cdot (n/m)^{2/3}) = \tilde O(n\cdot k^{2/3})$ as claimed.
 
Now we justify the $|B|\ge \Omega(m)$ assumption, as well as the $k< 0.3|B|$ assumption required by  \cref{thm:constelargebmain}. Note that $|B|$ equals the number of non-wildcard characters in the pattern. We add three non-wildcard characters $\spadesuit_1\spadesuit_2\spadesuit_3$ right after every character in the input pattern and  every character in the input text. (The new text and pattern have lengths $4n$ and $4m$ respectively.) Then, when the shift is $4\cdot i$, all the $\spadesuit_{(\cdot)}$ symbols are matched, and the number of Hamming mismatches is the same as in the original instance at shift $i$. Now $|B|\ge 3m$ and $k\le m < 0.3|B|$ hold.
\end{proof}

Now we describe how to modify the proof of \cref{thm:constelargebmain} to improve the time bound to $\tilde O(k^{1/2}n)$.
\begin{proof}[Proof sketch of \cref{thm:wildcardbetter}]
   Based on $P,T$, construct the same point sets $A,B$ as in the first paragraph of the previous proof of \cref{cor:wildcardbasic}. Now we run the $k$-Mismatch Constellation algorithm described in \cref{sec:largeb} on input sets $A,B \subset \Z^2$ (for simplicity, here we directly work with $\Z^2$ without flattening to 1-dimension), but with the following modifications: 
   \begin{itemize}
       \item Since we only care about shifts in $C':= \{ (c,0): c\in \{0,1,\dots,n-m\} \}$, we can discard all elements in the candidate set $C$ that are not in $C'$. Therefore, 
\cref{lem:cminusc} can be trivially improved to $|C-C| \le |C'-C'| <  2(n-m+1) = O(n)$.
\item \cref{lem:bprimeplusc} relies the estimate from \cref{lem:cminusc}, so it can be improved accordingly to $|B'+C| = O(nR)$. Hence, the step of using 
\cref{lem:nickcount} to compute $(1_{A}\conv 1_{-B'})[c]$ for all $c\in C$ now runs in $\tilde O(|B'+C|) \le  \tilde O(n R)$ time.
\item For the next step of computing  $(1_{A}\conv 1_{ B_{-\mathrm{bad}}})[c]$ for all $c\in C\subseteq C'$, we apply 
\cref{thm:detsmalldoublethreesum} to $(A,B,C,S):=(C',B_{\mathrm{bad}},A,-C')$, with time complexity
    \begin{align*}
 &        \tilde O\Big (\frac{|C'-C'|}{\sqrt{|-C'|}}\cdot \sqrt{|A||B_{\mathrm{bad}}|}\Big ) + (|C'-C'|+|A|)N^{o(1)} 
 \\
 & \le   O\Big (\sqrt{n}\cdot \sqrt{|A|\cdot k} + n^{1+o(1)}\Big )\\
 & \le \sqrt{k}n^{1+o(1)}.
    \end{align*}
       \item Finally,  combined with the running time $\tilde O(|A|k/R)$ for the $|C|<|A|/R$ case,  the total time for solving \cref{prob:reducedprob} becomes
$ \tilde O(|A|k/R) +  \tilde O(n R) + \sqrt{k}n^{1+o(1)}  \le   \sqrt{k}n^{1+o(1)} $ by choosing $R = \Theta(\sqrt{k})$.  
\end{itemize}
   Again, for the randomized algorithm the $n^{o(1)}$ factor in the time complexity can be replaced by $\polylog(n)$. For the deterministic algorithm, we use the same argument as in  \cref{sec:constellationderand}.
\end{proof}

\section*{Acknowledgements}
We thank Gabriel Bathie, Panagiotis Charalampopoulos, and Tatiana Starikovskaya for bringing the $k$-Mismatch String Matching with Wildcards problem to our attention and answering our questions.

\bibliographystyle{alphaurl} 
\bibliography{main}

\newcommand{\etalchar}[1]{$^{#1}$}
\begin{thebibliography}{CGK{\etalchar{+}}20}

\bibitem[ABF23]{AbboudBF23}
Amir Abboud, Karl Bringmann, and Nick Fischer.
\newblock Stronger {3-SUM} lower bounds for approximate distance oracles via additive combinatorics.
\newblock In {\em Proc. 55th Annual {ACM} Symposium on Theory of Computing (STOC)}, pages 391--404, 2023.
\newblock \href {https://doi.org/10.1145/3564246.3585240} {\path{doi:10.1145/3564246.3585240}}.

\bibitem[ABKZ22]{DBLP:conf/stoc/AbboudBKZ22}
Amir Abboud, Karl Bringmann, Seri Khoury, and Or~Zamir.
\newblock Hardness of approximation in {P} via short cycle removal: cycle detection, distance oracles, and beyond.
\newblock In {\em Proc. 54th Annual {ACM} {SIGACT} Symposium on Theory of Computing (STOC)}, pages 1487--1500, 2022.
\newblock \href {https://doi.org/10.1145/3519935.3520066} {\path{doi:10.1145/3519935.3520066}}.

\bibitem[Abr87]{Abrahamson87}
Karl~R. Abrahamson.
\newblock Generalized string matching.
\newblock {\em {SIAM} J. Comput.}, 16(6):1039--1051, 1987.
\newblock \href {https://doi.org/10.1137/0216067} {\path{doi:10.1137/0216067}}.

\bibitem[AGHP92]{AlonGHP92}
Noga Alon, Oded Goldreich, Johan H{\aa}stad, and Ren{\'{e}} Peralta.
\newblock Simple construction of almost k-wise independent random variables.
\newblock {\em Random Struct. Algorithms}, 3(3):289--304, 1992.
\newblock URL: \url{https://doi.org/10.1002/rsa.3240030308}, \href {https://doi.org/10.1002/RSA.3240030308} {\path{doi:10.1002/RSA.3240030308}}.

\bibitem[AGW13]{AtallahGW13}
Mikhail~J. Atallah, Elena Grigorescu, and Yi~Wu.
\newblock A lower-variance randomized algorithm for approximate string matching.
\newblock {\em Inf. Process. Lett.}, 113(18):690--692, 2013.
\newblock \href {https://doi.org/10.1016/j.ipl.2013.06.005} {\path{doi:10.1016/j.ipl.2013.06.005}}.

\bibitem[AKLS23]{DBLP:conf/fsttcs/AbboudKLS23}
Amir Abboud, Seri Khoury, Oree Leibowitz, and Ron Safier.
\newblock Listing 4-cycles.
\newblock In {\em Proc. 43rd {IARCS} Annual Conference on Foundations of Software Technology and Theoretical Computer Science (FSTTCS)}, volume 284, pages 25:1--25:16, 2023.
\newblock URL: \url{https://doi.org/10.4230/LIPIcs.FSTTCS.2023.25}, \href {https://doi.org/10.4230/LIPICS.FSTTCS.2023.25} {\path{doi:10.4230/LIPICS.FSTTCS.2023.25}}.

\bibitem[AV14]{AbboudW14}
Amir Abboud and Virginia {Vassilevska Williams}.
\newblock Popular conjectures imply strong lower bounds for dynamic problems.
\newblock In {\em Proc. 55th {IEEE} Annual Symposium on Foundations of Computer Science (FOCS)}, pages 434--443, 2014.
\newblock \href {https://doi.org/10.1109/FOCS.2014.53} {\path{doi:10.1109/FOCS.2014.53}}.

\bibitem[BCS24]{BathieCS24}
Gabriel Bathie, Panagiotis Charalampopoulos, and Tatiana Starikovskaya.
\newblock Pattern matching with mismatches and wildcards.
\newblock In {\em 32nd Annual European Symposium on Algorithms, {ESA} 2024, September 2-4, 2024, Royal Holloway, London, United Kingdom}, volume 308 of {\em LIPIcs}, pages 20:1--20:15. Schloss Dagstuhl - Leibniz-Zentrum f{\"{u}}r Informatik, 2024.
\newblock \href {https://doi.org/10.4230/LIPICS.ESA.2024.20} {\path{doi:10.4230/LIPICS.ESA.2024.20}}.

\bibitem[BDP08]{BaranDP08}
Ilya Baran, Erik~D. Demaine, and Mihai P{\u{a}}tra{\c{s}}cu.
\newblock Subquadratic algorithms for {3SUM}.
\newblock {\em Algorithmica}, 50(4):584--596, 2008.
\newblock \href {https://doi.org/10.1007/s00453-007-9036-3} {\path{doi:10.1007/s00453-007-9036-3}}.

\bibitem[BFN22]{BringmannFN22}
Karl Bringmann, Nick Fischer, and Vasileios Nakos.
\newblock Deterministic and las vegas algorithms for sparse nonnegative convolution.
\newblock In {\em Proc. 2022 {ACM-SIAM} Symposium on Discrete Algorithms (SODA)}, pages 3069--3090, 2022.
\newblock \href {https://doi.org/10.1137/1.9781611977073.119} {\path{doi:10.1137/1.9781611977073.119}}.

\bibitem[BN21]{BringmannN21}
Karl Bringmann and Vasileios Nakos.
\newblock A fine-grained perspective on approximating subset sum and partition.
\newblock In {\em Proc. 2021 {ACM-SIAM} Symposium on Discrete Algorithms (SODA)}, pages 1797--1815, 2021.
\newblock \href {https://doi.org/10.1137/1.9781611976465.108} {\path{doi:10.1137/1.9781611976465.108}}.

\bibitem[BS94]{balog1994statistical}
Antal Balog and Endre Szemer{\'{e}}di.
\newblock A statistical theorem of set addition.
\newblock {\em Combinatorica}, 14(3):263--268, 1994.
\newblock \href {https://doi.org/10.1007/BF01212974} {\path{doi:10.1007/BF01212974}}.

\bibitem[BS00]{DBLP:conf/crypto/BierbrauerS00}
J{\"{u}}rgen Bierbrauer and Holger Schellwat.
\newblock Almost independent and weakly biased arrays: Efficient constructions and cryptologic applications.
\newblock In {\em Proc. 20th Annual International Cryptology Conference on Advances in Cryptology (CRYPTO)}, volume 1880, pages 533--544, 2000.
\newblock \href {https://doi.org/10.1007/3-540-44598-6\_33} {\path{doi:10.1007/3-540-44598-6\_33}}.

\bibitem[CEPR09]{CliffordEPR09}
Rapha{\"{e}}l Clifford, Klim Efremenko, Ely Porat, and Amir Rothschild.
\newblock From coding theory to efficient pattern matching.
\newblock In {\em Proceedings of the Twentieth Annual {ACM-SIAM} Symposium on Discrete Algorithms, {SODA} 2009, New York, NY, USA, January 4-6, 2009}, pages 778--784. {SIAM}, 2009.
\newblock \href {https://doi.org/10.1137/1.9781611973068.85} {\path{doi:10.1137/1.9781611973068.85}}.

\bibitem[CEPR10]{CliffordEPR10}
Rapha{\"{e}}l Clifford, Klim Efremenko, Ely Porat, and Amir Rothschild.
\newblock Pattern matching with don't cares and few errors.
\newblock {\em J. Comput. Syst. Sci.}, 76(2):115--124, 2010.
\newblock \href {https://doi.org/10.1016/J.JCSS.2009.06.002} {\path{doi:10.1016/J.JCSS.2009.06.002}}.

\bibitem[CGK{\etalchar{+}}20]{ChanGKKP20}
Timothy~M. Chan, Shay Golan, Tomasz Kociumaka, Tsvi Kopelowitz, and Ely Porat.
\newblock Approximating text-to-pattern {H}amming distances.
\newblock In {\em Proc. 52nd Annual {ACM} {SIGACT} Symposium on Theory of Computing (STOC)}, pages 643--656, 2020.
\newblock \href {https://doi.org/10.1145/3357713.3384266} {\path{doi:10.1145/3357713.3384266}}.

\bibitem[CH20]{ChanHe}
Timothy~M. Chan and Qizheng He.
\newblock Reducing {3SUM} to convolution-{3SUM}.
\newblock In {\em Proc. 3rd Symposium on Simplicity in Algorithms (SOSA)}, pages 1--7, 2020.
\newblock \href {https://doi.org/10.1137/1.9781611976014.1} {\path{doi:10.1137/1.9781611976014.1}}.

\bibitem[CJVX23]{focs23}
Timothy~M. Chan, Ce~Jin, Virginia {Vassilevska Williams}, and Yinzhan Xu.
\newblock Faster algorithms for text-to-pattern {Hamming} distances.
\newblock In {\em Proc. 64th {IEEE} Symposium on Foundations of Computer Science (FOCS)}, 2023.
\newblock URL: \url{https://arxiv.org/abs/2310.13174}.

\bibitem[CL15]{ChanL15}
Timothy~M. Chan and Moshe Lewenstein.
\newblock Clustered integer 3{SUM} via additive combinatorics.
\newblock In {\em Proc. 47th Annual {ACM} Symposium on Theory of Computing (STOC)}, pages 31--40, 2015.
\newblock \href {https://doi.org/10.1145/2746539.2746568} {\path{doi:10.1145/2746539.2746568}}.

\bibitem[CP10]{CliffordP10}
Rapha{\"{e}}l Clifford and Ely Porat.
\newblock A filtering algorithm for k-mismatch with don't cares.
\newblock {\em Inf. Process. Lett.}, 110(22):1021--1025, 2010.
\newblock \href {https://doi.org/10.1016/J.IPL.2010.08.012} {\path{doi:10.1016/J.IPL.2010.08.012}}.

\bibitem[CS98]{CardozeS98}
David~E. Cardoze and Leonard~J. Schulman.
\newblock Pattern matching for spatial point sets.
\newblock In {\em Proc. 39th Annual Symposium on Foundations of Computer Science (FOCS)}, pages 156--165, 1998.
\newblock \href {https://doi.org/10.1109/SFCS.1998.743439} {\path{doi:10.1109/SFCS.1998.743439}}.

\bibitem[CVX23]{ChanWX23}
Timothy~M. Chan, Virginia {Vassilevska Williams}, and Yinzhan Xu.
\newblock Fredman's trick meets dominance product: Fine-grained complexity of unweighted {APSP}, {3SUM} counting, and more.
\newblock In {\em Proc. 55th Annual {ACM} Symposium on Theory of Computing (STOC)}, pages 419--432, 2023.
\newblock \href {https://doi.org/10.1145/3564246.3585237} {\path{doi:10.1145/3564246.3585237}}.

\bibitem[CX24]{ChanX24}
Timothy~M. Chan and Yinzhan Xu.
\newblock Simpler reductions from exact triangle.
\newblock In {\em Proc. 2024 Symposium on Simplicity in Algorithms (SOSA)}, pages 28--38, 2024.
\newblock \href {https://doi.org/10.1137/1.9781611977936.4} {\path{doi:10.1137/1.9781611977936.4}}.

\bibitem[DGS20]{ldt}
Bart\l{}omiej Dudek, Pawe\l{} Gawrychowski, and Tatiana Starikovskaya.
\newblock All non-trivial variants of {3-LDT} are equivalent.
\newblock In {\em Proccedings of the 52nd Annual {ACM} {SIGACT} Symposium on Theory of Computing (STOC)}, pages 974--981, 2020.
\newblock \href {https://doi.org/10.1145/3357713.3384275} {\path{doi:10.1145/3357713.3384275}}.

\bibitem[DWZ23]{DBLP:conf/focs/DuanWZ23}
Ran Duan, Hongxun Wu, and Renfei Zhou.
\newblock Faster matrix multiplication via asymmetric hashing.
\newblock In {\em Proc. 64th {IEEE} Annual Symposium on Foundations of Computer Science (FOCS)}, pages 2129--2138, 2023.
\newblock \href {https://doi.org/10.1109/FOCS57990.2023.00130} {\path{doi:10.1109/FOCS57990.2023.00130}}.

\bibitem[FG13]{FredrikssonG13}
Kimmo Fredriksson and Szymon Grabowski.
\newblock Exploiting word-level parallelism for fast convolutions and their applications in approximate string matching.
\newblock {\em Eur. J. Comb.}, 34(1):38--51, 2013.
\newblock \href {https://doi.org/10.1016/j.ejc.2012.07.013} {\path{doi:10.1016/j.ejc.2012.07.013}}.

\bibitem[Fis24]{nickconstellation}
Nick Fischer.
\newblock Deterministic sparse pattern matching via the {B}aur-{S}trassen theorem.
\newblock In {\em Proc. 2024 {ACM-SIAM} Symposium on Discrete Algorithms (SODA)}, pages 3333--3353, 2024.
\newblock \href {https://doi.org/10.1137/1.9781611977912.119} {\path{doi:10.1137/1.9781611977912.119}}.

\bibitem[FKP24]{fischer3sum}
Nick Fischer, Piotr Kaliciak, and Adam Polak.
\newblock Deterministic {3SUM}-hardness.
\newblock In {\em Proc. 15th Innovations in Theoretical Computer Science Conference (ITCS)}, volume 287, pages 49:1--49:24, 2024.
\newblock URL: \url{https://doi.org/10.4230/LIPIcs.ITCS.2024.49}, \href {https://doi.org/10.4230/LIPICS.ITCS.2024.49} {\path{doi:10.4230/LIPICS.ITCS.2024.49}}.

\bibitem[FP74]{FischerP74}
Michael~J. Fischer and Michael~S. Paterson.
\newblock String-matching and other products.
\newblock Technical report, Massachusetts Institute of Technology, USA, 1974.

\bibitem[GO95]{GajentaanO95}
Anka Gajentaan and Mark~H. Overmars.
\newblock On a class of {$O(n^2)$} problems in computational geometry.
\newblock {\em Comput. Geom.}, 5:165--185, 1995.
\newblock \href {https://doi.org/10.1016/0925-7721(95)00022-2} {\path{doi:10.1016/0925-7721(95)00022-2}}.

\bibitem[Gow01]{gowers2001new}
William~T. Gowers.
\newblock A new proof of {Szemer{\'e}di's} theorem.
\newblock {\em Geom. Funct. Anal.}, 11(3):465--588, 2001.
\newblock \href {https://doi.org/10.1007/s00039-001-0332-9} {\path{doi:10.1007/s00039-001-0332-9}}.

\bibitem[GU18]{GawrychowskiU18}
Pawe\l{} Gawrychowski and Przemys\l{}aw Uzna\'{n}ski.
\newblock Towards unified approximate pattern matching for {H}amming and \emph{L}\({}_{\mbox{1}}\) distance.
\newblock In {\em Proc. 45th International Colloquium on Automata, Languages, and Programming (ICALP)}, volume 107, pages 62:1--62:13, 2018.
\newblock \href {https://doi.org/10.4230/LIPIcs.ICALP.2018.62} {\path{doi:10.4230/LIPIcs.ICALP.2018.62}}.

\bibitem[JKS08]{JayramKS08}
T.~S. Jayram, Ravi Kumar, and D.~Sivakumar.
\newblock The one-way communication complexity of {H}amming distance.
\newblock {\em Theory Comput.}, 4(1):129--135, 2008.
\newblock \href {https://doi.org/10.4086/toc.2008.v004a006} {\path{doi:10.4086/toc.2008.v004a006}}.

\bibitem[JX23]{JinX23}
Ce~Jin and Yinzhan Xu.
\newblock Removing additive structure in {3SUM}-based reductions.
\newblock In {\em Proc. 55th Annual {ACM} Symposium on Theory of Computing (STOC)}, pages 405--418, 2023.
\newblock \href {https://doi.org/10.1145/3564246.3585157} {\path{doi:10.1145/3564246.3585157}}.

\bibitem[Kar93]{Karloff93}
Howard~J. Karloff.
\newblock Fast algorithms for approximately counting mismatches.
\newblock {\em Inf. Process. Lett.}, 48(2):53--60, 1993.
\newblock \href {https://doi.org/10.1016/0020-0190(93)90177-B} {\path{doi:10.1016/0020-0190(93)90177-B}}.

\bibitem[KP15]{KopelowitzP15}
Tsvi Kopelowitz and Ely Porat.
\newblock Breaking the variance: Approximating the {H}amming distance in {$\tilde{O}(1/\epsilon)$} time per alignment.
\newblock In {\em Proc. {IEEE} 56th Annual Symposium on Foundations of Computer Science (FOCS)}, pages 601--613, 2015.
\newblock \href {https://doi.org/10.1109/FOCS.2015.43} {\path{doi:10.1109/FOCS.2015.43}}.

\bibitem[KP18]{KopelowitzP18}
Tsvi Kopelowitz and Ely Porat.
\newblock A simple algorithm for approximating the text-to-pattern {H}amming distance.
\newblock In {\em Proc. 1st Symposium on Simplicity in Algorithms (SOSA)}, volume~61, pages 10:1--10:5, 2018.
\newblock \href {https://doi.org/10.4230/OASIcs.SOSA.2018.10} {\path{doi:10.4230/OASIcs.SOSA.2018.10}}.

\bibitem[NR17]{NicolaeR17}
Marius Nicolae and Sanguthevar Rajasekaran.
\newblock On pattern matching with k mismatches and few don't cares.
\newblock {\em Inf. Process. Lett.}, 118:78--82, 2017.
\newblock \href {https://doi.org/10.1016/J.IPL.2016.10.003} {\path{doi:10.1016/J.IPL.2016.10.003}}.

\bibitem[P{\u{a}}t10]{Patrascu10}
Mihai P{\u{a}}tra{\c{s}}cu.
\newblock Towards polynomial lower bounds for dynamic problems.
\newblock In {\em Proc. 42nd {ACM} Symposium on Theory of Computing (STOC)}, pages 603--610, 2010.
\newblock \href {https://doi.org/10.1145/1806689.1806772} {\path{doi:10.1145/1806689.1806772}}.

\bibitem[Ran24]{randolph2024exact}
Tim Randolph.
\newblock {\em Exact and Parameterized Algorithms for Subset Sum Problems}.
\newblock PhD thesis, COLUMBIA UNIVERSITY, 2024.

\bibitem[RS62]{RosserS62}
J.~Barkley Rosser and Lowell Schoenfeld.
\newblock Approximate formulas for some functions of prime numbers.
\newblock {\em Illinois Journal of Mathematics}, 6(1):64--94, 1962.
\newblock \href {https://doi.org/10.1215/ijm/1255631807} {\path{doi:10.1215/ijm/1255631807}}.

\bibitem[Sch15]{schoen}
Tomasz Schoen.
\newblock New bounds in {B}alog-{S}zemer\'{e}di-{G}owers theorem.
\newblock {\em Combinatorica}, 35(6):695--701, 2015.
\newblock \href {https://doi.org/10.1007/s00493-014-3077-4} {\path{doi:10.1007/s00493-014-3077-4}}.

\bibitem[Sho88]{Shoup88}
Victor Shoup.
\newblock New algorithms for finding irreducible polynomials over finite fields.
\newblock In {\em Proc. 29th Annual Symposium on Foundations of Computer Science (FOCS)}, pages 283--290, 1988.
\newblock \href {https://doi.org/10.1109/SFCS.1988.21944} {\path{doi:10.1109/SFCS.1988.21944}}.

\bibitem[VW13]{DBLP:journals/siamcomp/WilliamsW13}
Virginia {Vassilevska Williams} and Ryan Williams.
\newblock Finding, minimizing, and counting weighted subgraphs.
\newblock {\em {SIAM} J. Comput.}, 42(3):831--854, 2013.
\newblock \href {https://doi.org/10.1137/09076619X} {\path{doi:10.1137/09076619X}}.

\bibitem[VX20]{DBLP:conf/focs/WilliamsX20}
Virginia {Vassilevska Williams} and Yinzhan Xu.
\newblock Monochromatic triangles, triangle listing and {APSP}.
\newblock In {\em Proc. 61st {IEEE} Annual Symposium on Foundations of Computer Science (FOCS)}, pages 786--797, 2020.
\newblock \href {https://doi.org/10.1109/FOCS46700.2020.00078} {\path{doi:10.1109/FOCS46700.2020.00078}}.

\bibitem[VXXZ24]{WXXZ24}
Virginia {Vassilevska Williams}, Yinzhan Xu, Zixuan Xu, and Renfei Zhou.
\newblock New bounds for matrix multiplication: from alpha to omega.
\newblock In {\em Proc. 2024 {ACM-SIAM} Symposium on Discrete Algorithms (SODA)}, pages 3792--3835, 2024.
\newblock \href {https://doi.org/10.1137/1.9781611977912.134} {\path{doi:10.1137/1.9781611977912.134}}.

\bibitem[Wil18]{williams2018some}
Virginia~Vassilevska Williams.
\newblock On some fine-grained questions in algorithms and complexity.
\newblock In {\em Proceedings of the international congress of mathematicians: Rio de janeiro 2018}, pages 3447--3487. World Scientific, 2018.

\bibitem[Woo04]{Woodruff04}
David~P. Woodruff.
\newblock Optimal space lower bounds for all frequency moments.
\newblock In {\em Proc. 15th Annual {ACM-SIAM} Symposium on Discrete Algorithms (SODA)}, pages 167--175, 2004.
\newblock URL: \url{https://dl.acm.org/doi/10.5555/982792.982817}.

\end{thebibliography}

\end{document}